\crefname{prop}{Property}{Properties}
\pgfplotsset{compat=1.18}
\newcommand{\N}{\mathds{N}}
\newcommand{\R}{\mathbb{R}}
\newcommand{\E}{\mathds{E}}
\newcommand{\cA}{\mathcal{A}}
\newcommand{\cB}{\mathcal{B}}
\newcommand{\cC}{\mathcal{C}}
\newcommand{\cX}{\mathcal{X}}
\newcommand{\cY}{\mathcal{Y}}
\newcommand{\bbE}{\mathbb{E}}
\newcommand{\eps}{\varepsilon}
\newtheorem{theorem}{Theorem}
\newtheorem{proposition}{Proposition}
\newtheorem{claim}{Claim}
\newtheorem{lemma}{Lemma}
\newtheorem{corollary}{Corollary}
\newtheorem{definition}{Definition}
\newtheorem{remark}{Remark}
\newtheorem{construction}{Construction}
\newcommand{\bits}{\{0,1\}}
\newcommand{\Enc}{\mathsf{Enc}}
\newcommand{\Dec}{\mathsf{Dec}}
\newcommand{\ICap}{\mathsf{ICap}}
\newcommand{\CCap}{\mathsf{CCap}}
\newcommand{\SCap}{\mathsf{SCap}}
\newcommand{\ceil}[1]{ \left \lceil {#1} \right \rceil}
\newcommand{\floor}[1]{ \left \lfloor {#1} \right \rfloor}
\newcommand{\Rin}{\mathcal{R_{\textup{in}}}}
\newcommand{\Rout}{\mathcal{R_{\textup {out}}}}
\newcommand{\delout}{\delta_{\textup{out}}}
\newcommand{\epsout}{\epsilon_{\textup{out}}}
\newcommand{\Cout}{C_{\textup{out}}}
\newcommand{\Cin}{C_{\textup{in}}}
\newcommand{\ed}{\textup{ED}}
\newcommand{\zo}{\{0,1\}}
\newcommand{\Bin}{\textup{Bin}}
\newcommand{\sigout}{\sigma^{(\sf out)}}
\newcommand{\sigouttil}{\tilde{\sigma}^{(\sf out)}}
\newcommand{\bdcRLB}{\textup{BDC-RL-Bounded}(d, \mu, M)}
\newcommand{\bdcMTRLB}{\textup{BDC-MT-RL-Bounded}(d, T, \mu, M)}
\newcommand{\bdctzerod}{\textup{BDC-Thr}(\tau,d)}
\newcommand{\Sigmain}{\Sigma_{\mathsf{in}}}
\newcommand{\Wpre}{W_{\mathsf{pre}}}
\let\originalleft\left
\let\originalright\right
\renewcommand{\left}{\mathopen{}\mathclose\bgroup\originalleft}
\renewcommand{\right}{\aftergroup\egroup\originalright}
\def\blfootnote{\xdef\@thefnmark{}\@footnotetext}
\title{Channels with Input-Correlated Synchronization Errors
}
\author{Roni Con\thanks{Department of Computer Science, Technion - Israel  Institute  of  Technology, Haifa. \texttt{roni.con93@gmail.com}} \and  João Ribeiro\thanks{Instituto de Telecomunicações and Departamento de Matemática, Instituto Superior Técnico, Universidade de Lisboa. \texttt{jribeiro@tecnico.ulisboa.pt}}
}
\date{}
\begin{document}
	
\maketitle
	
\begin{abstract}
    ``Independent and identically distributed'' errors do not accurately capture the noisy behavior of real-world data storage and information transmission technologies.
    Motivated by this, we study channels with \emph{input-correlated} synchronization errors, meaning that the distribution of synchronization errors (such as deletions and insertions) applied to the $i$-th input $x_i$ may depend on the whole input string $x$.
    
    We begin by identifying conditions on the input-correlated synchronization channel under which the channel's information capacity is achieved by a stationary ergodic input source and is equal to its coding capacity.
    These conditions capture a wide class of channels, including channels with correlated errors observed in DNA-based data storage systems and their multi-trace versions, and generalize prior work.
    To showcase the usefulness of the general capacity theorem above, we combine it with techniques of Pernice-Li-Wootters (ISIT 2022)
    and Brakensiek-Li-Spang (FOCS 2020)
    to obtain explicit capacity-achieving codes for 
    multi-trace
    channels with \emph{runlength-dependent deletions}, motivated by error patterns observed in DNA-based data storage systems.\blfootnote{The work of R.\ Con was funded by the European Union (DiDAX, 101115134). The work of J.\ Ribeiro was funded by the European Union (LESYNCH, 101218842) and by national funds through FCT – Fundação para a Ciência e a Tecnologia, I.P., and, when eligible, co-funded by EU funds under project/support UID/50008/2025 – Instituto de Telecomunicações, with DOI \href{https://doi.org/10.54499/UID/50008/2025}{10.54499/UID/50008/2025}.
    Views and opinions expressed are however those of the authors only and do not necessarily reflect those of the European Union or the European Research Council Executive Agency. Neither the European Union nor the granting authority can be held responsible for them.}
    
\end{abstract}

\newpage

\tableofcontents
\newpage

\section{Introduction}\label{sec:intro}

Errors which cause loss of synchronization between sender and receiver, such as deletions, insertions, and replications, occur in various communications and data storage technologies, with DNA-based data storage being a notable recent example.
Despite considerable effort, pinning down the capacity and designing efficient nearly-optimal codes for channels with synchronization channels remain major problems in information and coding theory.
The surveys by Mitzenmacher~\cite{Mit09}, Mercier, Bhargava, and Tarokh~\cite{MBT10}, and Cheraghchi and Ribeiro~\cite{CR20} provide in-depth discussions of the many challenges encountered when dealing with synchronization errors.

Most prior work on channels with synchronization errors has focused on i.i.d.\ errors.
However, synchronization errors in real-world systems do not satisfy this assumption. 
For example, it has been observed in empirical analyses of widely used DNA sequencing technologies~\cite{RRC13,HMG19} that short substrings of DNA strands (which are written using a $4$-symbol alphabet $A,C,G,T$) with either a very high (above $75\%$) or very low (below $25\%$) concentration of $G$'s and $C$'s experience higher deletion rates~\cite[Figure 4]{RRC13}, and that longer runs of the same symbol in DNA strands experience higher deletion rates than shorter runs during sequencing~\cite[Figure 5]{RRC13}.
More broadly, a long series of works has explored, among many other things, the types of errors and error rates that occur in such systems, using different synthesis and sequencing technologies~\cite{Gol13,RRC13,Gra15,Yaz15,Bor16,EZ17,YGM17,Org18,HMG19,Pre20,SGPY21,weindel2023embracing}.

Motivated by this, we study a general class of channels with synchronization errors where the error distribution of the $i$-th input symbol $x_i$ may depend on the whole input $x$.
We also consider ``multi-trace'' versions of these channels, where the input $x$ is sent through multiple independent channels, generating multiple channel outputs (\emph{traces}) at the receiver end, which is especially relevant in DNA-based data storage.

\subsection{Our contributions}

\paragraph{Capacity theorems for channels with input-correlated synchronization errors}
Our first contribution, in \cref{sec:shannon-gen}, is a capacity theorem for a general class of channels with input-correlated synchronization errors, which we call \emph{admissible channels}.
The precise definition of an admissible channel is given in \cref{sec:admissible}.
As we show afterwards in \cref{sec:special-cases}, this class includes as special cases the channel model of Mao, Diggavi, and Kannan~\cite{MDK18}, multi-trace channels with input-correlated synchronization errors, and, more concretely, channels with runlength-dependent deletions where bits in runs of length $\ell$ are deleted independently with probability $d(\ell)$, for an arbitrary function $d:\N\to[0,1]$.
Channels with runlength-dependent deletions will be our concrete running example throughout this paper to showcase the usefulness of this result.

\begin{theorem}[Informal, see \cref{thm:cap-gen} for a formal statement]\label{thm:informal-cap-gen}
    Let $Z$ be an admissible channel (see \cref{sec:admissible}).
    Then, its information capacity equals its coding capacity, and the information capacity is achieved by stationary ergodic sources.
\end{theorem}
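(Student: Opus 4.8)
The goal is to show that for an admissible channel $Z$, the information capacity equals the coding capacity, and that the information capacity is attained by a stationary ergodic source. I would follow the classical template for proving such "information $=$ coding capacity" results for channels with synchronization errors, going back to Dobrushin and refined by Kirsch--Drinea and by Li--Tan and Mao--Diggavi--Kannan, and adapt it to the input-correlated setting using whatever structural axioms the definition of "admissible" provides (presumably: the channel acts symbol-by-symbol with the error distribution at position $i$ depending on $x$ only through a bounded-memory window, some form of subadditivity/concatenation behavior, and boundedness of output length). The two inequalities $\ICap \le \CCap$ and $\CCap \le \ICap$ are proved by separate arguments of quite different flavor.

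\textbf{Step 1: Achievability, $\ICap \le \CCap$ via a stationary ergodic source.} First I would set up the information-capacity quantity as a limit $\ICap = \lim_{n\to\infty} \frac1n \sup_{X^n} I(X^n; Z(X^n))$ and argue this limit exists by a (super/sub)additivity (Fekete) argument, using the admissibility axiom that lets one concatenate inputs and relate the output of $Z(x y)$ to outputs of $Z(x)$ and $Z(y)$ up to a bounded boundary effect. Then, given a near-optimal length-$n$ input distribution, I would build a stationary source by taking an i.i.d.\ sequence of length-$n$ blocks and applying a random shift (the standard "stationarization" trick), and argue that this source is ergodic and that its mutual information rate through $Z$ is within $o(1)$ of $\ICap$; the bounded-memory / bounded-output-length properties of admissible channels are what control the loss at block boundaries. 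This shows the sup over stationary ergodic sources of the mutual information rate is at least $\ICap$ (the reverse inequality being immediate), and the achievability of $\CCap \ge$ (that rate) then follows from a random-coding argument over the stationary ergodic source together with a joint-typicality / threshold decoder, invoking a Shannon--McMillan--Breiman-type theorem for the jointly stationary ergodic input-output process $(X, Z(X))$ — here one uses ergodicity of the source plus the fact that passing a stationary ergodic process through an admissible channel yields a stationary ergodic joint process.

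\textbf{Step 2: Converse, $\CCap \le \ICap$.} For any sequence of codes with rate $R$ and vanishing error probability, Fano's inequality gives $R \le \frac1n I(M; Z(X^n)) + o(1) \le \frac1n \sup_{X^n} I(X^n; Z(X^n)) + o(1)$, and taking $n\to\infty$ yields $R \le \ICap$. This direction is essentially soft and uses only that the output length is controlled (so that $Z(X^n)$ is a well-defined finite random object and Fano applies), which is part of admissibility.

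\textbf{Main obstacle.} The crux is Step 1, and specifically the interplay between input-correlation and the block/stationarization construction: because the error distribution at symbol $i$ depends on the surrounding window of $x$, cutting the source into independent blocks corrupts the channel behavior near block boundaries, so one must choose the block length growing slowly enough that the fractional boundary loss vanishes, yet verify that the admissibility axioms genuinely bound how far the correlation "reaches" and how much the output near a boundary can differ from the concatenation of per-block outputs. Establishing that $(X, Z(X))$ is stationary and \emph{ergodic} (not merely stationary) when $X$ is — so that the AEP/Shannon--McMillan--Breiman machinery applies to both the random-coding achievability and to identifying the limiting mutual information rate — is the other delicate point, and is presumably where the precise hypotheses bundled into "admissible" (some mixing or finite-memory condition on the channel) are indispensable. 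I would expect the formal proof to reduce everything to a handful of lemmas: existence of the information-rate limit, the stationarization lemma with boundary-loss bound, the ergodicity-preservation lemma, and a generalized AEP — after which both inequalities follow routinely.
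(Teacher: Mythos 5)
Your overall architecture (Fekete for existence of the limit, the i.i.d.-blocks-plus-random-shift stationarization, random coding for achievability, Fano for the converse) matches the paper's, and your Step 2 is exactly \cref{rem:ccap-leq-icap}. The problem is the key step you defer to in Step 1: you invoke a Shannon--McMillan--Breiman/AEP theorem for ``the jointly stationary ergodic input-output process $(X,Z(X))$.'' For a synchronization channel the output is a string of random length that is not aligned with the input index set, so $(X,Z(X))$ is not a jointly stationary process on a common index set in any standard sense, and there is no off-the-shelf SMB theorem to invoke; moreover, it is not clear (and the paper never claims) that the channel output of a stationary ergodic input is itself ergodic in a usable sense. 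This is precisely the obstruction that forces Dobrushin's original argument, and the paper's generalization of it, to take a different route.

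What the paper does instead: achievability is proved for the \emph{block-independent} source directly, not for its stationarized version. For a block-independent input $X$ with blocklength $b$, the information density of $X_1^{tb}$ against the tuple $\eta^\star_t=(Z^\star(X_1^b),\dots,Z^\star(X_{(t-1)b+1}^{tb}))$ of per-block outputs decomposes exactly into a sum of $t$ i.i.d.\ terms (\cref{lem:conv-star-gen}), so the ordinary strong law of large numbers replaces SMB. The gap between $\eta^\star_t$ and the true output $Z^\star(X_1^{tb})$ is then controlled in expectation by Dobrushin's preimage-counting lemma together with the amortized block-partition and preimage-size axioms (\cref{lem:id-rem-star-gen}), giving convergence in probability of the true information density (\cref{thm:conv-info-density-gen}); a Feinstein/MAP-threshold bound (\cref{lem:shannon-coding}) then yields the codes. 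The stationary ergodic source plays no role in the coding theorem itself --- it is used only to establish $\ICap=\SCap$ (\cref{thm:icap-scap}) and, later, the density property of codewords. So your proposal is not merely a different route; as written, its central technical step rests on a tool that does not apply to this class of channels, and filling that hole essentially requires reconstructing the information-density argument the paper uses.
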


\paragraph{Efficient capacity-achieving codes for single- and multi-trace runlength-dependent deletion channels.}
As observed by Pernice, Li, and Wootters~\cite{PLW22}, a standard but quite useful consequence of \cref{thm:informal-cap-gen} is that admissible channels have capacity-achieving codes with additional structure.
In particular, admissible channels $Z$ with binary input alphabet have capacity-achieving codes $\cC$ such that every short substring of every codeword $c\in\cC$ has not-too-small Hamming weight (see \cref{thm:dense-code-gen} for a formal statement).

To showcase the usefulness of this consequence of \cref{thm:informal-cap-gen}, we use it to obtain efficient capacity-achieving codes for channels with ``bounded'' runlength-dependent deletions, \emph{in both the single-trace and multi-trace settings}.
More precisely, a bounded runlength-dependent deletion channel is a runlength-dependent deletion channel with a non-decreasing deletion probability function $d:\N\to[0,1]$, and such that there exists an integer $M$ such that $d(\ell)=d(M)<1$ for all $\ell\geq M$ (see \Cref{def:run-length-bounded-channel} for a formal definition).
Its \emph{$T$-trace version} is the channel that on input $x$ outputs $Y_1,\dots,Y_T$ (called \emph{traces}), where the $Y_i$'s are independent and identically distributed like outputs of the bounded runlength-dependent deletion channel on input $x$.
We remark that channels with runlength-dependent \emph{substitutions} have been studied before, also motivated by correlated errors in DNA-based data storage~\cite{weindel2023embracing}.

In the single-trace setting (\cref{sec:efficient-single-trace}) we combine the structured codes resulting from \cref{thm:informal-cap-gen} with an approach originally applied to channels with i.i.d.\ synchronization errors in~\cite{PLW22} to obtain efficient capacity-achieving codes for any bounded runlength-dependent deletion channel.

\begin{theorem}[Efficient capacity-achieving single-trace codes, informal. See \cref{thm:efficient-bounded-rl} for a formal statement]\label{thm:informal-single-trace}
    Let $\mathsf{Ch}$ be a bounded runlength-dependent deletion channel. 
    Denote its capacity by $\mathsf{Cap}$ and let $\eps>0$ be an arbitrary constant.
	Then, there exists a family of binary codes $\{C_n\}_{n=1}^{\infty}$, where $C_n$ has blocklength $n$ and rate $R_n > \mathsf{Cap} - \eps$ for all large enough $n$, that is robust for this channel. 
	Moreover, the $C_n$'s are encodable in linear time and decodable in quasi linear time in the message length.
\end{theorem}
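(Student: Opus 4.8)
The plan is to follow the concatenated-code approach of Pernice--Li--Wootters~\cite{PLW22}, using the structured capacity-achieving codes guaranteed by \cref{thm:dense-code-gen} as the inner code, and a standard efficient outer code (e.g.\ a Reed--Solomon code concatenated with a suitable linear-time code, or the explicit constructions used in~\cite{PLW22}) to handle the residual errors after inner decoding. First, I would invoke \cref{thm:informal-cap-gen}, applied to the bounded runlength-dependent deletion channel $\mathsf{Ch}$, to conclude that the information capacity equals the coding capacity $\mathsf{Cap}$; then \cref{thm:dense-code-gen} gives, for every $\eps'>0$, a code of rate at least $\mathsf{Cap}-\eps'$ all of whose short (length-$O(\log n)$) substrings of codewords have Hamming weight bounded below by a constant fraction. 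The key point of this density property is that it controls runlengths: in a codeword where every window of logarithmic length has linear weight, every run of $0$'s (and, by taking a complementary density guarantee or working with a balanced version, every run of $1$'s) has length $O(\log n)$, which is crucial because the deletion behavior of $\mathsf{Ch}$ depends on runlengths but is ``bounded'' --- $d(\ell)=d(M)<1$ for $\ell\geq M$ --- so once runs are short the channel behaves, effectively, like a channel with bounded per-symbol deletion probability.

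Second, I would set up the concatenation. Chop the message into blocks, encode each block by the (brute-force-found) inner code $C_{\mathrm{in}}$ of blocklength $b=\Theta(\log n)$ to obtain codewords satisfying the density property, and prepend/append or interleave short synchronization markers (buffers) between inner blocks so that the receiver can re-segment the received trace into blocks despite deletions; this is exactly the marker strategy of~\cite{PLW22}. Decoding proceeds by (i) using the markers to identify the approximate boundaries of each inner block in the received string, (ii) running a brute-force maximum-likelihood (or nearest-codeword) decoder on each identified inner chunk --- feasible in $\poly(n)$ time since $|C_{\mathrm{in}}|=2^{O(\log n)}=\poly(n)$ --- and (iii) decoding the outer code to correct the blocks that were mis-synchronized or mis-decoded. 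Because the inner code has rate $\mathsf{Cap}-\eps'$, is used on a block of length $\Theta(\log n)$ over a channel whose capacity on such ``runlength-bounded'' inputs is still $\mathsf{Cap}$, and succeeds except with probability $o(1)$ per block (by a Chernoff/concentration argument on the number of deletions in a logarithmic block, using that the per-symbol deletion probability is at most $d(M)<1$), the fraction of bad blocks is small whp, and a constant-rate linear-/quasilinear-time outer code corrects them; choosing $\eps'$ and the buffer overhead small enough yields overall rate $>\mathsf{Cap}-\eps$. The encoding is linear time (table-lookup for inner blocks plus linear-time outer encoding) and decoding is quasilinear (near-linear marker alignment, $\poly(\log n)$ work per block over $n/\log n$ blocks, plus quasilinear outer decoding).

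The main obstacle I anticipate is the synchronization/re-segmentation step: unlike in the i.i.d.\ deletion setting of~\cite{PLW22}, here the deletion pattern within a block is correlated with the block's content (via its runlengths), so one must verify that the marker-based alignment still works --- in particular that the buffers themselves are not disproportionately eaten and that the number of deletions inside each inner block concentrates well enough for the ML decoder to succeed. I expect this to go through precisely because of the ``bounded'' assumption: the density property caps runlengths at $O(\log n)$, on which range $d(\ell)\leq d(M)<1$, so the total number of deletions in a block of length $\Theta(\log n)$ is stochastically dominated by a sum of independent (or negatively associated) Bernoulli$(d(M))$ variables, giving the needed concentration; one should also argue that conditioning on the density/runlength structure does not destroy independence across blocks, which follows from the product structure of the deletion process across positions once runlengths are fixed. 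For the multi-trace version, the same scheme is run with $T$ independent traces and the inner decoder takes all $T$ received chunks as input, which only improves the per-block success probability, so the single-trace argument subsumes it; the corresponding formal statement is \cref{thm:efficient-bounded-rl}.
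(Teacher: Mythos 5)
Your high-level blueprint — concatenation with a dense capacity-achieving inner code (from \cref{thm:dense-code-gen}), buffers/markers between inner blocks, and an outer insdel-correcting code — is the same as the paper's, and the intuition about why the bounded runlength-dependent assumption makes buffer identification and concentration go through is on target. But there is a concrete error in the choice of parameters that breaks your complexity claim.

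You set the inner blocklength to $b=\Theta(\log n)$, so $|C_{\mathrm{in}}|=2^{\Theta(\log n)}=n^{\Theta(1)}$. Brute-force ML/nearest-codeword decoding of a single inner chunk is then $\poly(n)$ time (you correctly say this yourself in step~(ii)), yet later you summarize the per-block cost as ``$\poly(\log n)$'' — these two statements are inconsistent, and the former is the right one. With $\Theta(\log n)$ inner blocks, total inner decoding is $\poly(n)\cdot n/\log n=\poly(n)$, not quasi-linear. The paper avoids this by taking the inner blocklength to be a \emph{constant} $m$ (depending only on $\eps$ and the channel parameters): the outer code of Haeupler--Rubinstein--Shahrasbi~\cite{haeupler2019near} has a constant-size alphabet $\Sigma$, and $m$ is fixed by $\Rin m=\lceil\log_2|\Sigma|\rceil$. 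Then brute-force inner decoding is $O(1)$ per block, $O(n)$ overall, and the quasi-linear decoding time comes entirely from the HRS outer decoder. A corollary of this mistake is that the dense-code density property does not bound runs by $O(\log n)$ as you claim; with constant inner blocklength every run in an inner codeword has length $O(1)$. More importantly, the density property is not primarily used to cap runlengths — its role (\cref{clm:plw-like-spurious-buffer}) is to bound the probability that the channel creates a \emph{spurious buffer} (a long run of $0$s) inside an inner codeword, since any window of length $\Theta(m)$ contains $\Omega(m)$ ones, all of which would have to be deleted.

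One more caveat: in your last sentence you assert that the multi-trace result is ``subsumed'' by the single-trace argument. That is not how the paper proceeds: in the multi-trace setting the main new difficulty is aligning the $T$ traces with each other, which the paper handles by adding a second inner code (synchronization symbols) and a matching algorithm in the spirit of Brakensiek--Li--Spang~\cite{BLS20}; the single-trace argument does not give trace alignment for free. Since this sentence is about a different theorem it does not affect the proof of \cref{thm:informal-single-trace}, but it should not be left as a claim.
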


As we discuss in more detail in \cref{sec:efficient-single-trace}, this result can be generalized well beyond channels with bounded runlength-dependent deletions. 
Broadly speaking, the theorem can be extended to any runlength-dependent deletion channel that satisfies two key conditions.
First, with high probability, a \emph{very long} run of zeros will remain long after passing through the channel. These long zero runs help maintain some degree of synchronization between the sender and the receiver.
Second, in any interval that is not too short and contains a high density of 1s, the probability that \emph{all} of the 1s are deleted is very small.
In this paper, we focus on the bounded runlength-dependent deletions (where the deletion function is non-decreasing) to keep the analysis simpler and because it is more realistic in practice for longer runs to experience higher deletion rates.

In the multi-trace setting (\cref{sec:multi-trace}) we again rely on \cref{thm:informal-cap-gen}, which in particular implies the existence of structured codes attaining the capacity of the \emph{multi-trace} bounded runlength-dependent channel. 
Then, to achieve efficient codes with matching rate, we combine it with the main idea
of Brakensiek, Li, and Spang~\cite{BLS20}, with some minor modifications, who showed how to compile average-case trace reconstruction algorithms into efficient rate $1 - o(1)$ codes for the coded trace reconstruction problem.
This yields efficient capacity-achieving codes for any $t$-trace bounded runlength-dependent deletion channel. 

\begin{theorem}[Efficient capacity-achieving multi-trace codes, informal. See \cref{thm:efficient-bounded-multi-trace-rl} for a formal statement]\label{thm:informal-multi-trace}
    Let $\mathsf{Ch}$ be a $T$-trace bounded runlength-dependent deletion channel, for an arbitrary constant integer $T\geq 1$. Denote its capacity by $\mathsf{Cap}$ and let $\eps>0$ be an arbitrary constant.
	Then, there exists a family of binary codes $\{C_n\}_{n=1}^{\infty}$, where $C_n$ has blocklength $n$ and rate $R_n > \mathsf{Cap} - \eps$ for all large enough $n$, that is robust for this channel. 
	Moreover, the $C_n$'s are encodable in linear time and decodable in quadratic time in the message length.
\end{theorem}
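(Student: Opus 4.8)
The plan is to reduce the multi-trace problem to a coded-trace-reconstruction-style argument built on top of the structured codes guaranteed by \cref{thm:informal-cap-gen}. First I would invoke \cref{thm:dense-code-gen} applied to the $T$-trace bounded runlength-dependent deletion channel (which is admissible, as established in \cref{sec:special-cases}): this yields, for any $\eps>0$, a (possibly inefficient) code $\cC^\star$ of rate at least $\mathsf{Cap}-\eps/2$ that is robust for the channel and whose codewords have the density property that every short substring has not-too-small Hamming weight. The point of the density property is twofold — it is exactly the structural hook needed by the Pernice--Li--Wootters-style synchronization argument (long zero runs survive, and no short dense interval is entirely deleted), and it is what makes the $\cC^\star$-codewords amenable to being treated as ``inner'' objects in a concatenation scheme.

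Next I would set up the concatenation. Take $\cC^\star$ as an inner code on blocks of some constant (or slowly growing) length $b$, and choose an outer code over the alphabet $\cC^\star$ — here is where the Brakensiek--Li--Spang idea enters: their compiler shows that an average-case trace reconstruction procedure for a channel can be turned into a rate $1-o(1)$ outer code, by using a high-rate outer code that tolerates the (rare) inner blocks on which reconstruction fails, treating those as erasures/errors. Concretely, for each inner block we run the single-trace decoding procedure of \cref{thm:informal-single-trace} (or rather its multi-trace analogue) on each of the $T$ traces restricted to that block's region, aggregate across traces to drive the per-block failure probability down to something like $1/\poly(b)$ or $o(1)$, and then let a near-MDS outer code clean up the residual errors. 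Since $T$ is a constant, the combined decoding time is a constant factor times the single-trace time per block, giving overall quadratic time (the quadratic, rather than quasilinear, coming from the alignment/dynamic-programming step needed to carve the traces into per-block pieces in the multi-trace regime, where we cannot rely on the same synchronization shortcuts as the single-trace case).

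The rate bookkeeping is then routine: the inner code contributes a factor $\mathsf{Cap}-\eps/2$ (up to lower-order terms from the block structure), the outer code contributes $1-o(1)$, so the product exceeds $\mathsf{Cap}-\eps$ for all large enough $n$; encoding is linear because both the inner and outer encoders are linear-time (the outer code can be taken to be a linear-time encodable near-MDS code, e.g.\ an expander-based or AG-based construction), and the inner encoding is a table lookup once $\cC^\star$ is fixed. The final step is to verify robustness: one shows that with high probability over the $T$ traces, the number of inner blocks that are ``badly corrupted'' (meaning the per-block multi-trace reconstruction fails, or the block boundaries cannot be located accurately) is below the outer code's correction radius; this uses the density property to control synchronization drift and a union bound over blocks.

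The main obstacle I expect is the block-boundary identification in the multi-trace setting: unlike the single-trace argument of \cref{thm:informal-single-trace}, where a single very long zero run can be used as an anchor and one runs a global alignment once, here one must simultaneously align $T$ independently corrupted traces and agree on a common segmentation into inner blocks, all while the deletion probabilities are runlength-dependent (so the surviving-long-run and no-dense-interval-fully-deleted guarantees must be applied carefully and uniformly). Getting this to work with only a quadratic-time alignment — rather than something more expensive — and with a failure probability small enough to be absorbed by a rate $1-o(1)$ outer code, is the technical heart of the argument; everything else is a composition of \cref{thm:dense-code-gen}, \cref{thm:informal-single-trace}, and the Brakensiek--Li--Spang compiler.
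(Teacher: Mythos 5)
Your high-level plan — inner code from \cref{thm:dense-code-gen}, buffers for synchronization, outer code to absorb residual inner failures à la Brakensiek--Li--Spang — matches the paper's skeleton, and you correctly identify the multi-trace alignment as the crux. But your proposal leaves that crux as a black box (``Getting this to work with only a quadratic-time alignment \dots is the technical heart''), and the phrase ``simultaneously align $T$ independently corrupted traces and agree on a common segmentation'' suggests you would try some joint dynamic-programming over all $T$ traces, which is not what is done and is not clearly tractable.

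The missing idea is the \emph{two-inner-code} device that BLS20 use precisely to avoid joint alignment. Each outer position $i$ carries \emph{two} constant-size inner codewords: a data codeword $a_i=\Enc_R(r_i)$ (decodable from $T$ traces), and a \emph{synchronization} codeword $b_i=\Enc_S(s_i)$ encoding the $i$-th symbol of a fixed $\eta$-synchronization string $s_1\cdots s_{\nout}$ (decodable from a \emph{single} trace). Both inner codes come from \cref{cor:dense-code-RL-dep}; the concatenated word is $a_1\circ 0^B\circ b_1\circ 1^B\circ a_2\circ 0^B\circ b_2\circ\cdots$, with two flavors of buffer so that each trace can be split first into $(a,b)$-chunks and then into $a$- and $b$-pieces. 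The key point is that each trace is then aligned \emph{independently of the others}: the decoder recovers (most of) the synchronization symbols $\tilde s^{(t)}_1,\dots,\tilde s^{(t)}_{n'_t}$ in trace $t$, runs the Haeupler--Shahrasbi matching algorithm (\cref{lem:hs-matching-alg}) against the known string $S$, and uses the resulting per-trace index map to place each $a$-piece at its outer position (or $\perp$). No cross-trace segmentation agreement is ever needed — the synchronization string is the common reference, and matching is done trace-by-trace in $O(\nout^2)$ time. Only after this per-trace indexing are the $T$ copies $\tilde a^{(1)}_i,\dots,\tilde a^{(T)}_i$ handed to the multi-trace inner decoder. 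Without this second inner code your decoder has no mechanism to figure out which chunk of trace $t$ corresponds to outer position $i$, and a direct joint alignment of $T$ traces (a) would not obviously run in quadratic time and (b) is unnecessary.

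Two smaller inaccuracies: the outer code in the paper is a substitution-correcting code over a large constant alphabet (\cref{prop:bls-outer-code}, BLS20's adaptation of Guruswami--Indyk), not a generic near-MDS expander/AG code — this matters because after alignment the residual errors at the outer level really are substitutions plus a few erasures, not insdels; and the per-block inner failure probability is a small \emph{constant} $\varepsilon_R=\varepsilon^4/T$ rather than $1/\poly(b)$ — it is the outer code's constant correction radius, not a polynomially small inner failure rate, that absorbs the bad blocks.
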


The difference between \cref{thm:informal-single-trace} and \cref{thm:informal-multi-trace} is that the former guarantees better decoding complexity in the single-trace ($T=1$) setting.
We see \cref{thm:informal-single-trace} as a natural warmup towards the multi-trace result in \cref{thm:informal-multi-trace}, as its proof is significantly simpler.

\paragraph{Capacity lower bounds for runlength-dependent deletion channels.}

Our results presented above effectively allow us to turn  any capacity lower bound for bounded runlength-dependent deletion channels into efficiently encodable and decodable codes with that rate.
To complement this, in \cref{sec:lowerbounds} we study concrete (single-trace) capacity lower bounds on arguably the simplest class of bounded runlength-dependent deletion channels: For a threshold $\tau\geq 1$ and $d\in[0,1]$, consider the runlength-dependent deletion channel that independently deletes each bit in a run of length at least $\tau$ with probability $d$, and does not apply any deletions to bits in runs of length less than $\tau$.
The case $\tau=1$ recovers the standard i.i.d.\ deletion channel.

A naive approach towards lower bounding the capacity of these channels is to take the largest \emph{runlength-limited} code, whose codewords only have runs of length less than $\tau$.
The maximal rate of such codes as a function of $\tau$ is well known.
We obtain capacity lower bounds that improve on this baseline approach for a large range of $d$, as illustrated in \cref{fig:tau-2,fig:tau-3} for thresholds $\tau=2$ and $\tau=3$, respectively.
By \cref{thm:informal-single-trace} we automatically get nearly-linear time encodable and decodable codes with that rate.

\subsection{Related work}

\paragraph{Capacity theorems for channels with synchronization errors.}
The first work to study this topic was by Dobrushin~\cite{Dob67}, who obtained capacity theorems for channels that apply i.i.d.\ synchronization errors.
Recently, there has been interest in extending such capacity theorems beyond i.i.d.\ errors.
Mao, Diggavi, and Kannan~\cite{MDK18} consider channels combining synchronization errors and (bounded) intersymbol interference as a model of nanopore-based sequencing.
More precisely, the behavior of the channel on input $x_i$ is some function of $x_i,x_{i-1},\dots,x_{i-\ell}$, for some memory threshold $\ell$.
They show that the information and coding capacity of these channels coincide, generalizing Dobrushin's result~\cite{Dob67}, but do not show that capacity is achieved by a stationary ergodic (or Markov) source.
Capacity theorems for a related (more concrete) model of nanopore-based sequencing with noisy duplications have also been studied by McBain, Saunderson, and Viterbo~\cite{MVS24,MSV24}.
Li and Tan~\cite{LT21} consider a channel obtained from the concatenation of a standard deletion channel with i.i.d.\ deletions and a finite-state discrete memoryless channel.
They show that the capacity of this channel is achieved by Markov processes, which implies that the polar codes developed by Tal, Pfister, Fazeli, and Vardy~\cite{TPFV22} achieve capacity on the i.i.d.\ deletion channel (a special case of this result was proved earlier in~\cite{TPFV22}).
Morozov and Duman~\cite{MD24,MD25} show that information and coding capacities coincide for channels that introduce deletions and insertions with Markovian memory, in the sense that the behavior of the channel on the $i$-th input bit depends on the current state of an underlying stationary ergodic finite state Markov chain (whose states are updated independently of past inputs).
In \cite{MD25} they also give capacity upper bounds for the special case of a deletion channel with Markov memory where the deletion probability applied independently to each input bit varies between a ``low value'' and a ``high value'' according to a 2-state Markov chain (that evolves independently of the channel input).

The models we study are incomparable to those of~\cite{LT21,MD24,MD25}, and our models and capacity theorems generalize those of~\cite{MDK18}.
We discuss the relationship to the model and results of~\cite{MDK18} in more detail.
In~\cite{MDK18}, the channel behavior on the $i$-th input bit $x_i$ may depend only on a \emph{bounded} window of input bits.
In contrast, in our channel model the error distribution for the $i$-th input bit $x_i$ is some function of \emph{the whole input $x$}, satisfying some additional assumptions.
We show in \cref{sec:capture-MDK} that the channel model from~\cite{MDK18} satisfies the assumptions required for the application of our capacity theorems, and so our results generalize the corresponding results of \cite{MDK18}.
Moreover, we show that stationary ergodic sources achieve the information capacity of these channels, which is particularly relevant for constructing efficient capacity-achieving codes.

Furthermore, our framework also captures interesting scenarios that fall outside the scope of~\cite{MDK18}.
In \cref{sec:cap-multi-trace} we show that our framework implies capacity theorems for \emph{multi-trace} channels with correlated synchronization errors, where on input $x$ the receiver learns multiple i.i.d.\ realizations of the channel output $Z(x)$.
This setting is especially relevant in the context of DNA-based data storage systems with nanopore-based sequencing~\cite{CGMR20,BLS20}.
Also, in \cref{sec:assumptions-runlength} we show that our framework includes as special cases deletion channels where the deletion of a bit $x_i$ may depend arbitrarily on the length of the run where $x_i$ is included (in particular, beyond a bounded window around $x_i$).

\paragraph{Efficient coding for channels with synchronization errors.}
There has been significant interest in the design of efficiently encodable and decodable codes for channels with synchronization errors.
We discuss the progress most relevant to our work.
Guruswami and Li~\cite{GL19} and later Con and Shpilka~\cite{con2022improved} obtained efficient codes for the i.i.d.\ binary deletion channel with rate $\Theta(1-d)$, where $d$ is the deletion probability.
Later, Tal, Pfister, Fazeli, and Vardy~\cite{TPFV22,PT21} and later Tal and Arava~\cite{arava2023stronger} (combined with a result from~\cite{LT21}) designed efficient polar codes achieving the capacity of a family of channels with i.i.d.\ insertions, deletions, and substitutions, generalizing the i.i.d.\ deletion channel.
Other constructions of efficient capacity-achieving codes for channels with i.i.d.\ synchronization errors were presented in~\cite{Rub22,PLW22}, which achieve slightly faster decoding and slightly smaller decoding error probability than the polar coding constructions.
Of particular note, the general framework of Pernice, Li, and Wootters~\cite{PLW22} yields efficient capacity-achieving codes for a large class of \emph{repeat channels} -- these are channels that independently replicate each input bit according to some replication distribution over the naturals (e.g., Bernoulli, Poisson, geometric).\footnote{The existence of such efficient capacity-achieving codes does not mean that we now can \emph{determine} the capacity of these channels. We see the contribution of~\cite{TPFV22,PT21,Rub22,PLW22} mainly as turning capacity lower bounds into efficient codes with the corresponding rate.}
This was accomplished by combining a marker-based construction with the capacity theorem of Dobrushin~\cite{Dob67}, which applies to i.i.d.\ repeat channels.

Some works~\cite{CGMR20,BLS20} have studied efficiently encodable and decodable codes for the multi-trace i.i.d.\ deletion channel.
Their focus is on the case where the number of traces is allowed to grow with the blocklength of the code. 
In contrast, in this work we consider the number of traces to be a fixed constant, and we are then interested in devising efficient codes for that fixed number of traces with rate as large as possible.
These two settings are incomparable.
Srinivasavaradhan, Gopi, Pfister, and Yekhanin~\cite{SGPY21} study, among other things, codes for multi-trace channels with i.i.d.\ insertions, deletions, and substitutions with a fixed number of traces.
However, their focus is different from ours and they only provide heuristic reconstruction procedures.

A common feature of the works discussed above is that they only consider channels with i.i.d.\ synchronization errors.
In contrast, we study channels with correlated synchronization errors.
In particular, we obtain a capacity theorem that applies to a wide class of (multi-trace) channels with correlated synchronization errors, which we show can be used to design efficient capacity-achieving codes for synchronization channels with relevant correlations.

\section{Preliminaries}

\subsection{Notation}
We denote random variables by uppercase roman letters such as $X$, $Y$, and $Z$.
In this work, we will only work with random variables supported on discrete sets.
We use $X\to Y\to Z$ to denote the fact that these three random variables form a Markov chain (i.e., $Z$ is conditionally independent of $X$ given $Y$), and write $X\sim Y$ if random variables $X$ and $Y$ follow the same distribution.
We use $\E[X]$ to denote the expected value of a random variable $X$ supported on a subset of $\R$, and $H(X)$ to denote its Shannon entropy.

For a sequence $x=(x_i)_{i\in \N}$, we use $x_m^n$ to denote the subsequence $x_m,x_{m+1},\dots,x_n$.
We use $\log$ to denote the base-$2$ logarithm.
For an integer $n\geq 1$, we write $[n]=\{1,2,\dots,n\}$.

\subsection{Channels}\label{sec:channel-def}

In general, we may define a (discrete-input/discrete-output) channel $Z$ as a \emph{channel sequence} $(Z_n)_{n\in\N}$ where $Z_n$ has countable input and output spaces $\cX_n$ and $\cY_n$, respectively, and, given $x\in\cX_n$ as input, outputs $y\in\cY_n$ with some probability $p_n(y|x)$.
We will focus on channels where $\cX_n=\Sigma_{\mathsf{in}}^n$ for some non-empty finite set $\Sigma_{\mathsf{in}}$ and $\cY_n$ is a countable set.
We call $\Sigma_{\mathsf{in}}$ the \emph{input alphabet} of the channel.
Common choices for $\cY_n$ in examples we study are $\cY_n=\Sigma_{\mathsf{out}}^*$ for some non-empty finite set $\Sigma_{\mathsf{out}}$, where $\Sigma_{\mathsf{out}}^*=\bigcup_{m=0}^\infty \Sigma_{\mathsf{out}}^m$ is the set of finite-length strings over $\Sigma_{\mathsf{out}}$, and $\cY_n=(\Sigma_{\mathsf{out}}^*)^T$ for some fixed integer $T\geq 2$.
The latter setting is relevant for the study of \emph{multi-trace channels}, which we discuss below.

To simplify notation, we drop the subscript $n$ and, given $x\in\cX_n$, denote by $Z(x)\in\cY_n$ the output of channel $Z_n$ on input $x$.
Later on we will impose constraints on the behavior of $Z$ to obtain capacity theorems.
For two inputs $x\in\Sigma_{\mathsf{in}}^n$ and $x'\in\Sigma_{\mathsf{in}}^{n'}$ (of possibly different lengths), we will also write $(Z(x),Z(x'))$ to mean that $Z$ is applied independently to $x$ and $x'$, and that the corresponding outputs can be separated according to the comma that we assume is a special symbol never appearing in an output of $Z$.
This convention extends to more than two inputs.

As mentioned above, a special case of interest are multi-trace channels.
Informally, a $t$-trace channel $Z$ is specified by $T$ channels $Z_1,\dots,Z_T$ with the same input spaces $(\cX_n)_{n\in\N}$.
On input $x$, the channel sends $x$ independently through the $t$ channels $Z_1,\dots,Z_T$ and outputs the resulting $T$ outputs (called \emph{traces}) from these channels.
More formally, on input $x$ the channel $Z$ outputs $(y^{(1)},\dots,y^{(T)})$ with probability $\prod_{i=1}^T p^{(i)}(y^{(i)}|x)$, where $p^{(i)}$ denotes the conditional probability rule of channel $Z_i$.
We will also write the output of $Z$ as $Z(x)=(Z_1(x),\dots,Z_T(x))$.
In particular, the $T$ output traces are conditionally independent given $x$.

\subsection{Entropy and information rates for stochastic processes and notions of capacity}

In this section, we define various types of ``channel capacity'', and state some basic bounds.

\begin{definition}[Entropy rate]
    For a process $X=(X_i)_{i\in\N}$, we define the \emph{entropy rate} of $X$, denoted by $H(X)$, as
    \begin{equation*}
        H(X)= \liminf_{n\to \infty} \frac{H(X_1^n)}{n}.
    \end{equation*}
\end{definition}

\begin{definition}[Information rate]
    For a channel $Z$ and input process $X=(X_i)_{i\in\N}$, we define the \emph{information rate} achievable by $X$ over $Z$ as
    \begin{equation*}
        I(X;Z(X)) = \liminf_{n\to \infty} \frac{I(X_1^n;Z(X_1^n))}{n}.
    \end{equation*}
\end{definition}

\begin{definition}[Information capacity]
    Given a channel $Z$, we define its \emph{information capacity}, denoted by $\ICap(Z)$, as
    \begin{equation*}
        \ICap(Z) = \liminf_{n\to\infty} \sup_{P_{X_1^n}} \frac{I(X_1^n;Z(X_1^n))}{n},
    \end{equation*}
    where the supremum inside the limit is taken over all length-$n$ input probability distributions $X_1^n$.
\end{definition}

We now introduce some useful definitions about stochastic processes.
\begin{definition}[Block-independent process]
    We say that a stochastic process $X=(X_i)_{i\in\N}$ is \emph{block-independent} with blocklength $b$ if for any integer $t\geq 1$ and $n=tb$ we have
    \begin{equation*}
        \Pr[X_1^n = x_1^n]=\prod_{i=1}^t \Pr[X_1^b=x_{(i-1)b+1}^{ib}].
    \end{equation*}
\end{definition}

\begin{definition}[Stationary ergodic process]
    We say that a stochastic process $X=(X_i)_{i\in\N}$ is \emph{stationary} if $(X_{1},\dots,X_{n})\sim (X_{1+\tau},\dots,X_{n+\tau})$ for any integers $n,\tau\in\N$.
    Moreover, we say that $X$ is \emph{stationary ergodic} if $X$ is stationary and for any function $f\in L^1$ we almost surely have
    \begin{equation*}
        \E[f(X_1)] = \lim_{n\to\infty}\frac{1}{n}\sum_{j=1}^n f(X_j).
    \end{equation*}
\end{definition}

\begin{definition}[Stationary capacity]
    Given a channel $Z$, we define its \emph{stationary capacity}, denoted by $\SCap(Z)$, as
    \begin{equation*}
        \SCap(Z) = \sup_X I(X;Z(X)),
    \end{equation*}
    where the supremum is taken over all stationary ergodic input processes $X=(X_i)_{i\in\N}$.
\end{definition}

An input process $X=(X_i)_{i\in\N}$ is \emph{$m$-th order Markov} if 
\begin{equation*}
    \Pr[X_n = x_n | X_{n-1}=x_{n-1},\dots,X_1=x_1]=\Pr[X_n=x_n|X_{n-1}=x_{n-1},\dots,X_{n-m}=x_{n-m}]
\end{equation*}
for any $n$ and $x_1,\dots,x_n$.
\begin{definition}[$m$-th order Markov capacity]
    Given a channel $Z$, we define its \emph{$m$-th order Markov capacity}, denoted by $\SCap^{(m)}(Z)$, as
    \begin{equation*}
        \SCap^{(m)}(Z) = \sup_X I(X;Z(X)),
    \end{equation*}
    where the supremum is taken over all $m$-th order stationary Markov input processes $X=(X_i)_{i\in\N}$.
\end{definition}

\begin{remark}
    \em
    We have $\ICap(Z)\geq \SCap(Z)\geq \SCap^{(m)}(Z)$ for any channel $Z$ and any integer $m\geq 0$.
\end{remark}

Before we define the coding capacity of a channel $Z$, we need some auxiliary definitions.

\begin{definition}[$(n,R,\eps)$-code for a channel]
    Let $Z$ be a channel with non-empty finite input alphabet $\Sigma_{\mathsf{in}}$ and output spaces $(\cY_n)_{n\in\N}$.
    We say that $\cC\subseteq\Sigma_{\mathsf{in}}^n$ is an \emph{$(n,R,\eps)$-code for $Z$} if $|\cC|\geq 2^{Rn}$ and there exists a deterministic function $\Dec:\cY_n\to \Sigma_{\mathsf{in}}^n$ such that $\Pr[\Dec(Z(C))\neq C]\leq \eps$, where $C$ is uniformly distributed over $\cC$ (i.e., the average decoding error probability of $\cC$ is at most $\eps$).  
\end{definition}

\begin{definition}[Achievable rate]
    Let $Z$ be a channel.
    We say that a real number $R>0$ is an \emph{achievable rate for $Z$} if there exists a family of codes $\{\cC_n\}_{n\in\N}$ and an integer $n_0$ such that each $\cC_n$ is an $(n,R_n,\eps_n)$-code for $Z$ with $R_n\geq R$ for all $n\geq n_0$ and $\lim_{n\to\infty} \eps_n=0$.
\end{definition}

\begin{definition}[Coding capacity]
    Given a channel $Z$, we define its \emph{coding capacity}, denoted by $\CCap(Z)$, as the supremum of all $R\geq 0$ that are achievable rates for $Z$.
\end{definition}

\begin{remark}\label{rem:ccap-leq-icap}
    \em
    It follows via Fano's inequality that $\CCap(Z)\leq \ICap(Z)$ for any channel $Z$. See, e.g.,~\cite[Theorem 19.7]{PW24}.
\end{remark}

\subsection{A strengthening of Fekete's lemma}
We will use the following strengthening of Fekete's lemma due to de Bruijn and Erd\H{o}s~\cite{dBE52} (also used in~\cite{MD24}).
See~\cite{FR20} for an excellent discussion on this topic.
\begin{lemma}[Strengthened Fekete's lemma~\cite{dBE52}]\label{lem:fekete}
    Let $(a_n)_{n\in\N}$ be a sequence that is ``almost'' subadditive, in the sense that
    \begin{equation*}
        a_{n+m}\leq a_n+a_m+f(n+m)
    \end{equation*}
    for all $n\leq m\leq 2n$ and some $f$ such that $\sum_{n\in\N} f(n)/n^2$ converges.
    Then, $\lim_{n\to\infty} a_n/n$ exists.
\end{lemma}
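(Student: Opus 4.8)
The starting point is the textbook proof of Fekete's lemma for a genuinely subadditive sequence: writing $n = qm+r$ and iterating subadditivity gives $a_n/n \lesssim a_m/m$, hence $\limsup_n a_n/n \le \inf_m a_m/m = \liminf_n a_n/n$. Two features of the present statement obstruct a direct application. First, subadditivity is only available when the two parts are within a factor two, so we may never split off a tiny remainder $r$ and must only ever combine blocks of comparable size. Second, there is an additive error $f(n+m)$, and since $\sum_n f(n)/n^2<\infty$ permits $f$ to have sparse but large spikes (e.g.\ located at powers of two), we cannot commit to one fixed decomposition of $n$: a fixed dyadic splitting would hit those spikes at every scale and accumulate an error that fails to be $o(n)$. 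So the plan is to reduce to bounding $\limsup_n a_n/n$ by $\gamma:=\liminf_n a_n/n$ — assuming $\gamma$ finite, and treating $\gamma=-\infty$ by the same argument with ``$\gamma+\eps$'' replaced by an arbitrary negative constant — and then to bound $a_n$ by decomposing $n$ into $\approx n/m_0$ blocks of a single well-chosen size $m_0$ and recombining them along a \emph{randomized} balanced binary tree. (We may assume $f\ge 0$, since replacing $f$ by $\max(f,0)$ only weakens the hypothesis.)

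\textbf{Choosing the block size; handling the remainder.} Fix $\eps>0$. Since $\sum_s f(s)/s^2<\infty$, choose $N$ with $\sum_{s\ge N} f(s)/s^2<\eps$; since $\gamma=\liminf_n a_n/n$, choose $m_0\ge N$ with $a_{m_0}\le(\gamma+\eps)m_0$ (both conditions hold simultaneously because the tail bound holds for every $m_0\ge N$). For large $n$ write $n=qm_0+r$ with $0\le r<m_0$, and view $n$ as the concatenation of $q-1$ intervals of length $m_0$ and one ``imperfect'' interval of length $m_0+r\in[m_0,2m_0)$. Because $m_0$ is fixed once and for all, $a_{m_0+r}$ is one of finitely many fixed numbers, hence $O(1)$ as $n\to\infty$, so the imperfect interval contributes a negligible $O(1)/n$ to $a_n/n$; the remaining $q-1\le n/m_0$ intervals contribute exactly $a_{m_0}\le(\gamma+\eps)m_0$ each, totalling at most $(\gamma+\eps)n$.

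\textbf{Recombining with controlled error (the crux).} It remains to bound the total $f$-error incurred in recombining these $\approx q$ intervals using only combinations of comparable blocks. I would do this top-down: a super-block of $j\ge 2$ consecutive intervals is split into a prefix of $j_1$ and a suffix of $j-j_1$ intervals with $j_1$ uniform in $[\lceil j/3\rceil,\lfloor 2j/3\rfloor]$ (so the two parts have sizes within a factor two, as the hypothesis requires), recursing down to single intervals; splitting a super-block spanning total length $s$ costs $f(s)$. The role of the random split ratios is that the span $S_\ell$ of the $\ell$-th ancestor of a fixed interval is not the deterministic $2^\ell m_0$ but is spread (by a central-limit-type fluctuation of a product of $\ell$ i.i.d.\ split ratios) over a multiplicative window around $2^\ell m_0$. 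Charging each super-block's cost $f(s)$ equally to the $\approx s/m_0$ intervals it spans, the expected charge to a fixed interval is $\approx m_0\sum_\ell \E[f(S_\ell)/S_\ell]$; the smearing turns each $\E[f(S_\ell)/S_\ell]$ into an average of $f(s)/s$ over a window of width $\approx S_\ell$, so each scale $s$ is effectively charged weight $\approx f(s)/s^2$ (summed against the $O(1)$-per-scale range of relevant $\ell$). Hence the expected total error is $\lesssim q\,m_0\sum_{s\ge c\,m_0} f(s)/s^2 \le n\sum_{s\ge N} f(s)/s^2 < \eps n$, so some realization of the random tree has error $<\eps n$. Combining, $a_n\le(\gamma+\eps)n+\eps n+O(1)$ for all large $n$; letting $\eps\to 0$ gives $\limsup_n a_n/n\le\gamma$, and with $\liminf_n a_n/n=\gamma$ the limit exists.

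I expect this last step — quantifying the ``smearing'' of scales produced by the randomized decomposition and showing that the expected accumulated error is genuinely controlled by $\sum_n f(n)/n^2$ rather than merely by the divergent-in-general $\sum_n f(n)/n$ — to be the main obstacle; this is precisely where the de Bruijn--Erd\H{o}s strengthening goes beyond the textbook Fekete argument. The remaining ingredients (the reduction to $\limsup\le\liminf$, the remainder bookkeeping, and the fact that a balanced binary recombination of equal-size blocks always exists) are routine.
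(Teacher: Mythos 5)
The paper does not actually prove this lemma: it is stated as a known result of de Bruijn and Erd\H{o}s~\cite{dBE52}, with a pointer to~\cite{FR20} for discussion, and is used as a black box. So there is no in-paper proof to compare your attempt against, and I can only assess your argument on its own merits.

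Your reductions are sound: assuming $f\ge 0$ is harmless, it suffices to show $\limsup_n a_n/n\le\gamma:=\liminf_n a_n/n$, one can simultaneously pick $m_0\ge N$ with $a_{m_0}\le(\gamma+\eps)m_0$ and $\sum_{s\ge N}f(s)/s^2<\eps$, the imperfect block of length $m_0+r$ contributes $O(1)$, and you have correctly identified why the naive dyadic recombination fails (a fixed tree can resonate with sparse spikes of $f$, which $\sum f(n)/n^2<\infty$ permits). But the step you yourself flag as the crux is a genuine gap, not just an omitted routine computation. What the charging argument needs is the following anti-concentration statement: for a fixed leaf, if $S_0<S_1<\cdots$ denotes the increasing sequence of spans of its ancestors, then for every integer $s$ one has $\Pr[\exists\ell:\ S_\ell=s]=O(1/s)$ (uniformly over $s$ and over the position of the leaf). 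Only then does the double sum $\E\bigl[\sum_\ell m_0 f(S_\ell)/S_\ell\bigr]=m_0\sum_s (f(s)/s)\Pr[\exists\ell: S_\ell=s]$ collapse to $O\bigl(m_0\sum_{s\ge m_0}f(s)/s^2\bigr)$, and hence the total expected error to $O(\eps n)$. Your CLT-style "smearing over a window of width $\approx S_\ell$" picture does not directly give this: the split ratios $j_1/j$ are discretely supported with supports that vary with $j$, the ratios along a root-to-leaf path are not i.i.d., the depth of the fixed leaf is itself random, and products of such variables need not have the pointwise $O(1/s)$ hitting-probability you need (log-normal spread of $e^{\Theta(\sqrt{\ell})}$ says the mass is spread out, but says nothing directly about the mass at a single integer). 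None of this is unfixable — a careful local anti-concentration argument for the span sequence may well work — but as written the proposal proves nothing beyond the elementary reductions; the heart of the lemma is exactly the part you leave heuristic.

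I would also flag a smaller bookkeeping point: because one block has length $m_0+r\in[m_0,2m_0)$, choosing $j_1$ uniformly in $[\lceil j/3\rceil,\lfloor 2j/3\rfloor]$ does not quite guarantee that the two sides of every split have \emph{lengths} within a factor $2$ (it guarantees the block counts do), so the hypothesis "$n\le m\le 2n$" can fail by a little at nodes whose subtree contains the odd block; you need a slightly narrower range for $j_1$ to absorb the $+r$ slack. This is a minor technicality, but worth keeping honest since the whole point of the restricted hypothesis is that you cannot split off small remainders.
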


\subsection{Concentration inequalities}
In this paper, we shall use the following standard concentration inequalities.

\begin{lemma}[\protect{Multiplicative Chernoff bound; see, e.g., \cite[Theorems 4.4 and 4.5]{mitzenmacher2017probability}}]\label{lem:chernoff}
    Suppose that $X_1, \ldots, X_n$ are i.i.d.\ random variables taking values in $\zo$. Let $X = \sum_{i=1}^n X_i$ and $\mu = \bbE[X]$. Then, for any $0<\alpha < 1$ we have
    \[
    \Pr[X > (1 + \alpha) \mu] < e^{-\frac{\mu  \alpha^2}{3}}
    \]
    and
    \[
    \Pr[X < (1 - \alpha) \mu] < e^{-\frac{\mu  \alpha^2}{2}}.
    \]
\end{lemma}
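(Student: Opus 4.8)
The plan is to run the textbook exponential-moment (Chernoff) argument, so that the only real content is two elementary one-variable inequalities at the end. Write $p_i := \bbE[X_i]$, so that $\mu = \sum_{i=1}^n p_i$. For the upper tail, fix a parameter $t > 0$ and apply Markov's inequality to the nonnegative random variable $e^{tX}$:
\[
\Pr[X > (1+\alpha)\mu] = \Pr\left[e^{tX} > e^{t(1+\alpha)\mu}\right] \le \frac{\bbE[e^{tX}]}{e^{t(1+\alpha)\mu}} .
\]
By independence and the $\zo$-valued assumption, $\bbE[e^{tX}] = \prod_{i=1}^n \bbE[e^{tX_i}] = \prod_{i=1}^n \left(1 + p_i(e^t - 1)\right) \le \prod_{i=1}^n e^{p_i(e^t - 1)} = e^{\mu(e^t - 1)}$, using $1 + x \le e^x$. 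Substituting this in and minimizing the resulting bound $\exp\{\mu(e^t - 1 - t(1+\alpha))\}$ over $t > 0$ — the minimizer is $t = \ln(1+\alpha)$ — gives
\[
\Pr[X > (1+\alpha)\mu] \le \left(\frac{e^\alpha}{(1+\alpha)^{1+\alpha}}\right)^{\mu} .
\]

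To finish the first bound it then suffices to show the analytic inequality $(1+\alpha)\ln(1+\alpha) \ge \alpha + \tfrac{\alpha^2}{3}$ for all $0 < \alpha < 1$. I would prove this by setting $g(\alpha) = (1+\alpha)\ln(1+\alpha) - \alpha - \tfrac{\alpha^2}{3}$, noting $g(0) = 0$, and checking that $g'(\alpha) = \ln(1+\alpha) - \tfrac{2\alpha}{3} \ge 0$ on $[0,1]$: this derivative vanishes at $\alpha = 0$, is increasing for $\alpha < \tfrac12$ and decreasing for $\alpha > \tfrac12$, and equals $\ln 2 - \tfrac23 > 0$ at $\alpha = 1$, so it stays nonnegative on the whole interval. (Alternatively, the Taylor expansion $\ln(1+\alpha) = \alpha - \tfrac{\alpha^2}{2} + \tfrac{\alpha^3}{3} - \cdots$ makes this routine.)

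The lower-tail bound is entirely symmetric: for $s > 0$, apply Markov's inequality to $e^{-sX}$, bound $\bbE[e^{-sX}] \le e^{\mu(e^{-s} - 1)}$ in the same way, and minimize $\exp\{\mu(e^{-s} - 1 + s(1-\alpha))\}$ over $s > 0$ (the minimizer is $s = -\ln(1-\alpha) > 0$) to get $\Pr[X < (1-\alpha)\mu] \le \left(\frac{e^{-\alpha}}{(1-\alpha)^{1-\alpha}}\right)^{\mu}$. This reduces the claim to $(1-\alpha)\ln(1-\alpha) \ge -\alpha + \tfrac{\alpha^2}{2}$ for $0 < \alpha < 1$, which one checks the same way: $h(\alpha) = (1-\alpha)\ln(1-\alpha) + \alpha - \tfrac{\alpha^2}{2}$ has $h(0) = 0$ and $h'(\alpha) = -\ln(1-\alpha) - \alpha \ge 0$ because $-\ln(1-\alpha) \ge \alpha$. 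I expect the only mildly delicate points to be these two one-variable inequalities and keeping track of the constants $3$ and $2$ in the exponents; everything else is the standard Chernoff calculation. Since the statement is only quoted from \cite{mitzenmacher2017probability}, omitting the proof entirely is also a perfectly reasonable option.
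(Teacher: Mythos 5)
The paper does not prove this lemma at all — it is stated purely as a citation to Mitzenmacher–Upfal (Theorems 4.4 and 4.5), so there is no ``paper's proof'' to compare against. Your proof is the standard exponential-moment argument that the cited textbook itself uses, and it is correct: the optimization over $t$ (resp.\ $s$) is carried out properly, the reduction to the two one-variable inequalities $(1+\alpha)\ln(1+\alpha)\ge\alpha+\alpha^2/3$ and $(1-\alpha)\ln(1-\alpha)\ge-\alpha+\alpha^2/2$ is the right bookkeeping, and your calculus checks of those inequalities on $(0,1)$ are sound (for the first, $g'(0)=0$ and $g''$ changes sign once at $\alpha=\tfrac12$ with $g'(1)=\ln 2-\tfrac23>0$; for the second, $h'=-\ln(1-\alpha)-\alpha\ge 0$ by the series for $-\ln(1-\alpha)$). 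One small remark: you actually prove the slightly more general version for independent (not necessarily identically distributed) $\zo$-valued variables, which is fine and subsumes the i.i.d.\ statement. As you note yourself, leaving this as a citation, as the paper does, is equally acceptable.
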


\begin{lemma}[\protect{Hoeffding's inequality; see, e.g., \cite[Theorem 2.2.6]{vershynin_high-dimensional_2018}}]\label{lem:hoeff}
    Suppose that $X_1, \ldots, X_n$ are independent random variables with finite first and second moments and $a_i\leq X_i \leq b_i$ for $1\leq i \leq n$. Let $X = \sum_{i=1}^n X_i$ and $\mu = \bbE[X]$. Then, for any $t > 0$ we have 
    \[
    \Pr[X - \mu > t] < \exp \left( -\frac{2t^2}{\sum_{i=1}^n (b_i - a_i)^2}\right).
    \]
\end{lemma}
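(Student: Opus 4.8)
The plan is to use the standard Cramér–Chernoff exponential moment method. For any $\lambda > 0$, Markov's inequality applied to the nonnegative random variable $e^{\lambda(X-\mu)}$ yields
\[
\Pr[X - \mu \geq t] \leq e^{-\lambda t}\, \bbE\!\left[e^{\lambda(X-\mu)}\right],
\]
and by independence of the $X_i$ the moment generating function factors as $\bbE[e^{\lambda(X-\mu)}] = \prod_{i=1}^n \bbE[e^{\lambda(X_i - \bbE X_i)}]$. So it suffices to bound each factor and then optimize over $\lambda$.

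The key step — and the main technical obstacle — is \emph{Hoeffding's lemma}: if $Y$ satisfies $\bbE Y = 0$ and $a \leq Y \leq b$ almost surely, then $\bbE[e^{\lambda Y}] \leq e^{\lambda^2(b-a)^2/8}$ for all $\lambda \in \R$. I would prove this by analyzing the cumulant generating function $\psi(\lambda) := \log \bbE[e^{\lambda Y}]$. One checks that $\psi(0) = 0$ and $\psi'(0) = \bbE Y = 0$, and that $\psi''(\lambda)$ equals the variance of $Y$ under the exponentially tilted measure $d\bbP_\lambda := e^{\lambda Y}\, d\bbP / \bbE[e^{\lambda Y}]$. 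Since $Y$ is confined to the interval $[a,b]$ under every measure, this variance is at most $\left(\frac{b-a}{2}\right)^2$ — the largest possible variance of a random variable supported on an interval of length $b-a$, attained by the two-point distribution on its endpoints. Hence $\psi''(\lambda) \leq (b-a)^2/4$ for all $\lambda$, and a second-order Taylor expansion with remainder gives $\psi(\lambda) \leq \lambda^2(b-a)^2/8$. (Alternatively, one can sidestep the tilting computation using convexity of $\exp$: for $y \in [a,b]$ write $e^{\lambda y} \leq \frac{b-y}{b-a}e^{\lambda a} + \frac{y-a}{b-a}e^{\lambda b}$, take expectations using $\bbE Y = 0$, and bound the resulting explicit function of $\lambda$ by the same second-derivative estimate.)

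Applying Hoeffding's lemma to $Y_i := X_i - \bbE X_i$, which lies in an interval of length $b_i - a_i$, and combining with the first step gives, for every $\lambda > 0$,
\[
\Pr[X - \mu \geq t] \leq \exp\!\left(-\lambda t + \frac{\lambda^2}{8}\sum_{i=1}^n (b_i - a_i)^2\right).
\]
Optimizing the exponent over $\lambda$, the minimum is attained at $\lambda^\star = 4t \big/ \sum_{i=1}^n (b_i - a_i)^2$, and substituting this value yields the bound $\exp\!\left(-2t^2 \big/ \sum_{i=1}^n (b_i - a_i)^2\right)$. To upgrade the inequality ``$\geq t$'' to the strict ``$> t$'' in the statement, I would apply the bound just obtained with $t$ replaced by $t+s$ for an arbitrary fixed $s > 0$: since $\Pr[X-\mu > t] \leq \Pr[X-\mu \geq t+s] \leq \exp\!\left(-2(t+s)^2 \big/ \sum_{i=1}^n (b_i-a_i)^2\right)$, this is strictly smaller than the claimed bound whenever $\sum_{i=1}^n (b_i-a_i)^2 > 0$, while in the degenerate case $\sum_{i=1}^n (b_i - a_i)^2 = 0$ the left-hand side is $0$ and the right-hand side is positive, so strictness holds trivially.
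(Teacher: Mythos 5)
The paper does not prove this lemma; it simply cites it as a standard result from Vershynin's textbook, so there is no in-paper argument to compare against. Your proposal is the standard Cramér--Chernoff proof via Hoeffding's lemma and is essentially correct, including the optimization $\lambda^\star = 4t/\sum_i (b_i-a_i)^2$ recovering the exponent $-2t^2/\sum_i (b_i-a_i)^2$. One small quibble: in the degenerate case $\sum_i (b_i-a_i)^2 = 0$ the right-hand side is $\exp(-\infty)=0$, not positive, so your claim of trivial strictness there does not hold; in that edge case the inequality as stated is vacuous/ill-posed rather than true, but this is a pathology of the statement, not of your argument.
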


\section{Capacity theorems for channels with input-correlated synchronization errors}\label{sec:shannon-gen}

\subsection{Admissible channels}\label{sec:admissible}

Recall from \cref{sec:channel-def} that we focus on channels with input spaces $\cX_n=\Sigma_{\mathsf{in}}^n$ for some non-empty finite set $\Sigma_{\mathsf{in}}$ and countable output spaces $\cY_n$.
We will show capacity theorems for all such channels satisfying a certain \emph{admissibility} property.
To define this property we must first introduce the notion of a \emph{well-behaved} channel.

\begin{definition}[Well-behaved channel]
    We say that a channel $Z^\star$ is \emph{well-behaved} if it satisfies the following properties: 
    \begin{enumerate}
        \item \textbf{Bounded Output Entropy:} \label[prop]{it:ent-bound}
        There exists a constant $c>0$ such that for every input process $X=(X_i)_{i\in\N}$ and every integer $n\geq 1$ we have $H(Z^\star(X_1^n))\leq cn$.

        \item \textbf{Concatenation:} \label[prop]{it:concat}
        For any process $X=(X_i)_{i\in \N}$ and indices $1\leq m\leq n$, we have that 
        \begin{equation*}
            X_1^{n}\to Z^\star(X_1^{m}),Z^\star(X_{m+1}^{n})\to Z^\star(X_1^{n}).
        \end{equation*}

         \item \textbf{Partition:} \label[prop]{it:partition}

        \begin{enumerate}
            \item \textbf{Prefix/Suffix-Partition:} \label[prop]{it:single-part}
            Fix any integer $\tau\geq 1$. 
            Then, there exists a non-decreasing sequence $\gamma_m=o(m)$ such that $\sum_{m\in\N}\gamma_m/m^2$ converges and for any process $X=(X_i)_{i\in \N}$ and integer $n\geq \tau$ there exist random variables $W_{\mathrm{pre}}$ and $W_{\mathrm{suf}}$ such that (1) $X_1^n\to Z^\star(X_1^{n}),W_{\mathrm{pre}}\to Z^\star(X_1^\tau),Z^\star(X_{\tau+1}^n)$, (2) $X_1^n\to Z^\star(X_1^{n}),W_{\mathrm{suf}}\to Z^\star(X_1^{n-\tau}),Z^\star(X_{n-\tau+1}^n)$, and (3) $H(W_{\mathrm{pre}}),H(W_{\mathrm{suf}})\leq \gamma_n$.

            \item \textbf{Amortized Block-Partition:} \label[prop]{it:multi-part}
            There exists a sequence $\alpha_{m}=o(m)$ such that for any process $X=(X_i)_{i\in \N}$, blocklength $b$, and number of blocks $t$ there exists a random variable $W$ such that $X_1^{tb}\to Z^\star(X_1^{tb}),W\to Z^\star(X_1^{b}),\dots,Z^\star(X_{(t-1)b+1}^{tb})$, and $H(W)\leq t\cdot \alpha_{b}$.
        \end{enumerate}
 
        \item \textbf{Amortized Preimage Size:} \label[prop]{it:preimg} 
        There exists a sequence $\beta_m = o(m)$ such that for any process $X=(X_i)_{i\in \N}$, blocklength $b$, and $n=tb+r$ for integers $t\geq 1$ and $0\leq r<b$, there exists a random variable $Y$ with countable support such that\footnote{If $r=0$ then $X_{tb+1}^{tb+r}$ is the empty string, and we take $Z^\star(X_{tb+1}^{tb+r})$ to be the empty string as well.}
        \begin{equation*}
            X_1^{n}\to Z^\star(X_1^b),\dots,Z^\star(X_{(t-1)b+1}^{tb}),Z^\star(X_{tb+1}^{tb+r})\to Y
        \end{equation*}
        and a deterministic function $\phi$ such that
        \begin{equation*}
            Z^\star(X_1^{n})=\phi(Z^\star(X_1^b),\dots,Z^\star(X_{(t-1)b+1}^{tb}),Z^\star(X_{tb+1}^{tb+r}),Y)
        \end{equation*}
        and $\log |\phi^{-1}(z)|\leq t\cdot \beta_{b}$ for all $z$. 
    \end{enumerate}
\end{definition}

We are now ready to define our notion of admissibility.
\begin{definition}[Admissible channel]
    We say that a channel $Z$ is \emph{admissible} if there exists a well-behaved channel $Z^\star$ such that for any process $X=(X_i)_{i\in\N}$ we have
    \begin{equation*}
        I(X;Z(X))=I(X;Z^\star(X)).
    \end{equation*}
    If this holds, we may also say that $Z$ is an \emph{admissible channel with respect to $Z^\star$}.
\end{definition}

We will prove the following results for admissible channels.

\begin{theorem}[Capacity theorem for admissible channels]\label{thm:cap-gen}
    Suppose that $Z$ is an admissible channel.
    Then,
    \begin{equation*}
        \ICap(Z)=\SCap(Z)=\lim_{m\to\infty} \SCap^{(m)}(Z)=\CCap(Z).
    \end{equation*}
\end{theorem}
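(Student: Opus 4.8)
The plan is to establish the chain of equalities by proving three things: (i) $\ICap(Z) = \SCap(Z)$, (ii) $\SCap(Z) = \lim_{m\to\infty}\SCap^{(m)}(Z)$, and (iii) $\ICap(Z) = \CCap(Z)$, since $\CCap(Z)\le \ICap(Z)$ holds for free by \cref{rem:ccap-leq-icap}. By rate preservation (property~\ref{it:rate-pres}), all information-rate quantities for $Z$ equal the corresponding quantities for $Z^\star$, so throughout we may work with $Z^\star$, which additionally enjoys bounded output entropy, concatenation, partition, and amortized preimage bounds. The anchor of the whole argument will be showing that $\ICap(Z^\star)$ is well-defined as a \emph{limit} (not merely a liminf) and is approached by block-independent sources, which are easily ``ergodicized'' into stationary ergodic sources with essentially the same rate.

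First I would show that the sequence $a_n := \sup_{P_{X_1^n}} I(X_1^n; Z^\star(X_1^n))$ is almost-superadditive. Given optimal sources on blocks of length $n$ and $m$, concatenate them independently; property~\ref{it:concat} together with the data-processing inequality gives $I(X_1^{n+m}; Z^\star(X_1^{n+m})) \ge I(X_1^{n+m}; Z^\star(X_1^n), Z^\star(X_{n+1}^{n+m}))$, and since the two halves are independent this splits as $I(X_1^n;Z^\star(X_1^n)) + I(X_{n+1}^{n+m}; Z^\star(X_{n+1}^{n+m}))$ up to the cost of recombining the two outputs into the single output $Z^\star(X_1^{n+m})$ — which is exactly what the amortized preimage bound~\ref{it:preimg} (with $t=2$) controls, contributing at most $O(\beta_{\max(n,m)})$. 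Hence $a_{n+m} \ge a_n + a_m - f(n+m)$ with $f$ summable in the sense of \cref{lem:fekete}, so by the strengthened Fekete lemma (applied to $-a_n$) the limit $\lim_n a_n/n$ exists and equals $\sup_n a_n/n \ge \ICap(Z^\star) = \ICap(Z)$; the reverse inequality $\lim \le \liminf = \ICap$ is immediate. Then, taking a near-optimal source on a fixed block length $b$ and making it block-independent with period $b$, the amortized block-partition property~\ref{it:multi-part} lets me relate $I(X_1^{tb}; Z^\star(X_1^{tb}))$ to $\sum_i I(X_{(i-1)b+1}^{ib}; Z^\star(\cdot))= t\cdot a_b$ up to an additive $t\alpha_b$ term, so such block-independent sources achieve rate at least $a_b/b - \alpha_b/b$, which tends to $\ICap(Z)$ as $b\to\infty$. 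A block-independent source is not stationary, but the standard trick of choosing a uniformly random cyclic shift of the block boundary (and then passing to the stationary ergodic source obtained as an appropriate ergodic component) changes the per-symbol information rate by only $o(1)$, using the prefix/suffix-partition property~\ref{it:single-part} to bound the ``edge effects'' of a misaligned block by $\gamma_n = o(n)$. This yields $\SCap(Z) \ge \ICap(Z)$, and combined with the trivial $\SCap(Z)\le \ICap(Z)$ gives (i). For (ii), any stationary ergodic source can be approximated in information rate by its $m$-th order Markov truncation as $m\to\infty$ (again because conditioning on a longer past only helps, and the ``memory beyond $m$'' contributes vanishing rate as $m\to\infty$ by a martingale/continuity argument on entropy rates), so $\lim_m \SCap^{(m)}(Z) = \SCap(Z)$.

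Finally, for (iii) I would prove $\CCap(Z) \ge \SCap(Z)$ by a random coding argument: fix $m$ and a near-optimal $m$-th order stationary Markov source $X$ on $Z$; drawing a random codebook of $2^{Rn}$ codewords i.i.d.\ from $X_1^n$ for any $R < I(X;Z(X))$, a joint-typicality / Shannon-McMillan-Breiman decoder succeeds with probability $1-o(1)$ because the stationary ergodic (hence AEP-satisfying) structure of $X$ and the bounded output entropy of $Z^\star$ (property~\ref{it:ent-bound}, which ensures the relevant output space is not too large) make the standard error-probability union bound go through. This shows every rate below $\SCap(Z)$ is achievable, so $\CCap(Z)\ge \SCap(Z) = \ICap(Z)$, and the reverse holds by \cref{rem:ccap-leq-icap}.

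The main obstacle I anticipate is the Fekete-type superadditivity step, and more precisely extracting a genuinely \emph{stationary ergodic} optimal source from block-independent near-optimal sources: one must verify that the ``stitching'' errors when blocks are concatenated or shifted are captured \emph{exactly} by the quantitative bounds $\alpha_b, \beta_b, \gamma_n$ provided by the admissibility axioms, and that these $o(\cdot)$ error terms, divided by $n$, vanish uniformly in the limit — this is where the delicate bookkeeping with Markov chains $X_1^n \to (\text{outputs}, W) \to (\text{joint output})$ and data-processing has to be done carefully. The random-coding converse direction (iii) is comparatively routine given bounded output entropy, but one should be slightly careful that the decoder only sees $Z(C)$, not $Z^\star(C)$; this is handled by noting $I(X;Z(X)) = I(X;Z^\star(X))$ and that any decoder for $Z^\star$ can be simulated after applying the deterministic/stochastic map relating $Z$ and $Z^\star$ implicit in property~\ref{it:rate-pres}.
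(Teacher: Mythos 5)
Your overall architecture (splitting into $\ICap=\SCap$, the Markov limit, and $\ICap=\CCap$, and working with $Z^\star$ throughout via \cref{it:rate-pres}) matches the paper, and your treatment of $\ICap=\SCap$ --- block-independent near-optimal sources, \cref{it:multi-part} to compare $I(X_1^{tb};Z^\star(X_1^{tb}))$ with $t\cdot I(X_1^b;Z^\star(X_1^b))$, then a uniformly random shift to obtain a stationary ergodic source, with \cref{it:single-part} controlling edge effects --- is essentially \cref{thm:icap-scap} together with \cref{lem:convex-comb-gen,lem:block-ind-rate-gen}. The coding direction, however, has a genuine gap. You draw codewords i.i.d.\ from a stationary ($m$-th order Markov) source and invoke ``joint typicality / Shannon--McMillan--Breiman.'' For a synchronization channel the joint pair $(X_1^n,Z(X_1^n))$ is not a stationary process in any useful sense: the output has random length and no alignment with the input, so SMB says nothing about the information density $i_{X_1^n,Z(X_1^n)}$, and \cref{it:ent-bound} does not rescue this --- this is precisely the central difficulty of the whole subject. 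The paper instead proves concentration of the information density for \emph{block-independent} sources (\cref{thm:conv-info-density-gen}): against the tuple of per-block outputs $\eta^\star_t$ the density is a sum of i.i.d.\ terms (\cref{lem:conv-star-gen}), and one passes from $\eta^\star_t$ to the true output $Z^\star(X_1^{tb})$ via \cref{it:preimg} and Dobrushin's information-density lemma (\cref{lem:id-dob-gen}) before applying \cref{lem:shannon-coding}; blocklengths that are not multiples of $b$ are then handled by a trimming-channel padding trick in \cref{thm:icap-ccap}.

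Two further steps are off as written. Your Fekete argument states that \cref{it:concat} plus data processing gives $I(X_1^{n+m};Z^\star(X_1^{n+m}))\ge I(X_1^{n+m};Z^\star(X_1^n),Z^\star(X_{n+1}^{n+m}))$; the inequality actually goes the other way, since \cref{it:concat} says the pair \emph{determines} the full output. That yields exact subadditivity of $a_n$, which is the paper's route (\cref{lem:icap-exist-gen}) and needs no error term; your approximate superadditivity could be repaired via \cref{it:preimg}, but that property is only stated for equal-length blocks, so as written the step does not go through for general $n,m$. Finally, for $\lim_{m\to\infty}\SCap^{(m)}$, a generic ``martingale/continuity argument on entropy rates'' handles $H(\hat X)\ge H(X)$ but not the conditional term $H(\hat X\mid Z^\star(\hat X))$, which is where the channel's memory enters; the paper bounds it by $\frac{1}{b}H(X_1^b\mid Z^\star(X_1^b))+\alpha_b/b$ using \cref{it:multi-part} (see \cref{thm:icap-mcap}), and some such channel-aware argument is needed here.
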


The proof of \cref{thm:cap-gen} proceeds via a series of theorems, where we adapt and generalize approaches from~\cite{Dob67,LT21,MD24}. The equality $\ICap(Z)=\SCap(Z)$ corresponds to \cref{thm:icap-scap} in \cref{sec:icap-scap}.
The equality $\ICap(Z)=\lim_{m\to\infty} \SCap^{(m)}(Z)$ corresponds to \cref{thm:icap-mcap} in \cref{sec:icap-mcap}.
The equality $\ICap(Z)=\CCap(Z)$ corresponds to \cref{thm:icap-ccap} in \cref{sec:good-code}.

The next result follows in a standard manner from \cref{thm:cap-gen} using the approach of~\cite{PLW22}, and is relevant for the construction of efficient capacity-achieving codes for admissible channels.
For simplicity, we present it for channels with binary input alphabet.
We prove this result in \cref{sec:dense-codes}.

\begin{restatable}[Dense capacity-achieving codes for admissible channels]{theorem}{densecodegen}\label{thm:dense-code-gen}
    Let $Z$ be an admissible channel with binary input alphabet such that $\ICap(Z)>0$.
    Then, for any $\eps,\zeta>0$ there exist $\gamma\in(0,1/2)$ and integers $b=b(\eps,\zeta)$ and $t(\eps,\zeta)$ depending only on $\eps$ and $\zeta$ such that for all $t\geq t(\eps,\zeta)$ there exists a code $\cC$ with blocklength $n=t\cdot b$, rate $R\geq \ICap(Z)-\eps$, and maximal decoding error probability $\eps$ over $Z$ such that for all codewords $c\in\cC$ we have $\gamma\zeta n\leq w(c_i^{i+\zeta n})\leq (1-\gamma)\zeta n$ for all $i\in[(1-\zeta)n]$, where $w(\cdot)$ denotes the Hamming weight.
\end{restatable}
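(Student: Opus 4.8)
The plan is to follow the strategy of Pernice--Li--Wootters~\cite{PLW22}: start from a capacity-achieving code for $Z$ with nice block structure coming from \cref{thm:cap-gen}, and then prune it so that every codeword has balanced Hamming weight on every window of length $\zeta n$, losing only a negligible amount of rate. First I would use \cref{thm:cap-gen} (specifically $\ICap(Z)=\lim_{m}\SCap^{(m)}(Z)$) to fix a large order $m$ and a stationary $m$-th order Markov source $X$ whose information rate over $Z$ is at least $\ICap(Z)-\eps/4$. By a standard argument (Fano plus a random-coding/typicality argument over the block-independent source obtained by running $X$ for blocks of length $b$ and resetting, which achieves rate $\to I(X;Z(X))$ as $b\to\infty$; this is exactly how $\ICap(Z)=\CCap(Z)$ is proved in \cref{sec:good-code}), for all large $b$ and all large $t$ there is a code $\cC_0\subseteq\{0,1\}^{n}$, $n=tb$, with rate $\ge \ICap(Z)-\eps/2$ and maximal decoding error probability $\le \eps/2$ over $Z$, and moreover whose codewords are ``typical'' for the block-independent version of $X$.

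The key probabilistic input is that the chosen source $X$ can be taken to put mass bounded away from $0$ and $1$ on both symbols in a suitably spread-out way. Concretely, because $\ICap(Z)>0$ (a deletion-type channel with $d(\cdot)<1$ everywhere, or any admissible channel of interest here, has positive capacity), the optimal $m$-th order Markov source is not supported on a low-entropy set, so its stationary distribution assigns probability in $(\gamma',1-\gamma')$ to seeing a $1$ within any window of constant length, for some constant $\gamma'>0$ depending on $m$. Then for a window of length $w=\zeta n$ (which grows with $n$), the Hamming weight $w(X_i^{i+w})$ is a sum over a mixing/Markov process, so by a Markov-chain concentration bound (e.g. Hoeffding for Markov chains, or simply partitioning the window into $w/m$ nearly-independent sub-blocks and applying \cref{lem:chernoff}) it concentrates around its mean $\Theta(w)$, with deviation probability $e^{-\Omega(w)}=e^{-\Omega(\zeta n)}$. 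A union bound over the at most $n$ starting positions $i$ and over the $2^{Rn}$ codewords shows the expected fraction of ``bad'' codewords (those violating $\gamma\zeta n\le w(c_i^{i+\zeta n})\le(1-\gamma)\zeta n$ for some $i$) is $o(1)$, provided $R<c'$ is small enough relative to the exponent; since $R\le c$ is bounded by the output-entropy constant, we instead argue as in~\cite{PLW22}: we do not pick $\cC_0$ as a worst-case code but as a \emph{random} subcode of the typical set, so that each codeword is individually typical and hence automatically weight-balanced except with probability $e^{-\Omega(\zeta n)}$, and we simply delete the few bad codewords.

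So the steps in order: (1) invoke \cref{thm:cap-gen} to get a good $m$-th order stationary Markov source $X$ with $I(X;Z(X))\ge\ICap(Z)-\eps/3$ and $H_\infty$-type spreadness of its stationary law; (2) form the block-independent source $\bar X$ with block length $b$ large enough that its per-symbol information rate over $Z$ is still $\ge\ICap(Z)-\eps/2$ (here I use \textbf{Concatenation} and \textbf{Amortized preimage size} exactly as in the proof of $\ICap=\CCap$ to control the rate loss from block-independence and the mutual-information chain rule); (3) run the random-coding argument to produce a code with rate $\ge\ICap(Z)-\eps$, error $\le\eps/2$, all of whose codewords are typical for $\bar X$; (4) prove the weight-balance concentration lemma for typical strings of $\bar X$ via \cref{lem:chernoff} applied blockwise; (5) discard the $\le(1-o(1))^{-1}$-negligible set of codewords that are either bad for decoding or unbalanced, adjust $b,t,\gamma$ accordingly, and fix $\zeta$ to the given value throughout. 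The main obstacle I expect is step (2)--(3): making sure the random-coding/typicality argument that underlies $\ICap(Z)=\CCap(Z)$ in \cref{sec:good-code} can be carried out with the additional bookkeeping that each individual codeword lies in the typical set of $\bar X$ (rather than just using a good code as a black box), so that the weight concentration of step (4) applies codeword-by-codeword; this requires re-opening that proof rather than citing it, but is exactly the maneuver performed in~\cite{PLW22} and goes through because the typical set has $2^{(1-o(1))H(\bar X_1^n)}$ elements, which dominates $2^{Rn}$.
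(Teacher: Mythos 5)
Your proposal follows essentially the same PLW22 strategy as the paper: take the capacity-achieving code produced by i.i.d.\ sampling from a block-independent source $\hat X$ (as in \cref{coro:code}), observe that most codewords are weight-balanced on every window, and discard the bad ones at negligible rate cost. The paper's proof of \cref{thm:dense-code-gen} in \cref{sec:dense-codes} is exactly this, and the worry you flag in your final paragraph is in fact a non-issue: \cref{coro:code} is already proved via \cref{lem:shannon-coding} by sampling $M$ codewords i.i.d.\ from $\hat X_1^n$, so no re-opening is required.

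The two places where you diverge from the paper are worth noting. First, the paper does not invoke the Markov-capacity result at all: it takes the stationary ergodic process $\overline X$ constructed explicitly in the proof of \cref{thm:icap-scap} (a random time-shift of a block-independent process $\hat X$) and feeds $\overline X$, and then $\hat X$, into \cref{lem:dense-stat-erg} (PLW22's Proposition 3.4), which delivers the two-sided window-weight concentration as a black box for \emph{any} stationary ergodic source with $\Pr[X_1=1]\in(0,1)$. Your route through $m$-th order Markov sources is workable but strictly less general, and your sketch of the concentration step has a gap: ``partitioning the window into $w/m$ nearly-independent sub-blocks'' is not correct for an $m$-th order Markov chain, since consecutive length-$m$ blocks are genuinely dependent (the state is the last $m$ symbols). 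To make that blocking argument go through you would need sub-blocks of length $\gg$ the chain's mixing time, not length $m$; alternatively you can appeal to a Hoeffding-type bound for uniformly ergodic Markov chains. Second, your justification that the chosen source has ``spreadness'' is a bit loose: what is actually needed, and what the paper uses, is only that $\Pr[X_1=1]\in(0,1)$, which follows immediately from $I(X;Z(X))>0$; the window-concentration then comes from ergodicity rather than from any pointwise property of the stationary law. None of these issues is fatal, but the paper's use of \cref{lem:dense-stat-erg} cleanly sidesteps both.
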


The properties for admissibility laid out above are sufficient for \cref{thm:cap-gen,thm:dense-code-gen} to hold, but it is conceivable that they are not necessary.
We leave it as an interesting direction for future work to simplify the set of sufficient properties.

\subsection{Existence of relevant limits for admissible channels}

Let $Z$ be an arbitrary admissible channel with respect to a well-behaved channel $Z^\star$.
Via applications of Fekete's lemma (\cref{lem:fekete}), we begin by showing that the limit inferior in the definitions of capacities and information rates can be replaced by a limit.

\begin{lemma}\label{lem:icap-exist-gen}
    If $Z$ is admissible with respect to a well-behaved channel $Z^\star$, then
    \begin{equation*}
        \ICap(Z)=\ICap(Z^\star)=\lim_{n\to \infty} \sup_{X_1^n} \frac{I(X_1^n;Z^\star(X_1^n))}{n},
    \end{equation*}
    and the limit on the right-hand side exists.
\end{lemma}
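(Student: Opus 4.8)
The plan is to separate the statement into its easy part and its substantive part. The equality $\ICap(Z)=\ICap(Z^\star)$ will require no work: the rate-preservation property gives $I(X_1^n;Z(X_1^n))=I(X_1^n;Z^\star(X_1^n))$ for every input distribution $P_{X_1^n}$ and every $n$, so $\sup_{X_1^n}\tfrac1n I(X_1^n;Z(X_1^n))=\sup_{X_1^n}\tfrac1n I(X_1^n;Z^\star(X_1^n))$ for each $n$, and taking $\liminf_n$ of both sides yields $\ICap(Z)=\ICap(Z^\star)$. It therefore remains to prove that, for
\begin{equation*}
a_n:=\sup_{X_1^n}I(X_1^n;Z^\star(X_1^n)),
\end{equation*}
the limit $\lim_{n\to\infty}a_n/n$ exists; since $\ICap(Z^\star)=\liminf_n a_n/n$ by definition, this simultaneously gives the existence claim and identifies the limit. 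I will obtain this by showing that $(a_n)_{n\in\N}$ is subadditive and applying Fekete's lemma (\cref{lem:fekete} with vanishing error term), which yields $\lim_n a_n/n=\inf_n a_n/n$. First I would record that the bounded-output-entropy property gives $I(X_1^n;Z^\star(X_1^n))\le H(Z^\star(X_1^n))\le cn$ for every $P_{X_1^n}$, so $a_n\le cn<\infty$; this finiteness is what makes the chain-rule manipulations below legitimate.

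For subadditivity, fix $p,q\ge 1$, set $N:=p+q$, fix an arbitrary $P_{X_1^N}$ (extended to a full process in any way), and split the input at coordinate $p$ as $X_1^N=(X_1^p,X_{p+1}^N)$, the second block having length $q$. By the concatenation property we have the Markov chain $X_1^N\to Z^\star(X_1^p),Z^\star(X_{p+1}^N)\to Z^\star(X_1^N)$, so the data-processing inequality followed by the chain rule gives
\begin{equation*}
I(X_1^N;Z^\star(X_1^N))\le I\big(X_1^N;Z^\star(X_1^p)\big)+I\big(X_1^N;Z^\star(X_{p+1}^N)\mid Z^\star(X_1^p)\big).
\end{equation*}
For the first summand, since $Z^\star(X_1^p)$ is generated from $X_1^p$ using channel randomness independent of everything else, the Markov chain $X_{p+1}^N\to X_1^p\to Z^\star(X_1^p)$ holds, whence $I(X_1^N;Z^\star(X_1^p))=I(X_1^p;Z^\star(X_1^p))\le a_p$. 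For the second summand I would use that $Z^\star(X_1^p)$ and $Z^\star(X_{p+1}^N)$ come from \emph{independent} applications of $Z^\star$, so they are conditionally independent given $X_1^N$, to write
\begin{align*}
I\big(X_1^N;Z^\star(X_{p+1}^N)\mid Z^\star(X_1^p)\big)
&=H\big(Z^\star(X_{p+1}^N)\mid Z^\star(X_1^p)\big)-H\big(Z^\star(X_{p+1}^N)\mid Z^\star(X_1^p),X_1^N\big)\\
&\le H\big(Z^\star(X_{p+1}^N)\big)-H\big(Z^\star(X_{p+1}^N)\mid X_{p+1}^N\big)\\
&=I\big(X_{p+1}^N;Z^\star(X_{p+1}^N)\big)\le a_q,
\end{align*}
where I also used $X_1^p\to X_{p+1}^N\to Z^\star(X_{p+1}^N)$ to reduce the conditioning from $X_1^N$ to $X_{p+1}^N$. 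Combining these bounds and taking the supremum over $P_{X_1^N}$ gives $a_N\le a_p+a_q$.

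With subadditivity in hand, \cref{lem:fekete} shows $\lim_{n\to\infty}a_n/n$ exists and equals $\inf_n a_n/n$, hence equals $\liminf_n a_n/n=\ICap(Z^\star)$, and combined with the first paragraph this is the full statement. The part I expect to need the most care — though it is bookkeeping rather than a genuine obstacle — is justifying the two Markov chains and the conditional independence of $Z^\star(X_1^p)$ and $Z^\star(X_{p+1}^N)$ given $X_1^N$ used above; these rest on the conventions that $Z^\star$ applied to disjoint blocks uses fresh channel randomness and that this randomness is independent of the input, and, together with the finiteness provided by bounded output entropy, they make the entropy arithmetic rigorous.
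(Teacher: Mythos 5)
Your proof is correct and follows essentially the same route as the paper: dispose of $\ICap(Z)=\ICap(Z^\star)$ via rate preservation, show $a_n=\sup_{X_1^n} I(X_1^n;Z^\star(X_1^n))$ is subadditive via the concatenation property, data processing, the chain rule, and the Markov chain $X_1^N\to X_{p+1}^N\to Z^\star(X_{p+1}^N)$, then apply Fekete's lemma. The only cosmetic difference is that you bound the two summands by $a_p$ and $a_q$ individually before taking the supremum, whereas the paper first takes suprema and then uses $\sup(f+g)\le\sup f+\sup g$; the content is identical.
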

\begin{proof}
    By \cref{lem:fekete}, it suffices to show that the sequence
    \begin{equation*}
        a_n = \sup_{X_1^n} I(X_1^n;Z^\star(X_1^n))
    \end{equation*}
    is subadditive, i.e., $a_{n+m}\leq a_n+a_m$ for all $n,m$.
    Then,
    \begin{align*}
        a_{n+m}&=\sup_{X_1^{n+m}} I(X_1^{n+m};Z^\star(X_1^{n+m})) \\
        &\leq \sup_{X_1^{n+m}} I(X_1^{n+m};Z^\star(X_1^{n}),Z^\star(X_{n+1}^{n+m}))\\
        &= \sup_{X_1^{n+m}} [I(X_1^{n+m};Z^\star(X_1^{n})) + I(X_{1}^{n+m};Z^\star(X_{n+1}^{n+m})|Z^\star(X_1^{n}))]\\
        &= \sup_{X_1^{n+m}} [I(X_1^{n};Z^\star(X_1^{n})) + I(X_{1}^{n+m};Z^\star(X_{n+1}^{n+m})|Z^\star(X_1^{n}))]\\
        &\leq \sup_{X_1^{n+m}} [I(X_1^{n};Z^\star(X_1^{n})) + I(X_{n+1}^{n+m};Z^\star(X_{n+1}^{n+m}))]\\
        &\leq \sup_{X_1^n} I(X_1^{n};Z^\star(X_1^{n})) + \sup_{X_{n+1}^{n+m}} I(X_{n+1}^{n+m};Z^\star(X_{n+1}^{n+m}))\\
        &=a_n+a_m.
    \end{align*}
    The first inequality uses the Concatenation Property (\cref{it:concat}).
    The second inequality uses the fact that $X_1^{n},X_{n+1}^{n+m}\to X_{n+1}^{n+m} \to Z^\star(X_{n+1}^{n+m})$, valid for all channels as in \cref{sec:channel-def}, and so
    \begin{align*}
        I(X_1^{n},X_{n+1}^{n+m};Z^\star(X_{n+1}^{n+m})|Z^\star(X_1^{n})) &= H(Z^\star(X_{n+1}^{n+m})|Z^\star(X_1^{n})) - H(Z^\star(X_{n+1}^{n+m})| X_1^{n},X_{n+1}^{n+m},Z^\star(X_1^{n}))\\
        &= H(Z^\star(X_{n+1}^{n+m})|Z^\star(X_1^{n})) - H(Z^\star(X_{n+1}^{n+m})|X_{n+1}^{n+m})\\
        &\leq H(Z^\star(X_{n+1}^{n+m})) - H(Z^\star(X_{n+1}^{n+m})|X_{n+1}^{n+m})\\
        &=I(X_{n+1}^{n+m};Z^\star(X_{n+1}^{n+m})).
    \end{align*}
    The third inequality holds because the quantity on the fifth line is maximized by taking $X_1^n$ and $X_{n+1}^{n+m}$ to be independent, since $Z^\star$ is acting independently on each of $X_1^n$ and $X_{n+1}^{n+m}$.
\end{proof}

\begin{lemma}\label{lem:scap-exist-gen}
    Let $X$ be an arbitrary stationary process and $Z$ an admissible channel with respect to a well-behaved channel $Z^\star$.
    Then,
    \begin{equation*}
        I(X;Z(X)) = I(X;Z^\star(X)) = \lim_{n\to\infty} \frac{I(X_1^n;Z^\star(X_1^n))}{n},
    \end{equation*}
    the limit on the right-hand side exists, and $\SCap(Z)=\SCap(Z^\star)$.
\end{lemma}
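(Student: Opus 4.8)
The two equalities relating $Z$ and $Z^\star$ are essentially free. The identity $I(X;Z(X))=I(X;Z^\star(X))$ is exactly the rate-preservation property (\cref{it:rate-pres}) applied to the process $X$, and $\SCap(Z)=\SCap(Z^\star)$ will drop out at the very end: since \cref{it:rate-pres} gives $I(X;Z(X))=I(X;Z^\star(X))$ for \emph{every} input process, in particular for every stationary ergodic one, taking the supremum over stationary ergodic $X$ on both sides yields the claim. So the real content is to upgrade the $\liminf$ in the definition of $I(X;Z^\star(X))$ to a genuine limit for a fixed (merely) stationary $X$.

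The plan for that is to set $a_n := I(X_1^n;Z^\star(X_1^n))$ and invoke \cref{lem:fekete} (in fact only its classical $f\equiv 0$ special case, since we will obtain exact subadditivity), so I need $(a_n)$ to be finite and subadditive. Finiteness — and, more generally, finiteness of all the output entropies appearing below, which makes the subtractions below legitimate — is immediate from the bounded-output-entropy property: $a_n\le H(Z^\star(X_1^n))\le cn$ by \cref{it:ent-bound}. For subadditivity I would follow the same template as the proof of \cref{lem:icap-exist-gen}, with the supremum over inputs replaced by the fixed process $X$ and with stationarity playing the role that the ``maximized by making the two blocks independent'' step plays there. In detail: the concatenation property (\cref{it:concat}) together with the data-processing inequality gives
\[
    a_{n+m}\le I\bigl(X_1^{n+m};Z^\star(X_1^n),Z^\star(X_{n+1}^{n+m})\bigr),
\]
where $Z^\star(X_1^n)$ and $Z^\star(X_{n+1}^{n+m})$ use independent copies of the channel randomness. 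Expanding by the chain rule, the first term is $I(X_1^{n+m};Z^\star(X_1^n))=I(X_1^n;Z^\star(X_1^n))=a_n$, since the channel randomness producing $Z^\star(X_1^n)$ is independent of $X$, so the extra conditional term vanishes; the second term satisfies
\begin{align*}
    I\bigl(X_1^{n+m};Z^\star(X_{n+1}^{n+m})\mid Z^\star(X_1^n)\bigr)
    &\le H\bigl(Z^\star(X_{n+1}^{n+m})\bigr)-H\bigl(Z^\star(X_{n+1}^{n+m})\mid X_{n+1}^{n+m}\bigr)\\
    &= I\bigl(X_{n+1}^{n+m};Z^\star(X_{n+1}^{n+m})\bigr),
\end{align*}
using that conditioning reduces entropy and that, given $X_{n+1}^{n+m}$, the output $Z^\star(X_{n+1}^{n+m})$ is independent of $\bigl(X_1^n,Z^\star(X_1^n)\bigr)$. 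By stationarity of $X$ we have $(X_{n+1}^{n+m},Z^\star(X_{n+1}^{n+m}))\sim(X_1^m,Z^\star(X_1^m))$, so the right-hand side equals $a_m$, and thus $a_{n+m}\le a_n+a_m$. \cref{lem:fekete} then gives that $\lim_{n\to\infty}a_n/n$ exists, lies in $[0,c]$, and hence coincides with $\liminf_{n\to\infty}a_n/n=I(X;Z^\star(X))=I(X;Z(X))$.

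I do not expect a genuine obstacle here; the argument is a streamlined version of \cref{lem:icap-exist-gen}. The only place demanding care is the independence bookkeeping: one must use that $Z^\star(X_1^n)$ and $Z^\star(X_{n+1}^{n+m})$ are formed with fresh, mutually independent channel randomness (the convention behind the ``$Z(X),Z(Y)$'' notation) to justify both that the chain-rule cross term vanishes and that $H(Z^\star(X_{n+1}^{n+m})\mid X_1^{n+m},Z^\star(X_1^n))=H(Z^\star(X_{n+1}^{n+m})\mid X_{n+1}^{n+m})$. Note also that only stationarity of $X$ is used above — ergodicity is not needed for the limit to exist, and enters solely through the definition of $\SCap$ in the final supremum step.
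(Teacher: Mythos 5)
Your proposal is correct and follows essentially the same route as the paper: both establish subadditivity of $a_n=I(X_1^n;Z^\star(X_1^n))$ via \cref{it:concat}, the chain rule, and the two Markov chains $X_1^{n+m}\to X_1^n\to Z^\star(X_1^n)$ and $X_1^{n+m}\to X_{n+1}^{n+m}\to Z^\star(X_{n+1}^{n+m})$, then close with stationarity ($X_{n+1}^{n+m}\sim X_1^m$) and Fekete's lemma. The observations that only the classical $f\equiv 0$ case of Fekete is needed and that ergodicity plays no role in the limit step are accurate, and the treatment of $I(X;Z(X))=I(X;Z^\star(X))$ and $\SCap(Z)=\SCap(Z^\star)$ via \cref{it:rate-pres} matches what the paper does.
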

\begin{proof}
    Again, by \cref{lem:fekete} it suffices to show that $a_n = I(X_1^n; Z^\star(X_1^n))$
    is a subadditive sequence.
    We have
    \begin{align*}
        a_{n+m}&= I(X_1^{n+m};Z^\star(X_1^{n+m})) \\
        &\leq I(X_1^{n+m};Z^\star(X_1^{n}),Z^\star(X_{n+1}^{n+m}))\\
        &= I(X_1^{n},X_{n+1}^{n+m};Z^\star(X_1^{n})) + I(X_1^{n},X_{n+1}^{n+m};Z^\star(X_{n+1}^{n+m})|Z^\star(X_1^{n}))\\
        &= I(X_1^{n};Z^\star(X_1^{n})) + I(X_1^{n},X_{n+1}^{n+m};Z^\star(X_{n+1}^{n+m})|Z^\star(X_1^{n}))\\
        &\leq I(X_1^{n};Z^\star(X_1^{n})) + I(X_{n+1}^{n+m};Z^\star(X_{n+1}^{n+m}))\\
        &= I(X_1^{n};Z^\star(X_1^{n})) + I(X_{1}^{m};Z^\star(X_{1}^{m}))\\
        &=a_n+a_m,
    \end{align*}
    as desired.
    The first inequality uses the Concatenation Property (\cref{it:concat}).
    The second equality follows from the chain rule for mutual information.
    The third equality uses the fact that $X_1^{n},X_{n+1}^{n+m}\to X_1^{n} \to Z^\star(X_1^{n})$.
    The second inequality uses the fact that $X_1^{n},X_{n+1}^{n+m}\to X_{n+1}^{n+m} \to Z^\star(X_{n+1}^{n+m})$, analogously to the proof of \cref{lem:icap-exist-gen}. 
    The fourth equality uses the fact that $X$ is stationary, and so $X_{n+1}^{n+m}\sim X_{1}^{m}$.
\end{proof}

\begin{lemma}\label{lem:blockind-exist}
    Let $X$ be a block independent process and suppose that $Z$ is admissible with respect to a well-behaved channel $Z^\star$.
    Then, 
    \begin{equation*}
        I(X;Z(X)) = I(X;Z^\star(X)) =  \lim_{n\to\infty} \frac{I(X_1^n;Z^\star(X_1^n))}{n},
    \end{equation*}
    and the limit on the right-hand side exists.
\end{lemma}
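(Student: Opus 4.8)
The plan is to mirror the proof of \cref{lem:scap-exist-gen}, with the extra care needed because a block-independent process with blocklength $b$ behaves like a stationary process only ``along multiples of $b$''. Throughout, write $a_n := I(X_1^n; Z^\star(X_1^n))$; by \cref{it:ent-bound} this is finite, as $a_n \le H(Z^\star(X_1^n)) \le cn$. The first equality $I(X;Z(X)) = I(X;Z^\star(X))$ is immediate from \cref{it:rate-pres} applied to $X$, so it remains to prove that $\lim_{n\to\infty} a_n/n$ exists; it then automatically equals $\liminf_{n\to\infty} a_n/n = I(X;Z^\star(X))$.

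\textbf{Step 1: subadditivity along multiples of $b$.} I would first show $a_{(s+t)b} \le a_{sb} + a_{tb}$ for all $s,t\ge 1$. This is the computation from the proof of \cref{lem:scap-exist-gen} with $n = sb$, $m = tb$: the concatenation property \cref{it:concat} and the data-processing inequality give $a_{(s+t)b} \le I(X_1^{(s+t)b}; Z^\star(X_1^{sb}), Z^\star(X_{sb+1}^{(s+t)b}))$; expanding by the chain rule, the first summand equals $a_{sb}$ by the Markov chain $X_1^{(s+t)b} \to X_1^{sb} \to Z^\star(X_1^{sb})$, and the conditional second summand is at most $I(X_{sb+1}^{(s+t)b}; Z^\star(X_{sb+1}^{(s+t)b}))$ by the chain $X_1^{(s+t)b} \to X_{sb+1}^{(s+t)b} \to Z^\star(X_{sb+1}^{(s+t)b})$ (exactly as in \cref{lem:icap-exist-gen}). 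Finally, block-independence gives $X_{sb+1}^{(s+t)b} \sim X_1^{tb}$ (both are $t$ i.i.d.\ copies of $X_1^b$), so this last term equals $a_{tb}$. Hence $(a_{sb})_{s\ge 1}$ is subadditive, and \cref{lem:fekete} yields that $L := \lim_{s\to\infty} a_{sb}/(sb)$ exists.

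\textbf{Step 2: boundary correction.} Next I would control $a_n$ for general $n$. Write $n = tb + r$ with $t = \lfloor n/b\rfloor$ and $0 \le r < b$ (the case $r=0$ is trivial, so assume $1 \le r \le b-1$). For the upper bound, \cref{it:concat} and data processing give $a_n \le I(X_1^n; Z^\star(X_1^{tb})) + I(X_1^n; Z^\star(X_{tb+1}^n) \mid Z^\star(X_1^{tb}))$; the first term is $a_{tb}$ by the Markov chain $X_1^n \to X_1^{tb} \to Z^\star(X_1^{tb})$, and the second is at most $H(Z^\star(X_{tb+1}^n)) \le cr \le cb$ by \cref{it:ent-bound}. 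The lower bound is the only place where truncating the output does not suffice, and here I would invoke the suffix half of \cref{it:single-part} with $\tau = r$: it supplies $W_{\mathrm{suf}}$ with $H(W_{\mathrm{suf}}) \le \gamma_n^{(r)} = o(n)$ and $X_1^n \to Z^\star(X_1^n), W_{\mathrm{suf}} \to Z^\star(X_1^{tb}), Z^\star(X_{tb+1}^n)$, so that $a_{tb} \le I(X_1^n; Z^\star(X_1^{tb}), Z^\star(X_{tb+1}^n)) \le I(X_1^n; Z^\star(X_1^n), W_{\mathrm{suf}}) \le a_n + H(W_{\mathrm{suf}}) \le a_n + \gamma_n^{(r)}$. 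Letting $\gamma_n$ be the maximum of the finitely many $o(n)$ sequences $\gamma_n^{(r)}$, $1\le r\le b-1$ (so $\gamma_n = o(n)$), we obtain $a_{tb} - \gamma_n \le a_n \le a_{tb} + cb$ for every $n$.

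\textbf{Step 3: conclude.} Dividing by $n$ and using $tb/n \to 1$, $\gamma_n/n\to 0$, $cb/n\to 0$, and $a_{tb}/(tb)\to L$ from Step 1, a squeeze gives $\lim_{n\to\infty} a_n/n = L$. Thus the limit exists, equals $\liminf_{n\to\infty} a_n/n = I(X;Z^\star(X))$, and also equals $I(X;Z(X))$ by \cref{it:rate-pres}. The only ingredient beyond the proof of \cref{lem:scap-exist-gen} is the boundary correction in Step 2, and its only subtle point is that the lower bound there genuinely needs the prefix/suffix-partition property, since in general $Z^\star(X_1^{tb})$ is not a function of $Z^\star(X_1^n)$.
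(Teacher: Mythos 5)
Your proof is correct, and it takes a genuinely different route from the paper's. The paper establishes an \emph{almost}-subadditivity relation $a_{n+m}\le a_n+a_m+f(n+m)$ for all $b<n\le m\le 2n$, packaging \cref{it:concat}, \cref{it:ent-bound}, and \cref{it:single-part} into a single chain of inequalities (trimming the second block down to a multiple of $b$, paying $cb$ for the trimmed piece and $\gamma_m$ to re-split), and then invokes the strengthened Fekete lemma (\cref{lem:fekete}), which requires the summability $\sum_n\gamma_n/n^2<\infty$. You instead prove \emph{exact} subadditivity along multiples of $b$ (which needs only \cref{it:concat} and block-independence) so that ordinary Fekete gives convergence of $a_{sb}/(sb)$, and then control the boundary term separately by squeezing: $a_{tb}-\gamma_n\le a_n\le a_{tb}+cb$, the upper bound from \cref{it:ent-bound} and the lower bound from the suffix half of \cref{it:single-part}. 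The decomposition is cleaner (it isolates where each hypothesis is used and makes explicit that the lower bound is the only place \cref{it:single-part} is needed) and slightly weaker in hypotheses, since the squeeze needs only $\gamma_n=o(n)$, not the summability condition that the de Bruijn--Erd\H{o}s version of Fekete requires. The one place to be careful, which you handled correctly, is that $\gamma_n^{(r)}$ depends on the offset $r=n-tb\in\{1,\dots,b-1\}$, so one takes the maximum over the finitely many such sequences.
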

\begin{proof}
    Consider the sequence $a_n = I(X_1^n;Z^\star(X_1^n))$.
    Let $b$ be the blocklength of $X$.
    Then, by the Concatenation Property (\cref{it:concat}), for any $m,n>b$ we have
    \begin{align*}
        a_{n+m} &= I(X_1^{n+m};Z^\star(X_1^{n+m}))\\
        & \leq I(X_1^{n+m};Z^\star(X_1^{n}),Z^\star(X_{n+1}^{n+m}))\\
        & \leq I(X_1^{n};Z^\star(X_1^{n})) + I(X_{n+1}^{n+m};Z^\star(X_{n+1}^{n+m})).
    \end{align*}
    Construct $X'$ by trimming bits from the beginning of $X_{n+1}^{n+m}$ so that $X'=X_{tb+1}^{n+m}$ for an integer $t$ for which $n+1 -(tb+1)\leq b$.
    If $m'$ denotes the length of $X'$, from the block independence of $X$ we get that
    \begin{equation*}
        I(X';Z^\star(X')) = I(X_1^{m'};Z^\star(X_1^{m'})).
    \end{equation*}
    Since we trim $r<b$ symbols from $X_{n+1}^{n+m}$ to obtain $X'$, we have
    \begin{align*}
        I(X_{n+1}^{n+m};Z^\star(X_{n+1}^{n+m}))&\leq I(X_{n+1}^{n+m};Z^\star(X_{n+1}^{tb}),Z^\star(X_{tb+1}^{n+m}))\\
        & \leq I(X_{tb+1}^{n+m};Z^\star(X_{t b+1}^{n+m})) + H(Z^\star(X_{n+1}^{tb}))\\
        & \leq I(X_{tb+1}^{n+m};Z^\star(X_{t b+1}^{n+m})) + cb\\
        & = I(X_1^{m'};Z^\star(X_1^{m'}))+ cb\\
        & \leq I(X_1^{m};Z^\star(X_1^{m'}),Z^\star(X_{m'+1}^{m}))+ cb\\
        &\leq I(X_1^{m};Z^\star(X_1^{m}))+\gamma_m+ cb
    \end{align*}
    for a non-decreasing sequence $\gamma_m=o(m)$ such that $\sum_{m\in\N}\gamma_m/m^2$ converges.
    The first inequality uses the Concatenation Property (\cref{it:concat}).
    The third inequality uses the Bounded Entropy Property (\cref{it:ent-bound}), since $H(Z^\star(X_{n+1}^{tb}))\leq c(tb-(n+1))\leq cb$ for a fixed constant $c>0$.
    The first equality uses the block independence of $X$.
    The fourth inequality follows from the chain rule for mutual information.
    The fifth inequality uses the Prefix/Suffix-Partition Property (\cref{it:single-part}) with $\tau=r<b$.

    Therefore, we conclude that for any $b<n\leq m\leq 2n$ we have $a_{n+m}\leq a_n +a_m + f(n+m)$, where $f(n) = \gamma_n+cb$ (here, we use that $\gamma_m$ is non-decreasing).
    Since $\sum_{n\in\N}f(n)/n^2$ converges because $\sum_{n\in\N} \gamma_n/n^2$ converges by  the Prefix/Suffix-Partition Property (\cref{it:single-part}), \cref{lem:fekete} implies that $\lim_{n\to\infty} a_n/n$ exists, as desired.
\end{proof}

\subsection{Information capacity of admissible channels is achieved by stationary ergodic process}\label{sec:icap-scap}

We show the following.
\begin{theorem}\label{thm:icap-scap}
    Suppose that the channel $Z$ is admissible. Then,
    \begin{equation*}
        \ICap(Z)=\SCap(Z).
    \end{equation*}
\end{theorem}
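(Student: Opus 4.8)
The plan is to prove the nontrivial inequality $\ICap(Z) \le \SCap(Z)$ (the reverse is already recorded in the remark that $\ICap \ge \SCap$ always), by taking a near-optimal finite-block input distribution, extending it to a block-independent process, and then ``stationarizing'' it at a small cost in rate. Concretely, fix a large blocklength $b$ and a distribution $P_{X_1^b}$ achieving $I(X_1^b; Z^\star(X_1^b)) \ge b(\ICap(Z) - \eta)$, which exists for $b$ large by \cref{lem:icap-exist-gen}. Let $X = (X_i)_{i\in\N}$ be the block-independent process with blocklength $b$ built from i.i.d.\ copies of this $X_1^b$. By \cref{lem:blockind-exist} the information rate $I(X; Z^\star(X)) = \lim_{n\to\infty} I(X_1^n; Z^\star(X_1^n))/n$ exists; the first task is to show this rate is close to $\ICap(Z) - \eta$. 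The upper bound $I(X; Z^\star(X)) \le \ICap(Z)$ is automatic, and for the lower bound I would look at blocklengths $n = tb$: using \cref{it:preimg}, $Z^\star(X_1^{tb})$ is a deterministic function of the per-block outputs $Z^\star(X_1^b),\dots,Z^\star(X_{(t-1)b+1}^{tb})$ together with a low-entropy random variable $Y$ with $H(Y) \le t\beta_b$, while the per-block outputs are independent (they are functions of disjoint i.i.d.\ blocks under $Z^\star$ acting independently). Hence $I(X_1^{tb}; Z^\star(X_1^{tb})) \ge \sum_{i} I(X_{(i-1)b+1}^{ib}; Z^\star(X_{(i-1)b+1}^{ib})) - H(Y) \ge t(b(\ICap(Z)-\eta) - \beta_b)$, using \cref{it:multi-part} or a direct chain-rule argument to pass from the joint output back to a sum over blocks (the ``amortized block-partition'' property is exactly what controls the entropy lost when the decoder does not know block boundaries). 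Dividing by $tb$ and letting $t\to\infty$ gives $I(X; Z^\star(X)) \ge \ICap(Z) - \eta - \beta_b/b$, which is $\ICap(Z) - \eta - o_b(1)$ since $\beta_b = o(b)$.

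Next I would convert the block-independent process $X$ into a stationary (and then stationary ergodic) process without losing much rate. The standard move is to pick a uniformly random shift $S \in \{0,1,\dots,b-1\}$, independent of everything, and let $\widetilde X = (X_{S+i})_{i\in\N}$ — this averages the block-independent process over all phases and yields a stationary process. (Equivalently, one works with a two-sided block-independent process and applies a uniform random offset.) The information rate of $\widetilde X$ over $Z^\star$ is, up to a vanishing term, the average over shifts of the shifted information rates; since for each fixed shift the process is again block-independent with the same single-block marginal (up to one truncated boundary block whose contribution is $O(1/n)$ per $n$ symbols, hence negligible in the rate), $I(\widetilde X; Z^\star(\widetilde X)) \ge I(X; Z^\star(X)) - o_b(1)$. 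To upgrade stationary to stationary ergodic, I would invoke the ergodic decomposition: a stationary process is a mixture of stationary ergodic processes, and because $n\mapsto I(X_1^n;Z^\star(X_1^n))/n$ is (by the subadditivity established in \cref{lem:scap-exist-gen} and its proof, together with \cref{it:ent-bound} to bound the per-symbol output entropy) well-behaved under mixtures, the information rate of the mixture is at most the supremum of the information rates of its ergodic components; hence some ergodic component achieves rate at least $I(\widetilde X; Z^\star(\widetilde X))$. Throughout, \cref{it:rate-pres} lets me freely replace $I(\cdot; Z(\cdot))$ by $I(\cdot; Z^\star(\cdot))$, and \cref{it:ent-bound} supplies the uniform bound $H(Z^\star(X_1^n)) \le cn$ needed to make the mutual-information rates uniformly bounded and the limiting/averaging arguments legitimate.

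Putting it together: for every $\eta > 0$ and every sufficiently large $b$ we have produced a stationary ergodic process (a component of $\widetilde X$) whose information rate over $Z$ is at least $\ICap(Z) - \eta - o_b(1)$. Taking $b\to\infty$ and then $\eta\to 0$ gives $\SCap(Z) \ge \ICap(Z)$, and combined with $\SCap(Z) \le \ICap(Z)$ this yields $\ICap(Z) = \SCap(Z)$.

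The step I expect to be the main obstacle is the stationarization together with the passage to an ergodic component while controlling the rate loss. Random-shift averaging of a block-independent process is classical, but here the output is produced by the possibly intricate channel $Z^\star$ rather than a memoryless one, so one must be careful that the ``boundary'' block created by shifting costs only $O(1/n)$ in the rate — this is precisely where \cref{it:single-part} (the prefix/suffix-partition property, giving a low-entropy $W_{\mathrm{pre}}$/$W_{\mathrm{suf}}$ that lets us splice or trim $\tau$ symbols at an endpoint) and \cref{it:ent-bound} come in. And the ergodic-decomposition argument needs the information rate to behave subadditively/super-linearly under mixtures in the right direction; I would make this rigorous by expressing $I(X_1^n; Z^\star(X_1^n))$ via output entropies, using \cref{it:concat} to relate the joint-output entropy to per-block quantities, and using that differential of entropies under a mixture of the underlying source is concave in the appropriate sense, so that the mixture's rate is dominated by a best component. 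If this last point proves delicate, an alternative is to avoid explicit ergodicity and instead cite a known fact (as in \cite{LT21,MD24}) that, for channels satisfying concatenation-type properties, the supremum over stationary sources equals the supremum over stationary ergodic sources — but I would prefer the self-contained random-shift-plus-ergodic-decomposition route.
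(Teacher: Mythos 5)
Your proposal is correct and follows essentially the same route as the paper: a near-optimal finite-block distribution is extended to a block-independent process, the rate is related to the single-block mutual information via the amortized block-partition property, and a uniformly random phase shift produces the stationary ergodic source, with \cref{it:single-part} controlling the boundary-block cost exactly as in the paper's Lemmas~\ref{lem:convex-comb-gen} and~\ref{lem:block-ind-rate-gen}. Two minor points: the block-decomposition \emph{lower} bound $I(X_1^{tb};Z^\star(X_1^{tb}))\geq \sum_i I(\cdot)-t\alpha_b$ comes from \cref{it:multi-part} (the per-block outputs are recoverable from the joint output plus low-entropy $W$), not from \cref{it:preimg}, whose Markov direction yields only the upper bound; and the ergodic-decomposition detour is unnecessary, since the phase-randomized block-i.i.d.\ process $\overline{X}$ is itself already stationary ergodic, which is what the paper uses.
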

Suppose that $Z$ is admissible with respect to a well-behaved channel $Z^\star$.
Then, it suffices to show that \cref{thm:icap-scap} holds for $Z^\star$.
Fix an arbitrary $\eps>0$. 
Recalling \cref{lem:icap-exist-gen}, let $b$ be a sufficiently large integer so that
\begin{equation*}
    \sup_{\tilde{X}_1^b} \frac{I(\tilde X_1^b;Z^\star(\tilde X_1^b))}{b} \geq \lim_{n\to\infty} \sup_{\tilde{X}_1^n} \frac{I(\tilde X_1^n;Z^\star(\tilde X_1^n))}{n} - \eps = \ICap(Z^\star)-\eps.
\end{equation*}
Furthermore, let $X_1^b$ be such that
\begin{equation}\label{eq:guarantee-X-gen}
    \frac{I(X_1^b;Z^\star(X_1^b))}{b}\geq \sup_{\tilde{X}_1^b} \frac{I(\tilde X_1^b;Z^\star(\tilde X_1^b))}{b}-\eps \geq \ICap(Z^\star)-2\eps.
\end{equation}
Let $p_{X_1^b}(\cdot)$ denote the correspoding PMF.
Using the approach of~\cite{Fei59,LT21}, consider the following process $\overline{X}$.
First, define the block independent process $\hat{X}=\{\hat{X}_i\}_{i\in\N}$ with blocklength $b$ and probability mass function
\begin{equation*}
    p_{\hat{X}_1^{tb}}(x_1^{tb}) = \prod_{i=1}^t p_{X_1^b}(x_{(i-1)b+1}^{ib}).
\end{equation*}
Write $\hat{X}^{[j]}_i = \hat{X}_{i+j}$.
Let $V$ be uniformly distributed over $\{0,1,\dots,b-1\}$, and set $\overline{X}_i = \hat{X}^{[V]}_i = \hat{X}_{i+V}$ for every $i\in\N$.
Then, it holds that $\overline{X}$ is stationary and ergodic.

By \cref{lem:scap-exist-gen}, we know that $I(\overline{X};Z^\star(\overline{X}))=\lim_{n\to\infty} \frac{I(\overline{X}_1^n;Z^\star(\overline{X}_1^n))}{n}$ since $\overline{X}$ is stationary.
We will show that
\begin{equation}\label{eq:maingoal-cap-gen}
    I(\overline{X};Z^\star(\overline{X})) \geq \frac{I(X_1^b;Z^\star(X_1^b))}{b} - \eps.
\end{equation}
Combined with \cref{eq:guarantee-X-gen}, this implies that
\begin{equation*}
    \SCap(Z^\star)\geq I(\overline{X};Z^\star(\overline{X})) \geq \ICap(Z^\star)-3\eps.
\end{equation*}
Since $\eps$ was arbitrary, it follows that $\SCap(Z^\star) = \ICap(Z^\star)$, and so also $\SCap(Z)=\ICap(Z)$.
This yields \cref{thm:icap-scap}.

It remains to show \cref{eq:maingoal-cap-gen}.
We do this by a combination of two lemmas.

\begin{lemma}\label{lem:convex-comb-gen}
    We have
    \begin{equation*}
        I(\overline{X}; Z^\star(\overline{X})) = \sum_{j=0}^{b-1} \frac{1}{b} I(\hat{X}^{[j]};Z^\star(\hat{X}^{[j]})) = I(\hat{X};Z^\star(\hat{X})),
    \end{equation*}
    and these limits exist.
\end{lemma}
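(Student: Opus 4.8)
## Proof Proposal for Lemma~\ref{lem:convex-comb-gen}

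The plan is to establish the two claimed equalities separately. For the first equality, $I(\overline{X}; Z^\star(\overline{X})) = \sum_{j=0}^{b-1} \frac{1}{b} I(\hat{X}^{[j]};Z^\star(\hat{X}^{[j]}))$, I would condition on the shift variable $V$. Since $\overline{X} = \hat{X}^{[V]}$ and $V$ is independent of the process $\hat{X}$, we have for each fixed $n$ that $I(\overline{X}_1^n; Z^\star(\overline{X}_1^n) \mid V) = \sum_{j=0}^{b-1} \frac{1}{b} I(\hat{X}^{[j]}_1{}^n; Z^\star(\hat{X}^{[j]}_1{}^n))$. The subtle point is moving from conditional to unconditional mutual information: in general $I(A;B\mid V) \neq I(A;B)$, but here the difference is $I(V; \overline{X}_1^n, Z^\star(\overline{X}_1^n)) - I(V;\overline{X}_1^n \mid \cdots)$-type corrections bounded by $H(V) \leq \log b$, which is a constant. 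Dividing by $n$ and taking $n\to\infty$ (the limits exist by \cref{lem:scap-exist-gen} applied to the stationary process $\overline{X}$, and by \cref{lem:blockind-exist} applied to each block-independent shifted process $\hat X^{[j]}$, each of which is block-independent with blocklength $b$), the $O(\log b / n)$ correction vanishes, yielding the first equality.

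For the second equality, $\sum_{j=0}^{b-1} \frac{1}{b} I(\hat{X}^{[j]};Z^\star(\hat{X}^{[j]})) = I(\hat{X};Z^\star(\hat{X}))$, I would argue that every shifted process $\hat X^{[j]}$ has the same information rate over $Z^\star$ as $\hat X = \hat X^{[0]}$ itself, so the average on the left collapses to a single term. To see this, note that $\hat X^{[j]}$ is obtained from $\hat X$ by deleting the first $j < b$ coordinates. For a fixed prefix length $n$, the random variable $\hat X^{[j]}_1{}^n = \hat X_{j+1}^{j+n}$ and $\hat X_1^n$ differ by trimming/padding at most $b$ coordinates at each end; using \cref{it:concat} and \cref{it:single-part} (with $\tau < b$) together with the bounded output entropy property \cref{it:ent-bound} — exactly the argument already carried out inside the proof of \cref{lem:blockind-exist} — one shows $|I(\hat X_{j+1}^{j+n}; Z^\star(\hat X_{j+1}^{j+n})) - I(\hat X_1^n; Z^\star(\hat X_1^n))| \leq \gamma_n + O(b)$ where $\gamma_n = o(n)$. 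Dividing by $n$ and letting $n \to\infty$ shows $I(\hat X^{[j]};Z^\star(\hat X^{[j]})) = I(\hat X; Z^\star(\hat X))$ for every $j$, and averaging gives the second equality.

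The main obstacle I anticipate is bookkeeping the boundary corrections cleanly in the second equality: one must handle both the trimmed initial segment (which contributes an entropy term $\leq cb$ via \cref{it:ent-bound} after an application of \cref{it:concat}) and the leftover tail segment whose length is not a multiple of $b$ (which requires \cref{it:single-part} to splice back to a clean prefix of the block-independent process, picking up the $\gamma_n$ term). This is essentially the same chain of inequalities that appears in the proof of \cref{lem:blockind-exist}, so it should be invoked rather than re-derived. A minor additional care point: one should verify that each $\hat X^{[j]}$ is genuinely a block-independent process with blocklength $b$ — it is, since shifting a block-independent process by $j < b$ still yields independence across length-$b$ windows aligned to the shifted grid — so that \cref{lem:blockind-exist} legitimately applies and the relevant limit exists.
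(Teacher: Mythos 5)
Your overall structure matches the paper's: condition on the shift $V$, use $H(V)\le\log b$ to pass between conditional and unconditional mutual information, and compare $I(\hat X_{j+1}^{j+n};Z^\star(\hat X_{j+1}^{j+n}))$ with $I(\hat X_1^n;Z^\star(\hat X_1^n))$ via trimming for the second equality. However, the claim you flag and "verify" as a minor care point is false: for $j\ge 1$, the shifted process $\hat X^{[j]}$ is \emph{not} block-independent with blocklength $b$. Each length-$b$ window of $\hat X^{[j]}$ aligned to the shifted grid straddles two adjacent blocks of $\hat X$, and two consecutive such windows share a block of $\hat X$ (one sees its head of length $j$, the other its tail of length $b-j$); whenever $X_1^b$ has internal correlation these windows are dependent. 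Concretely, with $b=2$ and $X_1^2=(U,U)$, $U$ uniform on $\bits$, we get $\hat X^{[1]}_2=\hat X_3=\hat X_4=\hat X^{[1]}_3$, so the last coordinate of the first window and the first coordinate of the second window of $\hat X^{[1]}$ are identical, not independent.

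This matters because you lean on the false claim to invoke \cref{lem:blockind-exist} on each $\hat X^{[j]}$, and you need those per-$j$ limits to exist in order to swap limit and finite sum in your argument for the first equality. That invocation is unjustified. The fix requires no new ingredient: the sandwich bound you already sketch for the second equality,
\[
\bigl|I(\hat X_{j+1}^{j+n};Z^\star(\hat X_{j+1}^{j+n})) - I(\hat X_1^n;Z^\star(\hat X_1^n))\bigr| \le \gamma_n + O(b),
\]
combined with the existence of $\lim_{n} I(\hat X_1^n;Z^\star(\hat X_1^n))/n$ from \cref{lem:blockind-exist} applied to $\hat X$ itself (the only genuinely block-independent process in play), already forces the limit for $\hat X^{[j]}$ to exist and to equal $I(\hat X;Z^\star(\hat X))$. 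This is exactly how the paper obtains existence: it upper- and lower-bounds $I(\hat X_2^{tb};Z^\star(\hat X_2^{tb}))$ in terms of $I(\hat X_1^{(t-1)b};Z^\star(\hat X_1^{(t-1)b}))$ and then divides by $tb-1$ and lets $t\to\infty$. So drop the block-independence claim for the shifted processes, establish the second equality and its sandwich first, and then the existence needed for the swap in the first equality follows for free.
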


\begin{proof}
    
    First, we prove that
    \begin{equation*}
        I(\hat{X}^{[j]};Z^\star(\hat{X}^{[j]}))=I(\hat{X};Z^\star(\hat{X}))    
    \end{equation*}
    for all $j\in\{0,1,\dots,b-1\}$ (in particular, these quantities exist for all $0\leq j<b$, since $\hat X$ is stationary).
    For the sake of exposition we focus on $j=1$.
    The argument is analogous for other choices of $j$.
    First, note that by the Prefix/Suffix-Partition Property (\cref{it:single-part}) with $\tau=b-1$ there exists a sequence $\gamma_m$ such that $\gamma_m=o(m)$ and $\sum_{m\in\N}\gamma_m/m^2$ converges and a random variable $W$ such that $H(W)\leq \gamma_{tb}$ and
    $\hat X_1^{[1]tb-1}\to Z^\star(\hat X_1^{[1]tb-1}),W\to Z^\star(\hat X_1^{[1]b-1}), Z^\star(\hat X_{b}^{[1] tb-1})$.
    Furthermore, by the Concatenation Property (\cref{it:concat}), $\hat{X}_1^{[1] tb-1}\to Z^\star(\hat X_1^{[1]b-1}), Z^\star(\hat X_{b}^{[1] tb-1})\to Z^\star(\hat X_1^{[1]tb-1})$.
    This means that
    \begin{equation}\label{eq:ub-MI-Xhat}
        I(\hat X_1^{[1]tb-1};Z^\star(\hat X_1^{[1] tb-1}))\leq I(\hat X_1^{[1] tb-1};Z^\star(\hat X_1^{[1] b-1}), Z^\star(\hat X_{b}^{[1] tb-1}))    
    \end{equation}
    and
    \begin{align}
        I(\hat X_1^{[1]tb-1};Z^\star(\hat X_1^{[1] tb-1}))&\geq I(\hat X_1^{[1] tb-1};Z^\star(\hat X_1^{[1] b-1}), Z^\star(\hat X_{b}^{[1] tb-1}))-H(W)\nonumber\\
        &\geq I(\hat X_1^{n+m};Z^\star(\hat X_1^n), Z^\star(\hat X_{n+1}^{n+m}))-\gamma_{tb}.\label{eq:lb-MI-Xhat}
    \end{align}
    Therefore,
    \begin{align*}
        I(\hat{X}^{[1] tb-1}_1;Z^\star(\hat{X}^{[1] tb-1}_1))
        &\geq I(\hat{X}^{[1] b-1}_1,\hat{X}^{[1] tb-1}_b;Z^\star(\hat{X}^{[1] b-1}_1), Z^\star(\hat{X}^{[1] tb-1}_b))- \gamma_{tb} \\
        & = I(\hat{X}^{[1] b-1}_1; Z^\star(\hat{X}^{[1] b-1}_1), Z^\star(\hat{X}^{[1] tb-1}_b))\\
        & + I(\hat{X}^{[1] tb-1}_b; Z^\star(\hat{X}^{[1] b-1}_1), Z^\star(\hat{X}^{[1] tb-1}_b)|\hat{X}^{[1] b-1}_1)- \gamma_{tb} \\
        & \geq I(\hat{X}^{[1] tb-1}_b; Z^\star(\hat{X}^{[1] b-1}_1), Z^\star(\hat{X}^{[1] tb-1}_b)|\hat{X}^{[1] b-1}_1) - \gamma_{tb}\\
        & = I(\hat{X}^{[1] tb-1}_b; Z^\star(\hat{X}^{[1] tb-1}_b))-\gamma_{tb}\\
        &= I(\hat{X}^{tb}_{b+1}; Z^\star(\hat{X}^{tb}_{b+1}))- \gamma_{tb}\\
        &=I(\hat{X}^{(t-1)b}_{1}; Z^\star(\hat{X}^{(t-1)b}_{1}))-\gamma_{tb}.
    \end{align*}
    The first inequality uses \cref{eq:lb-MI-Xhat}.
    The second and fourth equalities use the fact that $\hat{X}$ is block-independent with blocklength $b$, and so $\hat{X}^{[1] tb-1}_b$ is independent of $\hat{X}^{[1] b-1}_1$ and is identically distributed to $\hat{X}^{(t-1)b}_{1}$.
    Similarly,
    \begin{align*}
        I(\hat{X}^{[1] tb-1}_1;Z^\star(\hat{X}^{[1] tb-1}_1))
        &\leq I(\hat{X}^{[1] b-1}_1,\hat{X}^{[1] tb-1}_b;Z^\star(\hat{X}^{[1] b-1}_1), Z^\star(\hat{X}^{[1] tb-1}_b)) \\
        & = I(\hat{X}^{[1] b-1}_1; Z^\star(\hat{X}^{[1] b-1}_1), Z^\star(\hat{X}^{[1] tb-1}_b))\\
        & + I(\hat{X}^{[1] tb-1}_b; Z^\star(\hat{X}^{[1] b-1}_1), Z^\star(\hat{X}^{[1] tb-1}_b)|\hat{X}^{[1] b-1}_1) \\
        & \leq I(\hat{X}^{[1] tb-1}_b; Z^\star(\hat{X}^{[1] b-1}_1), Z^\star(\hat{X}^{[1] tb-1}_b)|\hat{X}^{[1] b-1}_1) +b\log|\Sigmain|\\
        & = I(\hat{X}^{[1] tb-1}_b; Z^\star(\hat{X}^{[1] tb-1}_b))+ b\log|\Sigmain|\\
        &= I(\hat{X}^{tb}_{b+1}; Z^\star(\hat{X}^{tb}_{b+1}))+b\log|\Sigmain|\\
        &=I(\hat{X}^{(t-1)b}_{1}; Z^\star(\hat{X}^{(t-1)b}_{1}))+b\log|\Sigmain|,
    \end{align*}
    where the first inequality uses \cref{eq:ub-MI-Xhat} and the second inequality uses the fact that $H(\hat{X}^{[1] b-1}_1)\leq b\log|\Sigmain|$.
    As a result, we conclude that (below, we write $a\pm \delta$ for  a real number in the interval $[a-\delta,a+\delta]$)
    \begin{align*}
        I(\hat{X}^{[1]};Z^\star(\hat{X}^{[1]})) &= \lim_{t\to\infty} \frac{I(\hat{X}^{[1] tb-1}_1;Z^\star(\hat{X}^{[1] tb-1}_1))}{tb-1}\\
        &=\lim_{t\to\infty} \frac{I(\hat{X}^{(t-1)b}_{1}; Z^\star(\hat{X}^{(t-1)b}_{1}))\pm (b\log|\Sigmain|+\gamma_{tb})}{tb-1}\\
        & = \lim_{t\to\infty} \frac{(t-1)b}{tb-1}\cdot \frac{I(\hat{X}^{(t-1)b}_{1}; Z^\star(\hat{X}^{(t-1)b}_{1}))\pm (b\log|\Sigmain|+\gamma_{tb})}{(t-1)b}\\
        & = \lim_{t\to\infty} \frac{I(\hat{X}^{(t-1)b}_{1}; Z^\star(\hat{X}^{(t-1)b}_{1}))}{(t-1)b}\\
        & = I(\hat{X};Z(\hat{X})).
    \end{align*}
    The fourth equality uses the fact that $|\Sigmain|$ is a finite constant and $\lim_{t\to\infty} \frac{\gamma_{tb}}{(t-1)b}=0$, since $\gamma_{tb}=o(tb)$.
    Furthermore, \cref{lem:blockind-exist} guarantees that $I(\hat{X};Z(\hat{X}))$ exists.

    It remains to see the leftmost inequality of the lemma statement.
    First, note that
    \begin{equation}\label{eq:cond-V}
        I(\overline X;Z^\star(\overline X)) = I(\overline X;Z^\star(\overline X)|V) = \lim_{n\to \infty} \frac{I(\overline X_1^n;Z^\star(\overline X_1^n)|V)}{n}.
    \end{equation}
    This holds since $H(V)\leq \log b$ and $b$ is a fixed constant.
    Furthermore,
    \begin{align}
        I(\overline X;Z^\star(\overline X)|V) &= \lim_{n\to \infty} \frac{I(\overline X_1^n;Z^\star(\overline X_1^n)|V)}{n}\nonumber\\
        &=\lim_{n\to \infty} \frac{\frac{1}{b}\sum_{j=0}^{b-1} I(\overline X_1^n;Z^\star(\overline X_1^n)|V=j)}{n}\nonumber\\
        &=\lim_{n\to \infty} \frac{\frac{1}{b}\sum_{j=0}^{b-1} I(\hat X_1^{[j] n};Z^\star(\hat X_1^{[j] n}))}{n}.\label{eq:conv-comb-mi}
    \end{align}
    As we saw above, the limits
    \begin{equation*}
        I(\hat X^{[j]};Z^\star(\hat X^{[j]})) = \lim_{n\to\infty} \frac{I(\hat X_1^{[j] n};Z^\star(\hat X_1^{[j] n}))}{n}
    \end{equation*}
    exist and equal $I(\hat X;Z^\star(\hat X))$ for all $j$.
    Therefore, since the sum over $j$ is finite, we can swap limit and sum and conclude from \cref{eq:cond-V,eq:conv-comb-mi} that
    \begin{align*}
        I(\overline X;Z^\star(\overline X)) &= I(\overline X;Z^\star(\overline X)|V) \\
        &= \frac{1}{b}\sum_{j=0}^{b-1} \lim_{n\to \infty} \frac{I(\hat X_1^{[j] n};Z^\star(\hat X_1^{[j] n}))}{n} \\
        &= \frac{1}{b}\sum_{j=0}^{b-1}  I(\hat X^{[j]};Z^\star(\hat X^{[j]})) \\
        &= I(\hat X;Z^\star(\hat X)),
    \end{align*}
    as desired.
\end{proof}

\begin{lemma}\label{lem:block-ind-rate-gen}
    Suppose that $\hat X$ is a block independent process with blocklength $b$.
    Then, we have
    \begin{equation*}
        \left| I(\hat{X};Z^\star(\hat{X})) - \frac{I(\hat X_1^b;Z^\star(\hat X_1^b))}{b}\right| \leq \alpha_b/b,
    \end{equation*}
    where $\lim_{b\to\infty} \frac{\alpha_b}{b}=0$.
\end{lemma}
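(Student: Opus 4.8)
The plan is to compare the per-block quantity $I(\hat X_1^b;Z^\star(\hat X_1^b))/b$ with the asymptotic rate, using that by \cref{lem:blockind-exist} the limit $I(\hat X;Z^\star(\hat X))=\lim_{t\to\infty}\tfrac{1}{tb}I(\hat X_1^{tb};Z^\star(\hat X_1^{tb}))$ exists. Write $J_t=I(\hat X_1^{tb};Z^\star(\hat X_1^{tb}))$ and let $O_t$ denote the tuple of independent per-block outputs $(Z^\star(\hat X_1^{b}),Z^\star(\hat X_{b+1}^{2b}),\dots,Z^\star(\hat X_{(t-1)b+1}^{tb}))$. The first step is to record that $I(\hat X_1^{tb};O_t)=t\cdot I(\hat X_1^b;Z^\star(\hat X_1^b))$: since $\hat X$ is block-independent with blocklength $b$ and $Z^\star$ acts independently on the $t$ blocks, the pairs $(\hat X_{(i-1)b+1}^{ib},Z^\star(\hat X_{(i-1)b+1}^{ib}))$ are mutually independent and each is distributed like $(\hat X_1^b,Z^\star(\hat X_1^b))$, so a short entropy computation gives the claimed identity. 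It then suffices to sandwich $J_t$ between $t\cdot I(\hat X_1^b;Z^\star(\hat X_1^b))$ up to an additive $o(b)\cdot t$ error.

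For the upper bound $J_t\le t\cdot I(\hat X_1^b;Z^\star(\hat X_1^b))$, I would invoke the amortized preimage-size property \cref{it:preimg} with $t$ blocks of length $b$: there is a random variable $Y$ with $\hat X_1^{tb}\to O_t\to Y$ and a deterministic $\phi$ with $Z^\star(\hat X_1^{tb})=\phi(O_t,Y)$. By the data-processing inequality and the chain rule, $J_t\le I(\hat X_1^{tb};O_t,Y)=I(\hat X_1^{tb};O_t)+I(\hat X_1^{tb};Y\mid O_t)=I(\hat X_1^{tb};O_t)$, the conditional term vanishing because of the Markov chain $\hat X_1^{tb}\to O_t\to Y$. (Alternatively, iterating \cref{it:concat} gives directly $\hat X_1^{tb}\to O_t\to Z^\star(\hat X_1^{tb})$.) For the lower bound, I would use the amortized block-partition property \cref{it:multi-part}: there is a random variable $W$ with $H(W)\le t\cdot\alpha_b$ and $\hat X_1^{tb}\to(Z^\star(\hat X_1^{tb}),W)\to O_t$. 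Again by data processing and the chain rule, $t\cdot I(\hat X_1^b;Z^\star(\hat X_1^b))=I(\hat X_1^{tb};O_t)\le I(\hat X_1^{tb};Z^\star(\hat X_1^{tb}),W)\le J_t+H(W)\le J_t+t\alpha_b$, hence $J_t\ge t\cdot I(\hat X_1^b;Z^\star(\hat X_1^b))-t\alpha_b$.

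Putting the two bounds together, $t\cdot I(\hat X_1^b;Z^\star(\hat X_1^b))-t\alpha_b\le J_t\le t\cdot I(\hat X_1^b;Z^\star(\hat X_1^b))$; dividing by $tb$ and letting $t\to\infty$ (the limit exists by \cref{lem:blockind-exist}) yields $\big|I(\hat X;Z^\star(\hat X))-I(\hat X_1^b;Z^\star(\hat X_1^b))/b\big|\le\alpha_b/b$, and $\alpha_b/b\to 0$ since $\alpha_b=o(b)$. I do not expect a genuine obstacle: the only points requiring care are the bookkeeping of the Markov-chain relations when applying the data-processing inequality (checking that conditioning on $O_t$ annihilates $Y$, and that $(Z^\star(\hat X_1^{tb}),W)$ screens $O_t$ off from $\hat X_1^{tb}$) and the verification that $I(\hat X_1^{tb};O_t)=t\,I(\hat X_1^b;Z^\star(\hat X_1^b))$ holds with equality, which uses both block-independence of $\hat X$ and independence of the $Z^\star$ copies.
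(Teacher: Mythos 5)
Your proposal is correct and follows essentially the same route as the paper: the lower bound via \cref{it:multi-part} (paying $H(W)\le t\alpha_b$ to pass from $Z^\star(\hat X_1^{tb})$ to the per-block outputs) and the upper bound via \cref{it:concat}, together with the tensorization $I(\hat X_1^{tb};O_t)=t\,I(\hat X_1^b;Z^\star(\hat X_1^b))$ from block-independence. Your primary upper-bound route through \cref{it:preimg} is a harmless variant of the \cref{it:concat} argument you already note as an alternative, which is the one the paper uses.
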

\begin{proof}
    By the Amortized Block-Partition Property (\cref{it:multi-part}), there exists a random variable $W$ with $H(W)\leq t\cdot \alpha_b$ such that $\hat{X}_1^{tb}\to Z^\star(\hat{X}_1^{tb}),W\to Z^\star(\hat{X}_1^{b}),Z^\star(\hat{X}_{b+1}^{2b}),\dots,Z^\star(\hat{X}_{(t-1)b+1}^{tb})$.
    Therefore,
    \begin{align*}
        I(\hat{X}_1^{tb};Z^\star(\hat{X}_1^{tb})) &\geq I(\hat{X}_1^{tb};Z^\star(\hat{X}_1^{b}),Z^\star(\hat{X}_{b+1}^{2b}),\dots,Z^\star(\hat{X}_{(t-1)b+1}^{tb})) -t\cdot \alpha_b\\
        &= \sum_{i=1}^t I(\hat{X}_{(i-1)b+1}^{ib};Z^\star(\hat{X}_{(i-1)b+1}^{ib})) - t\cdot \alpha_b\\
        & = \sum_{i=1}^t I(\hat X_{1}^{b};Z^\star(\hat X_{1}^{b})) - t\cdot \alpha_b\\
        & = t(I(\hat X_{1}^{b};Z^\star(\hat X_{1}^{b}))-\alpha_b).
    \end{align*}
    Consequently,
    \begin{align*}
        I(\hat{X};Z^\star(\hat{X})) &= \lim_{t\to\infty} \frac{I(\hat{X}_1^{tb};Z^\star(\hat{X}_1^{tb}))}{tb}
        \geq \frac{I(\hat X_{1}^{b};Z^\star(\hat X_{1}^{b}))-\alpha_b}{b}.
    \end{align*}
    On the other hand, by the Concatenation Property (\cref{it:concat}) we have
    \begin{equation*}
        \frac{I(\hat X_{1}^{b};Z^\star(\hat X_{1}^{b}))}{b} = \frac{I(\hat{X}_1^{tb};Z^\star(\hat{X}_1^{b}),Z^\star(\hat{X}_{b+1}^{2b}),\dots,Z^\star(\hat{X}_{(t-1)b+1}^{tb}))}{tb} \geq \frac{I(\hat{X}_1^{tb};Z^\star(\hat{X}_1^{tb}))}{tb}
    \end{equation*}
    for all $t$ and $b$, and so $\frac{I(X_{1}^{b};Z^\star(X_{1}^{b}))}{b}\geq I(\hat{X};Z^\star(\hat{X}))$.
\end{proof}

\subsection{Information capacity of admissible channels is achieved by Markov process}\label{sec:icap-mcap}

We use \cref{thm:icap-scap} to show the following.
\begin{theorem}\label{thm:icap-mcap}
    Suppose that the channel $Z$ is admissible.
    Then,
    \begin{equation*}
        \ICap(Z) = \lim_{m\to\infty} \SCap^{(m)}(Z).
    \end{equation*}
\end{theorem}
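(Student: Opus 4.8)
The plan is to prove the two inequalities $\lim_{m\to\infty}\SCap^{(m)}(Z)\le\ICap(Z)$ and $\lim_{m\to\infty}\SCap^{(m)}(Z)\ge\ICap(Z)$ separately. The first is immediate: a stationary $m$-th order Markov process is in particular stationary $(m+1)$-th order Markov, so $m\mapsto\SCap^{(m)}(Z)$ is non-decreasing; it is bounded above by $\SCap(Z)$ (the remark stating $\ICap(Z)\ge\SCap(Z)\ge\SCap^{(m)}(Z)$), and $\SCap(Z)=\ICap(Z)$ by \cref{thm:icap-scap}. Hence $L:=\lim_{m\to\infty}\SCap^{(m)}(Z)$ exists and $L\le\ICap(Z)$.

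For the reverse inequality I fix $\eps>0$ and aim to build, for some finite $m$, a stationary $m$-th order Markov source of information rate at least $\ICap(Z)-\eps$. By \cref{it:rate-pres} it suffices to work with $Z^\star$. Assume $|\Sigmain|\ge 2$ (otherwise $\ICap(Z)=0$ and there is nothing to prove) and fix distinct $\sigma_0,\sigma_1\in\Sigmain$. The obstruction is that the stationary source used to prove \cref{thm:icap-scap} — a uniform random cyclic shift $\overline X$ of a block-independent source $\hat X$ — is \emph{not} finite-order Markov, since a bounded window of $\overline X$ does not in general reveal the shift phase (hence the block boundaries). I repair this as follows. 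By \cref{lem:icap-exist-gen} fix a large blocklength $n_0$ (depending on $\eps$) and a random block $\underline Y_1^{n_0}$ with $I(\underline Y_1^{n_0};Z^\star(\underline Y_1^{n_0}))\ge n_0(\ICap(Z^\star)-\eps/2)$, and let $\underline Y'$ be $\underline Y_1^{n_0}$ with its last coordinate overwritten by $\sigma_1$. Overwriting a single symbol is cheap: peeling off the last coordinate via \cref{it:single-part} with $\tau=1$, together with \cref{it:concat} and the single-symbol bound $H(Z^\star(\cdot))\le c$ from \cref{it:ent-bound}, gives $I(\underline Y';Z^\star(\underline Y'))\ge I(\underline Y_1^{n_0};Z^\star(\underline Y_1^{n_0}))-c-\gamma_{n_0}\ge n_0(\ICap(Z^\star)-\eps/2)-c-\gamma_{n_0}$, where $\gamma_m=o(m)$ is the sequence of \cref{it:single-part}. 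Crucially, $\underline Y'$ now \emph{ends} in $\sigma_1$ and has no $\sigma_0$-run of length $\ge n_0$, so any concatenation of i.i.d.\ copies of $\underline Y'$ has all $\sigma_0$-runs of length $\le n_0-1$.

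Now, for a large integer $N$ (to be sent to infinity last), let the super-block be $S=(\underline Y'^{(1)},\dots,\underline Y'^{(N)},\sigma_0^{n_0})$, with $\underline Y'^{(1)},\dots,\underline Y'^{(N)}$ i.i.d.\ copies of $\underline Y'$ and $\sigma_0^{n_0}$ a fixed run-marker; it has length $L=(N+1)n_0$. Let $\hat S$ be the block-independent source with blocklength $L$ and block distributed as $S$, and let $\overline S$ be its uniform random cyclic shift. Then $\overline S$ is stationary, and it is $2L$-th order Markov: any window of $2L$ consecutive symbols contains a complete maximal $\sigma_0$-run of length $\ge n_0$ bordered on the left by a $\sigma_1$ (the last symbol of some $\underline Y'^{(N)}$), and since content-internal $\sigma_0$-runs are strictly shorter than $n_0$ this run must be the one carrying a marker; its left endpoint therefore fixes the shift phase, hence all super-block boundaries, after which the conditional law of the next symbol depends only on the last $n_0\le 2L$ symbols. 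For the rate, \cref{lem:convex-comb-gen,lem:block-ind-rate-gen} applied with blocklength $L$ give $I(\overline S;Z^\star(\overline S))=I(\hat S;Z^\star(\hat S))\ge\frac1L I(S;Z^\star(S))-\frac{\alpha_L}{L}$; \cref{it:single-part} with $\tau=n_0$ (peeling off the marker) gives $I(S;Z^\star(S))\ge I(\underline V;Z^\star(\underline V))-\gamma_L$ with $\underline V=(\underline Y'^{(1)},\dots,\underline Y'^{(N)})$; and \cref{it:multi-part} with blocklength $n_0$, together with independence of the blocks, gives $I(\underline V;Z^\star(\underline V))\ge N\big(I(\underline Y';Z^\star(\underline Y'))-\alpha_{n_0}\big)$. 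Chaining these and letting $N\to\infty$ with $n_0$ fixed, the bound converges to $\ICap(Z^\star)-\eps/2-\frac{c+\gamma_{n_0}+\alpha_{n_0}}{n_0}$; since $\gamma_m,\alpha_m=o(m)$ and $c$ is a constant, taking $n_0$ large makes the last term less than $\eps/2$, so for a suitably large finite $N$ the $2L$-th order Markov source $\overline S$ satisfies $I(\overline S;Z(\overline S))=I(\overline S;Z^\star(\overline S))\ge\ICap(Z^\star)-\eps=\ICap(Z)-\eps$. Hence $\SCap^{(2L)}(Z)\ge\ICap(Z)-\eps$, and letting $\eps\to0$ gives $L\ge\ICap(Z)$.

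I expect the main difficulty to be exactly this reconciliation of "stationary and genuinely finite-order Markov" with "near-optimal rate": the random-shift construction of \cref{thm:icap-scap} does not survive as stated, and the device above — overwrite one symbol to kill boundary $\sigma_0$-runs, append a short $\sigma_0$-run marker so the phase becomes window-local, and take many i.i.d.\ blocks so the marker overhead vanishes — is what fixes it. The error bookkeeping then goes through precisely because \cref{it:single-part,it:multi-part} are invoked only at the \emph{fixed} scales $\tau=1$ and $\tau=n_0$, so no uniformity of the $o(m)$ rates over $\tau$ is needed.
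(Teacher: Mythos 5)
Your proof is correct, but it takes a genuinely different route from the paper. The paper's proof is considerably shorter: it takes a stationary ergodic near-optimal source $X$ (given by \cref{thm:icap-scap}), forms the stationary $(b-1)$-th order Markov process $\hat X$ with the same $b$-block marginals, and invokes the classical maximum-entropy property $H(\hat X)\ge H(X)$; the problem then reduces to upper-bounding $H(\hat X\mid Z^\star(\hat X))$ by $\frac{1}{b}H(X_1^b\mid Z^\star(X_1^b))+\frac{\alpha_b}{b}$, which follows from a single chain of inequalities using \cref{it:multi-part}, the chain rule, and stationarity. Your approach does not use the max-entropy fact at all; instead you build a concrete stationary finite-order Markov source by hand: you overwrite one symbol of a near-optimal $n_0$-block to kill boundary $\sigma_0$-runs, append a $\sigma_0^{n_0}$ marker so the block phase of the cyclic shift becomes window-local (hence the process is genuinely $2L$-th order Markov), and then let the number $N$ of content blocks per super-block grow so the marker overhead vanishes. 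The trade-offs are clear: the paper's argument is slicker and its order $b-1$ directly matches the blocklength, while yours is more constructive but requires the extra marker and overwrite bookkeeping (and produces a larger Markov order $2L=2(N+1)n_0$, though that is harmless since only finiteness matters). One small presentational caveat: you invoke \cref{lem:convex-comb-gen} for the pair $(\hat S,\overline S)$ of blocklength $L$, whereas in the paper that lemma is stated for the specific $(\hat X,\overline X)$ defined in \cref{sec:icap-scap}; its proof works verbatim for any block-independent process and its uniform cyclic shift, so the application is legitimate, but the generality should be flagged explicitly rather than cited by number.
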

\begin{proof}
    Suppose that $Z$ is admissible with respect to a well-behaved channel $Z^\star$.
    Given $\eps>0$, let $X$ be a stationary ergodic input process and $b$ a large enough integer such that
    \begin{equation*}
        \frac{I(X_1^b;Z^\star(X_1^b))}{b}\geq \ICap(Z)-\eps.
    \end{equation*}
    Such a process $X$ and integer $b$ are guaranteed to exist by \cref{thm:icap-scap}, since $Z$ is admissible with respect to $Z^\star$.
    Let $\hat X$ be the stationary $(b-1)$-th order Markov process satisfying
    \begin{equation*}
        p_{\hat{X}_1^b}(x_1^b) = p_{X_1^b}(x_1^b).
    \end{equation*}
    We will show that 
    \begin{equation}\label{eq:maingoal}
        I(\hat X;Z(\hat X))=I(\hat X;Z^\star(\hat X))\geq \frac{I(X_1^b;Z^\star(X_1^b))}{b} - \frac{\alpha_b}{b},
    \end{equation}
    with $\lim_{b\to\infty} \frac{\alpha_b}{b}=0$, where the first equality holds since $Z$ is admissible with respect to $Z^\star$.
    Taking $b$ to be large enough, we get that $I(\hat X;Z(\hat X))\geq \ICap(Z)-2\eps$, and, since $\eps$ was arbitrary, the theorem statement follows.

    Since $\hat{X}$ is a Markov process, we have $H(\hat X)\geq H(X)$.
    Therefore, it is enough to prove that
    \begin{equation}\label{eq:cond-entropy-goal}
        H(\hat X|Z^\star(\hat X))\leq \frac{H(X_1^b|Z^\star(X_1^b))}{b}+\frac{\alpha_b}{b}.
    \end{equation}
    We have
    \begin{align*}
        H(\hat X | Z^\star(\hat X)) &= \lim_{t\to\infty} \frac{H(\hat X_1^{tb}|Z^\star(\hat X_1^{tb}))}{tb}\\
        &\leq \lim_{t\to\infty} \frac{H(\hat X_1^{tb}|Z^\star(\hat X_1^{b}),\dots,Z^\star(X_{(t-1)b+1}^{tb}))+t\cdot \alpha_b}{tb}\\
        &= \lim_{t\to\infty} \frac{\sum_{i=1}^t H(\hat X_{(i-1)b+1}^{ib}|\hat X_{1}^{(i-1)b},Z^\star(\hat X_1^{b}),\dots,Z^\star(X_{(t-1)b+1}^{tb}))+t\cdot \alpha_b}{tb}\\
        &\leq \lim_{t\to\infty} \frac{\sum_{i=1}^t H(\hat X_{(i-1)b+1}^{ib}|Z^\star(\hat X_{(i-1)b+1}^{ib}))+t\cdot \alpha_b}{tb}\\
        &=\lim_{t\to\infty} \frac{\sum_{i=1}^t H(\hat X_{1}^{b}|Z^\star(\hat X_{1}^{b}))+t\cdot \alpha_b}{tb}\\
        &=\frac{H(X_1^b|Z^\star(X_1^b))}{b}+\frac{\alpha_b}{b},
    \end{align*}
    as desired.
    The first inequality uses the Amortized Block-Partition Property (\cref{it:multi-part}).
    The second equality uses the chain rule for conditional entropy.
    The second inequality holds since further conditioning does not increase entropy.
    The third equality uses the stationarity of $X$.
\end{proof}

\subsection{Coding capacity equals information capacity, and existence of dense codes from stationary ergodic processes}\label{sec:good-code}

In this section, we begin by establishing suitable convergence of the information density of block-independent processes to their information rate for admissible channels.
We use this result in two ways. First, we use it to show that $\CCap(Z)=\ICap(Z)$. Second, focusing on admissible channels with binary input for simplicity, we apply this result to stationary ergodic processes (which by \cref{thm:icap-scap} achieve information capacity on admissible channels) to conclude that there exist ``dense'' codes $\cC$ that achieve capacity on $Z$, where ``dense'' means that every short substring of $c\in\cC$ contains a decent fraction of $1$s.
As already pointed out in~\cite{PLW22}, this property is relevant for the construction of efficient capacity-achieving codes for these channels.

\subsubsection{Convergence of information density for block-independent process}
For two random variables $X,Y$, we define their \emph{information density} $i_{X,Y}$ as
\begin{equation*}
    i_{X,Y}(x,y) = \log\left(\frac{p_{XY}(x,y)}{p_X(x)\cdot p_Y(y)}\right).
\end{equation*}
Note that $\E_{(x,y)\sim p_{XY}}[i_{X,Y}(x,y)]=I(X;Y)$.
We show the following.
\begin{theorem}\label{thm:conv-info-density-gen}
    Let $Z$ be an admissible channel with respect to a well-behaved channel $Z^\star$.
    Fix $\eps>0$ and let $(\alpha_n)_{n\in\N}$ and $(\beta_n)_{n\in\N}$ be the sequences guaranteed by the Amortized Block-Partition Property (\cref{it:multi-part}) and the Amortized Preimage Size Property (\cref{it:preimg}).
    Let $X$ be a block-independent process with blocklength $b$ such that
    \begin{equation}\label{eq:condition-blocklength}
        \max(\alpha_b/b, \beta_b/b)\leq \eps^2/12.
    \end{equation}
    Then, there exists a constant $n_0$ (possibly depending on $\eps$ and $X$) such that for all $n\geq n_0$
    we have
    \begin{equation*}
        \Pr_{(x,z)\sim X_1^{n},Z(X_1^{n})}\left[\left|\frac{i_{X_1^{n},Z(X_1^{n})}(x,z)}{n} - I(X;Z(X))\right|\leq\eps\right]\geq 1-\eps.
    \end{equation*}
\end{theorem}

Before we prove \cref{thm:conv-info-density-gen} we establish some useful lemmas.
We will start by working with the $Z^\star$ channel.
\begin{lemma}\label{lem:conv-star-gen}
    Suppose that $X$ is a block-independent process with blocklength $b$.
    For $n=tb+r$ with $0\leq r<b$, let
    \begin{equation*}
        \eta^\star_{n}= (Z^\star(X_1^b),Z^\star(X_{b+1}^{2b}),\dots,Z^\star(X_{(t-1)b+1}^{tb}),Z^\star(X_{tb+1}^{tb+r})).
    \end{equation*}
    Then, for any $\eps>0$ there exists $n_0$ such that for all $n\geq n_0$ we have
    \begin{equation*}
        \Pr\left[\left|\frac{i_{X_1^{n},\eta^\star_{n}}(X_1^{n},\eta^\star_{n})}{n}-\frac{I(X_1^b;Z^\star(X_1^b))}{b}\right|>\eps\right]\leq \eps.
    \end{equation*}
\end{lemma}
\begin{proof}
    Write $t=\lfloor n/b\rfloor$, so that $n=tb+r$ with $0\leq r<b$.    
    First, by block-independence of $X$ (with blocklength $b$) we have
    \begin{equation*}
        p_{X_1^{n},\eta^\star_{n}}(x,z)=\left(\prod_{i=1}^t p_{X_1^b,Z^\star(X_1^b)}(x^{(i)},z^{(i)})\right)\cdot p_{X_1^r,Z^\star(X_1^r)}(x^{(t+1)},z^{(t+1)}),
    \end{equation*}
    where $x^{(i)},z^{(i)}$ denote the $i$-th blocks of $x$ and $\eta^\star_n$, respectively. If $r=0$, then the $(t+1)$-st block is empty and the corresponding probability term is $1$.
    This implies that
    \begin{equation}\label{eq:decomp-info-density-gen}
        i_{X_1^{n},\eta^\star_{n}}(x,z) = \sum_{i=1}^t i_{X_1^b,Z^\star(X_1^b)}(x^{(i)},z^{(i)}) + i_{X_1^r,Z^\star(X_1^r)}(x^{(t+1)},z^{(t+1)}),
    \end{equation}
    with the rightmost term in the sum being $0$ if $r=0$.
    Therefore, combining \cref{eq:decomp-info-density-gen} with the triangle inequality,
    \begin{align}
        &\Pr\left[\left|\frac{i_{X_1^{n},\eta^\star_{n}}(X_1^{n},\eta^\star_{n})}{n}-\frac{I(X_1^b;Z^\star(X_1^b))}{b}\right|>\eps\right] \nonumber\\
        &\leq \Pr\left[\left|\frac{1}{n}\sum_{i=1}^t i_{X_1^b,Z^\star(X_1^b)}(X_{(i-1)b+1}^{ib},Z^\star(X_{(i-1)b+1}^{ib}))-  \frac{I(X_1^b;Z^\star(X_1^b))}{b}\right|> \eps/2 \right]\nonumber\\
        &+ \Pr\left[\left|\frac{1}{n}i_{X_1^r,Z^\star(X_1^r)}(X_{tb+1}^{tb+r},Z^\star(X_{tb+1}^{tb+r}))\right|>\eps/2\right]. \label{eq:boundmain}
    \end{align}
    We bound both probabilities in the sum, starting with the first term.
    Since the random variables $i_{X_1^b,Z^\star(X_1^b)}(X_{(i-1)b+1}^{ib},Z^\star(X_{(i-1)b+1}^{ib}))$ are i.i.d.\ for all $i\in[t]$ and their expectation is $I(X_1^b;Z^\star(X_1^b))$, the law of large numbers guarantees that for any $\eps>0$ there is $t_0$ such that for all $t\geq t_0$ we have
    \begin{equation*}
        \Pr\left[\left|\frac{1}{tb}\sum_{i=1}^t i_{X_1^b,Z^\star(X_1^b)}(X_{(i-1)b+1}^{ib},Z^\star(X_{(i-1)b+1}^{ib}))-  \frac{I(X_1^b;Z^\star(X_1^b))}{b}\right|> \eps/3 \right]\leq \eps/2.
    \end{equation*}
    In turn, since $\frac{n}{tb}\to 1$ as $n\to\infty$, this implies that for any $\eps>0$ there is $n_0$ such that for all $n\geq n_0$ we have
    \begin{equation}\label{eq:bound-term1}
        \Pr\left[\left|\frac{1}{n}\sum_{i=1}^t i_{X_1^b,Z^\star(X_1^b)}(X_{(i-1)b+1}^{ib},Z^\star(X_{(i-1)b+1}^{ib}))-  \frac{I(X_1^b;Z^\star(X_1^b))}{b}\right|> \eps/2 \right]\leq \eps/2.
    \end{equation}

    To bound the second term in \cref{eq:boundmain}, note that for any $\eps>0$ there is $C_\eps>0$ such that
    \begin{equation*}
        \Pr\left[\left|i_{X_1^r,Z^\star(X_1^r)}(X_1^r,Z^\star(X_1^r))\right|>C_\eps\right]\leq \eps/2
    \end{equation*}
    simultaneously for all $0\leq r<b$ (since $b$ is fixed).
    Consequently, for any $n\geq 2C_\eps/\eps$ we have (recall that $X_{tb+1}^{tb+r}$ is distributed like $X_1^r$)
    \begin{equation}\label{eq:bound-term2}
        \Pr\left[\left|\frac{1}{n}i_{X_1^r,Z^\star(X_1^r)}(X_{tb+1}^{tb+r},Z^\star(X_{tb+1}^{tb+r}))\right|>\eps/2\right]\leq \eps/2.
    \end{equation}
    Combining \cref{eq:boundmain,eq:bound-term1,eq:bound-term2} yields the desired result for all $n\geq \max(n_0,2C_\eps/\eps)$.
\end{proof}

For a given input $X_1^{n}$, we will couple the $\eta^\star_n$ and $Z^\star(X_1^{n})$ processes with the help of the Amortized Preimage Size Property (\cref{it:preimg}).
By this property, there exists a random variable $Y_n$ with countable support such that $X_1^{n}\to \eta^\star_n\to Y_n$ and for which there exists a deterministic function $\phi$ such that $Z^\star(X_1^{n})=\phi(\eta^\star_n,Y_n)$ and $\log |\phi^{-1}(z)|\leq t \cdot \beta_b$ for all $z$, where we recall that $t=\lfloor n/b\rfloor$.

\begin{lemma}\label{lem:id-rem-star-gen}
    We have
    \begin{align*}
        &\E_{(x,z,y)\sim X_1^{n},\eta^\star_n,Y_n}\left[|i_{X_1^{n},\eta^\star_n}(x,z)-i_{X_1^{n},Z^\star(X_1^{n})}(x,\phi(z,y))\right]\\
        &=\E_{(x,z,y)\sim X_1^{n},\eta^\star_n,Y_n}\left[|i_{X_1^{n},\eta^\star_n,Y_n}(x,z,y)-i_{X_1^{n},Z^\star(X_1^{n})}(x,\phi(z,y))\right] \\
        &\leq 2t\cdot \beta_b.
    \end{align*}
\end{lemma}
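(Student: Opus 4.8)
The plan is to establish the two assertions of the lemma separately: first that the two information densities displayed in the first two lines agree almost surely, and then the bound $\le t\cdot\beta_b$ on the expected absolute difference from $i_{X_1^{tb},Z^\star(X_1^{tb})}(x,\phi(z,y))$.

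\emph{The equality.} I would start from the Markov chain $X_1^{tb}\to\eta^\star_t\to Y_t$ supplied by \cref{it:preimg}, which says $Y_t$ is conditionally independent of $X_1^{tb}$ given $\eta^\star_t$. Applying the chain rule for information density to the pair $(\eta^\star_t,Y_t)$ gives
\[
i_{X_1^{tb},(\eta^\star_t,Y_t)}(x,z,y)=i_{X_1^{tb},\eta^\star_t}(x,z)+\log\frac{p_{Y_t\mid X_1^{tb},\eta^\star_t}(y\mid x,z)}{p_{Y_t\mid\eta^\star_t}(y\mid z)},
\]
and the last term is $\log 1=0$ almost surely by that conditional independence. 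Hence $i_{X_1^{tb},\eta^\star_t}(x,z)=i_{X_1^{tb},(\eta^\star_t,Y_t)}(x,z,y)$ on the support of $(X_1^{tb},\eta^\star_t,Y_t)$, which is precisely the asserted equality.

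\emph{The bound.} Next, since $Z^\star(X_1^{tb})=\phi(\eta^\star_t,Y_t)$ is a deterministic function of $(\eta^\star_t,Y_t)$, the triple $(Z^\star(X_1^{tb}),\eta^\star_t,Y_t)$ carries the same information as $(\eta^\star_t,Y_t)$, and the chain rule for information density (extracting the $Z^\star(X_1^{tb})$-coordinate first) yields
\[
i_{X_1^{tb},(\eta^\star_t,Y_t)}(x,z,y)=i_{X_1^{tb},Z^\star(X_1^{tb})}(x,\phi(z,y))+i_{X_1^{tb},(\eta^\star_t,Y_t)\mid Z^\star(X_1^{tb})}(x,(z,y)\mid\phi(z,y)).
\]
Combined with the equality above, the quantity whose expected absolute value we must bound is exactly $\bigl|i_{X_1^{tb},(\eta^\star_t,Y_t)\mid Z^\star(X_1^{tb})}(x,(z,y)\mid\phi(z,y))\bigr|$, i.e.\ the absolute information density of $X_1^{tb}$ against $(\eta^\star_t,Y_t)$ under the distribution conditioned on the value of $Z^\star(X_1^{tb})$. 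The key observation is that once $Z^\star(X_1^{tb})=z$ is fixed, $(\eta^\star_t,Y_t)$ is supported on $\phi^{-1}(z)$, a set of size at most $2^{t\beta_b}$ by \cref{it:preimg}. I would then finish with the standard estimate that, for jointly distributed $(X,S,B)$ with $S$ confined to a set of log-size at most $L$ once $B$ is fixed, the expected absolute value of $i_{X,S\mid B}$ is $O(L)$: writing $i_{X,S\mid B}=\log\tfrac1{p_{S\mid B}}-\log\tfrac1{p_{S\mid X,B}}$ as a difference of nonnegative terms and bounding its positive and negative parts pointwise by the first and second term respectively, the two expectations become the conditional entropies $H(S\mid B)$ and $H(S\mid X,B)$, each at most $L$. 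Taking $S=(\eta^\star_t,Y_t)$, $B=Z^\star(X_1^{tb})$, $L=\max_z\log|\phi^{-1}(z)|\le t\cdot\beta_b$ then gives the claim.

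\emph{Main obstacle.} The genuinely new point is recognizing that the amortized-preimage-size bound of \cref{it:preimg} is exactly the quantity controlling the conditional information density obtained by decomposing $i_{X_1^{tb},\eta^\star_t}$ through $Z^\star(X_1^{tb})$; everything else is bookkeeping with the chain rule for information density. The places needing care are the rigor of the almost-sure manipulations (the vanishing term in the first step and the decomposition in the second) and the exact constant in the final entropy estimate — the split above only delivers $2t\beta_b$ in general, so one should either verify that the construction of $\phi$ and $Y_t$ makes $(\eta^\star_t,Y_t)\mid Z^\star(X_1^{tb})$ regular enough to recover the stated $t\beta_b$, or simply absorb the harmless factor into $\beta_b$ (which changes nothing downstream, since all that is used of $\beta_b$ is that it is $o(b)$).
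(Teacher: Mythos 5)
Your proof of the equality is exactly the paper's: peel off the $Y_t$ coordinate via the information-density chain rule and observe that the residual term vanishes because of the Markov chain $X_1^{tb}\to\eta^\star_t\to Y_t$ from \cref{it:preimg}. Where you diverge is in establishing the bound. The paper simply invokes Dobrushin's Lemma (\cref{lem:id-dob-gen}) as a black box with $A=(\eta^\star_t,Y_t)$ and $B=\phi(A)=Z^\star(X_1^{tb})$, obtaining the clean constant $\max_z\log|\phi^{-1}(z)|\le t\beta_b$. You instead re-derive the estimate from scratch: decompose $i_{X_1^{tb},(\eta^\star_t,Y_t)}$ through $Z^\star(X_1^{tb})$, observe that conditioned on $Z^\star(X_1^{tb})=z$ the pair $(\eta^\star_t,Y_t)$ lives on the small set $\phi^{-1}(z)$, and then bound the conditional information density by $H(S\mid B)+H(S\mid X,B)\le 2\max_z\log|\phi^{-1}(z)|$. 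This is correct but delivers $2t\beta_b$ rather than the stated $t\beta_b$; your own remark that this factor of $2$ is harmless downstream (since only $\beta_b=o(b)$ is used) is accurate, and in fact this chain-rule decomposition is the same underlying structure as the proof of Dobrushin's lemma itself — you have effectively re-proved a mildly weaker form of it. The paper's route buys the exact constant and a shorter write-up; your route is self-contained and avoids citing the auxiliary lemma, at the cost of the factor $2$ (which one could remove by a finer analysis of $\E[\,|D|\,]=\E[D]+2\E[D^-]$, but that is precisely what Dobrushin's lemma encapsulates).
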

The proof of this lemma relies on the following simple but useful fact about information densities, due to Dobrushin~\cite{Dob67}.
Since Dobrushin's paper is in Russian and a translation is hard to find, for completeness we present a proof in \cref{sec:proof-id-dob-gen}.
\begin{lemma}[{\cite[Equation (4.3)]{Dob67}}]\label{lem:id-dob-gen}
    Let $\phi:\cA\to\cB$ be a function where $\cA$ and $\cB$ are countable sets, and suppose that $|\phi^{-1}(z)|\leq M_\phi$ for all $z\in\cB$.
    Let $A$ be supported on $\cA$ and define $B=\phi(A)$.
    Then, for any discrete random variable $X$ arbitrarily correlated with $A$, we have\footnote{ Dobrushin~\cite[Equation (4.3)]{Dob67} claims a $\log M_\phi$ upper bound instead of the $2\log M_\phi$ upper bound stated in \cref{lem:id-dob-gen}. However, as far as we can tell, the argument presented in \cite{Dob67} only guarantees the latter weaker bound. This difference is immaterial to our application of \cref{lem:id-dob-gen}.}
    \begin{equation*}
        \E_{(x,a)\sim (X,A)}[|i_{X,A}(x,a)-i_{X,B}(x,\phi(a))|] \leq 2\log M_\phi.
    \end{equation*}
\end{lemma}

We are now ready to proceed with the proof of \cref{lem:id-rem-star-gen}.
\begin{proof}[Proof of \cref{lem:id-rem-star-gen}]
    The first equality follows from the fact that
    \begin{equation*}
        i_{X_1^{n},\eta^\star_n,Y_n}(x,z,y) = i_{X_1^{n},\eta^\star_n}(x,z)
    \end{equation*}
    for all $(x,z,y)$ in the support of $(X_1^{n},\eta^\star_n,Y_n)$, since $X_1^{n}\to \eta^\star_n\to \eta^\star_n, Y_n$.
    The inequality follows from \cref{lem:id-dob-gen} with $X=X_1^{n}$ and $A=(\eta^\star_n,Y_n)$
    and
    the hypothesis on $\phi$ and the support of $Y_n$ above, guaranteed by the Amortized Preimage Size Property (\cref{it:preimg}).
\end{proof}

We are now ready to prove \cref{thm:conv-info-density-gen}.
\begin{proof}[Proof of \cref{thm:conv-info-density-gen}]
    Since $X$ is block-independent with blocklength $b$, by \cref{lem:block-ind-rate-gen} we have
    \begin{equation}\label{eq:mi-rem-star-gen}
        \left|\frac{I(X_1^b;Z^\star(X_1^b))}{b} - I(X;Z^\star(X))\right| \leq \frac{\alpha_b}{b} \leq \eps/3,
    \end{equation}
    where the last inequality uses the hypothesis on $b$ from the theorem statement.
    Therefore, it is enough to show that
    \begin{equation}\label{eq:new-goal-gen}
        \Pr_{(x,z)\sim X_1^{n},Z^\star(X_1^{n})}\left[\left|\frac{i_{X_1^{n},Z^\star(X_1^{n})}(x,z)}{n} - \frac{I(X_1^b;Z^\star(X_1^b))}{b}\right|\leq 2\eps/3\right] \geq 1-\eps.
    \end{equation}

    First, by the triangle inequality, we have
    \begin{align}
        &\Pr_{(x,z)\sim X_1^{n},Z^\star(X_1^{n})}\left[\left|\frac{i_{X_1^{n},Z^\star(X_1^{n})}(x,z)}{n} - \frac{I(X_1^b;Z^\star(X_1^b))}{b}\right|> 2\eps/3\right]\nonumber\\
        &\leq \Pr_{(x,z^\star,y)\sim X_1^{n},\eta^\star_n,Y_n}\left[\left|\frac{i_{X_1^{n},\eta^\star_n}(x,z^\star)}{n}-\frac{i_{X_1^{n},Z^\star(X_1^{n})}(x,\phi(z^\star,y))}{n}\right|>\eps/3\right]\nonumber\\
        &+ \Pr_{(x,z^\star)\sim X_1^{n},\eta^\star_n}\left[\left|\frac{i_{X_1^{n},\eta^\star_n}(x,z^\star)}{n}- \frac{I(X_1^b;Z^\star(X_1^b))}{b}\right|>\eps/3\right]. \label{eq:prob-triangle-gen}
    \end{align}
    We analyze the two terms in the sum separately.
    For the first term, combining \cref{lem:id-rem-star-gen} with Markov's inequality yields
    \begin{align}
        \Pr_{(x,z^\star,y)\sim X_1^{n},\eta^\star_n,Y_n}\left[\left|\frac{i_{X_1^{n},\eta^\star_n}(x,z^\star)}{n}-\frac{i_{X_1^{n},Z^\star(X_1^{n})}(x,\phi(z^\star,y))}{n}\right|>\eps/3\right]
        &\leq  \frac{3\cdot 2t\cdot \beta_b}{\eps n} \nonumber\\
        &\leq 
        \frac{6\beta_b}{\eps b} \nonumber\\
        &\leq \eps/2,\label{eq:markov-gen}
    \end{align}
    where the second inequality uses the fact that $n\geq tb$ and the last inequality holds by the hypothesis on $b$ from the theorem statement.
    For the second term, by \cref{lem:conv-star-gen} with $\eps/3$ in place of $\eps$, for all $n\geq n_0$ with $n_0$ a sufficiently large constant depending on $\eps$, $b$, and $X$, we have
    \begin{equation}\label{eq:conv-star-gen}
        \Pr_{(x,z^\star)\sim X_1^{n},\eta^\star_n}\left[\left|\frac{i_{X_1^{n},\eta^\star_n}(x,z^\star)}{n}- \frac{I(X_1^b;Z^\star(X_1^b))}{b}\right|>\eps/3\right] \leq \eps/2.
    \end{equation}
    Combining \cref{eq:prob-triangle-gen} with \cref{eq:markov-gen,eq:conv-star-gen} yields \cref{eq:new-goal-gen},
    as desired.
\end{proof}

\subsubsection{Capacity-achieving codes for admissible channels}

\cref{thm:conv-info-density-gen} implies, via standard methods, that the coding capacity and information capacity of admissible channels coincide.
For completeness, we discuss this in detail.
Later in \cref{sec:dense-codes} we combine \cref{thm:icap-scap,thm:conv-info-density-gen} to show the existence of ``dense'' capacity-achieving codes suitable for bootstrapping efficient constructions.

We begin by relying on the following well-known theorem that formalizes the guarantees of MAP decoding.
\begin{lemma}[\protect{\cite[Theorem 18.5]{PW24}, adapted}]\label{lem:shannon-coding}
    Fix an input random variable $X$ supported on $\Sigmain^n$ and a channel $Z$ with input alphabet $\Sigmain$.
    Then, for any $\tau>0$ there exists a code $\cC$ with blocklength $n$, size $M$, and average decoding error probability $\eps$
    satisfying
    \begin{equation*}
        \eps \leq \Pr_{(x,z)\sim X,Z(X)}[i_{X,Z(X)}(x,z)\leq \log M + \tau] + 2^{-\tau}.
    \end{equation*}
\end{lemma}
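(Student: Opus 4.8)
The plan is to prove this by Shannon's random coding argument with a threshold (Feinstein-type) decoder, following the proof of [PW24, Theorem 18.5] but keeping track only of the features we need, since here the channel output may land in an arbitrary countable set (strings, or tuples of strings). Set $\gamma = \log M + \tau$. First I would draw the random codebook: let $C_1,\dots,C_M$ be i.i.d.\ copies of $X$, and let $Z = Z(C_1)$ be the output produced by transmitting message $1$. Define the deterministic threshold decoder $\Dec$ that, on receiving $z$, outputs the unique index $j$ such that $i_{X,Z(X)}(C_j,z) > \gamma$ if exactly one such index exists, and declares an error otherwise. By the symmetry of this construction across the $M$ messages, the expectation (over the random codebook) of the average-over-messages decoding error probability equals $\Pr[\Dec(Z)\neq 1]$ in the experiment above, so it suffices to upper bound this single probability.

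Next I would union-bound the error event given message $1$. An error occurs only if either $i_{X,Z(X)}(C_1,Z)\leq\gamma$ (the correct codeword fails the threshold) or $i_{X,Z(X)}(C_j,Z)>\gamma$ for some $j\in\{2,\dots,M\}$ (some wrong codeword passes it); indeed, if $C_1$ passes and no $C_j$ with $j\geq 2$ does, then exactly one index passes and it is index $1$, so decoding succeeds. The first term contributes exactly $\Pr_{(x,z)\sim X,Z(X)}[i_{X,Z(X)}(x,z)\leq\gamma]$, since $(C_1,Z)\sim (X,Z(X))$. For each $j\geq 2$, the key point is that $C_j$ is independent of the pair $(C_1,Z)$ — because $Z$ is produced by the channel applied to $C_1$ alone — so $(C_j,Z)\sim P_X\otimes P_{Z(X)}$, the product of the marginals.

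The heart of the argument is then the standard change-of-measure bound: for any $\gamma$,
\[
\Pr_{(x',z)\sim P_X\otimes P_{Z(X)}}\bigl[i_{X,Z(X)}(x',z)>\gamma\bigr]\leq 2^{-\gamma},
\]
which follows because $i_{X,Z(X)}(x,z)>\gamma$ is equivalent to $p_{X,Z(X)}(x,z)>2^{\gamma}\,p_X(x)\,p_{Z(X)}(z)$, and hence $\sum_{(x,z):\,i>\gamma}p_X(x)p_{Z(X)}(z) < 2^{-\gamma}\sum_{(x,z):\,i>\gamma}p_{X,Z(X)}(x,z)\leq 2^{-\gamma}$. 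Summing over the $M-1$ wrong codewords gives a contribution of at most $(M-1)2^{-\gamma}\leq M\cdot 2^{-(\log M+\tau)}=2^{-\tau}$. Combining the two contributions, the expected average error probability is at most $\Pr_{(x,z)\sim X,Z(X)}[i_{X,Z(X)}(x,z)\leq \log M+\tau]+2^{-\tau}$, so some realization of the codebook achieves at most this value with the associated deterministic threshold decoder, giving the desired code.

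I expect the only delicate points to be bookkeeping rather than conceptual. First, one must check that $i_{X,Z(X)}$ is well defined on the relevant countable support even when $\Sigmain$ (output side) is infinite; this is automatic, since any $(x,z)$ with $i(x,z)>\gamma$ has $p_{X,Z(X)}(x,z)>0$, which forces $p_X(x)>0$ and (as $p_{X,Z(X)}(x,z)\leq p_{Z(X)}(z)$) $p_{Z(X)}(z)>0$, so the logarithm is finite there, and the term $\{i(x,z)\leq\gamma\}$ can be taken over the support of $(X,Z(X))$. Second, the random codewords $C_1,\dots,C_M$ need not be distinct, so the object produced is formally a codebook of $M$ codewords (an encoder $[M]\to\Sigmain^n$, not necessarily injective); repeated codewords can only increase the error probability, so the stated bound is unaffected, and this is exactly the form of the statement in [PW24]. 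Neither point changes the structure of the proof, which is the main obstacle only in the sense of correctly handling the independence structure $C_j\perp(C_1,Z)$ for $j\geq 2$.
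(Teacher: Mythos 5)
Your proof is correct and is essentially the standard random-coding-with-threshold-decoding argument behind the cited result; the paper itself does not reprove this lemma but simply invokes \cite[Theorem 18.5]{PW24}, noting only (consistently with your construction) that the code is obtained by sampling $M$ codewords i.i.d.\ according to $X$. The independence observation $C_j\perp(C_1,Z)$ for $j\geq 2$, the change-of-measure bound $\Pr_{P_X\otimes P_{Z(X)}}[i>\gamma]\leq 2^{-\gamma}$, and the derandomization step are all handled correctly, so there is nothing to fix.
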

The code guaranteed by \cref{lem:shannon-coding} is obtained by sampling $M$ codewords i.i.d.\ according to $X$.
We briefly discuss how \cref{lem:shannon-coding} can be combined with \cref{thm:conv-info-density-gen} to obtain codes with arbitrarily small decoding error probability and the desired rate for an arbitrary admissible channel.
\begin{corollary}\label{coro:code}
    Let $Z$ be an admissible channel.
    Then, for any $\delta>0$ and any block-independent input process $X$ there exists $n(\delta,X)$ such that for any $n\geq n(\delta,X)$ there is an $(n,R,\delta)$-code for $Z$ with $R\geq I(X;Z(X))-\delta$.
\end{corollary}
\begin{proof}
    Fix $\delta>0$, set $\eps=\delta/3$, and consider an arbitrary block-independent input process $X$ with blocklength $b$.
    We may assume that $I(X;Z(X))-\delta>0$ as otherwise the corollary statement is trivially true.
    Note that we can assume that $b$ is large enough so that \cref{eq:condition-blocklength} in \cref{thm:conv-info-density-gen}
    is satisfied, since a block-independent process with blocklength $b$ is also block-independent with blocklength $b'=cb$ for any integer $c\geq 1$.
    Let $n_0=n_0(\eps,X)$ be the constant guaranteed by \cref{thm:conv-info-density-gen} for this choice of $\eps$ and $X$.
    Set $M=\lceil 2^{n(I(X;Z(X))-3\eps)}\rceil$ and $\tau=\eps n-1$, which is positive whenever $n>1/\eps$.
    By \cref{lem:shannon-coding}, for all $n\geq \max(n_0,1+1/\eps)$ there exists an $(n,R=\frac{1}{n}\log M\geq  I(X;Z(X))-3\eps,\lambda)$-code for $Z$ with
    \begin{align*}
        \lambda&\leq \Pr_{(x,z)\sim X_1^n,Z(X_1^n)}[i_{X_1^n,Z(X_1^n)}(x,z)\leq \log M + \tau] + 2^{-\tau} \\
        &\leq  \Pr_{(x,z)\sim X_1^n,Z(X_1^n)}\left[\frac{i_{X_1^n,Z(X_1^n)}(x,z)}{n}\leq I(X;Z(X))-2\eps\right]+2^{-\eps n+1}\\
        &\leq 2\eps + 2^{-\eps n+1},
    \end{align*}
    where the last inequality follows from \cref{thm:conv-info-density-gen}.
    Now, we may set $n$ large enough as a function of $\eps$ so that $2^{-\eps n+1}\leq \eps$, in which case $\lambda\leq 3\eps$.
    Recalling that $\eps=\delta/3$ concludes the proof.
\end{proof}

\paragraph{Obtaining vanishing decoding error probability.}
We now argue how \cref{coro:code} can be extended to get a family of codes defined for all block lengths with vanishing decoding error probability and rate matching the information rate of any given block-independent input process.
This shows that for any admissible channel  $Z$ the coding capacity equals the information capacity.

\begin{theorem}\label{thm:icap-ccap}
    Suppose that $Z$ is an admissible channel.
    Then, $\ICap(Z)=\CCap(Z)$.
\end{theorem}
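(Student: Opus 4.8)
The plan is to prove the two inequalities separately. The bound $\CCap(Z)\le\ICap(Z)$ is \cref{rem:ccap-leq-icap} (Fano's inequality), so the real content is $\CCap(Z)\ge\ICap(Z)$, i.e.\ that every $R<\ICap(Z)$ is an achievable rate for $Z$. Fix $\eps>0$ with $R<\ICap(Z)-5\eps$. Using \cref{lem:icap-exist-gen} together with \cref{it:rate-pres}, pick an integer $b$ that is (i) at least the constant $b(\eps)$ from \cref{coro:code}, (ii) large enough that $\max(\alpha_b/b,\beta_b/b)\le\eps^2/3$ (possible since $\alpha_b,\beta_b=o(b)$), and (iii) equipped with a distribution $X_1^b$ with $I(X_1^b;Z^\star(X_1^b))/b\ge\ICap(Z)-\eps$. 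Let $\hat X$ be the block-independent process with blocklength $b$ and block marginal $X_1^b$, so that $I(\hat X;Z(\hat X))=I(\hat X;Z^\star(\hat X))\ge\ICap(Z)-2\eps$ by \cref{lem:block-ind-rate-gen} and \cref{it:rate-pres}. For blocklengths $N$ that are (large) multiples of $b$, \cref{coro:code} already produces $(N,R',\eps)$-codes for $Z$ with $R'\ge\ICap(Z)-3\eps>R$, so the remaining task is to handle all other blocklengths.

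For a general target blocklength $N$, write $N=tb+r$ with $0\le r<b$, and re-run the Shannon-coding argument behind \cref{coro:code} with the input process $X^{(N)}$ that samples $t$ i.i.d.\ blocks from $X_1^b$ and then appends a \emph{fixed} suffix $s\in\Sigmain^{r}$ (say $s=0^{r}$): sampling $M=2^{N(\ICap(Z)-5\eps)}$ codewords i.i.d.\ from $X^{(N)}$ and invoking \cref{lem:shannon-coding}, it suffices to show that $\tfrac1N\, i_{X^{(N)},Z(X^{(N)})}(x,z)\ge\ICap(Z)-4\eps$ with probability $\ge1-\eps$ once $N$ is large. This is a routine variant of \cref{thm:conv-info-density-gen}. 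Writing $\eta^\star_t=(Z^\star(\hat X_1^b),\dots,Z^\star(\hat X_{(t-1)b+1}^{tb}))$ and $\xi^\star=Z^\star(s)$, \cref{lem:conv-star-gen} still gives $i_{X^{(N)},(\eta^\star_t,\xi^\star)}=\sum_{i=1}^t i_{X_1^b,Z^\star(X_1^b)}(x^{(i)},z^{(i)})+i_{s,Z^\star(s)}$, and the last term vanishes identically because $s$ is deterministic; hence, by the strong law of large numbers together with $t/N\to1/b$, this quantity divided by $N$ converges in probability to $\tfrac1b I(X_1^b;Z^\star(X_1^b))\ge\ICap(Z)-\eps$. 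One then passes from $(\eta^\star_t,\xi^\star)$ to $Z^\star(X^{(N)})$ exactly as in the proof of \cref{thm:conv-info-density-gen}: $Z^\star(X^{(N)})$ is a deterministic function of $Z^\star(\hat X_1^{tb})$ and $\xi^\star$ by \cref{it:concat}; $Z^\star(\hat X_1^{tb})$ is recovered from $\eta^\star_t$ and a helper $Y_t$ with $\max_z\log|\phi^{-1}(z)|\le t\beta_b$ by \cref{it:preimg}; and the ``seam'' between $\hat X_1^{tb}$ and $s$ is controlled by \cref{it:single-part} with $\tau=r$, whose helper $W_{\mathrm{suf}}$ satisfies $H(W_{\mathrm{suf}})\le\gamma_N=o(N)$. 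All of these corrections to the information density are $o(N)$, so by \cref{lem:id-dob-gen} and Markov's inequality $\tfrac1N i_{X^{(N)},Z^\star(X^{(N)})}$ still concentrates around $\tfrac1b I(X_1^b;Z^\star(X_1^b))$, and one finally passes from $Z^\star$ to $Z$ as in \cref{thm:conv-info-density-gen}. Since $r$ ranges over the finite set $\{0,\dots,b-1\}$, one threshold $N_0(\eps)$ works for all $N\ge N_0(\eps)$, and \cref{lem:shannon-coding} then yields an $(N,R',\lambda)$-code for $Z$ with $R'\ge\ICap(Z)-5\eps>R$ and $\lambda\le\eps+2^{-\eps N}$. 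Running this for a sequence $\eps_k\downarrow0$ and stitching the resulting code families together (a routine diagonalization) gives, for every $R<\ICap(Z)$, a family of codes of blocklength $n$ and rate $\ge R$ for all large $n$ with error tending to $0$; hence $R$ is achievable, and $\CCap(Z)\ge\ICap(Z)$.

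I expect the only genuinely new step beyond \cref{sec:good-code} to be the extension of \cref{coro:code} from blocklengths that are multiples of $b$ to arbitrary blocklengths, and the main obstacle there is that $Z$ has input-correlated errors: one cannot simply couple $Z(c\cdot s)$ with $Z(c)$ as one would for an i.i.d.\ channel, so one must re-derive information-density concentration directly for the padded input distribution $X^{(N)}$ and check that the boundary and preimage-size corrections supplied by \cref{it:single-part,it:multi-part,it:preimg} all remain $o(N)$. This is precisely why padding with a fixed (rather than random) block is the convenient choice, as it contributes nothing to the per-block information density while the seam term is $o(N)$ regardless.
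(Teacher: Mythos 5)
Your overall skeleton (Fano for $\CCap(Z)\le\ICap(Z)$; \cref{coro:code} at blocklengths $tb$; an extension to all blocklengths; diagonalization over $\eps_k\downarrow 0$) matches the paper, but your treatment of blocklengths that are not multiples of $b$ is genuinely different from the paper's, and that is exactly where your argument has a gap. To transfer concentration of the information density from the block-decomposed output $(\eta^\star_t,\xi^\star)$ to $Z^\star(X^{(N)})$, the tool used throughout \cref{sec:good-code} is \cref{lem:id-dob-gen}, which requires exhibiting $Z^\star(X^{(N)})$ as a \emph{deterministic function with bounded maximum preimage size} of the block outputs plus admissible side information; this is precisely what \cref{it:preimg} supplies, but only for $t$ blocks of \emph{equal} length $b$. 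Controlling the extra ``seam'' with \cref{it:single-part} does not work as stated, for two reasons: that axiom runs in the opposite direction (the whole output together with $W_{\mathrm{suf}}$ determines the piece outputs, not the pieces the whole), and it provides only an entropy bound $H(W_{\mathrm{suf}})\le\gamma_N$ rather than a bound on $\max_z\log|\phi^{-1}(z)|$. An entropy bound on side information controls differences of \emph{mutual informations}, but not (without an additional argument about the negative tail of information densities) the expected absolute deviation of the information densities themselves, which is what you need before invoking Markov's inequality; and \cref{it:concat} only gives a stochastic kernel from pieces to whole. So the assertion that ``all of these corrections to the information density are $o(N)$'' is unjustified for the seam term. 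You would need either an unequal-blocklength variant of \cref{it:preimg} (plausible in the concrete examples, but not one of the axioms) or a different mechanism.

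The paper sidesteps this entirely: it introduces the ``$0$-trimming'' channel $Z'$ (send the input through $Z$, then delete the trailing run of $0$s in the output), checks that $Z'$ is still admissible with $\ICap(Z')=\ICap(Z)$ (the trimmed length has entropy $O(\log n)$), applies \cref{coro:code} to $Z'$ at blocklengths $tb$, and extends to blocklength $n\in(tb,(t+1)b)$ by appending $0^{n-tb}$ to each \emph{codeword}: the appended suffix is erased by the trimming at the output, so the decoder and its error probability are untouched and the rate loss is only a factor $tb/n$. Since $Z'$ is a degraded version of $Z$, any code for $Z'$ is a code for $Z$, giving $\CCap(Z)\ge\CCap(Z')=\ICap(Z')=\ICap(Z)$ with no new concentration statement. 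If you want to keep your route, the cleanest repair is to adopt this padding-at-the-codeword level (with the trimming channel) rather than padding the input process and re-deriving the information-density concentration.
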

\begin{proof}

For each $k\in \N$ define $\eps^{(k)} = 1/k$.
By the discussion after \cref{eq:guarantee-X-gen}, for each $k$ there is a block independent process $X$ such that $I(X;Z(X))\geq \ICap(Z)-\eps^{(k)}$.
Then, \cref{coro:code} with this $X$ and $\delta=\eps^{(k)}$ guarantees the existence of a family $\{\cC_n^{(k)}\}_{n\in\N}$ of $(n,R^{(k)}_n,\eps_n^{(k)})$-codes, where $R^{(k)}_n\geq \ICap(Z)-\eps^{(k)}$ and $\eps_n^{(k)}\leq \eps^{(k)}$ for all $n\geq n(k)$, for some $n(k)$ (note that the choice of input process $X$ is fixed for each $k\in\N$, so $n(k)$ really only depends on $k$).

Now fix a gap to information capacity $\gamma>0$ and some integer $k^\star$ such that $1/k^\star\leq \gamma$.
We construct a family of codes $\{\cC_n\}_{n\in\N}$ as follows.
For each $n\geq n(k^\star)$ take $\cC_n$ to be $\cC_n^{(k)}$ for $k\geq k^\star$ if $n(k)\leq n<n(k+1)$.
For each $n<n(k^\star)$ take $\cC_n=\Sigmain^n$.
Then, each $\cC_n$ is an $(n,R_n,\eps_n)$-code with $R_n\geq \ICap(Z)-\frac{1}{k^\star}\geq \ICap(Z)-\gamma$ for all $n$ and $\eps_n\to 0$ as $n\to\infty$.
The result follows since $\gamma>0$ was arbitrary.
\end{proof}

\subsubsection{Dense capacity-achieving codes for admissible channels}\label{sec:dense-codes}

In the previous section we showed that the coding capacity and information capacity of an admissible channel are the same.
However, this alone is not sufficient if we wish to obtain \emph{efficiently encodable and decodable} capacity-achieving codes for an admissible channel.
In this section, following the approach of~\cite{PLW22}, we combine the fact that capacity is achieved by stationary ergodic processes (\cref{thm:icap-scap}) with \cref{thm:conv-info-density-gen} to show the existence of capacity-achieving codes with density properties for admissible channels as stated in \cref{thm:dense-code-gen}, useful for constructing efficient capacity-achieving codes.\footnote{Pernice, Li, and Wootters~\cite{PLW22} focused on channels with i.i.d.\ deletions and replications. The analog of \cref{thm:icap-scap} for these channels was already shown in~\cite{Dob67}.}
For simplicity we will focus on admissible channels with binary input alphabet, although our discussion generalizes further.
We restate \cref{thm:dense-code-gen} here for convenience.

\densecodegen*

Our proof of \cref{thm:dense-code-gen} will rely on the following lemma proved in~\cite{PLW22}.
\begin{lemma}[\protect{\cite[Proposition 3.4]{PLW22}, adapted}]\label{lem:dense-stat-erg}
    Let $X$ be a stationary ergodic process supported on $\bits$ with $\Pr[X_1=1]\in (0,1)$.
    Then, for any $\zeta>0$ there exists $\gamma\in(0,1/2)$ and an integer $n_0>0$ such that the following holds for all $n\geq n_0$.
    With probability at least $0.99$ over the sampling of $x\sim X_1^n$, we have $\gamma\zeta n\leq w(x_i^{i+\zeta n})\leq (1-\gamma)\zeta n$ for all $i\in[(1-\zeta)n]$, where $w(\cdot)$ denotes the Hamming weight.
\end{lemma}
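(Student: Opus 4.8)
The plan is to obtain the lemma from the pointwise ergodic theorem for $X$, upgrading the convergence of a \emph{single} running average into control that is uniform over all $\Theta(n)$ windows of length $\approx\zeta n$ at once. Write $S_0:=0$, $S_k:=\sum_{j=1}^k X_j$, and $p:=\Pr[X_1=1]\in(0,1)$. Applying the defining property of a stationary ergodic process with $f$ equal to the identity on $\bits$ gives $S_n/n\to p$ almost surely.

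First I would upgrade this to the statement that $D_n:=\tfrac1n\max_{0\le k\le n}|S_k-pk|\to 0$ almost surely. For a fixed $m$ and $n>m$, split $\max_{0\le k\le n}|S_k-pk|$ into $\max_{0\le k\le m}|S_k-pk|\le m$ (using $|S_k-pk|\le k$) and $\max_{m<k\le n}|S_k-pk|\le n\cdot\sup_{k>m}|S_k/k-p|$; hence $D_n\le m/n+\sup_{k>m}|S_k/k-p|$, and letting $n\to\infty$ and then $m\to\infty$ (the inner supremum is non-increasing in $m$ and tends to $0$ almost surely because $S_k/k\to p$ almost surely) gives $D_n\to 0$ almost surely. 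Since almost sure convergence implies convergence in probability, fixing $\delta:=\tfrac{\zeta}{4}\min(p,1-p)>0$ there is an $n_1$ with $\Pr[D_n\le\delta]\ge 0.99$ for every $n\ge n_1$.

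Next I would bound every window weight on the event $\{D_n\le\delta\}$. For $i\in[(1-\zeta)n]$ the window $x_i^{i+\zeta n}$ has weight $W_i=S_{i+\zeta n}-S_{i-1}$, and both $i-1$ and $i+\zeta n$ lie in $\{0,1,\dots,n\}$, so the triangle inequality gives $|W_i-p\zeta n|\le|S_{i+\zeta n}-p(i+\zeta n)|+|S_{i-1}-p(i-1)|+p\le 2\delta n+1$. With the chosen $\delta$ one has $2\delta n\le\tfrac12 p\zeta n$ and $2\delta n\le\tfrac12(1-p)\zeta n$, so $W_i\ge\tfrac12 p\zeta n-1$ and $W_i\le\big(1-\tfrac12(1-p)\big)\zeta n+1$. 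Setting $\gamma:=\tfrac14\min(p,1-p)\in(0,1/2)$ (it depends only on the process, as required) and $n_0:=\max\big(n_1,\,4/(p\zeta),\,4/((1-p)\zeta)\big)$, for all $n\ge n_0$ these bounds yield $\gamma\zeta n\le W_i\le(1-\gamma)\zeta n$ for every $i$ simultaneously on an event of probability at least $0.99$, which is exactly the claim; the $\pm 1$ terms coming from the rounding of $\zeta n$ are harmless for $n\ge n_0$.

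The main obstacle is the first step: the ergodic theorem controls only one sequence of averages, whereas the statement needs simultaneous control of roughly $n$ overlapping windows whose lengths also grow with $n$. The resolution is exactly the passage to the single running maximum $D_n$, which reduces every window to the two partial-sum errors at its endpoints; after that only elementary arithmetic with the constants remains.
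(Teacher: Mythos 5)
Your proof is correct. The paper does not actually prove this lemma --- it imports it directly from \cite[Proposition 3.4]{PLW22} --- so there is no in-paper argument to compare against, but your derivation is a sound, self-contained proof along the standard lines: Birkhoff's theorem gives $S_n/n\to p$ almost surely, the running-maximum quantity $D_n$ upgrades this to simultaneous control of all partial sums, and each window weight is then sandwiched as a difference of two partial sums. The choice $\gamma=\tfrac14\min(p,1-p)$ and the handling of the $\pm 1$ rounding terms are both fine, and your $\gamma$ in fact depends only on the process (the lemma would even permit dependence on $\zeta$).
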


\begin{proof}[Proof of \cref{thm:dense-code-gen}]
    Let $Z$ be an arbitrary admissible channel with binary input alphabet.
Let $X$ be a stationary ergodic process such that $I(X;Z(X))>0$.
Then, it must be the case that $\Pr[X_1=1]\in(0,1)$, and so \cref{lem:dense-stat-erg} applies to $X$ with some constants $\zeta>0$, $\gamma\in(0,1/2)$, and $n_0$.

    It will be slightly easier to work with a block-independent process.
    Fix $\eps>0$.
    From the proof of \cref{thm:icap-scap} in \cref{sec:icap-scap}, we know that there is a stationary ergodic process $\overline{X}$ such that $I(\overline{X};Z(\overline{X}))\geq \ICap(Z)-\eps$, and moreover $\overline{X}$ is created by choosing an appropriate block-independent process $\hat{X}$ with blocklength $b(\eps)$, then choosing a uniformly random starting point in the first block of $\hat{X}$, and starting $\hat{X}$ from that point.
    Since $\overline{X}$ is obtained from $\hat{X}$ by trimming at most $b$ bits from the beginning of $\hat{X}$, we conclude that $\hat{X}$ also satisfies the properties laid out in \cref{lem:dense-stat-erg} with possibly a slightly smaller $\gamma$ and slightly larger $n_0$.
    Recalling that the code $\cC$ guaranteed by \cref{coro:code} applied to $Z$ and $\hat{X}$ is obtained by sampling codewords i.i.d.\ according to $\hat{X}_1^n$ yields \cref{thm:dense-code-gen}.
    This is because with high probability more than a $0.9$-fraction of codewords $c\in\cC$ will satisfy $\gamma\zeta n\leq w(c_i^{i+\zeta n})\leq (1-\gamma)\zeta n$ for all $i\in[(1-\zeta)n]$, and throwing away all codewords of $\cC$ that do not satisfy this property will not affect the asymptotic rate.
\end{proof}

\section{Some special cases of our capacity theorems}\label{sec:special-cases}

\subsection{The Mao-Diggavi-Kannan ISI model}\label{sec:capture-MDK}

Consider the \emph{$\ell$-ISI-synchronization channel} $Z$ from~\cite{MDK18}, for an arbitrary fixed integer $\ell\geq 0$.
This channel replaces the $i$-th input bit $x_i$ by a string $y_i\in\bits^*$ with probability
\begin{equation*}
    p(y_i|x_i,x_{i-1},\dots,x_{i-\ell}).
\end{equation*}
For simplicity, we focus on the case where $p(\cdot|x_i,x_{i-1},\dots,x_{i-\ell})$ is supported on $\bits^{\leq a}=\bigcup_{j=0}^a \bits^j$ for some integer $a\geq 1$ and any choice of $x_i,x_{i-1},\dots,x_{i-\ell}$, although our argument below generalizes further.
We show that our capacity theorems apply to this channel, and so they generalize the corresponding results of~\cite{MDK18}.

Consider the special channel $Z^\star$ that behaves like $Z$, except that it does not corrupt the first $\ell$ input bits and separately outputs the last $\ell$ input bits (in particular, $Z^\star$ is noiseless on inputs $x$ of length at most $\ell$).
First, we show that $Z^\star$ is well-behaved. 
Then, we show that $Z$ is admissible with respect to $Z^\star$.
\begin{itemize}
    \item \textbf{Bounded Entropy Property (\cref{it:ent-bound}):} Since the output associated to the $i$-th input bit has length at most a fixed constant $a$, we have $H(Z^\star(X_1^n))\leq n\cdot(a+1)$.

    \item \textbf{Concatenation Property (\cref{it:concat}):} Note that $Z^\star(X_{1}^{m})$ does not corrupt the first $\ell$ bits of $X_{1}^{m}$, and furthermore it outputs $X_{m-\ell+1}^{m}$. Otherwise, it behaves exactly like $Z$. Analogously, $Z^\star(X_{m+1}^{n})$ does not corrupt the first $\ell$ bits of $X_{m+1}^{n}$ and it outputs $X_{n-\ell+1}^{n}$.
    Therefore, from $Z^\star(X_{1}^{m}),Z^\star(X_{m+1}^{n})$ we have the necessary information to apply the correct errors to the first $\ell$ bits of $X_{m+1}^n$, and we also know the last $\ell$ input bits $X_{n-\ell+1}^{n}$.
    This means that $X_1^n \to Z^\star(X_{1}^{m}),Z^\star(X_{m+1}^{n})\to Z^\star(X_1^{n})$.

    \item \textbf{Partition Property (\cref{it:partition}):} 
        \begin{enumerate}
            \item Prefix/Suffix-Partition Property (\cref{it:single-part}): Fix integers $\tau$ and $n\geq \tau$, and an input process $X$.
            Let $N$ denote the number of output bits corresponding to $X_1^\tau$ in $Z^\star(X_1^n)$.
            Also, let $\Wpre$ include for each $j\in\{\tau+1,\dots,\tau+\ell\}$ the string $v_j$ that $X_j$ was replaced by in $Z^\star(X_1^n)$.
            Then, $(Z^\star(X_1^n),N,\Wpre,X_{\tau-\ell+1}^\tau,X_{\tau+1}^{\tau+\ell})$ completely determine $Z^\star(X_1^\tau),Z^\star(X_{\tau+1}^n)$, and so  $X_1^n\to Z^\star(X_1^n),N,\Wpre,X_{\tau-\ell+1}^\tau,X_{\tau+1}^{\tau+\ell}\to Z^\star(X_1^\tau),Z^\star(X_{\tau+1}^n)$.
            Furthermore, 
            \begin{equation*}
                H(N,\Wpre,X_{\tau+1}^{\tau+\ell},X_{n-\ell+1}^n)\leq \log(\tau\cdot 2^{a+1})+\ell\cdot (a+1) + 2\ell.
            \end{equation*}
            Therefore, the prefix-partitioning half of the Prefix/Suffix-Partition Property (\cref{it:single-part}) holds with $\gamma_m=\log\tau + (a+1)+\ell(a+3)$.
            An analogous argument establishes the suffix-partitioning property with the same $\gamma_m$.

            \item Amortized Block-Partition Property (\cref{it:multi-part}): Fix a blocklength $b$, number of blocks $t$, and an input process $X$.
            For each $i\in[t]$, let $N_i$ denote the number of output bits corresponding to $X_{(i-1)b+1}^{ib}$.
            Also, let $W_i$ include, for each $j\in\{(i-1)b+1,(i-1)b+\ell\}$ the string $v_j$ that $X_j$ is replaced by in $Z^\star(X_1^{tb})$.
            Let
            \begin{equation*}
                W=(N_i,W_i,X_{ib-\ell+1}^{ib})_{i\in[t]}.
            \end{equation*}
            Then, $Z^\star(X_1^{tb}), W$ completely determine $Z^\star(X_1^{b}),\dots,Z^\star(X_{(t-1)b+1}^{tb})$, and so
            \begin{equation*}
                X_1^{tb}\to Z^\star(X_1^{tb}), W\to Z^\star(X_1^{b}),\dots,Z^\star(X_{(t-1)b+1}^{tb}).
            \end{equation*}
            Moreover,
        \begin{equation*}
            H(W)\leq \sum_{i=1}^t (\log(b\cdot 2^{a+1})+\ell\cdot (a+1)+\ell) =t\cdot(\log(b\cdot 2^{a+1})+\ell\cdot (a+1)+\ell).
        \end{equation*}
        Therefore, the Amortized Block-Partition Property (\cref{it:multi-part}) holds with $\alpha_m = \log(m\cdot 2^{a+1})+\ell\cdot (a+1)+\ell = o(m)$.
            
        \end{enumerate}

    \item \textbf{Amortized Preimage Size Property (\cref{it:preimg}):} Fix a blocklength $b$ and $n=tb+r$ with $t\geq 1$ and $0\leq r<b$.
    Define $\ell_i=\min(\ell,b)$ for $i\in[t]$ and $\ell_{t+1}=\min(\ell,r)$.
    Consider the random variable $Y=(Y_i)_{i\in[t+1]}$, where each $Y_i$ includes, for each $j\in\{(i-1)b+1,\dots,(i-1)b+\ell_i\}$, the string $v_j$ that $X_j$ should be replaced by in $Z^\star(X_1^n)$.
    Then $Y$ has countable support and $X_1^{n}\to Z^\star(X_1^b),\dots,Z^\star(X_{(t-1)b+1}^{tb}),Z^\star(X_{tb+1}^{tb+r})\to Y$, since the distribution of $Y_i$ is completely determined by $X_{(i-1)b-\ell+1}^{(i-1)b}$ and $X_{(i-1)b+1}^{(i-1)b+\ell_i}$, where the former is revealed by the output blocks $Z^\star(X_1^b),\dots,Z^\star(X_{(i-2)b+1}^{(i-1)b})$ 
    and the latter is revealed by the $i$-th output block.
    Moreover, we have $Z^\star(X_1^{n})=\phi(Z^\star(X_1^b),\dots,Z^\star(X_{(t-1)b+1}^{tb}),Z^\star(X_{tb+1}^{tb+r}),Y)$ for the deterministic function $\phi$ that 
    discards the final up to $\ell$ bits revealed by each output block except $X_{n-\ell+1}^n$,
    applies the corruptions dictated by $Y$ to each of $Z^\star(X_1^b),\dots,Z^\star(X_{(t-1)b+1}^{tb}),Z^\star(X_{tb+1}^{tb+r})$, and concatenates the blocks.

    It remains to upper bound $\log|\phi^{-1}(z)|$ for all $z$ appropriately.
    First, there are at most $\binom{(t+1)(b\cdot a+1)}{t+1}$ ways of splitting $z$ into $t+1$ blocks of length at most $b\cdot a$ each (note that each input bit is mapped to a string of length at most $a$ at the output, and each block of $X$ has length at most $b$).
    Second, for each block there are at most $2^{(a+1)\cdot\ell}$ choices for the first $\ell$ input bits and the corresponding strings in $\bits^{\leq a}$ that they were replaced by at the output, and there are at most $2^\ell$ possibilities for the last $\ell$ input bits.
    Putting these observations together implies that, for all $z$,
    \begin{equation*}
        |\phi^{-1}(z)| \leq \binom{(t+1)(b\cdot a+1)}{t+1} \cdot 2^{(a+1)\ell(t+1)}\cdot 2^{\ell(t+1)} \leq 2^{(t+1)(b\cdot a+1)h\left(\frac{1}{b\cdot a +1}\right)}\cdot 2^{(a+2)\ell(t+1)},
    \end{equation*}
    where we have used the standard inequality $\binom{n}{k}\leq 2^{n h(k/n)}$, with $h(p)=-p\log p - (1-p)\log(1-p)$ the binary entropy function.
    Therefore,
    \begin{align*}
        \log |\phi^{-1}(z)| &\leq (t+1)((b\cdot a +1)h\left(\frac{1}{b\cdot a +1}\right) +(a+2)\ell)
    \end{align*}
    for all $z$, and so the Amortized Preimage Size Property (\cref{it:preimg}) holds with $\beta_m = 2((m\cdot a+1)h\left(\frac{1}{m\cdot a+1}\right)+(a+2)\ell)=o(m)$, since $(t+1)/t\leq 2$ for all $t\geq 1$.

\end{itemize}

The argument above shows that $Z^\star$ is well-behaved.
To see that $Z$ is admissible with respect to $Z^\star$, fix an arbitrary input process $X$. 
Note that 
    \begin{equation*}
        X_1^n\to Z^\star(X_1^n)\to Z(X_1^n).
    \end{equation*}
    Let $W$ denote, for each $j\in[\ell]$, the string $v_j$ that $X_j$ was replaced by in $Z(X_1^n)$.
    Then,
    \begin{equation*}
        X_1^n\to Z(X_1^n), W,X_{n-\ell+1}^n \to Z^\star(X_1^n).
    \end{equation*}
    Note that there are at most $2^{a+1}$ choices for each $v_j$.
    Therefore, $H(W,X_{n-\ell+1}^n)\leq \ell+\ell\cdot (a+1)$, and so
    \begin{equation*}
        I(X_1^n;Z^\star(X_1^n))-\ell(a+2) \leq I(X_1^n;Z(X_1^n)) \leq I(X_1^n;Z^\star(X_1^n)).
    \end{equation*}
    As a result,
    \begin{equation*}
        \lim_{n\to \infty} \frac{|I(X_1^n;Z(X_1^n))-I(X_1^n;Z^\star(X_1^n))|}{n} \leq \lim_{n\to \infty} \frac{\ell(a+2)}{n} = 0.
    \end{equation*}

\subsection{Multi-trace channels with input-correlated synchronization errors}\label{sec:cap-multi-trace}
We argue how our capacity theorems above apply to a wide class of \emph{multi-trace} input-correlated synchronization channels.
Recalling the discussion in \cref{sec:channel-def}, fix an integer $T \geq 1$ (the number of traces) and consider the multi-trace channel $Z$ given by
\begin{equation*}
    Z(x) = (Z_1(x),\dots,Z_T(x)),
\end{equation*}
where the $Z_i$'s are possibly distinct channels with the same input spaces, and the $Z_i(x)$'s are conditionally independent given $x$.

\begin{theorem}\label{thm:multi-trace-cap}
    Suppose there exist well-behaved channels $Z^\star_1,\dots,Z^\star_T$ such that
    \begin{equation}\label{eq:multi-trace-extra}
        I(X;Z_1(X),\dots,Z_T(X)) = I(X;Z^\star_1(X),\dots,Z^\star_T(X))
    \end{equation}
    for all input processes $X$.
    Then, the $T$-trace channel $Z^\star$ given by $Z^\star(x) = (Z^\star_1(x),\dots,Z^\star_T(x))$ is well-behaved and the $T$-trace channel $Z$ given by $Z(x) = (Z_1(x),\dots,Z_T(x))$ is admissible with respect to $Z^\star$.
    In particular, we have $\ICap(Z)=\SCap(Z)=\CCap(Z)$.
\end{theorem}

The assumption in \cref{eq:multi-trace-extra} appears stronger than simply requiring that each $Z_i$ be admissible with respect to $Z^\star_i$ for $i\in\{1,\dots,T\}$.
Nevertheless, it still seems reasonable.
Concretely, it is natural (as in all of our applications, for example) that $X_1^n \to Z_i(X_1^n), W_i\to Z^\star_i(X_1^n)$ for some additional side information $W_i$ satisfying $H(W_i)=o(n)$.
In this case, we get that $X_1^n\to Z(X_1^n), W\to Z^\star(X_1^n)$ for the side information $W=(W_1,\dots,W_T)$, which satisfies $H(W)\leq \sum_{i=1}^T H(W_i) =T\cdot o(n)=o(n)$ since the number of traces $T$ is constant.
This means that
\begin{equation*}
    \frac{|I(X_1^n;Z(X_1^n))-I(X_1^n;Z^\star(X_1^n))|}{n} \leq \frac{H(W)}{n} \to 0
\end{equation*}
as $n\to\infty$.

\begin{proof}[Proof of \cref{thm:multi-trace-cap}]
    We show that $Z^\star$ is well-behaved. The fact that $Z$ is admissible with respect to $Z^\star$ is already guaranteed by \cref{eq:multi-trace-extra}.

    \begin{itemize}
        \item \textbf{Bounded Entropy Property (\cref{it:ent-bound}):} Fix an arbitrary input process $X$ and integer $n\geq 1$.
        For each $i$ we know that $H(Z_i^\star(X_1^n))\leq c_i n$ for some constant $c_i>0$.
        Let $c^\star=\max_{i\in[T]} c_i$.
        Then, $H(Z^\star(X_1^n))=H((Z^\star_i(X_1^n)_{i\in[T]})\leq \sum_{i=1}^T H(Z^\star_i(X_1^n))\leq T\cdot c^\star n$, and so the Bounded Entropy Property (\cref{it:ent-bound}) holds with constant $c=T\cdot c^\star$.

        \item \textbf{Concatenation Property (\cref{it:concat}):} For an arbitrary input process $X$, indices $1\leq m\leq n$ and each $i\in[T]$, we have that $X_1^n \to Z^\star_i(X_1^m),Z^\star_i(X_{m+1}^{n})\to Z_i^\star(X_1^n)$.
        In particular, this means that
        \begin{equation*}
            X_1^n \to (Z^\star_i(X_1^m))_{i\in[T]},(Z^\star_i(X_{m+1}^{n})_{i\in[T]} = Z^\star(X_1^m), Z^\star(X_{m+1}^n)\to (Z_i^\star(X_1^n))_{i\in[T]} = Z^\star(X_1^n).
        \end{equation*}

        \item \textbf{Partition Property (\cref{it:partition}):}
            \begin{itemize}
                \item Prefix/Suffix-Partition Property (\cref{it:single-part}): Fix any integers $\tau\geq 1$ and $n\geq \tau$.
                For each $i\in[T]$ let $\gamma^{(i)}_m=o(m)$ and $W_{\mathsf{pre},i}$ and $W_{\mathsf{suf},i}$ be the sequence and random variables guaranteed by the Prefix/Suffix-Partition Property (\cref{it:single-part}) applied to $Z_i$.
                Consider $\gamma_m=\sum_{i=1}^T \gamma^{(i)}_m$ and $W_{\mathsf{pre}}=(W_{\mathsf{pre},i})_{i\in[T]}$ and $W_{\mathsf{suf}}=(W_{\mathsf{suf},i})_{i\in[T]}$.
                Then, $\gamma_m=o(m)$ and $\sum_{m\in\N}\gamma_m/m^2$ converges, and $H(W_{\mathsf{pre}}),H(W_{\mathsf{suf}})\leq \gamma_n$.
                Furthermore,
                \begin{multline*}
                    X_1^n\to (Z^\star(X_1^n),W_{\mathsf{pre}})=(Z^\star_i(X_1^n),W_{\mathsf{pre},i})_{i\in[T]}\\
                    \to (Z^\star_i(X_1^\tau),Z^\star_i(X_{\tau+1}^n))_{i\in [T]}=(Z^\star(X_1^\tau),Z^\star(X_{\tau+1}^n)).
                \end{multline*}
                The reasoning for $W_{\mathsf{suf}}$ is analogous.

                \item Amortized Block-Partition Property (\cref{it:multi-part}): For each $i\in[T]$ let $\alpha^{(i)}_m=o(m)$ and $W_i$ be the sequence and random variable, respectively, guaranteed by the Amortized Block-Partition Property (\cref{it:multi-part}) applied to $Z_i$.
                Consider $W=(W_1,\dots,W_T)$ and $\alpha_m = \sum_{i=1}^T \alpha^{(i)}_m$.
                Then, $\alpha_m = o(m)$ and $H(W)\leq \sum_{i=1}^T H(W_i)\leq t\alpha_b$, and
                \begin{multline*}
                    X_1^{tb}\to (Z^\star(X_1^{tb}),W)=((Z^\star_i(X_1^{tb}),W_i))_{i\in[T]} \\
                    \to (Z^\star_i(X_1^b),\dots,Z^\star_i(X_{(t-1)b+1}^{tb}))_{i\in[T]} = Z^\star(X_1^b),\dots,Z^\star(X_{(t-1)b+1}^{tb}).
                \end{multline*}
            \end{itemize}

        \item \textbf{Amortized Preimage Size Property (\cref{it:preimg}):} 
        Fix $n=tb+r$ with $t\geq 1$ and $0\leq r<b$.
        For each $i\in[T]$, let $\beta^{(i)}_m=o(m)$, $Y_i$, and $\phi_i$ be the sequence, random variable, and deterministic function guaranteed by the Amortized Preimage Size Property (\cref{it:preimg}) applied to $Z_i$ for this $n$.
        Set $\beta_m=\sum_{i=1}^T \beta^{(i)}_m=o(m)$, $Y=(Y_i)_{i\in[T]}$ (which has countable support since each $Y_i$ has countable support by hypothesis), and 
        \begin{multline*}
            \phi(Z^\star(X_1^b),\dots,Z^\star(X_{(t-1)b+1}^{tb}),Z^\star(X_{tb+1}^{tb+r}),Y) \\
            = (\phi_i(Z^\star_i(X_1^b),\dots,Z^\star_i(X_{(t-1)b+1}^{tb}),Z^\star_i(X_{tb+1}^{tb+r}),Y_i))_{i\in[T]}.
        \end{multline*}
        Then,
        \begin{multline*}
            X_1^{n}\to Z^\star(X_1^b),\dots,Z^\star(X_{(t-1)b+1}^{tb}),Z^\star(X_{tb+1}^{tb+r})\\
            = (Z^\star_i(X_1^b),\dots,Z^\star_i(X_{(t-1)b+1}^{tb}),Z^\star_i(X_{tb+1}^{tb+r}))_{i\in[T]} \to (Y_i)_{i\in[T]} = Y
        \end{multline*}
        and, for any $z$,
        \begin{equation*}
            \log|\phi^{-1}(z)| \leq \log\left(\prod_{i=1}^T |\phi_i^{-1}(z_i)|\right) = \sum_{i=1}^T \log |\phi_i^{-1}(z_i)| \leq t\sum_{i=1}^T \beta^{(i)}_b = t\beta_b. \qedhere
        \end{equation*}
    \end{itemize}
\end{proof}

\subsection{Capacity theorems for trimming synchronization channels}\label{sec:trimming}

In this section, we argue that our capacity theorems are robust to additional ``trimming'' of channel outputs.
This property is relevant for the marker-based construction of efficient capacity-achieving codes based on our capacity theorems.

For concreteness, let $Z$ be an arbitrary channel with input spaces $\cX_n=\bits^n$ and output spaces $\cY_n=\bits^*$.
We consider the ``$00$-trimming'' version of $Z$, denoted by $Z_{00}$, which on input $x$ first sends $x$ through $Z$ to obtain output $Z(x)$, and then trims the runs of $0$s at the beginning and end of $Z(x)$.
More precisely, it deletes from $Z(x)$ the maximal substrings of $0$s that are a prefix or a suffix of $Z(x)$ to obtain the final output $Z_{00}(x)$ (when $Z(x)=0^\ell$ for some $\ell$ these two maximal substrings are actually the same, in which case we simply delete all bits and $Z_{00}(x)$ is the empty string).
For example, if $Z(x) = \mathbf{00}1011\mathbf{0}$, then $Z_{00}(x) = 1011$; if $Z(x) = \mathbf{0}11001$, then $Z_{00}(x) = 11001$; if $Z(x) = 1101$, then $Z_{00}(x) = 1101$; and if $Z(x) = \mathbf{0000}$, then $Z_{00}(x)$ is the empty string.

\begin{lemma}\label{lem:trimming-cap}
    Let $Z$ be admissible with respect to a well-behaved channel $Z^\star$.
    Then, $Z_{00}$ is also admissible with respect to $Z^\star$.
    In particular, the capacities of $Z$ and $Z_{00}$ are all equal.
\end{lemma}
\begin{proof}
    It suffices to show that $I(X;Z_{00}(X))=I(X;Z(X))$ for any input process $X$.
    Fix an arbitrary integer $n>0$.
    First, since $X_1^n\to Z(X_1^n)\to Z_{00}(X_1^n)$, we have that $I(X_1^n;Z_{00}(X_1^n))\leq I(X_1^n;Z(X_1^n))$.
    On the other hand, if $L_0$ and $L_1$ denote the number of $0$s trimmed by $Z_{00}$ from the beginning and end of $Z(X_1^n)$, we have that $X_1^n\to Z_{00}(X_1^n),L_0,L_1\to Z(X_1^n)$, and $H(L_0,L_1)\leq 2\log(n+1)$.
    Therefore,
    \begin{equation*}
        \lim_{n\to\infty} \frac{|I(X_1^n;Z_{00}(X_1^n))-I(X_1^n;Z(X_1^n))|}{n}\leq \lim_{n\to\infty} \frac{2\log(n+1)}{n} = 0,
    \end{equation*}
    which implies the desired result.
\end{proof}

The simple proof of \cref{lem:trimming-cap} can be easily extended to other trimming channels that trim different prefixes and suffixes.
In particular, consider the ``$01$-trimming'' version of $Z$, denoted by $Z_{01}$, which on input $x$ first sends it through $Z$ to obtain output $Z(x)$, and trims the run of $0$s at the beginning of $Z(x)$ and the run of $1$s at the end of $Z(x)$, i.e., we delete from $Z(x)$ the maximal substring of $0$s that is a prefix of $Z(x)$ and the maximal substring of $1$s that is a suffix of $Z(x)$.
A simple modification to the proof of \cref{lem:trimming-cap} yields the following.
\begin{lemma}\label{lem:01trimming-cap}
    Let $Z$ be admissible with respect to a well-behaved channel $Z^\star$.
    Then, $Z_{01}$ is also admissible with respect to $Z^\star$.
    In particular, the capacities of $Z$ and $Z_{01}$ are all equal.
\end{lemma}

\subsection{Channels with runlength-dependent deletions}\label{sec:assumptions-runlength}

In this section, we apply our framework to binary channels $Z$ with runlength-dependent deletions, in the sense that $Z$ deletes each bit in a run of length $\ell$ independently with some probability $d(\ell)$ (and so we may see $d$ as a function $d:\N\to[0,1]$).
Consider the special channel $Z^\star$ that on input $x$ behaves exactly like $Z$ except that it does not apply deletions to the first and last runs of $x$, and additionally reveals the lengths of these runs.
We show that $Z^\star$ is well-behaved. Then, we show that $Z$ is admissible with respect to $Z^\star$.

\begin{itemize}

    \item \textbf{Bounded Entropy Property (\cref{it:ent-bound}):} Fix an arbitrary input process $X$ and integer $n\geq 1$.
    Since $Z^\star$ only applies deletions, we have $H(Z^\star(X_1^n))\leq n+2\log n\leq 3n$.

    \item \textbf{Concatenation Property (\cref{it:concat}):} Fix arbitrary integers $1\leq m\leq n$ and an input process $X$.
    Note that $Z^\star(X_1^m)$ does not apply deletions to the first and last runs of $X_1^m$ and also reveals the lengths of these runs, and likewise for $Z^\star(X_{m+1}^n)$.
    To all other runs of $X_1^m$ and $X_{n+1}^m$ these channels apply the same deletion rate as $Z^\star(X_1^n)$, because these runs are not broken up by the partitioning of $X_1^n$ into $X_1^m$ and $X_{m+1}^n$.
    Furthermore, $Z^\star(X_1^m),Z^\star(X_{m+1}^n)$ reveal the lengths of the first and last runs of $X_1^n$ and do not apply deletions to these runs.
    Therefore, 
    it is enough to argue that knowing $Z^\star(X_1^m),Z^\star(X_{m+1}^n)$ allows us to apply the correct deletion rates to the last run of $X_1^m$ and first run of $X_{m+1}^n$, which may actually be part of the same run of $X_1^n$.
    
    In the special case where the last run of $X_1^m$ is also its first run (i.e., when $X_1^m=b^m$ for some $b\in\bits$) then we apply no deletions to it, nor to the first run of $X_{m+1}^n$ in case it matches the bit value of $X_1^m$.
    In this case, the length of the first run of $X_1^n$, which is part of the output of $Z^\star(X_1^n)$ can be obtained from the lengths of the first runs of $X_1^m$ and $X_{m+1}^n$, which are revealed by $Z^\star(X_1^m),Z^\star(X_{m+1}^n)$.
    An analogous argument holds for the special case where the last run of $X_{m+1}^n$ is also its first run (i.e., when $X_{m+1}^n=b^{n-m}$ for some $b\in\bits$).
    
    In all other cases, since $Z^\star(X_1^m)$ reveals the length of the last run of $X_1^m$ and $Z^\star(X_{m+1}^n)$ reveals the length of the first run of $X_{m+1}^n$, we know the length of the corresponding run(s) of $X_1^n$ and so we also know the deletion rate that must be applied.
    Moreover, since no deletions were applied to the last run of $X_1^m$ and the first run of $X_{m+1}^n$ by $Z^\star(X_1^m)$ and $Z^\star(X_{m+1}^n)$, respectively, we can perfectly emulate the behavior of $Z^\star(X_1^n)$ on the corresponding runs, and so $X_1^n \to Z^\star(X_1^m),Z^\star(X_{m+1}^n)\to Z^\star(X_1^n)$.
    
    \item \textbf{Partition Property (\cref{it:partition}):} 

        \begin{enumerate}
            \item Prefix/Suffix-Partition Property (\cref{it:single-part}): Fix integers $\tau$ and $n\geq \tau$ and an input process $X$.
            Let $N$ denote the number of bits deleted by $Z^\star(X_1^n)$ from $X_1^\tau$. Furthermore, let $L_1,B_1$ (resp.\ $L_2,B_2$) denote the length of the last run of $X_1^\tau$ (resp.\ first run of $X_{\tau+1}^n$) and the bit value of this run, respectively, and let $N_1$ (resp.\ $N_2$) denote the number of bits deleted from the last run of $X_1^\tau$ (resp.\ first run of $X_{\tau+1}^n$).
            Set $W_{\mathsf{pre}}=(N,L_1,B_1,N_1,L_2,B_2,N_2)$.
            Then, from $Z^\star(X_1^n),W_{\mathsf{pre}}$ we can exactly locate the output bits corresponding to the last run of $X_1^\tau$ and to the first run of $X_{\tau+1}^n$, and restore them to their original lengths.
            Therefore, $Z^\star(X_1^n),W_{\mathsf{pre}}$ completely determine $Z^\star(X_1^\tau),Z^\star(X_{\tau+1}^n)$, and so $X_1^n\to Z^\star(X_1^n),W_{\mathsf{pre}}\to Z^\star(X_1^\tau),Z^\star(X_{\tau+1}^n)$.
            Furthermore, $H(W_{\mathsf{pre}})\leq \log(\tau+1)+4\log(n+1)+2$, and so the Prefix/Suffix-Partition Property (\cref{it:single-part}) holds with $\gamma_m=\log(\tau+1)+4\log(m+1)+2$, which satisfies the required properties.

            An analogous argument establishes the suffix-partitioning property with the same $\gamma_m$.

            \item Amortized Block-Partition Property (\cref{it:multi-part}): Fix a blocklength $b$, number of blocks $t$, and an input process $X$.
            For each $i\in[t]$, let $N_i$ denote the number of bits deleted from $X_{(i-1)b+1}^{ib}$ by $Z^\star(X_1^n)$, let $L_{1,i},N_{1,i},B_{1,i}$  (resp.\ $L_{2,i},N_{2,i},B_{2,i}$) denote the length of the first (resp.\ last) run of $X_{(i-1)b+1}^{ib}$, the number of bits deleted from this run by $Z^\star(X_1^n)$, and the bit value of this run, respectively, and let $V_i$ denote whether the first and last runs of $X_{(i-1)b+1}^{ib}$ are distinct runs.
            Note that 
            \begin{equation*}
                H(L_{1,i},N_{1,i},B_{1,i},L_{2,i},N_{2,i},B_{2,i},V_i)\leq 4\log(b+1)+3
            \end{equation*}
            for every $i\in[t]$.
        Let     $W=(L_{1,i},N_{1,i},B_{1,i},L_{2,i},N_{2,i},B_{2,i},V_i)_{i\in[t]}$.
        Then, using a similar argument to previous items, we see that $Z^\star(X_1^{tb}), W$ completely determines the sequence $Z^\star(X_1^{b}),\dots,Z^\star(X_{(t-1)b+1}^{tb})$, and so $X_1^{tb}\to Z^\star(X_1^{tb}), W\to Z^\star(X_1^{b}),\dots,Z^\star(X_{(t-1)b+1}^{tb})$.
        Moreover,
        \begin{equation*}
            H(W)\leq \sum_{i=1}^t (4\log(b+1)+3) =t\cdot(4\log(b+1)+3).
        \end{equation*}
        Therefore, the Amortized Block-Partition Property (\cref{it:multi-part}) holds with $\alpha_m = 4\log (m+1) + 3 = o(m)$.
            
        \end{enumerate}

    \item \textbf{Amortized Preimage Size Property (\cref{it:preimg}):} Fix a blocklength $b$ and $n=tb+r$ with $t\geq 1$ and $0\leq r<b$.
    Define $X^{(i)}=X_{(i-1)b+1}^{ib}$ for $i\in[t]$ and $X^{(t+1)}=X_{tb+1}^{tb+r}$.
    Consider the random variable $Y=(L_{1,i},L_{2,i})_{i\in[t+1]}$ where $L_{1,i},L_{2,i}$ denote the number of bits to be deleted from the first and last runs of $Z^\star(X^{(i)})$ so that we can transform $Z^\star(X^{(1)}),\dots,Z^\star(X^{(t+1)})$ into $Z^\star(X_1^{n})$.
    Then $Y$ has countable support and $X_1^{n}\to Z^\star(X^{(1)}),\dots,Z^\star(X^{(t+1)})\to Y$, since the distribution of $Y$ is completely determined by the lengths of the first and last runs of the $X^{(i)}$'s, which are revealed by the $Z^\star(X^{(i)})$'s.
    Moreover, we obtain $Z^\star(X_1^{n})=\phi(Z^\star(X^{(1)}),\dots,Z^\star(X^{(t+1)}),Y)$ for the deterministic function $\phi$ that applies the deletions dictated by $Y$ to the first and last runs of $Z^\star(X^{(1)}),\dots,Z^\star(X^{(t+1)})$, concatenates these blocks, and uses the lengths of the first and last runs of each block $X^{(i)}$ revealed by $Z^\star(X^{(i)})$ to compute the lengths of the first and last runs of $X_1^n$.
    Together, these form $Z^\star(X_1^{n})$.

    It remains to upper bound $\log|\phi^{-1}(z)|$ appropriately for an arbitrary $z$.
    First, note that there are at most $\binom{(t+1)(b+1)}{t+1}$ ways of splitting $z$ into $t+1$ blocks of length at most $b$ each.
    Second, for each of the $t+1$ blocks of length at most $b$, there are at most $(b+1)^2$ possibilities for the number of bits that were deleted from the first and last input runs of each block, $(b+1)^2$ possibilities for the lengths of these runs, $2^2$ possibilities for the bit values of these runs, and $2$ possibilities for whether these two runs are distinct runs or not.
    Putting these observations together implies that
    \begin{equation*}
        |\phi^{-1}(z)| \leq \binom{(t+1)(b+1)}{t+1} \cdot (b+1)^{4(t+1)}\cdot 2^{3(t+1)} \leq 2^{(t+1)(b+1)h\left(\frac{1}{b+1}\right)}\cdot (b+1)^{2(t+1)}\cdot 2^{3(t+1)},
    \end{equation*}
    where we recall that $h$ is the binary entropy function.
    Therefore,
    \begin{align*}
        \log |\phi^{-1}(z)| &\leq (t+1)(b+1)h\left(\frac{1}{b+1}\right) + 4(t+1)\log(b+1)+3(t+1) \\
        &= (t+1)\cdot\left((b+1)h\left(\frac{1}{b+1}\right) + 4\log(b+1)+3\right)\\
        &\leq 2t\cdot\left((b+1)h\left(\frac{1}{b+1}\right) + 4\log(b+1)+3\right)
    \end{align*}
    for any $z$, since $(t+1)/t\geq 2$ for all $t\geq 1$. Therefore,
    the Amortized Preimage Size Property (\cref{it:preimg}) holds with $\beta_m = 2((m+1)h\left(\frac{1}{m+1}\right) + 4\log(m+1)+3)=o(m)$.

\end{itemize}

Now we show that $Z$ is admissible with respect to $Z^\star$.
Fix an arbitrary input process $X$. Note that $X_1^n\to Z^\star(X_1^n)\to Z(X_1^n)$.
    Let $L_1,B_1$ (resp.\ $L_2, B_2$) denote the number of bits deleted by $Z$ from the first (resp.\ last) input run and the bit value of this run.
    Let also $V$ denote whether the first and last input runs are distinct.
    Then, $(Z(X_1^n), L_1,B_1,L_2,B_2,V)$ completely determines $Z^\star(X_1^n)$.
    Therefore, since $H(L_1,B_1,L_2,B_2,V)\leq 1+2(\log(n+1)+1)$, we have
    \begin{equation*}
        I(X_1^n;Z^\star(X_1^n))-(1+2(\log(n+1)+1)) \leq I(X_1^n;Z(X_1^n)) \leq I(X_1^n;Z^\star(X_1^n)),
    \end{equation*}
    and so
    \begin{equation*}
        \lim_{n\to \infty} \frac{|I(X_1^n;Z(X_1^n))-I(X_1^n;Z^\star(X_1^n))|}{n} \leq \lim_{n\to \infty} \frac{1+2(\log(n+1)+1)}{n} = 0.
    \end{equation*}
    This implies that $I(X;Z(X))=I(X;Z^\star(X))$, as desired.

We would like to apply \cref{thm:dense-code-gen} to the \emph{$01$-trimming multi-trace} version of any runlength-dependent deletion channel.
More precisely, this corresponds to the setting where the $T\geq 1$ channels $Z_1,\dots,Z_T$ are $01$-trimming versions of runlength-dependent deletion channels (see \cref{sec:trimming}).
To be able to apply \cref{thm:dense-code-gen} in this case it suffices to establish the hypothesis of \cref{thm:multi-trace-cap} (\cref{eq:multi-trace-extra}).
For simplicity we assume that $Z_1,\dots,Z_T$ all behave like the $01$-trimming version of the same channel $Z$ with runlength-dependent deletions. 
The argument generalizes easily to the case where traces come from channels with runlength-dependent deletions with different parameters.
Take $Z^\star_1,\dots,Z^\star_T$ to be channels behaving exactly like the well-behaved channel $Z^\star$ described above in this section.
Then, it suffices to show that for any input process $X$ we have
\begin{equation}\label{eq:hyp-multitrace-restate}
    I(X;Z_1(X),\dots,Z_T(X))=I(X;Z^\star_1(X),\dots,Z^\star_T(X)).
\end{equation}

Fix an arbitrary process $X=(X_i)_{i\in\N}$ and an arbitrary $n\in\N$.
First, note that $X_1^n\to Z^\star_i(X_1^n)\to Z_i(X_1^n)$ for each $i$, and so
\begin{equation*}
    X_1^n \to Z^\star_1(X_1^n),\dots, Z^\star_T(X_1^n) \to Z_1(X_1^n),\dots,Z_T(X_1^n).
\end{equation*}
Consequently,
\begin{equation}\label{eq:non-neg-diff}
    I(X_1^n;Z^\star_1(X_1^n),\dots,Z^\star_T(X_1^n))-I(X_1^n;Z_1(X_1^n),\dots,Z_T(X_1^n)) \geq 0.
\end{equation}
To complement this, consider for each $i$ the side information
\begin{equation*}
    W_i = (L_{1,i}, L_{2,i},D_{1,i},D_{2,i},B_{1,i},B_{2,i}, T_{0,i}, T_{1,i}).
\end{equation*}
Here,
$(L_{1,i},D_{1,i},B_{1,i})$ (resp.\ $(L_{2,i},D_{2,i},B_{2,1})$) denote, respectively, the length of the first (resp.\ last) run of $X_1^n$, the number of bits deleted from that run by $Z_i(X_1^n)$ \emph{before the trimming process}, and the bit value of that run.
Moreover, $T_{0,i}$ (resp.\ $T_{1,i}$) denotes the number of $0$s (resp.\ $1$s) trimmed from the prefix (resp.\ suffix) after the runlength-dependent deletions have been applied.
Note that $Z_i(X_1^n)$ and $W_i$ together completely determine $Z^\star_i(X_1^n)$ for each $i$, and so, setting $W=(W_1,\dots,W_T)$, we have that
\begin{equation*}
    X_1^n\to Z_1(X_1^n),\dots,Z_T(X_1^n),W\to Z^\star_1(X_1^n),\dots,Z^\star_T(X_1^n).
\end{equation*}
Note that $H(W_i) \leq 6\log(n+1)+2$, and so $H(W)\leq T(6\log(n+1)+2)$.
Therefore, recalling also \cref{eq:non-neg-diff},
\begin{equation*}
    |I(X_1^n;Z^\star_1(X_1^n),\dots,Z^\star_T(X_1^n))-I(X_1^n;Z_1(X_1^n),\dots,Z_T(X_1^n))| \leq H(W)\leq T(6\log(n+1)+2),
\end{equation*}
and so
\begin{equation*}
    \lim_{n\to\infty} \frac{|I(X_1^n;Z^\star_1(X_1^n),\dots,Z^\star_T(X_1^n))-I(X_1^n;Z_1(X_1^n),\dots,Z_T(X_1^n))|}{n} \leq \lim_{n\to\infty} \frac{T(6\log(n+1)+2)}{n}=0,
\end{equation*}
since $T$ is constant.
This yields \cref{eq:hyp-multitrace-restate}, as desired.
Formally, we get the following corollary directly from applying \cref{thm:multi-trace-cap,thm:dense-code-gen}.

\begin{corollary}
\label{cor:dense-code-RL-dep}
    Let $Z$ be a $01$-trimming $T$-trace runlength-dependent deletion channel.
    Then, for any $\eps,\zeta>0$ there exist $\gamma=\gamma(\eps,\zeta)\in(0,1/2)$ and integers $b=b(\eps,\zeta)$ and $t(\eps,\zeta)$ that depend only on $\eps$ and $\zeta$ such that for any $t\geq t(\eps,\zeta)$ there exists a code $\cC$ with blocklength $n=tb$, rate $R\geq \ICap(Z)-\eps$, decoding error probability at most $\eps$ such that for all codewords $c\in\cC$ we have $\gamma\zeta n\leq w(c_i^{i+\zeta n})\leq (1-\gamma)\zeta n$ for all $i\in[(1-\zeta)n]$.    
\end{corollary}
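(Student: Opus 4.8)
The plan is to derive \cref{cor:dense-code-RL-dep} as a direct consequence of \cref{thm:dense-code-gen}, once we have verified that the $01$-trimming $T$-trace runlength-dependent deletion channel is admissible (it is trivially binary). Most of the verification has already been carried out in this section. The bulleted argument above shows that a single-trace runlength-dependent deletion channel $Z$ is admissible with respect to the channel $Z^\star$ that does not delete the first and last input runs and reveals their lengths, and the displayed computation immediately preceding the corollary shows that \cref{eq:multi-trace-extra} holds for $Z_1=\cdots=Z_T=Z$, because the side information $W=(L_1,L_2,D_1,D_2,B_1,B_2)$ needed to recover $Z^\star(X_1^n)$ from $Z(X_1^n)$ satisfies $H(W)\le 4\log(n+1)+2=o(n)$. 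Hence \cref{thm:multi-trace-cap} applies, and the $T$-trace channel $\bar Z(X_1^n)=(Z(X_1^n),\dots,Z(X_1^n))$ ($T$ independent copies) is admissible with respect to the $T$-trace channel $\bar Z^\star$ built analogously from $Z^\star$.

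It remains to incorporate the $01$-trimming, which here is applied coordinatewise to each of the $T$ output traces; call the resulting channel $\bar Z_{01}$. I would establish a coordinatewise version of \cref{lem:01trimming-cap}: for any input process $X$ and integer $n$, the map $\bar Z(X_1^n)\mapsto \bar Z_{01}(X_1^n)$ is a degradation, while conversely, letting $L_{0,i}$ and $L_{1,i}$ denote the number of symbols trimmed from the start (a run of $0$s) and end (a run of $1$s) of the $i$-th trace, we have $X_1^n\to \bar Z_{01}(X_1^n),(L_{0,i},L_{1,i})_{i\in[T]}\to \bar Z(X_1^n)$ with $H\!\big((L_{0,i},L_{1,i})_{i\in[T]}\big)\le 2T\log(n+1)=o(n)$; this yields rate preservation, and the remaining admissibility properties transfer exactly as in the proof of \cref{lem:trimming-cap}. (Equivalently, one may first $01$-trim the single-trace channel via \cref{lem:01trimming-cap} and only then invoke \cref{thm:multi-trace-cap}.) Thus $\bar Z_{01}$ is admissible with respect to $\bar Z^\star$ and has binary input alphabet.

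The corollary now follows by applying \cref{thm:dense-code-gen} to the admissible channel $\bar Z_{01}$ with the prescribed $\eps,\zeta>0$: the theorem supplies $\gamma\in(0,1/2)$, integers $b=b(\eps,\zeta)$ and $t(\eps,\zeta)$ depending only on $\eps$ and $\zeta$, and, for every $t\ge t(\eps,\zeta)$, a code $\cC$ of blocklength $n=tb$, rate $R\ge\ICap(\bar Z_{01})-\eps$, and (maximal, hence average) decoding error probability at most $\eps$, every codeword $c$ of which satisfies $\gamma\zeta n\le w(c_i^{i+\zeta n})\le(1-\gamma)\zeta n$ for all $i\in[(1-\zeta)n]$. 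The only ingredient not obtained by a literal citation of an earlier result is the coordinatewise generalization of \cref{lem:01trimming-cap}, which is routine; accordingly I do not expect a genuine obstacle here.
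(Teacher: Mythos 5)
Your proof is correct and follows essentially the same route as the paper: establish admissibility of the single-trace runlength-dependent deletion channel with respect to $Z^\star$, verify \cref{eq:multi-trace-extra} via the $o(n)$ side-information $W$, invoke \cref{thm:multi-trace-cap}, fold in $01$-trimming via (a coordinatewise version of) \cref{lem:01trimming-cap}, and then apply \cref{thm:dense-code-gen}. You are slightly more careful than the paper in flagging that \cref{lem:01trimming-cap} is stated for single-trace binary-output channels and hence needs a (routine) coordinatewise extension, or alternatively the reordering "first trim, then take $T$ traces" — but this is a matter of presentation rather than a different argument.
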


\begin{remark}
    \em
    \Cref{cor:dense-code-RL-dep} also holds for the $10$, $11$, $00$-trimming $T$-trace runlength-dependent deletion channel, whose definitions are analogous to the one above.
\end{remark}

\section{Warmup: efficient capacity-achieving codes for channels with runlength-dependent deletions} \label{sec:efficient-single-trace}

In \Cref{cor:dense-code-RL-dep}, for the case of $T = 1$, we showed that there is a code that achieves capacity on the $00$-trimming runlength-dependent channel.\footnote{Recall from \cref{sec:trimming} that this means that after the deletions are performed, the channel also trims the first and last runs of zeros.} Furthermore, we showed that this code has a large density of $1$s in every ``not too short'' interval. 

In this section, we will closely follow the arguments of Pernice, Li, and Wootters~\cite{PLW22} who showed how one can transform any code for the binary deletion channel (and, in fact, more i.i.d. synchronization channels) into an explicit and efficient code with a negligible loss in the rate. 
With some needed modifications, we will show how to transform (non-explicit and non-efficient) capacity-achieving codes for a smaller class of runlength-dependent channels into explicit and efficient capacity achieving codes.

We start by defining the class of runlength-dependent channels for which we aim to construct efficient codes.

\begin{definition} \label{def:run-length-bounded-channel}
    Let $M\in \mathbb{N}$ and $\mu\in (0,1)$. A runlength-dependent deletion channel with deletion probability function $d:\mathbb{N} \rightarrow [0,1]$ is called $\bdcRLB$ if $d$ is non-decreasing and for all $\ell \geq M$, we have $d(\ell) = d(M) < 1 - \mu$.\footnote{The monotonicity assumption is for simplicity and also since it makes sense to assume that longer runs are more likely to suffer from higher deletion rate.}
\end{definition}

The theorem we will prove in this section is as follows.
\begin{theorem} \label{thm:efficient-bounded-rl}
    Let $\varepsilon > 0$. There exists an explicit family of binary codes $\{C_i\}_{i=1}^{\infty}$ for the channel $\bdcRLB$ where the block length of $C_i$ goes to infinity as $i\rightarrow \infty$ and\footnote{The channel parameters are fixed and do not depend on the block lengths of the codes.}
    \begin{enumerate}
        \item $C_i$ is encodable in linear time and decodable in quasi-linear time.
        \item The decoding failure probability is $\exp(-\Omega(n_i))$ where $n_i$ is the block length of $C_i$.
        \item The rate of the $C_i$ is 
        $R > \textup{Cap}(\bdcRLB) - \varepsilon$.
    \end{enumerate}
\end{theorem}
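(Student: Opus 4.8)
The plan is to follow the marker-based framework of Pernice--Li--Wootters~\cite{PLW22}, using \Cref{cor:dense-code-RL-dep} (with $T=1$) as the source of an ``inner'' capacity-achieving code with the crucial density property. First I would set up the high-level architecture: fix a small parameter $\zeta>0$ and use \Cref{cor:dense-code-RL-dep} to obtain, for a suitable block length $b$, an inner code $\cC_{\mathrm{in}}\subseteq\{0,1\}^b$ of rate at least $\ICap(\bdcRLB)-\eps/2$ on the $00$-trimming version of the channel, such that every window of length $\zeta b$ in every codeword has Hamming weight between $\gamma\zeta b$ and $(1-\gamma)\zeta b$. The outer structure is a concatenation: a message is split into many blocks, each encoded by $\cC_{\mathrm{in}}$, and consecutive inner codewords are separated by \emph{buffers} consisting of a long run of $0$s of length $L = \Theta(\log b)$ (chosen large enough that, with high probability, each buffer survives the deletion channel as a run of $0$s long enough to be unambiguously recognized, yet short enough that the rate loss is $o(1)$). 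Because the channel is \emph{bounded} runlength-dependent with $d(\ell)=d(M)<1-\mu$ for $\ell\ge M$, a buffer of length $L$ has every bit deleted independently with probability at most $d(M)<1$, so by a Chernoff bound (\Cref{lem:chernoff}) a constant fraction $\ge\mu$ of its bits survive except with probability $\exp(-\Omega(L))$; taking $L=c\log b$ makes this $\poly(1/b)$, and a union bound over the $\Theta(n/b)$ buffers makes the total buffer-failure probability $\exp(-\Omega(n))$ once $b$ is a large enough constant.

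Next I would describe the decoder. Given a received trace, the decoder scans for maximal runs of $0$s of length at least some threshold $\tau$ (with $1\ll\tau\ll L$); these are the candidate buffer locations. The density property of the inner codewords is exactly what guarantees that a genuine inner-codeword region cannot be confused with a buffer: inside any codeword every window of length $\zeta b$ has weight $\ge\gamma\zeta b>0$, so after deletions no run of $0$s of length exceeding (roughly) $\zeta b$ can appear \emph{inside} the image of an inner codeword --- hence no false buffer is detected in the interior, and conversely each true buffer is detected. This lets the decoder partition the trace into segments, each of which is (with high probability) exactly the channel's output on a single inner codeword (after trimming the leading/trailing $0$-runs coming from the adjacent buffers --- which is precisely why we invoked the $00$-trimming version of the channel in \Cref{cor:dense-code-RL-dep}). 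Each segment is then decoded independently by the (brute-force, constant-size) decoder of $\cC_{\mathrm{in}}$; since $b$ is a constant this costs $O(1)$ per block and hence $O(n)$ --- or with a standard pointer/scan implementation, quasi-linear --- total decoding time, and linear encoding time is immediate. A block is mis-decoded only if either its buffer failed, or the inner decoder failed on that block; the inner decoder's error probability is $\eps/2$ per block by \Cref{cor:dense-code-RL-dep}, which is a \emph{constant}, not exponentially small, so to get the claimed $\exp(-\Omega(n))$ overall failure probability I would instead take a second, outer layer of redundancy: wrap the sequence of inner blocks in an outer code (e.g.\ a Reed--Solomon or expander code over a large alphabet, one symbol per inner block) of rate $1-\eps/2$ correcting a small constant fraction $\delta$ of erroneous blocks. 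Since inner-block errors occur (essentially) independently across blocks with constant probability $p<\delta$, a Chernoff bound gives that more than a $\delta$-fraction of blocks fail with probability $\exp(-\Omega(n/b))=\exp(-\Omega(n))$, and the outer decoder (running in quasi-linear time) then recovers everything.

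Finally I would account for the rate. The inner code has rate $\ge\ICap(\bdcRLB)-\eps/2$; the buffers cost a multiplicative factor $\frac{b}{b+L}=1-o_b(1)$; the outer code costs a factor $1-\eps/2$; and by \Cref{thm:cap-gen} (the general capacity theorem, applicable since runlength-dependent deletion channels were shown admissible in \Cref{sec:assumptions-runlength}) we have $\ICap = \CCap = \mathrm{Cap}(\bdcRLB)$. Choosing $b$ large enough and $L,\zeta$ appropriately makes the product rate exceed $\mathrm{Cap}(\bdcRLB)-\eps$.

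The main obstacle I anticipate is the \emph{synchronization / buffer-detection analysis}: one must argue rigorously that, except with probability $\exp(-\Omega(n))$, the run-finding decoder recovers the \emph{exact} block boundaries, i.e.\ that no spurious long $0$-run is created inside an inner codeword's image and no true buffer shrinks below the detection threshold. The first direction is where the density property of \Cref{cor:dense-code-RL-dep} is essential --- but one has to be careful that the interaction with \emph{runlength-dependent} deletions does not create anomalously long $0$-runs: a window of $\zeta b$ positions containing $\ge\gamma\zeta b$ ones is split into runs, and since the deletion probability is bounded away from $1$, the probability that \emph{all} ones in such a window are deleted is at most $(1-\mu)^{\gamma\zeta b}$, exponentially small in $b$; a union bound over all $O(n)$ windows kills it. Handling the edge cases where a single run straddles a block boundary, and bounding the at-most-$o(b)$ bits that may be misattributed near each buffer (absorbed as at most one symbol error per block, which the outer code tolerates), is the fiddly but routine remainder of the argument.
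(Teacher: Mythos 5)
Your high-level architecture (concatenation, zero-run buffers, the density property from \Cref{cor:dense-code-RL-dep} to rule out spurious buffers, weighted AM-GM / $(1-\mu)^{\gamma\zeta b}$ bound for deleting all ones in a window) matches the paper's. But there is a genuine gap in how you handle synchronization. Since the inner block length $b$ is a \emph{constant} with respect to $n$, every local bad event --- a buffer shrinking below threshold, a spurious zero-run appearing inside an inner codeword, an inner decoding failure --- has probability that is a positive constant in $n$ (exponentially small in $b$, but $b$ is fixed). Your repeated claim that ``a union bound over the $\Theta(n/b)$ buffers makes the total buffer-failure probability $\exp(-\Omega(n))$'' is false: the union bound gives $(n/b)\cdot\mathrm{poly}(1/b)$, which diverges, and in fact with high probability a small but \emph{constant fraction} of buffers are destroyed. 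So the decoder cannot recover exact block boundaries with probability $1-\exp(-\Omega(n))$; it must tolerate a constant fraction of boundary errors.

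This matters because your proposed outer layer --- a Reed--Solomon or expander code correcting a $\delta$-fraction of \emph{erroneous blocks} --- lives in the Hamming metric and cannot absorb the errors that buffer failures actually cause. A missed buffer merges two inner codewords into one segment (a deletion plus a substitution of outer symbols), and a spurious buffer splits one segment into two (an insertion); both shift the positions of all subsequent outer symbols. The paper's essential extra ingredient, which your proposal is missing, is that the outer code must correct \emph{insertions and deletions} of symbols: it uses the Haeupler--Rubinstein--Shahrasbi insdel code (\Cref{thm:hs-code}) of rate $1-\delta_{\mathsf{out}}-\eps_{\mathsf{out}}$, shows each deleted buffer, spurious buffer, and inner-decoding failure contributes $O(1)$ edit errors at the outer level, and then uses Chernoff/Hoeffding (not a union bound) to show the total number of such local failures is below $\delta_{\mathsf{out}} n$ except with probability $\exp(-\Omega(n))$. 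With that substitution of the outer code and error accounting, your argument goes through; without it, the decoder fails with constant probability.
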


Before formally presenting our construction, we  briefly recall, in an informal way, the construction of \cite{PLW22}. 
As in \cite{GL19,con2022improved}, the construction of \cite{PLW22} is based on code concatenation where the outer code, $\Cout$, is taken to be the code of \cite{HS21}. 
The inner code of \cite{PLW22}, denoted as $\Cin$, is taken to be a capacity-achieving code over the trimming  binary deletion channel that is ``dense''. 
More precisely, our \Cref{cor:dense-code-RL-dep} for the single trace setting can be seen as an extension of their \cite[Proposition 3.4]{PLW22} (stated here as \Cref{lem:dense-stat-erg}). Thus, a concatenated codeword in $\Cout \circ \Cin$ is of the form $\Cin(\sigma_1)\circ \cdots \circ \Cin(\sigma_n)$ where $(\sigma_1, \ldots, \sigma_n)$ is an outer codeword.

Then, in the concatenated codeword, every two adjacent inner codewords are separated using a large run of zeros (termed as buffers). Adding these buffers was done also in \cite{GL19,con2022improved} and the goal is to reduce the loss of synchronization between the receiver and the sender. The final codeword that is transmitted through the channel is 
\[
\Cin(\sigma_1)\circ 0^B \circ  \cdots \circ 0^B \circ \Cin(\sigma_n)
\]
where $B$ is the length of the buffer.

Now, the decoder which receives a corrupted version of the transmitted codeword consists of three steps. First, identifying the buffers following the simple rule: every run of zeros of length greater than some parameter is identified as a buffer. 
With standard concentration bounds (and appropriate parameters), one can show that almost all buffers are correctly identified as buffers. 
Moreover, with high probability there are very few ``spurious'' buffers inside the inner codewords (that is, the channel created a long run of zeros inside an inner codeword and the decoder mistakenly identified it as a buffer). Here we use the fact that in every ``not too short'' interval in an inner codeword there are many $1$s. 
After this step, every string between two buffers is decoded using decoder of the inner code. Note here that the buffer identification step might trim the codeword from the left and right. However, the inner code achieves capacity on the trimming version of the channel and thus most of the corrupted inner codeword are decoded successfully.

Finally, we run the decoder of the outer codeword which can correct from a small amount of insertions and deletions (insdel errors). 
Observe that each deleted buffer and spurious buffer contributes at most $3$ insdel errors in the outer codeword symbols, and that a failure in the inner code's decoding algorithm results in $2$ insdel errors. 
Thus, as long as with high probability the number of all of these errors can be made as small as we want, then the outer code of rate $1 - \varepsilon - \delta$ that can correct from $\delta$ insdel errors can handle those errors.
Concluding, we have that with high probability decoding is successful and the rate is close as we want to the rate of the inner code which achieves capacity.

\subsection{Construction}
Let $\varepsilon$ be the desired gap to capacity. 
\paragraph{Outer and inner codes.}
The coding scheme uses code concatenation.
For the outer code, as in previous works that construct binary codes for synchronization channels, we shall use a code that can correct from \emph{adversarial} (worst-case) indel (insertions and deletions) errors. We start by defining the relevant metric.
\begin{definition}
    The \emph{edit distance} between two strings $s,s'$, denoted by $\ed(s,s')$, is the minimal number of insertions and deletions needed to convert $s$ into $s'$.
\end{definition}

We use the code by Haeupler, Rubinstein, and Shahrasbi~\cite{haeupler2019near} that can correct adversarial insdel errors.
\begin{theorem}[\cite{haeupler2019near}] \label{thm:hs-code}
    For every $\epsout,\delout\in(0,1)$ there exists a family of codes $\{\cC_n\}_{n\in\N}$, where $\cC_n$ has blocklength $n$, of rate $\Rout = 1-\delout-\epsout$ over an alphabet $\Sigma$ of size $|\Sigma|=O_{\epsout}(1)$ that can correct $\delout n$ adversarial insdel errors.
    The codes $\cC_n$ support linear time encoding and quasi-linear time decoding.
\end{theorem}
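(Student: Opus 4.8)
The plan is to prove Theorem~\ref{thm:hs-code} via the \emph{synchronization strings} framework: reduce adversarial insertions and deletions to adversarial \emph{half-errors} (substitutions and erasures) in the Hamming metric, losing only an $\epsout$-fraction in redundancy, and then invoke an off-the-shelf near-Singleton Hamming code that is decodable in near-linear time. Throughout, fix a small parameter $\eps=\Theta(\epsout)$.

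\textbf{Synchronization strings and the indexed code.} Recall that $S\in\Sigma_{\mathsf{sync}}^n$ is an \emph{$\eps$-synchronization string} if $\ed(S[i,j),S[j,k))>(1-\eps)(k-i)$ for all $1\le i<j<k\le n+1$, where $S[a,b)=S_a\cdots S_{b-1}$. First I would recall the two basic facts about such strings: (i) they exist over an alphabet $\Sigma_{\mathsf{sync}}$ of size $O(\eps^{-c})$ for an absolute constant $c$ (a greedy / Lov\'asz-Local-Lemma argument), and (ii) one such string of every length is deterministically constructible in near-linear time (a ``repair and recurse'' scheme); I would treat the proofs of (i)--(ii) as a modular combinatorial ingredient. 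Now let $\mathcal{C}'$ be a Hamming-metric code of blocklength $n$, rate $1-\delout-\epsout/2$, over an alphabet $\Sigma'$ of constant size $q'=2^{\Theta(1/\epsout)}$, with minimum distance exceeding $(\delout+\epsout/2)n$, encodable in linear time and decodable from $(\delout+\epsout/2)n$ half-errors in near-linear time; such codes over constant-size alphabets exist. Set
\[
\mathcal{C}_n=\{\,((c_1,S_1),\dots,(c_n,S_n))\ :\ (c_1,\dots,c_n)\in\mathcal{C}'\,\}\subseteq(\Sigma'\times\Sigma_{\mathsf{sync}})^n .
\]
The alphabet $\Sigma'\times\Sigma_{\mathsf{sync}}$ has constant size $O_{\epsout}(1)$; choosing $q'$ to be a large enough constant power of $|\Sigma_{\mathsf{sync}}|$, the rate $\frac{\log q'}{\log q'+\log|\Sigma_{\mathsf{sync}}|}(1-\delout-\epsout/2)$ is at least $1-\delout-\epsout$; and encoding is linear time.

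\textbf{Decoding by repositioning.} Given a word $\tilde y$ obtained from a codeword of $\mathcal{C}_n$ by at most $\delout n$ insertions and deletions, project each surviving symbol onto its $\Sigma_{\mathsf{sync}}$-coordinate to obtain $\tilde S$ with $\ed(\tilde S,S)\le\delout n$. The crux is a lemma asserting that from $\tilde S$ (and the fixed $S$) one can assign to each surviving symbol of $\tilde y$ a guessed original index in $[n]$ such that, placing the attached $\Sigma'$-symbols at their guessed indices, the reconstructed length-$n$ array differs from $(c_1,\dots,c_n)$ in at most $(1+O(\eps))\cdot(\#\text{insdels})\le(\delout+\epsout/2)n$ half-errors (misassigned symbols counting double, deleted positions becoming erasures). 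The proof uses precisely the defining inequality of $\eps$-synchronization strings: a long block of received symbols misassigned to shifted indices forces two substrings of $S$ to have edit distance far below $(1-\eps)$ times their combined length, a contradiction; hence every misassigned block is short, and summing over blocks bounds the number of half-errors. Feeding the reconstructed array to the half-error decoder of $\mathcal{C}'$ recovers $(c_1,\dots,c_n)$, and hence the message; with an exact dynamic-programming implementation of the index assignment this already proves a polynomial-time version of the theorem.

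\textbf{Near-linear running time --- the main obstacle.} The genuinely delicate part is performing the index assignment of the previous step in near-linear rather than quadratic time, since the naive assignment computes longest common subsequences / edit distances of $\tilde S$ against prefixes of $S$. Here I would follow the approach of~\cite{haeupler2019near}: build a near-linear-size \emph{indexing structure} over $S$, scan $\tilde S$ left to right, and use a $(1+\eps)$-approximate pattern-matching / edit-distance subroutine to greedily match successive runs of $\tilde S$ to intervals of $S$. Two things then need checking: that the $\eps$-slack introduced by \emph{approximate} matching still yields the $(1+O(\eps))$-factor half-error bound above --- which amounts to rerunning the synchronization-string argument with an extra additive $\eps$ in all edit-distance comparisons --- and that the overall running time is $n\cdot\polylog(n)\cdot\poly(1/\epsout)$. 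Combined with the near-linear-time half-error decoder of $\mathcal{C}'$, this gives quasi-linear decoding time, completing the proof of Theorem~\ref{thm:hs-code}.
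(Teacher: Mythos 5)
This statement is imported verbatim from~\cite{haeupler2019near}; the paper offers no proof of it and uses it purely as a black box, so there is no internal argument to compare against. Your sketch is a faithful reconstruction of how the cited result is actually proved: index a near-Singleton Hamming-metric code over a constant-size alphabet with an $\eps$-synchronization string, use the repositioning guarantee to convert insdel errors into half-errors, and invoke the near-linear-time indexing machinery of~\cite{haeupler2019near} to make the repositioning (rather than a quadratic-time dynamic program) run in quasi-linear time. Two small calibration points. First, the standard repositioning guarantee (cf.\ \cref{lem:hs-matching-org} in this paper, quoted from the same line of work) loses an \emph{additive} $O(\sqrt{\eps})\,n$ term in the half-error count, not a multiplicative $(1+O(\eps))$ factor on the number of insdels as you wrote; this is harmless but means you must take the synchronization parameter to be $\Theta(\epsout^2)$ rather than $\Theta(\epsout)$, which in turn feeds into the constant-but-$\epsout$-dependent alphabet size exactly as the theorem statement allows. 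Second, your inner Hamming code must itself be linear-time encodable and near-linear-time decodable from the stated fraction of half-errors over a constant alphabet (e.g.\ an expander-based or concatenated construction); this is standard but is an ingredient you are importing, not proving. With those caveats your outline is correct and is essentially the proof of the cited theorem.
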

In other words, this code has the property that the edit distance between any two distinct codewords is at least $2\delout n$.
We will denote by $\Cout$ the outer code given by \Cref{thm:hs-code} where we set $\delout = \varepsilon/5$. 
By $\Cin$, we denote the inner code that achieves capacity on the $0$-trimming $\bdcRLB$ channel, given by \Cref{cor:dense-code-RL-dep} by setting $\varepsilon$ (in the corollary) to $\varepsilon/50$ and $\zeta$ (in the corollary) to be $\nu/3$ where $\nu = \varepsilon\cdot \mu/5$.
We also make sure that each codeword in $\Cin$ starts and ends with a $1$ simply by adding a $1$ bit at the beginning and at the end. Clearly, this does not affect the asymptotics.
Formally, our encoding process is as follows:

    \paragraph{Encoding.} Given as input a message $x\in \Sigma^{\Rout n}$, we encode it with the code given in \Cref{thm:hs-code} to obtain an outer codeword $c^{(\textup{out})} = (\sigma_1, \ldots, \sigma_n) \in C_{\textup{out}} \subset \Sigma^n$. Then, every symbol in $c^{(\textup{out})}$, $\sigma_i \in \Sigma = \{0,1\}^{m \cdot \Rin}$, is encoded using the inner code to a codeword that we denote  $c^{(\textup{in})}_{\sigma_i}$. We thus get a codeword in the concatenated code 
    \[ 
	\left( c^{(\textup{in})}_{\sigma_1},\ldots,    c^{(\textup{in})}_{\sigma_n} \right) \in        C_{\textup{out}} \circ C_{\textup{in}} \;.
    \]
    Now that we have a codeword in the concatenated code, we add an additional layer of encoding that is crucial for preserving synchronization. Every two adjacent inner codewords are separated by a buffer of zeros of length $\ceil{\nu \cdot m/(1-d(M))}$ where recall that $\nu$ is a small constant such that $\nu = \mu \cdot \eps/5$ (recall that, by \Cref{def:run-length-bounded-channel}, $\mu$ is such that $d(M) \leq 1 - \mu$ and that $d(M)$ is the largest deletion probability the channel can impose).
    
    \begin{remark}
        \em
        We note that in order for the concatenation to make sense, we must have $\Rin m = \ceil{\log_2 |\Sigma|} = \ceil{\log_2 (O_{\epsout}(1))}$.
        We shall choose a small enough $\epsout$ so that the length of the buffer will be $\geq M$, which would imply that the channel will apply deletions to it with probability $d(M)$.
    \end{remark}

\paragraph{Rate.}
    The codeword length is $mn + (n-1)\ceil{\nu m/(1 - d(M)}$ and, therefore, the rate of the code is
    \begin{align*}
        \mathcal{R} &= \frac{\log_2|\Sigma|^{\Rout n}}{mn + (n-1)\ceil{\nu m/(1 - d(M)}}\\
        &\geq \frac{\Rin \Rout}{1 + \nu/(1 - d(M)) + 1/m}\\
        &\geq \frac{(1 - \varepsilon/5 -\epsout)\Rin}{1 + \varepsilon/5 + 1/m} \\
        &\geq \frac{(1 - \varepsilon/4)\Rin}{1 + \varepsilon/4}\\
        & \geq (1 - \varepsilon/2)\Rin 
    \end{align*}
    where the second inequality follows since $1 - d(M) > \mu$. The third inequality follows by our assumption that $\delout = \varepsilon/5$ and taking $\epsout$ to be small enough (so that $\varepsilon/5 + \epsout \leq \varepsilon/4$ and $1/m$ is at most $\varepsilon/5$). 
    Now, since $\Rin = \textup{Cap}(\bdcRLB) - \varepsilon/50$, we get that $\mathcal{R}\geq \textup{Cap}(\bdcRLB) - \varepsilon$.
\paragraph{Decoding.}    The decoding consists of the following step:
    \begin{enumerate}
        \item \label{decode:iden-buffers} Decoding buffers: Identify all runs of the symbol $0$ in the received string that are of length at least $\frac{\nu m}{2}$ and declare that these runs are buffers. 

        Denote by $\tilde{c}_1,\ldots, \tilde{c}_s$ the strings in between the buffers (and discard the buffers). Note that each $\tilde{c}_i$ starts and ends with a $1$. 
        
        \item Inner decoding: \label{decode:inner-code-decode} Decode each $\tilde{c}_i$ (brute force) into an outer code symbol $\tilde{\sigma}_i$.

        \item Outer decoding: \label{decode:outer-code-decode} Run the outer code decoding algorithm on $\sigouttil = (\tilde{\sigma}_1, \ldots, \tilde{\sigma}_s)$ to obtain $\tilde{x}$ and return it. 
    \end{enumerate}

\subsection{Analysis}

We start by upper bounding the probability of identifying a sent buffer.
\begin{claim}\label{clm:plw-like-deleted-buffer}
    The probability that a specific buffer is not identified is at most $\exp(-\Omega_{\eps,\mu}(m))$.
\end{claim}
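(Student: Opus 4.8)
The plan is to reduce the event that the buffer is missed to a lower-tail estimate for a binomial random variable, which is then controlled by the multiplicative Chernoff bound (\cref{lem:chernoff}).

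First I would fix one of the $n-1$ internal buffers and describe precisely what the channel does to it. Since each inner codeword begins and ends with a $1$, this buffer is a \emph{maximal} run of zeros of the transmitted codeword, of length exactly $B:=\ceil{\nu m/(1-d(M))}$. By our choice of $\epsout$ (see the remark following the encoding description) we may assume $B\ge M$, so the runlength-dependent deletion channel deletes each of the $B$ zeros of this buffer independently with probability $d(M)$; equivalently, each of them survives independently with probability $1-d(M)\ge \mu$. Letting $S$ denote the number of zeros of this buffer that survive the channel, we have $S\sim\Bin(B,1-d(M))$ and $\E[S]=(1-d(M))B\ge (1-d(M))\cdot\frac{\nu m}{1-d(M)}=\nu m$.

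Next I would relate $S$ to whether the buffer is identified. Because the channel only deletes symbols — it never inserts, and it preserves the relative order of the surviving symbols — the surviving zeros of this buffer appear consecutively in the received string; hence the received string contains a run of zeros of length at least $S$ at the location of the buffer (deletions of the two neighboring $1$s can only lengthen this run). Since the decoder declares a buffer at every run of zeros of length at least $\nu m/2$, the buffer is missed only if $S<\nu m/2$. As $\nu m/2\le \E[S]/2$, applying \cref{lem:chernoff} with $\alpha=1/2$ gives
\[
\Pr[\text{buffer not identified}]\le \Pr[S<\nu m/2]\le \Pr\big[S<\E[S]/2\big]< e^{-\E[S]/8}\le e^{-\nu m/8}.
\]
Recalling that $\nu=\eps\mu/5$ is a positive constant depending only on $\eps$ and $\mu$, this is $\exp(-\Omega_{\eps,\mu}(m))$, which is exactly the claimed bound.

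The only step requiring a little care is the middle one: checking that the surviving zeros of a fixed buffer really do form a single contiguous run in the channel output, so that their count is a genuine lower bound on the length of the corresponding output run (irrespective of whether the adjacent $1$s survive). This is where we crucially use that the channel performs deletions only. The remaining steps are a one-line expectation computation and a direct invocation of the Chernoff bound, so I do not anticipate a real obstacle here.
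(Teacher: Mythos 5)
Your proposal is correct and follows essentially the same argument as the paper: model the number of surviving buffer zeros as $\Bin(\ceil{\nu m/(1-d(M))},1-d(M))$ with mean at least $\nu m$, observe that missing the buffer requires fewer than $\nu m/2$ survivors, and apply the multiplicative Chernoff bound to get $\exp(-\nu m/8)$. Your extra remark that the surviving zeros remain contiguous (since the channel only deletes) is a detail the paper leaves implicit, but the proofs are otherwise identical.
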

\begin{proof}
    This happens when the channel deletes too many bits from a buffer so that less than $\nu m/2$ bits of the transmitted buffer survived the channel, and we did not identify this buffer when we identify buffers in Step~\ref{decode:iden-buffers}. 
        
    Let $Z$ be the random variable that corresponds to the number of bits from that buffer that survived the transmission through the $\bdcRLB$. Clearly, $Z \sim \Bin(\ceil{\nu m / (1 - d(M))}, 1 - d(M))$ and then, By Chernoff's bound
        \begin{align*}
            \Pr \left[ Z < \frac{\nu m}{2}  \right] \leq \Pr \left[ Z < \left(1 - \frac{1}{2} \right)\nu m \right] < \exp\left( - \frac{1}{8} \nu m \right) \;.
        \end{align*}
\end{proof}
Now, we bound the probability that a \emph{spurious buffer} is created by the channel.
\begin{claim} \label{clm:plw-like-spurious-buffer}
    Let $\Cin(\sigma_i)$ be an inner codeword that is transmitted through $\bdcRLB$ and let $y$ denote its output. 
    The probability that $y$ contains a run of zeros of length at least $\nu m/2$ is at most $\exp(-\Omega_{}(m))$.
\end{claim}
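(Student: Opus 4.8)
The plan is to reduce the appearance of a long run of zeros in $y$ to the event that \emph{all} the ones inside some short window of the input codeword are deleted, and then to combine the density guarantee of $\Cin$ with a union bound over windows.

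\textbf{Step 1 (a structural observation on deletion channels).} Suppose $y$ contains a run of zeros of length $\ell\ge\nu m/2$. Since the channel only deletes symbols (it never inserts or reorders), the $\ell$ output zeros forming this run are the images of input positions $p_1<p_2<\cdots<p_\ell$ with $x_{p_1}=\cdots=x_{p_\ell}=0$, and their images occupy $\ell$ \emph{consecutive} positions of $y$. Hence any input position $q$ with $p_1<q<p_\ell$ that is \emph{not} deleted has its image strictly between the images of $p_1$ and $p_\ell$, so its image lies inside the run and $x_q=0$. Therefore every $1$ in the interval $[p_1,p_\ell]$ is deleted. Since $p_\ell-p_1+1\ge\ell\ge\nu m/2$ and $\zeta=\nu/3<\nu/2$, the window $[p_1,\,p_1+\lceil\zeta m\rceil-1]$ is contained in $[p_1,p_\ell]$; moreover $p_1$ is at least $\ell\ge\nu m/2$ positions away from the right end of $\Cin(\sigma_i)$, so $p_1\le(1-\zeta)m$ for all sufficiently large $m$. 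Writing $E_a$ for the event ``every $1$ in the length-$\lceil\zeta m\rceil$ window of $\Cin(\sigma_i)$ starting at position $a$ is deleted by the channel'', we conclude that the event in the claim is contained in $\bigcup_{a\in[(1-\zeta)m]}E_a$.

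\textbf{Step 2 (bounding $\Pr[E_a]$ and a union bound).} Fix $a\in[(1-\zeta)m]$. By \Cref{cor:dense-code-RL-dep} applied with the parameters used in the construction (in particular with $\zeta=\nu/3$), the window of $\Cin(\sigma_i)$ starting at $a$ contains at least $\gamma\zeta m$ ones (prepending and appending a single $1$ to each codeword only increases window weights and shifts indices by $O(1)$, so the guarantee carries over up to an absorbed constant). Conditioned on the input, the channel deletes each bit independently, and a bit lying in a run of length $\ell'$ is deleted with probability $d(\ell')\le d(M)<1-\mu$, using that $d$ is non-decreasing and the definition of $\bdcRLB$. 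Hence $\Pr[E_a]\le d(M)^{\gamma\zeta m}\le(1-\mu)^{\gamma\nu m/3}$. A union bound over the at most $m$ choices of $a$ gives
\[
\Pr[\,y\text{ contains a run of }\ge\nu m/2\text{ zeros}\,]\;\le\;m\,(1-\mu)^{\gamma\nu m/3}\;=\;\exp(-\Omega(m)),
\]
where the implied constant depends only on $\eps$ and $\mu$ (through $\gamma,\nu,\mu$).

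I do not expect a serious obstacle. The two points deserving care are the structural observation of Step~1 — that a zero-run in the output of a deletion channel forces an entire input window to have all of its ones deleted — and the verification that this window starts at an index $\le(1-\zeta)m$, so that the density bound of \Cref{cor:dense-code-RL-dep} applies; the latter holds for large $m$ precisely because $\nu/2>\zeta=\nu/3$. Everything else is a routine union bound.
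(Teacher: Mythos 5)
Your argument is correct and mirrors the paper's proof: reduce a long zero-run in $y$ to the event that all $1$s in some $\zeta m$-length input window are deleted, invoke \cref{cor:dense-code-RL-dep} to lower-bound the number of those $1$s, bound the deletion probability bit by bit, and union-bound over window positions. You replace the paper's weighted AM-GM step (bounding $\prod_i d(i)^{a_i}$ via $\sum_i a_i d(i)$) with the more direct observation that $d(\ell')\le d(M)<1-\mu$ since $d$ is non-decreasing, so $\prod_i d(i)^{a_i}\le d(M)^{\sum_i a_i}$; this is cleaner, yields the same bound, and you additionally spell out the structural reduction and the $p_1\le(1-\zeta)m$ edge case that the paper leaves implicit.
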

\begin{proof}
    Clearly, in order for the channel to create a run of zeros of length $\nu m/2$, it must be that the channel deleted all $1$s from an interval of length at least $\nu m/2$. 
        Fix an interval of length $\nu m/3$ in an inner codeword. 
        By \Cref{cor:dense-code-RL-dep} and recalling that $\zeta = \nu/3$, this interval contains at least $\gamma \nu m/3$ ones for some $\gamma = \gamma(\varepsilon/2, \zeta)$.
        Let $a_i, i\in [M-1]$ be the number of ones in this interval that belong to runs of length $i$ and denote by $a_M$ the number of bits that belong to runs of length $\geq M$ in this interval. 
        Deleting all these 1s happens with probability 
        \[
        d(1)^{a_1} d(2)^{a_2} \cdots d(M)^{a_M} \leq \left(\frac{3 \cdot \left(\sum_{i=1}^M a_i d(i)\right)}{\gamma \nu m}\right)^{\gamma \nu m/3} \leq (1 - \mu)^{\gamma \nu m/3} = \exp(-\Omega_{\varepsilon,\mu,\gamma}(m))\;,
        \]
        where the first inequality follows from the weighted AM-GM inequality and the second inequality is due to the assumption that $d(1) \leq d(2) \leq \cdots \leq d(M)\leq 1- \mu$ and that $\sum_{i=1}^M a_i = \gamma \nu m/3$.
        Union bounding over all such intervals, we get also that the probability of the existence of a spurious buffer inside an inner codeword is at most $m\cdot \exp(-\Omega_{\varepsilon,\mu,\gamma}(m)) =  \exp(-\Omega_{\varepsilon,\mu,\gamma}(m))$.    
\end{proof}

\begin{remark}
    \em
    Note that the above argument bounds the probability that a spurious buffer is identified inside an inner codeword. However, it can be that the decoder identifies several spurious buffer.
        The maximal number of spurious buffers the decoder can identify inside an inner codeword is at most $2/\nu = O(1)$. 
\end{remark}

\begin{claim} \label{clm:plw-like-wrong-inner-dec}
    Let $0^{B} \circ \Cin(\sigma_i) \circ 0^B$ be an inner codeword surrounded by two buffers and assume that it is transmitted through the channel $\bdcRLB$. 
    Assume that the buffers were identified by the algorithm and that no spurious buffers were created. 
    Denote by $\tilde{c}_j$ the corresponding string obtained after removing the buffers. 
    Then, the probability that $\tilde{c}_j$ is decoded correctly to $\sigma_i$ is at least $1 - \varepsilon/2$.
\end{claim}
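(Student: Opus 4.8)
The plan is to reduce the claim to the decoding guarantee of the inner code. I will argue that, conditioned on the two adjacent buffers being identified and on no spurious buffer having been created inside the block, the string $\tilde c_j$ handed to the inner decoder is distributed — up to a harmless conditioning factor — exactly as the output of the $0$-trimming $\bdcRLB$ channel on input $\Cin(\sigma_i)$; one then invokes \Cref{cor:dense-code-RL-dep}, which was applied with error parameter $\varepsilon/50$, so that the inner decoder recovers $\sigma_i$ from such a string with probability at least $1-\varepsilon/50$ for \emph{every} inner codeword.

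The first step is to pin down the distribution of the surviving substring $Y$ of $\Cin(\sigma_i)$. Here I would use that every inner codeword was padded to begin and end with a $1$ while every buffer is an all-zero block: hence in the transmitted word $\cdots 0^{B}\,\Cin(\sigma_i)\,0^{B}\cdots$ no maximal run straddles a codeword/buffer boundary, so the maximal runs of the transmitted word that meet $\Cin(\sigma_i)$ coincide with the maximal runs of $\Cin(\sigma_i)$ read in isolation. Consequently the channel applies its runlength-dependent deletions to the bits of $\Cin(\sigma_i)$ exactly as in a standalone transmission, giving $Y\sim\bdcRLB(\Cin(\sigma_i))$; and — again because no run crosses a boundary — $Y$ depends only on the deletions performed inside $\Cin(\sigma_i)$, hence is independent of the deletions applied to the two surrounding buffers.

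Next I would identify $\tilde c_j$ on the conditioning event. Let $E_2$ be the event ``$Y$ has no run of $0$s of length $\geq \nu m/2$'' (no spurious buffer), let $E_1$ be ``both adjacent buffers are identified'' by Step~\ref{decode:iden-buffers}, and let $E_1'\subseteq E_1$ be the event that each of the two transmitted buffers retains at least $\nu m/2$ zeros through the channel (so $E_1'$ does imply both buffers are identified). On $E_2$ the decoder places no buffer strictly inside the portion of the received word coming from $Y$, and the maximal $0$-run containing the left-buffer survivors extends rightward exactly up to the first $1$ of $Y$ (the only $0$s in between are the leading $0$-run of $Y$), symmetrically on the right; hence on $E_1\cap E_2$ the string $\tilde c_j$ is precisely $Y$ with its leading and trailing $0$-runs stripped, which by the first step is a genuine sample of the $0$-trimming $\bdcRLB$ channel on $\Cin(\sigma_i)$. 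Letting $F$ be the event that the inner decoder errs on that $0$-trimmed string, \Cref{cor:dense-code-RL-dep} gives $\Pr[F]\leq\varepsilon/50$. Since $F$ and $E_2$ are determined by the deletions inside $\Cin(\sigma_i)$, while $E_1'$ is determined by the (independent) buffer deletions, I would then bound
\begin{equation*}
\Pr[\text{inner decoder errs}\mid E_1\cap E_2]=\frac{\Pr[F\cap E_1\cap E_2]}{\Pr[E_1\cap E_2]}\leq\frac{\Pr[F]}{\Pr[E_1']\,\Pr[E_2]}\leq\frac{\varepsilon/50}{\Pr[E_1']\,\Pr[E_2]},
\end{equation*}
and finish using \Cref{clm:plw-like-deleted-buffer} with a union bound to get $\Pr[E_1']\geq 1-2e^{-\Omega(m)}$ and \Cref{clm:plw-like-spurious-buffer} to get $\Pr[E_2]\geq 1-e^{-\Omega(m)}$; for all large $m$ the denominator is at least $1-\varepsilon/100$, so the whole expression is below $\varepsilon/2$. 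The boundary blocks $\sigma_1,\sigma_n$, each with a single adjacent buffer, are handled identically, trimming only on the buffer side.

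I expect the main obstacle to be exactly this conditioning bookkeeping. A priori ``the buffers were identified'' is \emph{not} independent of the randomness corrupting $\Cin(\sigma_i)$, since leading/trailing zeros of $Y$ (and of neighbouring blocks) may also feed a buffer run; this is why one must pass to the sub-event $E_1'$ that is measurable with respect to the buffer deletions alone, and why the run-boundary argument of the first step — which crucially uses both the padding $1$s around each inner codeword and the all-zero buffers — is needed to certify that $\tilde c_j$ really is an honest $0$-trimmed channel output rather than a string whose law has been skewed by the adjacent inner codewords.
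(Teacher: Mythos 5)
Your proof is correct and takes the same essential route as the paper's — identifying $\tilde c_j$ with the $00$-trimmed output of $\bdcRLB$ on $\Cin(\sigma_i)$ and invoking the decoding guarantee of \Cref{cor:dense-code-RL-dep} — but it is substantially more careful. The paper's proof is a two-sentence observation that simply asserts $\tilde c_j$ is the $00$-trimmed channel output and reads off the error probability; it does not address the fact that the claimed bound is a \emph{conditional} probability, conditioned on the events ``buffers identified'' and ``no spurious buffers,'' which are correlated with the channel randomness that determines whether the inner decoder succeeds. You handle this correctly: you use the padding $1$s around each inner codeword to show that no maximal run straddles a codeword/buffer boundary (so $Y$ has the same law as a standalone transmission and is independent of the buffer deletions), you introduce the sub-event $E_1'\subseteq E_1$ measurable with respect to the buffer randomness alone, and you bound the conditional error probability by $\Pr[F]/(\Pr[E_1']\Pr[E_2])\leq(\varepsilon/50)/(1-o(1))\leq\varepsilon/2$. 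This genuinely fills a gap the paper glosses over, and it also reconciles the paper's slightly inconsistent constants (the inner code is built with error parameter $\varepsilon/50$, the claim is stated with $\varepsilon/2$, and the main proof later uses $\varepsilon/50$): your computation shows the conditional error is essentially $\varepsilon/50$ up to a vanishing correction, comfortably below $\varepsilon/2$. Your remark about the two boundary blocks $\sigma_1,\sigma_n$ is a sensible addition the paper also elides. One small point worth making explicit if you were to polish this: you should note $\Pr[E_1\cap E_2]>0$ so the conditional probability is well-defined, and that on the event $Y$ retains at least one $1$ (which happens except with exponentially small probability, already subsumed in $E_2$'s complement analysis), the stripping of leading/trailing $0$-runs is unambiguous.
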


\begin{proof}
    Observe that $\tilde{c}_j$ is the output of the $00$-trimming $\bdcRLB$ on the string $\Cin(\sigma_i)$.
    Thus, by \Cref{cor:dense-code-RL-dep}, the decoding failure probability is $\varepsilon/2$.
\end{proof}

With these claims, we can now prove \Cref{thm:efficient-bounded-rl}, exactly as in \cite{PLW22} (the proof is given here for completeness). 

\begin{proof}[Proof of \Cref{thm:efficient-bounded-rl}]
        Our goal is to show that with probability $1-\exp(-\Omega(n))$, it holds that $\ed(\sigma^{(\textup{out})},\tilde{\sigma}^{(\textup{out})}) \leq \delta_{\textup{out}} n$. This would imply that Step~\ref{decode:outer-code-decode} in our decoding algorithm succeeded and the correct message is returned. 
        There are three types of error that can increase the edit distance between $\sigout$ and $\sigouttil$: A deleted buffer, a spurious buffer, and wrong inner decoding. 

    We analyze the contribution of each of the error types (deleted buffer, spurious buffer and wrong inner decoding) on $\ed(\sigma^{(\textup{out})},\tilde{\sigma}^{(\textup{out})})$. 
    A deleted buffer causes two inner codewords to merge, and thus be decoded incorrectly by the inner code's decoding algorithm. This introduces two deletions and one insertion in the outer code level. 
    Similarly, one can verify that a single spurious buffer inside an inner codeword introduces one deletion and two insertions. Furthermore, $b$ spurious buffers inside an inner codeword introduce at most $b +1$ insertions and one deletion. Finally, a wrong inner decoding causes a substitution which is equivalent to one deletion followed by one insertion.

    As mentioned above, the outer decoding algorithm fails if $\ed(\sigma^{(\textup{out})},\tilde{\sigma}^{(\textup{out})}) > \delout n$. For this to happen, at least one of the following must occur:
    \begin{enumerate}
        \item $\delout n/9$ deleted buffers.
        \item $\delout n/9$ spurious buffers.
        \item $\delout n/9$ wrong inner decodings.
    \end{enumerate}
    Observe that for large enough $m$ (equivalently, for small enough $\epsout$), we get that
    the probability for a deleted buffer $\exp(-\Omega_{\varepsilon,\mu}(m))< \delout/10$ and thus, by Chernoff bound, the probability that there are at least $\delout n/9$ deleted buffers is at most $\exp(-\Omega(n))$. 
    Furthermore, since the expected number of spurious buffers between two adjacent real buffers is $\exp(-\Omega_{\varepsilon,\mu, \gamma}(m)) < \delout/ 10$ for large enough $m$ and the maximal number of such spurious buffers is $2/\nu$, we can apply Hoeffding's bound (\Cref{lem:hoeff}), and get that the total number of all spurious buffers is $< \delout n /9$ with probability $1 - \exp(-\Omega (n))$. 
    Finally, since a decoding of an inner codeword fails with probability $\varepsilon/50 = \delout/10$ (assuming the buffers surrounding it were detected and there were no spurious buffers inside), we get that the probability that more than $\delout n/9$ of these inner decodings failed is at most $\exp(-\Omega(n))$.
    The claim about the decoding failure probability follows by applying a simple union bound.

    We now justify the time complexity of our encoding and decoding algorithms. The complexity of the encoder is as follows. The encoder of the outer code runs in linear time. Then, the encoding of each outer code symbol using the inner code is performed in constant time, and thus the encoding of all the $n$ symbols is done in $O(n)$. Finally, adding the buffers also takes linear time.
    The decoding complexity is dominated by the outer code's decoding time which is quasi-linear. Indeed, identifying the buffers and decoding all corrupted inner codewords takes $O(n)$.
\end{proof}

    \begin{remark} 
        \em
        We emphasize the order in which we choose the parameters of the scheme and, in particular, we make sure that there is no circular dependency.
        First, observe that $\mu$ and $M$ are given by the channel $\bdcRLB$.
        
        We first choose the constant $\varepsilon$ which is the gap to capacity we want to achieve. This sets $\delout,\nu$, $\zeta$ and $\gamma$. 
        Then, we choose small enough $\epsout$ to ensures that all the failure probabilities computed in the proofs of \Cref{clm:plw-like-deleted-buffer} and \Cref{clm:plw-like-spurious-buffer} are indeed smaller than $\delout /10$ (recall that deleted buffer and spurious buffer happen with probability $\exp(-\Omega(m)) = \exp(-\Omega(\log_2 (|\Sigma|/\Rin))) = (O_{\epsout}(1))^{-\Omega(1)}$).
    \end{remark}

\begin{remark} \label{rem:strngth-of-technique}
    \em
	We remark that the proof technique applied in this section (which is based on \cite{PLW22}) can be applied to any runlength-dependent channel for which we can design buffers to separate inner codewords and prove that with high probability we can identify ``most'' of these buffers correctly and that with high probability the number of spurious buffers is ``small''.
	More specifically, all one needs to do is prove analogous claims to \Cref{clm:plw-like-deleted-buffer} and \Cref{clm:plw-like-spurious-buffer}.
	
	In this section, we followed a relatively easy example, the $\bdcRLB$ channel. Indeed, the buffers we used are simply long runs of zeros, and for the proofs of \Cref{clm:plw-like-deleted-buffer} and \Cref{clm:plw-like-spurious-buffer}, we used the nature of channel (the deletions inside a run are i.i.d.) and the fact (given in \Cref{cor:dense-code-RL-dep}) that our inner codewords contain many $1$s in every not-too-short interval (and thus, spurious buffers are not likely to be created).	
\end{remark}

\section{Efficient capacity-achieving codes for multi-trace channels with runlength-dependent deletions}
\label{sec:multi-trace}

In the previous section, we constructed efficient capacity-achieving codes for runlength-dependent deletion channel. In this section, our objective is to construct efficient capacity-achieving codes for the multi-trace version of the channel. Thus, one can see the result of the previous section as special case of the result we will obtain in this section. 
More specifically, our goal is to transform the structured capacity-achieving codes from \Cref{cor:dense-code-RL-dep} for a multi-trace runlength-dependent deletion channel into explicit and efficient capacity-achieving codes for the same channel.

This time, our techniques will follow the main ideas of Brakensiak, Li, and Spang \cite{BLS20} who constructed explicit and efficient codes of rate $1-\varepsilon$ that can be reconstructed from $\exp(O_d(\log^{1/3}(\varepsilon^{-1})))$ traces of the binary deletion channel with parameter $d$ (BDC$_d$, for short).

We start by defining the class of multi-trace, runlength-dependent channels for which we aim to construct efficient codes.
\begin{definition} \label{def:multi-trace-run-length-bounded-channel}
    Let $\bdcMTRLB$ be a runlength-dependent channel that produces $T$ traces on each input. As in \Cref{def:run-length-bounded-channel}, the deletion function is defined by a non-decreasing function $d(\ell):\mathbb{N} \rightarrow [0,1]$, number $\mu \in (0,1)$, and an integer $M$ such that $d(\ell) = d(M) < 1- \mu$ for all $\ell \geq M$.
\end{definition}

The theorem we will prove in this section is as follows.
\begin{theorem} \label{thm:efficient-bounded-multi-trace-rl}
    Let $\bdcMTRLB$ be a runlength-dependent channel that complies with \Cref{def:run-length-bounded-channel} where $\mu$ and $M$ are constants.
    For every $\varepsilon > 0$, there exists a family of binary codes $\{C_i\}_{i=1}^{\infty}$ for the $\bdcMTRLB$ where the block length of $C_i$ goes to infinity as $i\rightarrow \infty$ and
    \begin{enumerate}
        \item $C_i$ is encodable in quasi linear time and decodable in quadratic time.
        \item The decoding failure probability is $\exp(-\Omega(n))$.
        \item The rate of the $C_i$ is 
        $R > \textup{Cap}(\bdcMTRLB) - \varepsilon$.
    \end{enumerate}
\end{theorem}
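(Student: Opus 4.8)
The plan is to mimic, in the $T$-trace setting, the concatenation-with-buffers construction of \Cref{sec:efficient-single-trace}, and to replace the single-trace decoder by a multi-trace \emph{align-then-decode} procedure following Brakensiek, Li, and Spang~\cite{BLS20}. Fix the desired gap $\eps$, set $\delout=\eps/5$, $\nu=\eps\mu/5$, $\zeta=\nu/3$, and let $\Cout$ be the adversarial-insdel code of \Cref{thm:hs-code} with this $\delout$ and a sufficiently small $\epsout$; let $\Cin$ be the code of \Cref{cor:dense-code-RL-dep} for the $00$-trimming \emph{$T$-trace} runlength-dependent deletion channel with gap parameter $\eps/50$ and density parameter $\zeta$, each inner codeword prefixed and suffixed by a $1$. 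The transmitted codeword is $\Cin(\sigma_1)\,0^B\,\Cin(\sigma_2)\,0^B\cdots0^B\,\Cin(\sigma_n)$ with $B=\ceil{\nu m/(1-d(M))}$, exactly as before. The rate computation is verbatim the one in \Cref{sec:efficient-single-trace}, giving $\mathcal R\ge\textup{Cap}(\bdcMTRLB)-\eps$, once we recall that by \Cref{thm:cap-gen} and \Cref{thm:multi-trace-cap} the quantity $\textup{Cap}(\bdcMTRLB)$ is the information capacity of the $T$-trace channel, which $\Cin$ attains.

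The decoder receives $T$ i.i.d.\ traces $y^{(1)},\dots,y^{(T)}$. In each $y^{(j)}$ it identifies buffers (maximal runs of $0$s of length $\ge\nu m/2$), obtaining chunks $\tilde c^{(j)}_1,\tilde c^{(j)}_2,\dots$; as in \Cref{sec:efficient-single-trace}, a chunk bordered by two correctly identified buffers and containing no spurious internal buffer is exactly a $00$-trimming channel output on some $\Cin(\sigma_i)$. It then runs an \emph{alignment step} -- a dynamic program across the $T$ chunk-sequences, in the spirit of~\cite{BLS20}, comparing chunk lengths and contents; this is the source of the quadratic decoding time -- which partitions all chunks into groups, each group ideally being the $T$ chunks originating from one inner codeword. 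For each group it runs the brute-force decoder of $\Cin$ on the (at most $T$) chunks to obtain a candidate outer symbol, producing $\sigouttil=(\tilde\sigma_1,\dots,\tilde\sigma_s)$, and finally runs the decoder of $\Cout$ on $\sigouttil$. Encoding is quasi-linear (dominated by $\Cout$ together with $O(n)$ constant-time inner encodings) and decoding is quadratic (dominated by the alignment DP; identifying buffers, inner-decoding each of $O(n)$ groups in constant time, and outer decoding are all at most quasi-linear).

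The analysis has three ingredients. (i) The concentration arguments of \Cref{clm:plw-like-deleted-buffer} and \Cref{clm:plw-like-spurious-buffer}, applied in each of the $T=O(1)$ traces and combined via \Cref{lem:chernoff,lem:hoeff} and a union bound, show that with probability $1-\exp(-\Omega(n))$ the total number of buffer errors (deleted true buffers plus spurious buffers) over all traces is at most $\delout n/c$ for any prescribed constant $c$; here the runlength-dependence is handled exactly as in \Cref{sec:efficient-single-trace}, since buffers are long $0$-runs and deletions within a run are independent, so the dense-inner-code / weighted-AM--GM estimate of \Cref{clm:plw-like-spurious-buffer} carries over unchanged. (ii) Following~\cite{BLS20}, each buffer error affects only $O(1)$ groups of the alignment, so all but $O(\delout n/c)$ indices $i$ are correctly grouped into exactly the $T$ chunks coming from $\Cin(\sigma_i)$; for each such index the inner decoder receives a genuine $T$-trace $00$-trimming channel output of $\Cin(\sigma_i)$, hence by \Cref{cor:dense-code-RL-dep} (cf.\ \Cref{clm:plw-like-wrong-inner-dec}) fails with probability $\le\eps/50=\delout/10$, so by \Cref{lem:chernoff} at most $\delout n/9$ of these inner decodings fail except with probability $\exp(-\Omega(n))$. (iii) As in the proof of \Cref{thm:efficient-bounded-rl}, a deleted buffer, a spurious buffer, and a wrong inner decoding each contribute $O(1)$ indels between $\sigout$ and $\sigouttil$; choosing $c$ large enough we get $\ed(\sigout,\sigouttil)<\delout n$ with probability $1-\exp(-\Omega(n))$, so the decoder of $\Cout$ returns the message.

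The step I expect to be the main obstacle is the \emph{alignment across traces}: unlike the single-trace case, a single deleted or spurious buffer in one trace shifts that trace's chunk indexing, so one must argue -- as in~\cite{BLS20} -- that these shifts are \emph{local} (each bad buffer decision can be blamed for only $O(1)$ mis-grouped inner codewords) and that the grouping is recoverable efficiently by an edit-distance-style dynamic program. Adapting this so that it works with our dense inner code over a runlength-dependent channel, rather than the i.i.d.\ deletion setting of~\cite{BLS20}, is where the ``minor modifications'' live, even though the channel-specific tail bounds needed are precisely the ones already established in \Cref{sec:efficient-single-trace}.
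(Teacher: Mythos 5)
Your construction and the single-trace-style tail bounds (deleted buffers, spurious buffers) are fine, but the proposal has a genuine gap exactly where you suspect it: the cross-trace alignment. You propose to group the chunks of the $T$ traces by ``a dynamic program across the $T$ chunk-sequences\ldots comparing chunk lengths and contents,'' and you attribute this to~\cite{BLS20}, but that is not how~\cite{BLS20} (or this paper) solves the problem, and you give no argument that such a DP recovers a mostly-correct grouping. The difficulty is that a single missed or spurious buffer in one trace shifts that trace's chunk indexing relative to the others for the entire remainder of the string; to re-synchronize by content one would need to show that (noisy, trimmed) traces of \emph{distinct} inner codewords are reliably distinguishable, which is a nontrivial property of the random inner code that you neither state nor prove. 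This matters because the multi-trace inner decoder of \Cref{cor:dense-code-RL-dep} only has a guarantee when it is fed $T$ traces of the \emph{same} inner codeword; if the grouping is wrong on a constant fraction of blocks in a way you cannot bound, the whole analysis collapses. (Your outer insdel code can absorb index shifts at the outer level, but it cannot repair an unbounded number of mis-grouped trace-reconstruction inputs.)

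The missing idea is the second inner code. The paper (following~\cite{BLS20}) encodes, alongside each content symbol $r_i$, a symbol $s_i$ of a fixed $\eta$-synchronization string using a \emph{single-trace} capacity-achieving inner code $C_S$, and separates the two parts with $0$- and $1$-buffers of alternating type so they can be told apart. Each trace is then aligned \emph{independently} against the known string $S$ via the matching algorithm of \Cref{lem:hs-matching-alg}: most synchronization symbols are correctly decoded from one trace, so most chunks get their correct absolute block index, and chunks carrying the same index across traces are handed to the $T$-trace decoder of $C_R$. This converts the surviving errors into a small fraction of \emph{substitutions/erasures at known positions} at the outer level, which is why the paper's outer code is the substitution-correcting code of \Cref{prop:bls-outer-code} rather than the insdel code of \Cref{thm:hs-code}. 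The quadratic decoding time also comes from this matching algorithm, not from a cross-trace DP. To repair your proof you would either have to import this synchronization-symbol mechanism, or supply the (currently absent) distinguishability argument underpinning your content-based DP.
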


\begin{remark}
    \em
	Note that this section's main result recovers the result of the previous section with worse time complexity. 
	This is because in the decoding procedure, we simply use the matching algorithm given by \cite{HS21} instead of the improved, more involved indexing scheme given in \cite{haeupler2019near}. It can be the case that these techniques also work in this case. 
    However, for simplicity, we are using the simple matching algorithm of \cite{HS21} in a black-box way (see \Cref{lem:hs-matching-alg}).
\end{remark}
Before diving into the technical details, we present a broader picture, including a description of the main ideas of \cite{BLS20}.

Let $x$ be a random string in $\zo^n$. The average trace reconstruction problem (which has been extensively studied in the literature \cite{batu2004reconstructing,holden2018subpolynomial,chase2019new,holden2020subpolynomial,chen2022near}, just to name a few) asks for the minimal number of traces $T$ for which the reconstruction of $x$ succeeds with high probability (the randomness is over the randomness of the channel and the choice of $x$). 
Although there has been significant interest in this question, there is still an exponential gap between the lower bound ($\widetilde{\Omega}(\log^{5/2}n)$ \cite{chase2019new}) and upper bound on $T$ ($\exp(\widetilde{O}(\log^{1/5} n))$ \cite{Rub23}).

Motivated by the challenges in developing DNA-based storage systems, \cite{CGMR20} and then a subsequent work \cite{BLS20} introduced the \emph{coded trace reconstruction} problem, a natural extension of the trace reconstruction problem. 
In this setting, the goal is to design a code with rate close to $1$ as possible such that any codeword can be reconstructed from (very) few traces, with high probability. 
In both works, the authors provide constructions of efficient codes with rate $1 - \varepsilon$ that are trace-reconstructable $O_\varepsilon (1)$ (see Remarks 1.6 and 1.7 in \cite{BLS20}, which compare the results of both papers).

The main theorem of \cite{BLS20} (Theorem 1.4) turns an upper bound on $T$ in the average case into an explicit and efficient code with rate $1 - \varepsilon$ that is efficiently trace-reconstructable. Applying their result with the current best upper bound, they that the number of traces is $\exp(O_d(\log ^{1/3}(\varepsilon^{-1})))$.

The major problem in generalizing the concatenated scheme from the previous section to the multi-trace setting is that we need to synchronize the traces. 
Indeed, assume that we have an inefficient code that achieves capacity in the multi-trace setting and that we use the same encoding process as in the previous section and denote by $a_1, \ldots, a_n$ the encoded inner codewords. 
Let $z_1, \ldots, z_T$ be the received traces. We can easily show that in most traces, most of the buffers are detected and that there are not too many spurious buffers. 
Thus, from each trace $t\in [T]$ we extract $n_t$ binary strings $b^{(t)}_1, \ldots, b^{(t)}_{n_t}$ where each one of these strings is a trace of an inner codeword.
However, to perform trace reconstruction to recover, say $a_i$, we need to know in each trace $t$, which $j\in [n_t]$ is such that $b^{(t)}_j$ is the corresponding trace of $a_i$ in the $t$-th trace.

To overcome this problem, \cite{BLS20} used two inner codes in their encoding process. We shall do the same thing here.
The first code encodes the information symbols and is robust in the multi-trace setting. 
The second inner code encodes synchronization symbols (these are symbols that do not carry data information but are used to preserve synchronization in the presence of insertions and deletions)
and can be reliably recovered from only a \emph{single} trace of the channel. 
More precisely, the encoding process takes a pair of information symbol and synchronization symbol $(r_i, s_i)$ and output $(a_i, b_i)$ where $a_i$ and $b_i$ are binary codewords in two different
inner codes, as described above. Then, we shall place the buffers in a slightly different way than what we did in the previous section (also slightly different than what \cite{BLS20} did), but the main idea is similar.

Note that now, before performing the ``inner'' trace reconstruction step, we add another step (as in \cite{BLS20}) that aligns each trace using the synchronization symbols (in each trace most of them are decoded correctly). 
Then, in each trace, we know the right location of most of the corrupted inner codewords, and we can run the inner trace reconstruction.

\subsection{Auxiliary results}
In this subsection, we define and cite several ingredients that we will use in our construction.
\paragraph{Synchronization strings.}
Synchronization strings are special \emph{indexing strings} that were first introduced in \cite{HS21}.
\begin{definition}
    A string $S\in \Sigma^n$ is an $\eta$-synchronization string if for every $1\leq i < j<k \leq n$, it holds that $\ed(S[i,j), S[j,k)) > (1 - \eta)(k-i)$.
\end{definition}
The following theorem shows that one can efficiently construct such strings over a `not too large' alphabet.
\begin{theorem}[Theorem 1.2, \cite{cheng2019synchronization}] \label{thm:sync-string-alphabet-size}
For every natural number \( n \) and every \( \eta \in (0,1) \), there exists a polynomial-time algorithm that constructs a \( \tau \)-synchronization sequence over an alphabet of size \( O(\eta^{-2}) \).
\end{theorem}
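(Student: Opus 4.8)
The plan is to prove this in two stages: a non-constructive existence argument via the Lov\'asz Local Lemma, followed by an algorithmic upgrade to a deterministic polynomial-time construction achieving the optimal alphabet size $O(\eta^{-2})$.

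For existence, I would not reason about edit distance directly but pass to the more LLL-friendly \emph{self-matching} characterization of synchronization strings from \cite{HS21}: up to a suitable relation between the two parameters, a string is an $\eta$-synchronization string iff it has no ``$c\eta$-dense self-matching'', i.e.\ no monotone matching between two disjoint substrings $S[a,b)$ and $S[b,c)$ that pairs up more than $c\eta(c-a)$ positions carrying equal symbols. So it suffices to show that a uniformly random $S\in[q]^n$ with $q=\Theta(\eta^{-2})$ has no dense self-matching with positive probability. For each triple $(a,b,c)$ and each candidate monotone matching $M$ of size $s=\lceil c\eta(c-a)\rceil$ between $[a,b)$ and $[b,c)$, introduce a bad event $B_{(a,b,c),M}$; since the $s$ matched pairs involve $2s$ distinct positions of $S$, we get $\Pr[B_{(a,b,c),M}]\le q^{-s}$. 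The number of matchings of span $w=c-a$ is at most $\binom{w}{s}^{2}$, and $B_{(a,b,c),M}$ is independent of every bad event whose window is disjoint from $[a,c)$. Taking $q$ a large enough constant times $\eta^{-2}$ makes the asymmetric LLL condition hold with weights decaying geometrically in the span, so a good $S$ exists.

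For the efficient construction I would try to run the Moser--Tardos algorithmic LLL, but the bad events above have unbounded arity (a dense self-matching can span the whole string), which breaks the polynomial-time analysis. I would handle this as in \cite{cheng2019synchronization}: first prove a ``locality'' lemma stating that it is enough to forbid dense self-matchings of span at most $L=O(\eta^{-1}\log n)$, since a global dense self-matching can always be decomposed, using the interval-additive structure of monotone matchings, into one of comparable density inside a window of length $L$. Restricting to these bounded-span bad events, each of the $\mathrm{poly}(n)$ symbol variables lies in only $n^{O(1)}$ bad events, the expected resampling count is polynomial, and a parallel/derandomized Moser--Tardos (or a deterministic algorithmic LLL) yields a deterministic polynomial-time algorithm. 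An alternative route, also essentially from \cite{cheng2019synchronization}, is alphabet reduction by composition: build a long synchronization string over a $\mathrm{poly}(n)$-size alphabet by derandomized LLL on $O(\log n)$-length windows, build a short synchronization string over the optimal $O(\eta^{-2})$-size alphabet by exhaustive search in $O_\eta(1)$ time, and merge them via the synchronization-string composition lemma, paying only a constant-factor change in $\eta$.

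The routine part is the LLL existence bound. The technical heart --- and the reason \cite{cheng2019synchronization} is separate from \cite{HS21} --- is obtaining a \emph{deterministic} polynomial-time algorithm \emph{simultaneously} with the near-optimal alphabet size $O(\eta^{-2})$. The main obstacle is the unbounded arity of the natural bad events, which must be tamed by the locality/windowing lemma (showing that a short dense self-matching is always present whenever a long one is) before any standard algorithmic-LLL machinery applies; keeping the window bound small enough that the running time stays polynomial, while not enlarging the alphabet beyond $O(\eta^{-2})$, is where the care is needed.
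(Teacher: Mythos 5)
This statement is an external result quoted verbatim from \cite{cheng2019synchronization}; the paper under review gives no proof of it, so there is nothing internal to compare against. Your sketch (self-matching/LCS characterization of synchronization strings, LLL existence over a random string with $q=\Theta(\eta^{-2})$, then a locality lemma to bound the arity of bad events so that a deterministic algorithmic LLL or a composition-based alphabet reduction yields a polynomial-time construction) is essentially the argument used in the cited reference, and I see no gap in the outline.
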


Synchronization strings were the main ingredient in \cite{HS21} for constructing codes that can correct insdel errors. 
The main idea behind \cite{HS21} is to take a code $\cC$ that can correct substitutions and erasures and combine it with a single $\eta$-synchronization string $S=s_1\cdots s_n$ of length $n$. 
The resulting code is 
\[
\cC^{\text{ID}} := \{ ((c_1, s_1),\ldots, (c_n,s_n)) \mid c\in \cC  \}\;.
\]
Note that the synchronization string is the same for all codewords and therefore does not carry information. Therefore, to ensure that the synchronization string has a negligible impact on the rate, the alphabet size of the code $\cC$ must be large compared to the alphabet size of the synchronization string.

The purpose of the synchronization symbols is to transform the insdel errors into erasures and substitutions. This is achieved by a ``matching'' algorithm given by the following lemma.
\begin{lemma}[\protect{\cite[Lemma 2.2]{HS2021-survey}}] \label{lem:hs-matching-org}
    Let $S=s_1s_2\cdots s_n$ be an $\eta$ synchronization string. 
    There exists an algorithm that on input $(m'_1,s'_1), \ldots, (m'_{n'}, s'_{n'})$, and $S=s_1\cdots s_n$, guesses the position of all received symbols in the sent string such that the position of all but $O(\sqrt{\eta} \cdot n)$ of the symbols that are not deleted are guessed correctly.
    This algorithm runs in time $O_{\eta}(n^2)$.
\end{lemma}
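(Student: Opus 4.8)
The plan is to recognize this as the standard ``global matching'' decoder for synchronization strings and to reconstruct the Haeupler--Shahrasbi analysis behind it. First I would specify the algorithm: given the received sequence $(m'_1,s'_1),\dots,(m'_{n'},s'_{n'})$ together with $S=s_1\cdots s_n$, run a dynamic program over all $O(n\cdot n')=O_\eta(n^2)$ pairs of prefixes that computes an order-preserving partial matching $\mu$ between received indices in $[n']$ and sent indices in $[n]$ minimizing an edit-type cost: one unit for each unmatched received index, one unit for each sent index skipped between consecutive matched pairs, and a forbidding (infinite) penalty for matching a received index $i$ to a sent index $j$ with $s'_i\neq s_j$. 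Every received symbol is then assigned the sent position dictated by $\mu$, completing monotonically on the unmatched ones. Filling and backtracking the table takes $O_\eta(n^2)$ time, which gives the claimed running time.

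The substance is the correctness bound. Let $\mu^\star$ be the ``true'' matching induced by the channel on the non-inserted received symbols. Since at most $n$ symbols are deleted, $\mu^\star$ is itself a valid monotone matching whose cost equals the number of channel insertions plus deletions, so the optimum $\mu$ has cost no larger than this. To bound the disagreement between $\mu$ and $\mu^\star$, I would partition the received indices into maximal ``deviation intervals'' — runs of positions on which $\mu$ and $\mu^\star$ disagree. On such an interval of length $L$, the alignment prescribed by $\mu$ is an order-preserving alignment of a window $S[i,j)$ against a differently-indexed window $S[i',k)$ of comparable length; the $\eta$-synchronization property forces any two substrings of $S$ starting at distinct positions to have edit distance at least $(1-\eta)$ times their length, so keeping this wrong alignment costs $\mu$ an ``excess'' of $\Omega((1-O(\eta))L)$ over $\mu^\star$ on that window, unless the channel already spent $\Omega(L)$ insertions/deletions there. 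Because the total cost of $\mu$ is at most the channel's total insertion/deletion budget (which is $O(n)$), the sum of these per-interval excesses is $O(n)$; a convexity / Cauchy--Schwarz argument that trades ``few long deviation intervals'' against ``many short ones'' then bounds the total length of all deviation intervals by $O(\sqrt{\eta}\cdot n)$. Hence all but $O(\sqrt{\eta}\cdot n)$ of the non-deleted received symbols have their sent position guessed correctly.

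The hard part will be making the per-interval excess-cost accounting fully rigorous: a locally plausible but globally shifted alignment can ``hide'' inside the channel's error budget, and pinning down exactly how much it can hide is what requires the relative-suffix-distance potential of \cite{HS21} rather than the naive telescoping sketched above — in particular getting the $\sqrt{\eta}$ (and not merely $\eta$) dependence right. Everything else — the dynamic program, the running-time estimate, and the reduction of correctness to bounding the deviation intervals — is routine. Since the statement is quoted verbatim from \cite{HS2021-survey}, one may of course simply invoke it; but the argument above is the content behind it.
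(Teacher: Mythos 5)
The paper does not prove this lemma: it is quoted verbatim from \cite{HS2021-survey} (Lemma 2.2) and used as a black box, so there is no in-paper argument to compare your sketch against. Your reconstruction is a reasonable summary of the Haeupler--Shahrasbi global-matching approach (the monotone-alignment DP, the $O_\eta(n^2)$ running time, and the appeal to the $\eta$-synchronization property to penalize shifted alignments), and you are candid that the step where $\sqrt{\eta}$ rather than $\eta$ appears requires the relative-suffix-distance potential of \cite{HS21} rather than your ``convexity / Cauchy--Schwarz'' sketch. That admission is apt: the hand-waved per-interval excess-cost accounting is precisely where a naive argument would only yield a bound of order $\eta n$ plus the channel error budget, and the potential-function bookkeeping is what lets HS trade the two error sources against each other at the $\sqrt{\eta}$ threshold. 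Given that the paper itself simply cites the result, your closing fallback -- invoke \cite{HS2021-survey} directly -- is the appropriate ``proof'' in this context; the sketch is useful exposition but should not be presented as a self-contained derivation.
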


The algorithm which yields this lemma \cite[Algorithm 1]{HS2021-survey} essentially transforms the $\delta n$ insdel errors applied to a codeword $((c_1, s_1),\ldots,(c_n,s_n))$ into erasures and substitutions that are applied to $(c_1, \ldots, c_n)$. 
Specifically, the algorithm outputs $(y_1, \ldots, y_n)\in (\Sigma_{\cC} \cup \{ \perp \})^n$ such that $(y_1, \ldots, y_n)$ can be obtained from $(c_1, \ldots, c_n)$ by performing $e$ erasures and $t$ substitutions where $e + 2t \leq \delta n + 12 \sqrt{\eta}n$.

We will rephrase \Cref{lem:hs-matching-org} so that it fits our decoding algorithm. 
We start with a definition of a \emph{true indexing matching}.
\begin{definition}
    Let $S=s_1\cdots s_n$ be a string and let $S' = s'_1\cdots s'_m$ be a string that is obtained from $S$ via $\delta n$ insdel errors. 
    A \emph{true indexing matching} between $S$ and $S'$ consists of a set $I\subseteq [n]$ and a map $\Gamma:I \to [m]$ such that for all $a\in I$:
    \begin{itemize}
        \item The symbol in the $a$-th position in $S$ ($s_a$) was not deleted. 
        
        \item The new position of $s_a$ in $S'$ is $\Gamma(a)$. 
    \end{itemize}
\end{definition}
\begin{lemma}[\protect{\cite[Section 2, implicit]{HS2021-survey}}]
\label{lem:hs-matching-alg}
    Let $S=s_1s_2\cdots s_n$ be an $\eta$ synchronization string and let $S' = s'_1\cdots s'_m$ be a string that is obtained from $S$ via $\delta n$ insdel errors. 
    Then, there exists a matching algorithm that produces indices $i_1< \ldots < i_t$ and $j_1 < \ldots< j_t$ such that
    \begin{itemize}
        \item $t \geq n -\delta n - 2\sqrt{\eta}n$.
        \item $s_{i_{\ell}} = s'_{j_{\ell}}$ for all $\ell \in [t]$.
        
        \item There are at most $\sqrt{\eta} n$ indices $\ell \in [t]$ such that $\Gamma(i_{\ell}) \neq j_{\ell}$.
    \end{itemize}
\end{lemma}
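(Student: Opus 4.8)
The plan is to obtain this statement as a convenient repackaging of the matching guarantee already recorded in \Cref{lem:hs-matching-org}, together with the discussion from \cite[Section 2]{HS2021-survey} recalled immediately after it.

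First I would run the Haeupler--Shahrasbi matching algorithm of \Cref{lem:hs-matching-org} on input $S'=s'_1\cdots s'_m$ and $S=s_1\cdots s_n$. As recalled after that lemma, its output is naturally described by a monotone partial matching of $[n]$ into $[m]$: for each source position $a\in[n]$ it either declares $a$ deleted or matches $a$ to a received position $b(a)$, with $b$ order preserving on the set of matched positions. Now discard every matched pair $(a,b(a))$ for which $s'_{b(a)}\ne s_a$; since $S'$ is obtained from $S$ by insdels only, a pair matching a surviving source position to its true target is automatically kept, so this filtering only removes pairs the algorithm itself produced with inconsistent values and cannot hurt the bounds below. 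Listing the surviving matched positions as $i_1<\cdots<i_t$ and setting $j_\ell=b(i_\ell)$ gives $i_1<\cdots<i_t$ in $[n]$, $j_1<\cdots<j_t$ in $[m]$ (monotone because $b$ is), and $s_{i_\ell}=s'_{j_\ell}$ for all $\ell$. This is the second bullet, and the running time $O_\eta(n^2)$ carries over verbatim from \Cref{lem:hs-matching-org}.

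It remains to read off the two counting bullets from the quality guarantee. Let $I\subseteq[n]$ be the set of non-deleted positions and $\Gamma:I\to[m]$ the true correspondence induced by the $\delta n$ insdels, as in the definition preceding the lemma; since at most $\delta n$ insdel operations are applied, $|I|\ge n-\delta n$. The guarantee recalled after \Cref{lem:hs-matching-org} --- ``all but $O(\sqrt\eta\, n)$ of the non-deleted symbols are placed correctly'', equivalently $e+2t_{\mathrm{sub}}\le \delta n + O(\sqrt\eta\, n)$ where $e$ counts surviving positions the algorithm mislabels as deleted and $t_{\mathrm{sub}}$ counts surviving positions matched to a wrong (or inserted) target --- says that at least $|I|-O(\sqrt\eta\, n)$ positions $a\in I$ are matched to $\Gamma(a)$; each such $a$ survives the value filter (because then $s'_{b(a)}=s'_{\Gamma(a)}=s_a$) and contributes a pair to the matching above. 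Hence $t\ge |I|-O(\sqrt\eta\, n)\ge n-\delta n-O(\sqrt\eta\, n)$, which is the first bullet. Moreover, a pair $(i_\ell,j_\ell)$ with $\Gamma(i_\ell)\ne j_\ell$ is, by definition, one of the $O(\sqrt\eta\, n)$ positions the algorithm mislabels or mismatches, so there are at most $O(\sqrt\eta\, n)$ such $\ell$: the third bullet. Replacing the $O(\sqrt\eta\, n)$ terms by the stated $2\sqrt\eta\, n$ and $\sqrt\eta\, n$ is a matter of tracking the absolute constant in the Haeupler--Shahrasbi analysis (and, if needed, replacing $\eta$ by a fixed constant multiple of it, which by \Cref{thm:sync-string-alphabet-size} only affects the synchronization-string alphabet by a constant factor and is harmless in our application).

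The only genuine work here is this bookkeeping: carefully distinguishing the three ways the algorithm can err on a surviving source position --- calling it deleted, matching it to an inserted symbol, or matching it to a different surviving symbol --- and checking that the value filter and the order preservation of the algorithm's output are genuinely maintained, so that the clean statements $t\ge n-\delta n-2\sqrt\eta\, n$ and ``at most $\sqrt\eta\, n$ wrong guesses'' drop out of the single bound $e+2t_{\mathrm{sub}}\le\delta n+O(\sqrt\eta\, n)$. No idea beyond \cite{HS2021-survey} is needed; the lemma is exactly their guarantee re-expressed in the ``true indexing matching'' language introduced above.
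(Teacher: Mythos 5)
The paper does not actually prove this lemma: it states it with the citation ``\cite[Section 2, implicit]{HS2021-survey}'' and uses it as a black box, so there is nothing to compare your argument against. Your reconstruction---run the Haeupler--Shahrasbi matching algorithm of \Cref{lem:hs-matching-org}, discard matched pairs whose synchronization symbols disagree, and read both counting bullets off the ``all but $O(\sqrt\eta\,n)$ surviving positions are placed correctly'' guarantee---is the natural way to extract \Cref{lem:hs-matching-alg} from what the paper records, and the bookkeeping is sound: correctly placed surviving positions pass the value filter (giving $t\ge n-\delta n-O(\sqrt\eta\,n)$), and any $\ell$ with $\Gamma(i_\ell)\ne j_\ell$ is precisely one of the $O(\sqrt\eta\,n)$ misplaced surviving positions.

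Two small points worth flagging explicitly. First, \Cref{lem:hs-matching-org} as stated only says the algorithm ``guesses the position of all received symbols'' and the surrounding text only records the $e+2t\le \delta n+12\sqrt\eta\,n$ erasure/substitution reformulation; that the underlying output is a monotone partial matching (so that $j_1<\cdots<j_t$) is true of the HS algorithm but is not written down in the paper --- you are importing it from \cite{HS2021-survey}, which is fine but worth noting since it is precisely what makes the $j$-sequence increasing. Second, you correctly observe that the literal constants $2\sqrt\eta\,n$ and $\sqrt\eta\,n$ do not drop out of the $O(\sqrt\eta\,n)$ and $12\sqrt\eta\,n$ bounds the paper itself quotes; your remedy (rescale $\eta$, absorbing the constant into the synchronization-string alphabet via \Cref{thm:sync-string-alphabet-size}) is exactly how the discrepancy is harmless downstream, since the only place $\eta$ enters the rest of the analysis is through $\sqrt\eta$ terms that are themselves given generous slack. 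In short: there is no proof in the paper to match, and your derivation fills that gap correctly, with the constant issue honestly identified rather than swept under the rug.
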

\paragraph{Good codes correcting substitutions.} 
For our scheme we will need to use an outer code over a large (but constant) alphabet that can correct substitutions. 
In \cite{BLS20}, the authors adapted the efficient binary codes given in \cite{guruswami2005linear} into efficient codes over any alphabet that is a power of $2$. They used these codes to construct codes over \emph{large} (but constant) alphabets that are trace reconstructible.
\begin{proposition} [\protect{\cite[Proposition 2.17]{BLS20}}]\label{prop:bls-outer-code}
    For every $\varepsilon$ and $\Sigma$ whose size is a power of $2$, there exists an infinite
    family of codes over $\Sigma$ of rate $1 - \varepsilon$ encodable in linear time and decodable in linear time from up
    to a fraction $\frac{1}{40} \varepsilon^3$ of worst-case substitution errors.
\end{proposition}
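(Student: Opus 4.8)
The plan is to deduce the claimed codes over $\Sigma$ from the linear-time binary codes of~\cite{guruswami2005linear} by a routine alphabet-lifting (``bit-bundling'') argument, so that essentially all of the heavy lifting is already done in~\cite{guruswami2005linear}. First I would record the binary ingredient I need: for every $\gamma>0$ there is an infinite family of binary codes of rate at least $1-\gamma$, encodable and decodable in time linear in the block length, that corrects every pattern of at most a $\delta(\gamma)$-fraction of worst-case bit flips, where $\delta(\gamma)>0$ is a fixed polynomial in $\gamma$; for the bookkeeping below it is enough to know that $\delta(\gamma)\ge c\gamma^{3}$ for some absolute constant $c>0$ (a better polynomial, if available, only helps).

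Next I would lift the alphabet. Write $k=\log_2|\Sigma|\in\N$ (an integer, since $|\Sigma|$ is a power of two) and identify $\Sigma$ with $\zo^{k}$. Restrict the binary family to block lengths divisible by $k$ (or, equivalently, pad each binary codeword with at most $k-1$ fixed zeros, which is trivially reversible and changes the rate by only an $o(1)$ additive term); then split each length-$n$ binary codeword into $n/k$ consecutive $k$-bit blocks and read each block as one symbol of $\Sigma$. This produces a code $\cC_\Sigma$ over $\Sigma$ whose rate, measured in $\Sigma$-symbols, equals the rate of the binary code measured in bits. The key point is that the error model translates cleanly: altering a single $\Sigma$-symbol of a codeword of $\cC_\Sigma$ flips at most $k$ bits of the underlying binary string, so a pattern of at most a $\rho$-fraction of symbol substitutions on $\cC_\Sigma$ corresponds to at most a $\rho$-fraction of bit flips (in bundle-aligned positions, hence in particular an admissible bit-error pattern) on the binary codeword. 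Decoding $\cC_\Sigma$ therefore proceeds by unbundling to bits, running the linear-time binary (unique) decoder, and re-bundling; since bundling and unbundling are linear time and $k=O(1)$ for a fixed family, $\cC_\Sigma$ is encodable and decodable in linear time and corrects a $\delta(\gamma)$-fraction of worst-case substitutions.

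Finally I would fix $\gamma$ as a function of $\varepsilon$. For $\varepsilon\ge 1$ the statement is vacuous, so assume $\varepsilon\in(0,1)$. Taking $\gamma=\varepsilon$ already gives rate $1-\varepsilon$ and correctable fraction $\delta(\varepsilon)\ge c\varepsilon^{3}$, which is at least $\tfrac1{40}\varepsilon^{3}$ once the constant $c$ is tracked through~\cite{guruswami2005linear} (this is presumably exactly how the bound $\tfrac1{40}\varepsilon^{3}$ in the statement arises); if the binary construction supplies a larger polynomial $\delta(\gamma)$ one has only more slack and can, if desired, take $\gamma$ slightly below $\varepsilon$. Either way one obtains an infinite family over $\Sigma$ of rate $1-\varepsilon$, linear-time encodable and decodable, correcting a $\tfrac1{40}\varepsilon^{3}$-fraction of worst-case substitutions, as required.

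The only real obstacle is the first step: one must extract from~\cite{guruswami2005linear} precisely the quoted binary statement — \emph{unique} (not list) decoding, from a \emph{constant} fraction of \emph{worst-case bit errors} (not merely erasures) that is a genuine polynomial in the rate gap $\gamma$, with both encoding and decoding in linear time, and over a set of block lengths dense enough to survive the ``divisible by $k$'' restriction. The lifting step itself loses nothing in the relevant sense: one symbol error contributes at most $k$ bit errors with no adversarial amplification, so the fraction of correctable errors and the rate carry over verbatim, and the $\varepsilon^{3}$ in the statement is a feature of the binary construction rather than of the reduction.
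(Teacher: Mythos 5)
The paper cites this result directly from \cite{BLS20} and gives no proof of its own; its only commentary is that \cite{BLS20} ``adapted the efficient binary codes given in \cite{guruswami2005linear} into efficient codes over any alphabet that is a power of $2$,'' which is exactly your bit-bundling lift, and your bookkeeping is sound (bundling $k=\log_2|\Sigma|$ bits per symbol preserves the rate, a $\rho$-fraction of symbol substitutions induces at most a $\rho$-fraction of bit flips since each symbol touches $k$ bits out of $n$, and bundling/unbundling is linear time). The obstacle you flag at the end is the real one: the constant $\frac{1}{40}$, the cubic dependence on $\varepsilon$, and the set of achievable block lengths all come from pinning down the precise linear-time binary unique-decoding statement in \cite{guruswami2005linear}, which is what \cite{BLS20} actually supplies and what the present paper simply delegates to that citation rather than reproving.
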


\begin{remark}
    \em
    We note that in \cite{BLS20}, for their binary codes, the authors used a construction of Justesen \cite{justesen1972class} that gives rate $1- \varepsilon$ codes correcting $\Theta(\varepsilon^2/\log(\varepsilon^{-1}))$ for \emph{all} large enough block length (see \cite[Proposition 2.17]{BLS20}). The caveat is that the run time of the encoder and decoder is quadratic. 
    In this paper, we focus on constructing an infinite family of capacity achieving codes, and thus we choose to use \Cref{prop:bls-outer-code}.
\end{remark}
\newcommand{\Bigbuf}{\frac{\nu}{16(1-d(M))} \cdot n_R}
\newcommand{\Smallbuf}{\floor{10^3 \cdot \beta \cdot \log n_R}}
\newcommand{\Thrbigbuf}{(1 - d(M))B/2}
\newcommand{\nout}{n_{\textnormal{out}}}
\newcommand{\nR}{\mu^{-1} \cdot \varepsilon^{-1} \cdot \log(1/\varepsilon)}
\subsection{Construction}
\label{sec:bls-like-const}
\paragraph{Parameters}
We start by introducing the components of our construction which include an outer code that can correct from substitutions and two inner codes that we obtain from \Cref{cor:dense-code-RL-dep}. We shall introduce parameters throughout the description of the construction. To aid readability, these are also listed in \Cref{table:params-multi-trace}. 
We note here that there was no attempt to optimize the parameters. Let $\varepsilon$ be the desired gap to capacity.

\begin{table}
\begin{center}
\begin{tabular}{ |c|c|c| } 
    \hline
    Parameter & Value & Description/Comments\\
    \hline
    $\varepsilon$ &  & Gap to capacity \\ 
    \hline
    $T$ & Constant & Number of traces \\
    \hline
    $\nout$ & $\rightarrow \infty$ & Length of $\Cout$ \\
    \hline
    $\delout$ & $\varepsilon^3/40$ & Substitution correction capability of $\Cout$ \\
    \hline
    $n_R$ & Large enough constant & Length of the inner code $C_R$ \\
    \hline
    $n_S$ & $\log n_R$ & Length of the inner code $C_S$ \\
    \hline
    $\mathcal{R}$ & Constant& Rate of the inner code $C_R$ \\
    \hline
    $\mu$ & Constant & Channel parameter \\
    \hline
    $M$ & Constant & Channel parameter \\
    \hline
    $\nu$ & $\varepsilon \cdot \mu$ & small constant for buffer size \\
    \hline
    $B$ & $\Bigbuf$ & Buffer size \\
    \hline
    $\xi$ & $\varepsilon^{4}/T$ & Number of bad pairs (see \Cref{def:good-pair-trace-align} below)\\
    \hline
    $\varepsilon_R, \varepsilon_S$ & $\varepsilon^{4}/T$ & Decoding error probability of $C_R$, $C_S$, respectively\\
    \hline
    $\zeta$ & $(1 - d(M))\nu/4$ & For \Cref{cor:dense-code-RL-dep} \\
    \hline
    $\gamma$ & $\gamma(\varepsilon_R, \zeta)$ & Given by \Cref{cor:dense-code-RL-dep} \\
    \hline
    $\eta$ & $\varepsilon^8/T$ & Synchronization parameter \\
    \hline

\end{tabular}
\end{center}
\caption{Parameters of the scheme. All parameters are constant with respect $\nout$.}
\label{table:params-multi-trace}
\end{table} 
\paragraph{Outer code.} Let $\Cout$ be a code of length $\nout$ and rate $1 - \varepsilon/4$ over the alphabet $\Sigma_R$ that can correct $\delout = \varepsilon^3/40$ fraction of worst-case substitution errors. This code exists by \Cref{prop:bls-outer-code}. 

\paragraph{Inner codes.} As in \cite{BLS20}, we shall encode the data and the synchronization symbols separately using two inner codes. Set $\zeta = (1 - d(M))\nu/4$ and $\varepsilon_R=\varepsilon_S = \varepsilon^4/T$. 
\begin{enumerate}
    \item \textbf{Code for the synchronization symbols.} Let $\eta = \varepsilon^8/T$ and let $S = s_1\cdots s_{\nout}$ be an $\eta$-synchronization string of length $\nout$ over the alphabet $\Sigma_S$ where, by \Cref{thm:sync-string-alphabet-size}, we have that $|\Sigma_S| = O(T^2/\varepsilon^{16})$. However, we clearly can use a larger alphabet. 
    Let $C_S$ be a code for the $\bdcRLB$ (this is the single trace version, i.e., for $T = 1$) channel guaranteed by \Cref{cor:dense-code-RL-dep}  where $\mathcal{R}(C_S)$ denotes its rate ($\mathcal{R}(C_S)$ can be as close to capacity as we want). $C_S$ is equipped with an encoder $\Enc_S: \Sigma_S \to \zo^{n_S}$ and a decoder $\Dec_S:\zo^* \to \Sigma_S$ such that for every codeword $c\in C_S$ the probability that $\Dec_S$ fails to decode $c$ after transmission through the $\bdcRLB$ is at most $\varepsilon_S$. We also assume that $C_S$ starts with a $1$ bit and ends with a $0$ bit.
    We shall assume that
    $\log|\Sigma_S| = (\mathcal{R}(C_S))^{-1}\cdot \log n_R$ where $\mathcal{R}(C_S)$ where $n_R$ is a large enough constant to be determined at the end. 
    
    By \Cref{cor:dense-code-RL-dep}, the length of $C_S$ which is denoted by $n_S$ is at least $n(\varepsilon_S, \zeta)$ for some function $n$.

    \item \textbf{Code for the content symbols.} For the content symbols, i.e., the symbols of the outer code $\Cout$, we will use again the code guaranteed by \Cref{cor:dense-code-RL-dep}, but now for $T$ traces. Namely, we get a code $C_R$ of length $n_R$, with rate $\mathcal{R}(C_R) = \textup{Cap}(\bdcMTRLB) - \varepsilon_R$ that is $(T,\varepsilon_R)$-trace reconstructable for the $\bdcRLB$ channel. 
    $C_R$ is equipped with an (inefficient) encoder $\Enc_R: \Sigma_R \to \zo^{n_R}$ where $\log|\Sigma_R| = (\mathcal{R}(C_R))^{-1}\cdot n_R$ and an (inefficient) decoder $\Dec_R:\zo^* \to \Sigma_R$ such that for every codeword $c\in C_R$, the probability that $\Dec_R$ fails to decode $c$ after transmission through the $\bdcMTRLB$ is at most $\varepsilon_R$. We assume that $C_R$ starts with a $0$ bit and ends with a $1$ bit.

    Note again that by \Cref{cor:dense-code-RL-dep}, $n_R$ is at least $n(\varepsilon_R, \zeta)$ for some function $n$.
\end{enumerate}

Since $n_R$ and $n_S$ can be as large as we want, starting from some $n(\varepsilon_S, \zeta)$, we make sure that $n_R = 2^{n_S}$. Let $\gamma$ is the constant obtained by \Cref{cor:dense-code-RL-dep} when given the parameters $\varepsilon_S$ and $\zeta$ as declared here.  

\paragraph{Encoding.} 
 
Let $\nu = \varepsilon \cdot \mu$ and set $B = \Bigbuf$. By choosing a large enough $n_R$, we make sure that $B>M$ where recall that $M$ is a channel parameter (for every run of length at least $M$, the deletion probability applied on each bit of this run is $d(M)$).

The encoding process is given next.
\begin{enumerate}
    \item Let $m\in \Sigma_R^{k_{\text{out}}}$ be a message. Encode it using the outer code $\Cout$ to obtain $(r_1, \ldots, r_{\nout}) \in \Sigma_R^{\nout}$.
    \item Encode each $r_i$ and $s_i$ using the respective inner codes. Namely, let $a_i = \Enc_R (r_i)$ and $b_i = \Enc_S(s_i)$. 
    \item Add buffers of the symbol $0$ of length $B$ between every $a_i$ and $b_i$ and add buffers of the symbol $1$ length $B$ between $b_i$ and $a_i$.
    The codeword has the following form
    \[
    c = a_1 \circ 0^{B} \circ b_1 \circ 1^{B} \circ a_2 \circ 1^{B} \circ b_2 \circ \cdots \circ a_{\nout} \circ 0^{B} \circ b_{\nout} \circ 1^{B} \;. 
    \]
\end{enumerate}

\paragraph{Rate.} The length of the codeword is $(n_R + 2B + n_S) \cdot \nout$. Thus, the rate is
\begin{align}
    \mathcal{R} &= \frac{\log |\Sigma_R|^{(1 - \varepsilon/4)\nout}}{(n_R + 2B + n_S) \cdot \nout}\\
    &= \frac{(1-\varepsilon/4)\cdot \mathcal{R} (C_R) \cdot n_R}{n_R + 2B + n_S} \nonumber \\
    &=\frac{(1 - \varepsilon/4) \cdot \mathcal{R}(C_R)}{1 + \frac{2\nu}{16(1 - d(M))} + \frac{\log n_R}{n_R}} \nonumber \\
    &\geq \frac{(1-\varepsilon/4)\cdot \mathcal{R}(C_R)}{1 + \varepsilon/4} \nonumber \\
    &\geq (1 - \varepsilon/2) \cdot \mathcal{R}(C_R) \label{eq:rate-multi-trace-const}\\
    &\geq \textup{Cap}(\bdcMTRLB) - \varepsilon\;,
\end{align}
where the first inequality holds since, for large enough $n_R$, we have $(\log n_R)/n_R \leq \varepsilon/8$ and also since $1 - d(M) \leq \mu$ (by the channel's definition), $\nu/(4(1 - d(M)))\leq \varepsilon/4$. The last inequality follows since $\mathcal{R}(C_R) = \textup{Cap}(\bdcMTRLB) - \varepsilon_R$ where $\varepsilon_R = \varepsilon^4/T < \varepsilon/2$.

\paragraph{Decoding.} Let $z^{(t)}$ be the received string at the $t$-th trace. 
\begin{enumerate}
    \item Trace alignment using the synchronization symbols.
    \begin{enumerate}
        \item \label{item:big-buffers} Identify $1$-buffers. Every run of zeros of length greater than $\Thrbigbuf$ is identified as a $1$-buffer.
        
        Let $z^{(t)}_1, \ldots, z^{(t)}_{n_t}$ be the strings in between the buffers.
    
        \item \label{item:small-buffers} Identify $0$-buffers. For every $i\in [n_t]$, in $z^{(t)}_i$, every run of zeros of length greater than $\Thrbigbuf$ is identified as a $0$-buffer. If there are no $0$-buffers in $z^{(t)}_i$ or there are more than a single $0$-buffer, discard this $z^{(t)}_i$. Otherwise, divide this $z^{(t)}_i$ into two parts according to the $0$-buffer that was found.
        
        Let $(x_1^{(t)}, y_1^{(t)}), \ldots, (x_{n'_t}^{(t)}, y_{n'_t}^{(t)})$ be the pairs that correspond to the $s^{(t)}_i$ containing a single $0$-buffer.
        
        \item \label{item:sync-symb-dec} For every $i\in [n'_t]$, decode the synchronization symbol from $y^{(t)}_{i}$. Namely, $\tilde{s}^{(t)}_i = \Dec_S(y^{(t)}_{i})$.
    
        \item \label{item:sync-match-alg} Run the algorithm from \Cref{lem:hs-matching-alg} on  $\tilde{s}^{(t)}_1,\ldots, \tilde{s}^{(t)}_{n'_t}$ to obtain indices $\tilde{i}^{(t)}_1,\ldots, \tilde{i}^{(t)}_{n''_t}$ and $\tilde{j}^{(t)}_1,\ldots, \tilde{j}^{(t)}_{n''_t}$.
    
        \item \label{item:set-up-trace} For every $p \in [n''_t]$, let $\tilde{a}^{(t)}_{\tilde{i}^{(t)}_{p}}:=x^{(t)}_{\tilde{j}^{(t)}_{p}}$ and for $p \in [\nout]\setminus  \{ \tilde{i}^{(t)}_1,\ldots, \tilde{i}^{(t)}_{n''_t} \}$, set $\tilde{a}^{(t)}_{p} = \perp$.
    \end{enumerate}
    \item \label{item:trace-decode} Trace reconstruction. For all $i\in [\nout]$, let $\tilde{r_i} = \Dec(\tilde{a}^{(1)}_i,\ldots, \tilde{a}^{(T)}_i)$.
    \item \label{item:outer-decoder} Outer code correction. Run $\Dec_{\text{out}}(\tilde{r}_1, \ldots, \tilde{r}_{\nout})$ and return the output.
    \end{enumerate}

\subsection{Analysis}
The following claims (Claims~\ref{clm:deleted-1-buffer}-\ref{clm:spu-1-buffers}) bound the probabilities of \emph{bad} events that are related to the correct identification of the buffers.
The first claim bounds the probability that a $1$-buffer is not detected in Step~(\ref{item:big-buffers}). 
\begin{claim} \label{clm:deleted-1-buffer}
    The probability that a $1$-buffer is not identified is at most $\exp(-\Omega_{\varepsilon,\mu}(n_R))$.
\end{claim}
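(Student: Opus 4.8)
The plan is to mirror the argument in \Cref{clm:plw-like-deleted-buffer} from the single-trace warmup, adapting it to the new buffer lengths and detection thresholds. A $1$-buffer is a run of $B = \Bigbuf$ zeros inserted between some $b_i$ and the following $a_{i+1}$. It fails to be identified in Step~(\ref{item:big-buffers}) precisely when, after passing through the channel, fewer than $\Thrbigbuf = (1-d(M))B/2$ of its zeros survive. First I would observe that since $B > M$ (guaranteed by choosing $n_R$ large enough), every bit of the buffer lies in a run of length $\geq M$, so each is deleted independently with probability exactly $d(M)$. Hence the number $Z$ of surviving bits is distributed as $\Bin(B, 1-d(M))$, with mean $\mu_Z = (1-d(M))B$.

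Next I would apply the multiplicative Chernoff bound (\Cref{lem:chernoff}) with $\alpha = 1/2$:
\begin{equation*}
    \Pr\left[Z < \frac{(1-d(M))B}{2}\right] = \Pr\left[Z < \left(1 - \tfrac12\right)\mu_Z\right] < \exp\left(-\frac{\mu_Z}{8}\right) = \exp\left(-\frac{(1-d(M))B}{8}\right).
\end{equation*}
Since $1 - d(M) > \mu$ and $B = \frac{\nu}{16(1-d(M))} n_R$ with $\nu = \varepsilon\mu$, the exponent equals $\frac{\nu n_R}{128} = \frac{\varepsilon\mu\, n_R}{128} = \Omega_{\varepsilon,\mu}(n_R)$, which gives the claimed bound. (One should double-check that the definition of ``$1$-buffer detection'' uses the zero-run-length threshold $\Thrbigbuf$ applied to a run of zeros; the macro names suggest the role of $0$s and $1$s in the buffers is as written in the construction, and the detection rule in Step~(\ref{item:big-buffers}) is stated in terms of runs of zeros, so the computation goes through verbatim.)

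There is essentially no serious obstacle here; this is a routine concentration estimate. The only point requiring a moment's care is the reduction to a clean binomial: one must invoke $B > M$ so that the deletion probability on the buffer is the constant $d(M)$ rather than a mixture of $d(\ell)$'s, and one must confirm that the surviving bits of the buffer are not disturbed by adjacent blocks — but since $b_i$ ends with a $0$ bit by construction and the buffer that precedes $a_{i+1}$ is a run of $0$s, while $a_{i+1}$ starts with a $0$ bit as well, I would note that the run containing the buffer zeros may actually be slightly longer than $B$ (absorbing the trailing/leading zero bits of neighbouring inner codewords), which only \emph{helps} detection; to be safe one lower-bounds the surviving count by that of the length-$B$ sub-run treated as an independent $\Bin(B,1-d(M))$, which is valid since deletions act independently bit-by-bit. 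This completes the argument.
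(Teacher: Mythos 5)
Your proof is correct and follows essentially the same route as the paper's: identify the buffer as a run of $B > M$ identical bits so that each bit is deleted i.i.d.\ with probability $d(M)$, apply the multiplicative Chernoff bound with $\alpha = 1/2$ to the $\Bin(B, 1-d(M))$ count of surviving bits, and simplify $(1-d(M))B/8 = \nu n_R/128 = \Omega_{\varepsilon,\mu}(n_R)$. One small correction: a $1$-buffer is the run $1^B$ placed between $b_i$ and $a_{i+1}$ (the text of Step~(\ref{item:big-buffers}) contains a typo saying ``run of zeros''); since $b_i$ ends in a $0$ and $a_{i+1}$ begins with a $0$ by construction, this run is \emph{exactly} length $B$, so your concern about absorption of neighbouring bits does not arise — but your conservative lower bound by the $\Bin(B,1-d(M))$ sub-run is harmless and the calculation is unchanged.
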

\begin{proof}
    The length of a buffer is $B=\Bigbuf$. 
    In order for the decoding algorithm to consider it as a run in an inner codeword, the channel must delete at least $\Thrbigbuf$ bits from it. 
    Since every bit in the buffer is deleted independently with probability $d(M)$ (recall that $B> M$), the expected length of a buffer after going through the channel is $(1 - d(M)) B$. 
    By applying the Chernoff bound, we get that the probability that less than $(1 - d(M))B/2$ bits of buffer survived the transmission is at most 
    \[
    \exp(-(1 - d(M))B/8) \leq \exp(-\nu n_R/128) \leq \exp(-\Omega_{\varepsilon,\mu}(n_R))\;.
    \]
\end{proof}
We are now interested in bounding the probability that a $0$-buffer is not identified. 
\begin{claim} \label{clm:deleted-0-buffer}
    Let $z_i^{(t)}$ be a string obtained after Step~(\ref{item:big-buffers}). Assume that $z_i^{(t)}$ does not contain $1$s that belong to a $1$-buffer that was not identified by the algorithm and that the buffers that were identified before and after $z_i^{(t)}$ are indeed two adjacent $1$-buffers. 
    Then, the probability that the $0$-buffer inside $z_i^{(t)}$ not identified is at most $\exp(-\Omega_{\varepsilon,\mu}(n_R))$.
\end{claim}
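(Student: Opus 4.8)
The plan is to run the same argument as in the proof of \Cref{clm:deleted-1-buffer}: the $0$-buffer separating some $a_j$ from $b_j$ is, inside a correctly extracted block $z_i^{(t)}$, structurally identical to a $1$-buffer of the global codeword, and the only way it escapes detection is if the channel deletes too many of its bits. First I would use the hypotheses on $z_i^{(t)}$ to reduce to this clean picture: since $z_i^{(t)}$ contains no $1$s from an unidentified $1$-buffer and sits strictly between two genuine, consecutive $1$-buffers, it is (up to the deletions performed by the channel) a trace of a substring $a_j \circ 0^{B} \circ b_j$ of the transmitted codeword $c$. In $c$ the block-separating run of zeros has length exactly $B$, as it is delimited by the terminal $1$ of $a_j$ and the leading $1$ of $b_j$; and since $n_R$ is chosen so that $B > M$, the channel $\bdcRLB$ deletes each of these $B$ bits independently with probability $d(M)$.

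Then I would let $Z$ be the number of these $B$ buffer bits that survive transmission, so $Z \sim \Bin\bigl(B,\,1-d(M)\bigr)$ with $\E[Z] = (1-d(M))B$. Because deleting zeros from a run of zeros leaves a (shorter) run of zeros, the surviving buffer bits still appear in $z_i^{(t)}$ as a single contiguous run of zeros of length at least $Z$ — it can only be longer, should the delimiting $1$s themselves be deleted so that the run merges with neighbouring zeros of $a_j$ or $b_j$, which does not hurt us. Consequently the rule in Step~(\ref{item:small-buffers}) fails to flag this $0$-buffer only if $Z < \Thrbigbuf$. Since $\Thrbigbuf = (1-d(M))B/2 = \tfrac{1}{2}\E[Z]$, the multiplicative Chernoff bound (\Cref{lem:chernoff}, with $\alpha = 1/2$) gives
\[
\Pr\bigl[Z < \Thrbigbuf\bigr] < \exp\!\left(-\frac{(1-d(M))B}{8}\right) = \exp\!\left(-\frac{\nu\, n_R}{128}\right),
\]
where the last equality substitutes $B = \Bigbuf$; recalling $\nu = \varepsilon\mu$, this is $\exp(-\Omega_{\varepsilon,\mu}(n_R))$, the claimed bound.

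I do not anticipate a real obstacle here: the computation is the same Chernoff estimate as in \Cref{clm:deleted-1-buffer}, and the runlength-dependent nature of the channel is used in exactly the same place (a run of length $B \ge M$ suffers independent deletions at rate $d(M)$). The one point to state carefully is the reduction in the first paragraph — one must verify that the two conditioning hypotheses on $z_i^{(t)}$ genuinely guarantee that the $0^{B}$ we are trying to recover is an intact run of the transmitted codeword rather than an artifact of a misidentified $1$-buffer, so that the independent-deletion description is legitimate; granting that, the rest is routine.
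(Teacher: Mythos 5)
Your proof is correct and is exactly the argument the paper intends: the paper's proof of this claim simply states that it is identical to the proof of \Cref{clm:deleted-1-buffer}, which is the Chernoff estimate on $Z \sim \Bin(B, 1-d(M))$ falling below $\Thrbigbuf$ that you reproduce. Your additional remarks — that the $0^B$ is an intact run of length exactly $B$ (delimited by the terminal $1$ of $a_j$ and the leading $1$ of $b_j$), that $B > M$ justifies the independent deletion rate $d(M)$, and that merging with adjacent zeros of $a_j$ or $b_j$ can only lengthen the surviving run — correctly fill in the details the paper leaves implicit.
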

\begin{proof}
    The proof is identical to the proof of \Cref{clm:deleted-1-buffer}.
\end{proof}

Our next goal is to bound the probability of detecting ``spurious'' $0$-buffers inside a $z_i^{(t)}$. That is, we upper bound the probability that the channel deletes many consecutive $1$s in such a way that a long run of $0$s is created where all of its bits belong to an inner codeword and the algorithm mistakenly thinks that it is a $0$-buffer.
\begin{claim} \label{clm:spurious-0-buf}
    Let $z^{(t)}_i$ be a string obtained after Step~(\ref{item:big-buffers}). 
    Assume that $z_i^{(t)}$ does not contain $1$s that belong to a $1$-buffer that was not identified by the algorithm and the buffers that were identified before and after $z_i^{(t)}$ are indeed two adjacent buffers. 
    Then, the probability that the algorithm identifies a $0$-buffer such that all of the bits of that buffer are bits of an inner codeword is at most $\Omega_{\varepsilon, \mu, \gamma}(n_R))$. \footnote{Note that if some of the bits belong to a $0$-buffer, then the $0$s that originally belonged to the inner codeword are merged to the $0$-buffer. Thus, a spurious $0$ buffer occurs only when all its $0$ bits belong to an inner codeword.}
\end{claim}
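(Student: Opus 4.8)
The plan is to mirror the proof of \Cref{clm:plw-like-spurious-buffer} from the single-trace section, adapting it to the $0$-buffer detection inside a $z_i^{(t)}$ block. The key observation is that $z_i^{(t)}$ is (under the stated assumptions) a trace of exactly one inner structure of the form $a_j \circ 0^{B} \circ b_j$ or $b_j \circ 1^{B} \circ a_{j+1}$ sandwiched between two genuine $1$-buffers, with the channel acting as the $\bdcRLB$ on the underlying bits. A spurious $0$-buffer with all bits coming from an inner codeword means the channel deleted \emph{every} $1$ in some window of length at least $\Thrbigbuf$ that lies entirely inside one of the inner codewords ($a_j$, $b_j$, or $a_{j+1}$). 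So the first step is to fix such a window of length $\zeta n_R = (1-d(M))\nu n_R/4$ inside an inner codeword, and invoke \Cref{cor:dense-code-RL-dep} (with the parameters $\varepsilon_R$ or $\varepsilon_S$ and $\zeta$ as set in the construction, noting that both $C_R$ and $C_S$ come from that corollary) to conclude that this window contains at least $\gamma \zeta n_R$ ones.

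Next, I would bound the probability that all $\gamma\zeta n_R$ of these ones are deleted. Let $a_i$ ($i\in[M-1]$) count the ones in the window lying in runs of length exactly $i$, and let $a_M$ count the ones lying in runs of length $\geq M$, so $\sum_i a_i \geq \gamma\zeta n_R$. Since deletions within each run are i.i.d., the probability all these ones vanish is $\prod_{i=1}^{M} d(i)^{a_i}$, and by the weighted AM--GM inequality this is at most $\left(\frac{\sum_i a_i d(i)}{\sum_i a_i}\right)^{\sum_i a_i} \leq d(M)^{\gamma\zeta n_R} \leq (1-\mu)^{\gamma\zeta n_R} = \exp(-\Omega_{\varepsilon,\mu,\gamma}(n_R))$, using monotonicity of $d$ and $d(M)\leq 1-\mu$. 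Then I would union-bound over all $O(n_R)$ possible starting positions of such a window within the (at most three) inner codewords contained in $z_i^{(t)}$, which only changes the constant inside the $\Omega$, giving the claimed $\exp(-\Omega_{\varepsilon,\mu,\gamma}(n_R))$ bound.

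One point to be careful about: a genuine spurious $0$-buffer of length at least $\Thrbigbuf = (1-d(M))B/2$ requires a surviving run of zeros that long, but the window of length $\zeta n_R$ I fix is only guaranteed to have \emph{all its ones} deleted — I should make sure the arithmetic lines up, i.e. that a window whose ones are entirely deleted does produce a zero-run long enough to be flagged, or else simply fix the window to have length slightly larger than $\Thrbigbuf$ (which is fine since $\Thrbigbuf = \Theta_{\varepsilon,\mu}(n_R)$ and $\zeta n_R = \Theta_{\varepsilon,\mu}(n_R)$, so choosing $\zeta$ appropriately, exactly as done in the construction where $\zeta = (1-d(M))\nu/4$ and $B = \frac{\nu}{16(1-d(M))} n_R$, indeed gives $\zeta n_R \geq \Thrbigbuf$). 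The zeros originally in the window only help, since they merge into the created run.

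I expect the main (minor) obstacle to be bookkeeping rather than anything conceptual: precisely matching the window length to the buffer-detection threshold $\Thrbigbuf$ given the specific parameter choices in \Cref{table:params-multi-trace}, and carefully stating the ``conditioning'' so that the AM--GM step is legitimate (i.e., that within a fixed inner codeword the run structure is deterministic and only the deletions are random, so the probability factorizes over runs). Everything else is a direct transcription of the single-trace argument, and indeed the proof can likely be written as ``the proof is essentially identical to that of \Cref{clm:plw-like-spurious-buffer}'' with the window length and parameters substituted, followed by the union bound over the $O(n_R)$ window positions.
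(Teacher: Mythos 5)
Your proposal is correct and follows essentially the same route as the paper's proof: fix a window of length $\zeta n_R$ inside the inner codeword, use the density guarantee of \Cref{cor:dense-code-RL-dep} to find $\gamma\zeta n_R$ ones there, bound the probability that all of them are deleted via the weighted AM--GM inequality together with the monotonicity of $d$ and $d(M)\le 1-\mu$, and union-bound over the $O(n_R)$ window positions. (Two minor remarks that do not affect the argument: between two adjacent genuine $1$-buffers the content is always $a_j\circ 0^B\circ b_j$, never $b_j\circ 1^B\circ a_{j+1}$; and the paper handles the $b_j$ part of the block simply by noting its length is $O(\log n_R)$, below the detection threshold, so the spurious run can only arise inside $\tilde a_j$.)
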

\begin{proof}
    The computation is similar to the one done in the previous section, yet a bit more delicate. Since the $1$-buffers surrounding $z_i^{(t)}$ were identified and correspond to adjacent $1$-buffers in the original codeword, $z_i^{(t)}$ contains bits that correspond to $a_{i'}\circ 0^B \circ b_{i'}$ for some $i'$.
    We write $z_i^{(t)}$ as $
    z_i^{(t)} = \tilde{a}_{i'}\circ 00\cdots0 \circ \tilde{b}_{i'}$
    where all the bits of $\tilde{a}_{i'}, \tilde{b}_{i'}$ correspond to $a_{i'},b_{i'}$, respectively, and the run of zeros in between contains $0$ bits that correspond to the $0$-buffer in the original codeword.
    
    In order to create a run of zeros (in which all the zeros correspond to inner codewords) of length at least $\Thrbigbuf$, the channel must delete all $1$s from at least an interval of this size. Note here that the length of $\tilde{b}_{i'}$ is $\leq \log n_R$ which is less than $\Thrbigbuf$, the threshold from which a buffer is considered (for large enough $n_R$). Thus, a spurious $0$-buffer can be created only inside $\tilde{a}_{i'}$.

    By the properties of our inner codes and by recalling that $\zeta = (1 - d(M))\nu/4$, any interval of length $\zeta n_R = (1 - d(M))B/4$ must contain at least $\gamma \zeta n_R=\gamma(1 - d(M))B/4$ ones.
    Let $a_i, i\in [M-1]$ be the number of ones in this interval that belong to runs of length $i$ and denote by $a_M$ the number of bits that belong to runs of length $\geq M$.
    Deleting all these ones happens with probability
    \begin{align*}
    d(1)^{a_1}d(2)^{a_2} \cdots d(M)^{a_M} &\leq \left( \frac{4\cdot \sum_{i=1}^{M}a_i d(i)}{\gamma (1 - d(M))B} \right)^{\frac{\gamma (1 - d(M))B}{4}} \\
    &\leq (1- \mu)^{\gamma\nu n_R/8} \\
    &\leq \exp(-\Omega_{\varepsilon, \mu, \gamma}(n_R))
    \end{align*}
    where the first inequality follows from the weighted AM-GM inequality and the second inequality is due to the assumption that $d(1) \leq d(2) \leq \cdots \leq d(M)\leq 1- \mu$ and that $\sum_{i=1}^M a_i = \gamma (1 - d(M))B/4$. 

    Now, union bounding over all such intervals, the probability that there exists a spurious buffer is at most $n_R \cdot \exp(-\Omega_{\varepsilon, \mu, \gamma}(n_R)) = \exp(-\Omega_{\varepsilon, \mu, \gamma}(n_R))$.
\end{proof}
The final claim regarding the buffers bounds the probability that a spurious $1$-buffer is created inside the output of channel in an encoded pair $(a_i,b_i)$.
\begin{claim} \label{clm:spu-1-buffers}
    Assume that $a_i\circ 0^B \circ b_i$ is transmitted through the channel and let $\hat{z}$ be the received output. 
    Then, the probability that a spurious $1$-buffer is identified in $\hat{z}$ is at most $\exp(-\Omega_{\varepsilon, \mu, \gamma}(n_R))$.
    Moreover, the expected number of spurious $1$-buffers in $\hat{z}$ is at most $\exp(-\Omega_{\varepsilon, \mu, \gamma}(n_R))$ and the maximal number of spurious buffers in $\hat{z}$ is at most $32\nu^{-1}$.
\end{claim}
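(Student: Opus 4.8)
The plan is to transpose the proof of \Cref{clm:spurious-0-buf}, with the roles of $0$ and $1$ interchanged, after reducing the appearance of a long run of $1$s in $\hat z$ to the deletion of all zeros of a long window lying inside the content codeword $a_i$ (or of the whole buffer $0^B$). The first observation is that the transmitted block $a_i\circ 0^B\circ b_i$ contains no bit of an original $1$-buffer: $a_i$ begins with $0$, $b_i$ ends with $0$, and the $1$-buffers of the full codeword flank this block from the outside. Hence every run of $1$s that the decoder declares a $1$-buffer in $\hat z$ is automatically spurious, and it suffices to bound the probability that $\hat z$ contains \emph{any} run of $1$s of length more than $\Thrbigbuf$.

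The core of the argument is the following reduction. Suppose $\hat z$ contains a run $R$ of $1$s of length $>\Thrbigbuf$, and let $I=[p,q]$ be the shortest interval of the transmitted string $a_i\circ 0^B\circ b_i$ whose surviving positions contain all preimages of the bits of $R$. Then the surviving positions of $I$ are exactly the $>\Thrbigbuf$ ones forming $R$; in particular every $0$-position of $I$ was deleted and $|I|>\Thrbigbuf$. I would distinguish two cases. If $I$ meets the run $0^B$, then since $c_p=c_q=1$ we must have $0^B\subseteq I$, so all $B$ bits of $0^B$ were deleted; as each is deleted with probability $d(M)\le 1-\mu$ (recall $B\ge M$), this event has probability at most $(1-\mu)^B=\exp(-\Omega_{\varepsilon,\mu}(n_R))$. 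Otherwise $I$ is disjoint from $0^B$, hence contained in $a_i$ or in $b_i$; since $|b_i|=n_S=\log n_R<\Thrbigbuf<|I|$ for large $n_R$, we get $I\subseteq a_i$, and since we may assume $\zeta n_R\le\Thrbigbuf$ (if not, shrink $\zeta$, which is permissible as \Cref{cor:dense-code-RL-dep} holds for every $\zeta>0$), the interval $I$ contains a length-$\zeta n_R$ window $W$ lying entirely inside the content codeword $a_i$, all of whose $0$-bits were deleted. For this last event I would quote the estimate from \Cref{clm:spurious-0-buf} verbatim: by the density guarantee of \Cref{cor:dense-code-RL-dep}, $W$ has Hamming weight at most $(1-\gamma)\zeta n_R$, hence at least $\gamma\zeta n_R$ zeros; grouping these zeros by the length $\ell$ of the run containing them (capping $\ell$ at $M$) and using $d(\ell)\le d(M)\le 1-\mu$, the weighted AM--GM inequality shows the probability that all of them are deleted is at most $(1-\mu)^{\gamma\zeta n_R}$. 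A union bound over the at most $n_R$ positions of $W$ inside $a_i$, together with the single ``all of $0^B$ deleted'' event, gives
\[
\Pr[\,\hat z \text{ has a spurious }1\text{-buffer}\,]\le (1-\mu)^{B}+n_R(1-\mu)^{\gamma\zeta n_R}=\exp\!\big(-\Omega_{\varepsilon,\mu,\gamma}(n_R)\big),
\]
which is the first assertion.

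For the remaining two assertions: since the channel only deletes, $|\hat z|\le n_R+B+n_S$, while a spurious $1$-buffer is a run of $1$s of length more than $\Thrbigbuf$, and distinct such runs are disjoint; a direct length count therefore yields the claimed bound $32\nu^{-1}$ on the number of spurious $1$-buffers in $\hat z$ once $n_R$ is large. The expected-number bound then follows immediately by combining the two: the count of spurious $1$-buffers is always at most $32\nu^{-1}$ and is positive only on the event just bounded, so its expectation is at most $32\nu^{-1}\cdot\exp(-\Omega_{\varepsilon,\mu,\gamma}(n_R))=\exp(-\Omega_{\varepsilon,\mu,\gamma}(n_R))$.

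The only delicate point I anticipate is the reduction in the second paragraph: arguing that a run of $1$s long enough to be mistaken for a $1$-buffer must force \emph{all} zeros of a full length-$\zeta n_R$ window inside a single content codeword to be deleted, or else force the whole $0^B$ buffer to vanish. This is exactly where one exploits that $b_i$ is exponentially shorter than the buffer $B$ (so it contributes a negligible number of $1$s to any long run) and that the density-window length $\zeta n_R$ has been set below the $1$-buffer detection threshold $\Thrbigbuf$. Once the reduction is in place, everything downstream is the same AM--GM-plus-union-bound computation already carried out in \Cref{clm:spurious-0-buf}.
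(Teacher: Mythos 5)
Your proof is correct and is essentially the detailed version of the paper's one-line deferral ``follows by the same argument as in \cref{clm:spurious-0-buf}.'' The case split you introduce (the preimage interval $I$ either lies entirely inside $a_i$, or must contain the whole buffer $0^B$, which then must be fully deleted with probability at most $(1-\mu)^B$) is a genuinely needed extra observation here: unlike the spurious-$0$-buffer situation, a long run of $1$s in $\hat z$ can in principle merge $1$s from $a_i$ and $b_i$ across the deleted $0^B$, so the density argument alone does not cover everything. Your ruling out of $I\subseteq b_i$ via $n_S=\log n_R\ll\Thrbigbuf$ and your derivation of the expected-value bound from the max count times the probability bound are both fine.

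Two small corrections. First, the max count: dividing $|\hat z|\le n_R+B+n_S$ by $\Thrbigbuf$ gives $\tfrac{32}{\nu}\bigl(1+\tfrac{B}{n_R}+\tfrac{n_S}{n_R}\bigr)$, and the $B/n_R=\tfrac{\nu}{16(1-d(M))}$ term is a fixed positive constant, so ``once $n_R$ is large'' does not bring this down to $32\nu^{-1}$. You should instead count only the $1$s of $\hat z$, which all originate from $a_i$ and $b_i$ (the run $0^B$ contributes none), giving a total of at most $n_R+n_S$; dividing by $\Thrbigbuf=(1-d(M))B/2=\nu n_R/32$ then yields the stated $32\nu^{-1}$ up to a vanishing $n_S/n_R$ correction, matching the paper's $\tfrac{2n_R}{(1-d(M))B}$. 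Second, you do not actually need the ``shrink $\zeta$'' fallback: the scheme's parameters are chosen so that $\zeta n_R$ is (half of) $\Thrbigbuf$, precisely so that every spurious buffer contains a full density window, as used implicitly in the proof of \cref{clm:spurious-0-buf}.
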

\begin{proof}
    The claim about the probability to have a spurious $1$-buffer in $\hat{z}$ (and the bound on the expected number of spurious $1$-buffers) follows by the same argument as in \Cref{clm:spurious-0-buf}. The size of a buffer is at least $(1 - d(M))B/2$. Thus, the number of spurious buffers is at most 
    $\frac{2n_R}{(1 - d(M))B}\leq \frac{32}{\nu}$.
\end{proof}

Next, we define a pair $(x^{(t)}_i, y^{(t)}_i)$ (these pairs are obtained in Step~\ref{item:small-buffers}) to be a \emph{good pair} if there exists a $j\in [\nout]$ such that all the bits of $x^{(t)}_i$ belong to $a_{j}$ and all the bits of $y^{(t)}_i$ belong to $b_{j}$. 
Equivalently, this means that the $1$-buffers that are to the left and to the right of $a_{j} \circ 0^B\circ b_{j}$ were identified (and there were no spurious $1$-buffers inside) and the $0$-buffer between $a_j$ and $b_j$ was identified (and there were no spurious $0$-buffers). 
\begin{definition} \label{def:good-pair-trace-align}
    Let $z^{(t)}$ be the $t$-th trace and let $(x_1^{(t)}, y_1^{(t)}), \ldots, (x_{n'_t}^{(t)}, y_{n'_t}^{(t)})$ be the output of Step (\ref{item:small-buffers}). 
    We call a pair $(x_i^{(t)}, y_i^{(t)})$ a \emph{good pair} if there exists $j\in [\nout]$ such that all the bits of $x_i^{(t)}$ are bits of $a_j$ and all the bits of $y_i^{(t)}$ are bits of $b_j$.
\end{definition}

Our next proposition states that, with high probability, after Step~(\ref{item:small-buffers}), there are at least $(1 - \xi)\cdot \nout$ good pairs. Formally, we have the following.
\begin{proposition} \label{prop:number-good-pairs}
    Let $z^{(t)}$ be the $t$-th trace. Let $(x_1^{(t)}, y_1^{(t)}), \ldots, (x_{n'_t}^{(t)}, y_{n'_t}^{(t)})$ be the output of Step (\ref{item:small-buffers}). 
    Then, with probability $1 - \exp(-\Omega(\nout))$, at least $\nout(1 - \xi)$ of them are good pairs. 
\end{proposition}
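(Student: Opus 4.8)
The plan is to reduce the proposition to a concentration estimate over the $\nout$ outer positions. For each $j\in[\nout]$, call the portion $a_j\circ 0^B\circ b_j$ of the transmitted codeword together with the two $1$-buffers flanking it the \emph{$j$-th block}. Say block $j$ is \emph{bad} (on trace $t$) if at least one of the following occurs: the left flanking $1$-buffer is not identified in Step~(\ref{item:big-buffers}); the right flanking $1$-buffer is not identified; a spurious $1$-buffer is identified inside the channel output of the block; the $0$-buffer between $a_j$ and $b_j$ is not identified; or a spurious $0$-buffer is identified inside the channel output of $a_j$. The first step is to check that if block $j$ is \emph{not} bad, then the algorithm produces a good pair from it: the output of $a_j\circ 0^B\circ b_j$ is then isolated as one of the strings $z_i^{(t)}$, it contains exactly one $0$-buffer, and splitting at that $0$-buffer gives a pair $(x_i^{(t)},y_i^{(t)})$ all of whose bits of $x_i^{(t)}$ come from $a_j$ and all of whose bits of $y_i^{(t)}$ come from $b_j$ (possibly a few bits of $a_j$ or $b_j$ are absorbed into the discarded buffers, which is allowed by \Cref{def:good-pair-trace-align}). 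Hence the number of good pairs is at least $\nout$ minus the number of bad blocks, so it suffices to show that with probability $1-\exp(-\Omega(\nout))$ there are at most $\xi\nout$ bad blocks.

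Next I would bound the probability that a fixed block $j$ is bad. By a union bound over its five constituent events, together with \Cref{clm:deleted-1-buffer}, \Cref{clm:deleted-0-buffer}, \Cref{clm:spurious-0-buf} and \Cref{clm:spu-1-buffers} (the conditioning on ``flanking $1$-buffers identified, no spurious $1$-buffer'' in the last three claims costs nothing, since the complement of that conditioning is already charged to the first three events), this probability is at most $O(1)\cdot\exp(-\Omega_{\varepsilon,\mu,\gamma}(n_R))$. Since $\varepsilon,T,\mu,\gamma$ are constants and $\xi=\varepsilon^4/T$ is a positive constant, I would now fix $n_R$ to be a large enough constant so that $\Pr[\text{block }j\text{ is bad}]\le p$ for a constant $p$ with, say, $p\le \xi/4$; this is consistent with $n_R$ being a ``large enough constant'' as in \Cref{table:params-multi-trace}.

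The last point is that the bad-block indicators are not mutually independent. Here I would use that the event ``block $j$ is bad'' depends only on the channel's action on the bits lying between (and including) the $1$-buffer preceding $a_j$ and the $1$-buffer following $b_j$; because adjacent blocks share exactly one $1$-buffer, the window of block $j$ is disjoint from the window of block $j'$ whenever $|j-j'|\ge 2$, and the blocks are ``sealed'' by the fact that the buffer bit-values differ from the boundary bits of the neighbouring inner codewords, so run-lengths (and hence the channel's behaviour) within a window are determined by that window alone. I would therefore partition $[\nout]$ into even and odd indices; within each class the bad-block indicators are mutually independent. On each class, the number of bad blocks is a sum of at most $\nout$ independent Bernoulli variables of mean $\le p\le \xi/4$, so by the multiplicative Chernoff bound (\Cref{lem:chernoff}) it exceeds $\xi\nout/2$ with probability at most $\exp(-\Omega(\nout))$. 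A union bound over the two classes shows that the total number of bad blocks is at most $\xi\nout$ except with probability $\exp(-\Omega(\nout))$, which is what we need.

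I expect the one part that needs genuine care is this last bounded-dependence argument: identifying precisely which channel coin flips the event ``block $j$ is bad'' depends on, so that the even/odd partition really does decouple the indicators (and checking the boundary blocks $j=1,\nout$, which only affects $O(1)$ positions and is harmless). The per-block probability estimates themselves are immediate from the claims already proved, and the reduction from ``few bad blocks'' to ``many good pairs'' is direct from the definitions.
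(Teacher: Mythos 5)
Your proof is correct, but it takes a genuinely different route from the paper's. The paper decomposes by \emph{error type}: it first bounds, across the whole string, the number of deleted genuine $1$-buffers (Chernoff) and the number of spurious $1$-buffers (Hoeffding, exploiting the bounded per-block count $32\nu^{-1}$ from \Cref{clm:spu-1-buffers}), concludes that at least $(1-5\xi/9)\nout$ of the strings $z_i^{(t)}$ are ``good $z_i^{(t)}$'s'' sandwiched between two adjacent genuine $1$-buffers, and only then applies a further Chernoff bound to the (conditionally independent) events that each good $z_i^{(t)}$ has a correctly-identified $0$-buffer and no spurious $0$-buffer. You instead decompose by \emph{position}: you define a per-block bad event as the union of five sub-events, bound its probability by a union bound over the four claims (handling the structural conditioning in \Cref{clm:deleted-0-buffer,clm:spurious-0-buf} by charging its failure to the $1$-buffer events, which is the standard and correct way), and then deal with the lack of mutual independence via an even/odd decoupling of blocks whose windows are separated by a full intervening block. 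Both arguments work, and each trades one delicacy for another: the paper must justify that the $0$-buffer events within distinct good $z_i^{(t)}$'s are independent after conditioning (which it does, somewhat tersely, by appealing to ``the behavior of the channel''), while you must justify that block windows are channel-action-disjoint when $|j-j'|\ge 2$ and that the run structure is sealed at window boundaries by the $1$-buffers — a point you state and which does hold here because $a_j$ starts with $0$ and ends with $1$ while $b_j$ starts with $1$ and ends with $0$. Two cosmetic points: (i) \Cref{clm:spu-1-buffers} is stated unconditionally (about a single transmitted $a_i\circ 0^B\circ b_i$), so it is not one of the ``conditioned'' claims you flag; only \Cref{clm:deleted-0-buffer,clm:spurious-0-buf} carry the conditioning. (ii) With per-block bad probability $p\le\xi/4$ the per-class expectation is $\le\xi\nout/8$, so reaching $\xi\nout/2$ corresponds to $\alpha=3$ in the multiplicative Chernoff bound; you should either invoke the $\alpha\ge 1$ form of that bound or lower the target threshold per class to, say, $\xi\nout/4$ (which still suffices). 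The paper has the same feature in the analogous step, so this is a shared minor technicality rather than a gap specific to your argument.
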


\begin{proof}
    Let $z^{(t)}_1, \ldots, z^{(t)}_{n_t}$ be the strings obtained in Step~\ref{item:big-buffers} after identifying $1$-buffers.
    We start by counting how many $z^{(t)}_i$s are such that the $1$-buffers to the right and to the left of the $z^{(t)}_i$ are two adjacent \emph{genuine} $1$-buffers (by \emph{genuine} we mean that bits of the identified buffer are indeed bits of a $1$-buffer and not bits of run in an inner codeword). We shall call such a $z^{(t)}_i$ a \emph{good} $z^{(t)}_i$.

    By \Cref{clm:deleted-1-buffer}, the probability that a specific $1$-buffer is deleted is at most $\exp(-\Omega_{\varepsilon,\mu}(n_R)) < \xi/10$ where the inequality holds for large enough $n_R$ (recall that $\xi = \varepsilon^4/T$ and that $n_R$ will be a constant that will be chosen in the end). Thus, the probability that more than $\xi \nout/9$ genuine $1$-buffers are not identified is at most $\exp(-\Omega(\nout))$.
    By \Cref{clm:spu-1-buffers}, the expected number of spurious $1$-buffers between two adjacent genuine $1$-buffers is at most $\exp(-\Omega_{\nu, \mu, \gamma} (n_R) ) < \xi/10$ and the maximal number of spurious buffers in the channel output on $a_j\circ 0^B \circ b_j$ is at most $32\nu^{-1}$.
    Thus, we can apply Hoeffding's bound (\Cref{lem:hoeff}), and get that the probability of having more than $\xi \nout/9$ spurious $1$-buffers in the whole received string is at most $\exp(-\Omega(\nout))$.
    Thus, with probability $1 - \exp(-\Omega(\nout))$ at least $\nout (1 - \xi /9)$ genuine $1$-buffers were identified and at most $\xi \nout/9$ spurious buffers were identified. Consequently, it is easy to observe that there must be at least $\nout (1 - 5\xi/9)$ $z^{(t)}_i$s that are good $z^{(t)}_i$s. Indeed, every deleted genuine buffer in the worst case transforms two potential good $z_i$s into one bad $z_i$, and every spurious $1$-buffer transforms one potential good $z_i$ into two bad $z_i$s.

    We now turn to look inside the good $z^{(t)}_i$s. A good $z^{(t)}_i$ will be transformed into a good pair (see \Cref{def:good-pair-trace-align}) if the $0$-buffer is correctly identified and there are no spurious $0$-buffers inside. 
    According to \Cref{clm:deleted-0-buffer}, the $0$-buffer is deleted with probability $\exp(-\Omega_{\varepsilon,\mu}(n_R)) < \xi/10$, and according to \Cref{clm:spurious-0-buf} the probability to have a spurious $0$-buffer in $z^{(t)}_i$ is at most $\exp(-\Omega_{\varepsilon,\mu,\gamma}(n_R)) < \xi/10$. Thus, with probability at least $1 - \xi/5$, a good $z^{(t)}_i$ is transformed into a good pair $(x_i^{(t)}, y_i^{(t)})$. 
    Note that by the behavior of the channel, the event that $z_i^{(t)}$ is transformed into a good pair is independent of all other $z_i^{(t)}$.
    Thus, applying a Chernoff bound we get that with probability $1 - \exp(-\Omega(\nout))$, there are at least $(1 - \xi/4)\cdot \nout(1 - 5\xi/9) \geq \nout(1 - \xi)$ good pairs.
    \end{proof}
    \begin{remark}
        \em
        We emphasize that in our construction $n_R$ is a constant (with respect to $\nout$) that can be made as large as we wish. 
        Specifically, we need $n_R$ to be such that all the inequalities that involve it in the proof of \Cref{prop:number-good-pairs} will hold. Also, recall from the construction that $n_R$ is also at least $2^{n(\varepsilon_S,\zeta)}$ for some function $n$.
        This implies that there exists $n_{R,0} = n_{R,0}(\varepsilon, \varepsilon_S, \mu,\gamma,\zeta)$ such that for every $n_R > n_{R,0}$, all the inequalities are true.
    \end{remark}
Our next claim shows that, with high probability, the synchronization symbol is successfully decoded in almost all of the good pairs.
\begin{claim} \label{clm:correct-sync-symb}
    With probability at least $1 - \exp(-\Omega(\nout))$, there are at least $(1- 2\varepsilon_S) \cdot \nout(1 - \xi)$ good pairs $(x_i^{(t)}, y_i^{(t)})$ for which $y_i^{(t)}$ was correctly decoded.
\end{claim}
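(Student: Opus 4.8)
The plan is to condition on the high-probability event of \Cref{prop:number-good-pairs} that most pairs are good, to identify the synchronization string carved out of each good pair with the output of a trimming channel that $C_S$ provably handles, and then to bound the inner-decoding failures, which turn out to be independent, by a concentration argument.

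First I would invoke \Cref{prop:number-good-pairs}: except with probability $\exp(-\Omega(\nout))$, at least $\nout(1-\xi)$ of the pairs $(x_i^{(t)},y_i^{(t)})$ produced in Step~(\ref{item:small-buffers}) are good pairs; call this event $\mathcal{E}$. For $j\in[\nout]$, let $\hat y_j$ be the output of the single-trace channel $\bdcRLB$ on the substring $b_j=\Enc_S(s_j)$ --- this is well defined because $b_j$ starts with a $1$, ends with a $0$, and is flanked by $0^B$ on the left and $1^B$ on the right, so the run structure of $b_j$ inside the full codeword coincides with its run structure in isolation --- and let $y^\star_j$ be $\hat y_j$ with its leading run of $0$s and its trailing run of $1$s deleted. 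The key structural step is to verify that whenever the good pair $(x_i^{(t)},y_i^{(t)})$ arises from the block $a_j\circ 0^B\circ b_j$ (so that it corresponds to the outer index $j$), one has $y_i^{(t)}=y^\star_j$. Indeed, $b_j$ has length $n_S=\log n_R$, far below the buffer threshold $\Thrbigbuf$, so no internal run of $b_j$ can be mistaken for a buffer, and since the pair is good no spurious buffer is identified inside it; unwinding Steps~(\ref{item:big-buffers}) and~(\ref{item:small-buffers}), the $0$-buffer identified immediately to the left of $b_j$ absorbs precisely the leading run of $0$s of $\hat y_j$ (possibly empty), and the $1$-buffer identified immediately to the right absorbs precisely the trailing run of $1$s of $\hat y_j$ (possibly empty). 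Finally, since by \Cref{cor:dense-code-RL-dep} with $T=1$ the code $C_S$ is a code for the $01$-trimming single-trace $\bdcRLB$ whose decoder $\Dec_S$ has failure probability at most $\varepsilon_S$, and since $y^\star_j$ is exactly such a trimmed channel output of the codeword $b_j$, we have $\Pr[\Dec_S(y^\star_j)\neq s_j]\le\varepsilon_S$ for every $j\in[\nout]$.

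Next I would pass to the deterministic domination
\[
\#\big\{\,i:\ (x_i^{(t)},y_i^{(t)})\text{ is a good pair and }\Dec_S(y_i^{(t)})\neq s_j\,\big\}\ \le\ N_{\mathrm{bad}}:=\sum_{j=1}^{\nout}\mathds{1}\big[\Dec_S(y^\star_j)\neq s_j\big],
\]
which holds because, by the previous paragraph, the synchronization component of a good pair is exactly the corresponding $y^\star_j$. The crucial feature is that the summands of $N_{\mathrm{bad}}$ are mutually independent: $b_1,\dots,b_{\nout}$ are pairwise disjoint substrings of the transmitted codeword, each flanked by buffers of the opposite symbol, so they share no runs, and hence the run-length-dependent deletions acting on them --- and therefore the strings $y^\star_j$ --- are independent; in particular $N_{\mathrm{bad}}$ is a sum of $\nout$ independent $\{0,1\}$-valued random variables with $\E[N_{\mathrm{bad}}]\le\varepsilon_S\nout$, and no delicate conditioning on which pairs turn out to be good is required here. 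Hoeffding's inequality (\Cref{lem:hoeff}) then yields $\Pr[N_{\mathrm{bad}}\ge 2\varepsilon_S\,\nout(1-\xi)]\le\exp(-\Omega(\nout))$, using that $\varepsilon_S$ and $\xi=\varepsilon_S$ are small constants, so that the gap between $2\varepsilon_S\nout(1-\xi)$ and $\E[N_{\mathrm{bad}}]\le\varepsilon_S\nout$ is $\Omega(\varepsilon_S\nout)$. On $\mathcal{E}$ and off this bad event, the number of good pairs whose synchronization symbol is decoded correctly is at least $\nout(1-\xi)-N_{\mathrm{bad}}\ge(1-2\varepsilon_S)\nout(1-\xi)$, and a union bound over the two failure events completes the argument.

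The step I expect to be the main obstacle is the structural identification $y_i^{(t)}=y^\star_j$ for good pairs --- confirming that the decoder's two buffer-identification passes carve $b_j$ out cleanly, leaving no buffer bits attached and removing no part of $b_j$ beyond its leading $0$-run and trailing $1$-run. Making this fully rigorous amounts to carefully tracking Steps~(\ref{item:big-buffers}) and~(\ref{item:small-buffers}) using the parameter relations ($n_S\ll B$, the alternation of $0$- and $1$-buffers, and the fixed boundary symbols of $a_j$ and $b_j$); conceptually it is the two-buffer analogue of the reasoning behind \Cref{clm:plw-like-wrong-inner-dec} together with the spurious-buffer bounds \Cref{clm:spurious-0-buf} and \Cref{clm:spu-1-buffers} of the single-trace construction.
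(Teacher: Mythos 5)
Your proposal follows essentially the same route as the paper's (quite terse) proof: identify $y_i^{(t)}$ for a good pair with the $01$-trimmed single-trace channel output of $b_j$, invoke the decoding guarantee of $C_S$, and apply a concentration bound over the independent decoding-failure events (you use Hoeffding, the paper cites Chernoff, but the argument is the same). The extra care you take --- in particular, the structural verification that the carved-out $y_i^{(t)}$ equals $y^\star_j$, and the unconditional domination by $N_{\mathrm{bad}}$, which cleanly sidesteps any dependence between the event of being a good pair and the event of decoding correctly --- is a legitimate expansion of what the paper leaves implicit.
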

\begin{proof}
    $y_i^{(t)}$ is the output of a $01$-trimming $\bdcRLB$ on some input $x$. According to the inner code that we chose for the synchronization symbols, we have the assumption that the probability of decoding error is at most $\varepsilon_S$. Thus, by the independentness induced by the channel, we can apply a Chernoff bound (\Cref{lem:chernoff}) and get the claimed result.
\end{proof}

Now, we prove the correctness of our decoding algorithm.
\begin{proposition} \label{prop:multi-trace-failure-prob}
    Fix the parameters as described in \Cref{sec:bls-like-const} and \Cref{table:params-multi-trace}. Let $m\in \Sigma_{\textup{out}}^{(1 - \varepsilon/4)\nout}$ be a message that is encoded to a codeword $c$ using our encoding algorithm. Let $z^{(1)},\ldots, z^{(T)}$ be $T$ traces of $c$ under the channel $\bdcRLB$.
    Then, with probability $1 - \exp(-\Omega(\nout))$, the decoding algorithm on input $z^{(1)},\dots,z^{(T)}$ outputs the correct message $m$. 
\end{proposition}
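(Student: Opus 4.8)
The plan is to expose a constant number of high-probability ``good'' events and to argue that, on their intersection, the string $(\tilde r_1,\dots,\tilde r_{\nout})$ fed to the outer decoder in Step~\ref{item:outer-decoder} is within Hamming distance $\delout\nout$ of the true outer codeword $(r_1,\dots,r_{\nout})$, so that the decoder of $\Cout$ (\Cref{prop:bls-outer-code}) returns the correct message $m$. The good events are, for each trace $t\in[T]$: the conclusion of \Cref{prop:number-good-pairs} (at least $(1-\xi)\nout$ good pairs), the conclusion of \Cref{clm:correct-sync-symb} (at least $(1-2\varepsilon_S)(1-\xi)\nout$ good pairs whose synchronization symbol is decoded correctly), and the deleted- and spurious-buffer bounds proved in \Cref{clm:deleted-1-buffer,clm:deleted-0-buffer,clm:spurious-0-buf,clm:spu-1-buffers} together with the Hoeffding step used inside the proof of \Cref{prop:number-good-pairs}; plus one further event controlling the trace-reconstruction step, defined in the last paragraph. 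Each of these fails with probability $\exp(-\Omega(\nout))$, so, $T$ being a constant, a union bound keeps the overall failure probability $\exp(-\Omega(\nout))$. As in the single-trace case, we may assume $\varepsilon$ is smaller than a constant depending on $T$, since a code within $\varepsilon'$ of capacity for any $\varepsilon'\le\varepsilon$ is also within $\varepsilon$.

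First I would account for the synchronization layer. Fix a trace $t$. Every good pair $(x_i^{(t)},y_i^{(t)})$ arises from a unique block index $\rho(i)\in[\nout]$ — the one with $x_i^{(t)}$ a (trimmed) trace of $a_{\rho(i)}$ and $y_i^{(t)}$ a (trimmed) trace of $b_{\rho(i)}$ — and, since good pairs occur in the order of the blocks, $\rho$ is increasing on the set of good pairs. Hence the sequence $\tilde s_1^{(t)},\dots,\tilde s_{n'_t}^{(t)}$ produced in Step~\ref{item:sync-symb-dec} agrees with the synchronization string $S=s_1\cdots s_{\nout}$ along a subsequence that omits at most $O((\varepsilon_S+\xi)\nout)$ positions of $S$ — those of blocks with no good pair (at most $\xi\nout$, by \Cref{prop:number-good-pairs}) or with a mis-decoded synchronization symbol (at most $O(\varepsilon_S\nout)$, by \Cref{clm:correct-sync-symb}) — and carries at most $O(\xi\nout)$ extra entries from bad pairs, bounded via the spurious-buffer estimates of \Cref{clm:spurious-0-buf,clm:spu-1-buffers}. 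Therefore $\ed(\tilde s^{(t)},S)\le O((\varepsilon_S+\xi)\nout)$, and \Cref{lem:hs-matching-alg} with synchronization parameter $\eta$ guarantees that Steps~\ref{item:sync-match-alg}--\ref{item:set-up-trace} place the content of all but $O((\varepsilon_S+\xi+\sqrt\eta)\nout)$ blocks into the correct slot, i.e.\ for all but that many $p$ we get $\tilde a^{(t)}_p=x_i^{(t)}$ with $i$ the good pair of block $p$. Summing this error count over the $T$ traces and inserting $\xi=\varepsilon_S=\varepsilon^4/T$ and $\eta=\varepsilon^8/T$ yields a set $P\subseteq[\nout]$ with $|P|\ge(1-O(\sqrt T\,\varepsilon^4))\nout\ge(1-\delout/2)\nout$ such that for every $p\in P$ and every $t$, $\tilde a^{(t)}_p$ equals the trace of $a_p$ that would be extracted from the $t$-th channel use.

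Finally I would handle the trace-reconstruction step, and here lies the main obstacle: membership of $p$ in $P$ depends on the channel's action inside the block of $a_p$ (through the requirement that no spurious $0$-buffer is created there), so conditioning on $p\in P$ distorts the joint law of the extracted traces of $a_p$. To decouple this, define unconditionally $A^{(t)}_p$ as the deterministic function of $Z_t(a_p)$ that the extraction procedure would output on a good pair of block $p$; then $(A^{(1)}_p,\dots,A^{(T)}_p)$ has exactly the law of a $T$-trace output of $a_p$ through the relevant trimming channel, so by \Cref{cor:dense-code-RL-dep} the event $D_p := \{\Dec(A^{(1)}_p,\dots,A^{(T)}_p)\ne r_p\}$ has probability at most $\varepsilon_R$, and the $D_p$ are independent over $p\in[\nout]$ because the channel acts independently on $a_1,\dots,a_{\nout}$. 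A Chernoff bound gives $\#\{p:D_p\}\le 2\varepsilon_R\nout$ with probability $1-\exp(-\Omega(\nout))$; this is the last good event. On the intersection of all good events, every $p\in P$ with $\neg D_p$ satisfies $\tilde r_p=\Dec(\tilde a^{(1)}_p,\dots,\tilde a^{(T)}_p)=\Dec(A^{(1)}_p,\dots,A^{(T)}_p)=r_p$, hence $\#\{p:\tilde r_p\ne r_p\}\le(\nout-|P|)+\#\{p:D_p\}\le\delout\nout/2+2\varepsilon_R\nout\le\delout\nout$, using $\varepsilon_R=\varepsilon^4/T$, $\delout=\varepsilon^3/40$ and $\varepsilon$ small; \Cref{prop:bls-outer-code} then outputs $m$. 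Besides the decoupling just described, the step that most needs care is turning the ``true indexing matching'' guarantee of \Cref{lem:hs-matching-alg} into the statement ``the content of block $p$ is placed in slot $p$'', which requires writing $\tilde s^{(t)}$ as an explicit edit of $S$ whose surviving alignment is the ground-truth correspondence $\rho$; this is routine synchronization-string bookkeeping. I would also record at the outset the order in which parameters are fixed ($\varepsilon$ first, then $\delout,\nu,\xi,\varepsilon_R,\varepsilon_S,\eta,\zeta,\gamma$, then $n_R$ large enough for all the constant-sized inequalities above and for $n_R\ge 2^{n(\varepsilon_S,\zeta)}$, then $\nout\to\infty$), to confirm there is no circular dependency, exactly as in the remark following \Cref{prop:number-good-pairs}.
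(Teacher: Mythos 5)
Your proof is correct and follows the paper's argument essentially step by step: the same edit-distance bookkeeping for $\tilde s^{(t)}$ versus $S$ via the buffer claims (\Cref{prop:number-good-pairs}, \Cref{clm:correct-sync-symb}), the same invocation of \Cref{lem:hs-matching-alg} to bound the number of misaligned block indices per trace, the same union bound over the $T$ traces, the same parameter arithmetic with $\xi=\varepsilon_R=\varepsilon_S=\varepsilon^4/T$, $\eta=\varepsilon^8/T$, $\delout=\varepsilon^3/40$, and the same final concentration plus \Cref{prop:bls-outer-code}. The one place where you are more careful than the paper is the last concentration step: the paper applies Chernoff to the aligned indices using the unconditional inner decoding failure probability $\varepsilon_R$, leaving implicit the fact that whether $p$ is aligned is correlated with the channel's action inside block $a_p$ (via the no-spurious-buffer requirement); your decoupling---defining $A^{(t)}_p$ as the unconditional $10$-trimming of $Z_t(a_p)$, observing that for $p\in P$ one has $\tilde a^{(t)}_p=A^{(t)}_p$, and applying Chernoff to the genuinely independent events $D_p$ before intersecting with $P$---removes the conditioning cleanly and is the right way to write that step.
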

\begin{proof}
    For a fixed trace, say the $t$-th trace, by \Cref{prop:number-good-pairs} and \Cref{clm:correct-sync-symb}, with probability at least $1 - \exp(-\Omega(\nout))$ there are at least $(1 - 2\varepsilon_S)(1 - \xi) \nout $ of pairs $(x_i,y_i)$ for which the decoding of $y_i$ resulted in the correct synchronization symbol. Also, observe that the number of total pairs (good and bad), denoted by $n'_t$ in Step~(\ref{item:small-buffers}) can be upper bounded by $n'_t \leq n_t \leq (1 + 2\xi/9)\nout$. Indeed, $n_t$ is the number of $z_i$s in Step~(\ref{item:big-buffers}), and clearly in Step~(\ref{item:small-buffers}) this number can only decrease. Furthermore, note that $n_t - \nout \leq 2\cdot \#\textup{spurious buffers}$, and recall that by the proof of \Cref{prop:number-good-pairs}, with probability $1- \exp(-\Omega(\nout))$ the number of spurious buffers is at most $\xi \nout/9$.
    
    This implies that the symbols $\tilde{s}_1^{(t)}, \ldots, \tilde{s}_{n'_t}^{(t)}$ obtained in Step~(\ref{item:sync-symb-dec}) can be obtained from $S=s_1 \cdots s_n$ by performing at most $(2\varepsilon_S + \xi) \nout$ deletions and at most \[
    n'_t - (1- 2\varepsilon_S - \xi) \nout 
    \leq (1 + 2\xi/9)\nout - (1- 2\varepsilon_S + \xi) \nout  \leq (2\varepsilon_S +2\xi)\nout
    \] 
    insertions. In total, there are at most $(4\varepsilon_S + 3\xi) \nout$ insdel errors between $S$ and $\tilde{s}_1^{(t)}, \ldots, \tilde{s}_{n'_t}^{(t)}$.
    By \Cref{lem:hs-matching-alg}, the number of correctly matched synchronization symbols is at least $\nout (1 - 4\varepsilon_S - 3\xi - 3\sqrt{\eta})$. 
    Thus, after Step~(\ref{item:set-up-trace}), we have $(\tilde{a}_1^{(t)}, \ldots, \tilde{a}_{\nout}^{(t)})$ where at least $\nout (1 - 4\varepsilon_S - 3\xi - 3\sqrt{\eta})$ indices $j\in [\nout]$ are such that $\tilde{a}_{j}^{(t)}$ is a trace of a $10$-trimming $\bdcRLB$ channel applied to $a_{j}$. In this case, we say that the \emph{trace-alignment procedure succeeded} and it happens with probability $1 - \exp(-\Omega(\nout))$.
    Union bounding over all $T$ traces, we have that with probability at least $1 - T \cdot \exp(-\Omega(\nout)) = 1 - \exp(-\Omega(\nout))$, the trace-alignment procedure succeeds for all traces. 

    Now, given that the trace-alignment procedure succeeded for all traces, we turn to count for how many indices $j \in [\nout]$ the vector $(\tilde{a}^{(1)}_{j}, \ldots, \tilde{a}^{(T)}_{j})$ represents indeed $T$ traces of $a_{j}$. This happens if for all traces the pair $(a_{j},b_{j})$, after being encoded and transmitted through the channel complies with the following conditions for the $t$-th trace, for all $t\in\{1,\dots,T\}$: 
    \begin{itemize}
        \item After Step~(\ref{item:big-buffers}), there exists an index $i$ such that \emph{all} the bits in $z^{(t)}_i$ correspond to $a_{j} \circ 0^{B} \circ b_{j}$.
        \item After Step~(\ref{item:small-buffers}), the pair $(x^{(t)}_i, y^{(t)}_i)$ is a good pair. In particular, all the bits of $x^{(t)}_i$ correspond to $a_{j}$ and all the bits of $y^{(t)}_i$ correspond to $b_{j}$.
        \item After Step~(\ref{item:sync-symb-dec}), we have $\tilde{s}^{(t)}_i = s_{j}$, namely, the decoding of $y^{(t)}_i$ to the synchronization symbol succeeded.
        \item After Step~(\ref{item:sync-match-alg}), we get that the positioning of $\tilde{s}^{(t)}_i$ in the synchronization string $s_1 \cdots s_n$ is correct. 
    \end{itemize}

    In each trace, there are at most $(4\varepsilon_S + 3\xi + 3\sqrt{\eta}) \nout$ indices $j\in [\nout]$ for which one of the conditions does not hold. 
    Thus, in total there are at most $T \cdot (4\varepsilon_S + 3\xi + 3\sqrt{\eta}) \nout$ indices $j \in [\nout]$ for which there is a trace $t\in [T]$ such that one of the conditions above does not hold.

    Now, recall that our code $C_R$ is such that the decoding failure probability under the $10$-trimming $\bdcMTRLB$ channel is $\varepsilon_R$. Thus, the probability that there are more than $(1 - 2\varepsilon_R) \cdot \nout(1 - 4T\varepsilon_S - 3T\xi - 3T\sqrt{\eta})$ failures when trace-decoding these symbols is at most $\exp(-\Omega(\nout))$. 
    Recalling that $\xi = \varepsilon_R = \varepsilon_S = \varepsilon^4/T$ and $\eta = \varepsilon^8/T$, with probability $1 - \exp(-\Omega(n))$ we get that for at least a
    \[
    \left(1 - \frac{2\varepsilon^4}{T}\right) \cdot \left( 1 - 10\varepsilon^4 \right) \geq 1 - 12\varepsilon^4 \geq 1 - \frac{\varepsilon^3}{40} \;
    \]
    fraction of indices $j \in [\nout]$ the trace-decoding procedure of $C_R$ succeeds in Step~\ref{item:trace-decode}.
    Therefore, with probability $1 - \exp(-\Omega(n))$ the Hamming distance between $(r_1, \ldots, r_n)$ (the original outer codeword) and  $\tilde{r}_1, \ldots, \tilde{r}_n$ is at most $\varepsilon^3 \nout/40$ and thus, by \Cref{prop:bls-outer-code}, the decoder of the outer code can decode it.
\end{proof}

We now prove \Cref{thm:efficient-bounded-multi-trace-rl}.
\begin{proof}[Proof of \Cref{thm:efficient-bounded-multi-trace-rl}]
    Note that the claim about the rate is given in \Cref{sec:bls-like-const}, specifically, the rate is computed in \cref{eq:rate-multi-trace-const}. 
    Further, observe that the claim about the decoding failure probability is given by \Cref{prop:multi-trace-failure-prob}. 
    Thus, we are left to justify the claim about the complexity of the encoding and decoding algorithms. 

    We start with the encoding algorithm. In the first step we run the encoding algorithm of the inner codes which runs in time $O(\nout)$. 
    Then, in the second step, each symbol in the outer codeword is encoded using the encoder of $C_R$ and every symbol of the synchronization symbol is encoded using the encoder of $C_S$. Recall that the lengths of $C_R$ is $n_R$ (and the length of $C_S$ is $\log n_R$) which is constant with respect to $\nout$.
    Therefore, the complexity of this step is $O(\nout)$. Finally, placing the buffers in the third step also takes linear time. Thus, the encoder takes $O(\nout)$ time.

    Now, we analyze the complexity of the decoding algorithm. In the trace-alignment step, Steps~\ref{item:big-buffers} and~\ref{item:small-buffers} which search for buffers are performed in linear time. In Step~\ref{item:sync-symb-dec}, we run the decoder of $C_S$ on at most $\nout$ symbols and thus this is done in $O(\nout)$ time. In Steps~\ref{item:sync-match-alg} and~\ref{item:set-up-trace} we run the matching algorithm from \Cref{lem:hs-matching-alg} which runs in time $O(\nout^2)$ and then simply reposition the output.
    Thus, the trace-alignment step takes $O(\nout^2)$ time.
    Then, Step~\ref{item:trace-decode} which decodes every trace using the decoding algorithm of $C_R$ (again, recall that $n_R$ is constant with respect to $\nout$), takes $O(\nout)$ time. Finally, the decoding of the corrupted outer codeword in Step~\ref{item:outer-decoder} takes $O(\nout)$ according to \Cref{prop:bls-outer-code}.
    We conclude that the decoder takes $O(\nout^2)$ time.
\end{proof}

\begin{remark}
    \em
    We emphasize here the order by which the parameters are set in order to make sure that there are no circular dependencies.
    First, the channel parameters are $\gamma, M, T,$ and $\mu$. 
    We let $\varepsilon$ be the desired gap to capacity we want to achieve. This sets the parameters $\varepsilon_S, \varepsilon_R, \xi, \nu, \eta,\delout$, and $\gamma$.
    Now, we choose large enough $n_R$ so that all the constraints hold (those in the \Cref{prop:number-good-pairs} and those imposed by the \Cref{cor:dense-code-RL-dep} and the construction).

    We emphasize that all the parameters mentioned are independent of $\nout$.
\end{remark}

\section{Lower bounds on the capacity of a threshold deletion channel}\label{sec:lowerbounds}

In this section, we consider a simple example of a runlength-dependent channel and compute lower bounds on its capacity. The channel is defined next.
\begin{definition}
    Let $\bdctzerod$ be a runlength-dependent deletion channel that acts as follows on every input run of length $\ell$. If $\ell<\tau$, the channel leaves this run as is. Otherwise (if $\ell \geq \tau$), every bit of this run is deleted independently with probability $d$.

\end{definition}
As discussed in \cref{sec:intro}, the motivation to define and study such a channel comes from error patterns observed in DNA-based storage systems.  
We provide two lower bounds on the capacity of $\bdctzerod$, which we plot for the special cases $\tau=2,3$. 
We emphasize here that since this channel is a runlength-dependent deletion channel and complies with \Cref{def:run-length-bounded-channel}, both lower bounds can be turned into explicit and efficient codes using \Cref{thm:efficient-bounded-rl}.

\subsection{First lower bound}

Here we closely follow the approach of Diggavi and Grossglauser \cite{diggavi2006information} and Drinea and Mitzenmacher \cite{drinea2006lower} and provide a simple lower bound on the capacity of $\bdctzerod$. We prove the following.
\begin{theorem} \label{thm:dg-like-bound}
    Let $d\in [0,1]$ and $\tau\in \mathbb{N}$ be such that $d \cdot \frac{\tau+1}{2^\tau} \leq 1/2$. The capacity of $\bdctzerod$ is at least $1 - h\left(d \cdot \frac{\tau+1}{2^\tau} \right)$.
\end{theorem}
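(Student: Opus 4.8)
The plan is to lower-bound $\textup{Cap}(\bdctzerod)$ by the information rate of a well-chosen input process. Since $\bdctzerod$ is a runlength-dependent deletion channel it is admissible (see \cref{sec:assumptions-runlength}), so by \cref{thm:cap-gen} we have $\textup{Cap}(\bdctzerod)=\ICap(\bdctzerod)\ge I(X;Z(X))$ for every input process $X$, where the last quantity is the information rate $\liminf_n I(X_1^n;Z(X_1^n))/n$. I would take $X=(X_i)_{i\in\N}$ i.i.d.\ uniform over $\bits$, write $Y=Z(X_1^n)$ and $I(X_1^n;Y)=n-H(X_1^n\mid Y)$, and aim to show $H(X_1^n\mid Y)\le n\,h(q)+o(n)$ with $q:=d\cdot\frac{\tau+1}{2^\tau}$; this is the route taken by Diggavi--Grossglauser~\cite{diggavi2006information} and Drinea--Mitzenmacher~\cite{drinea2006lower}.

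First I would compute the effective per-bit deletion probability. Fix a coordinate $i$ in the bulk and let $A$ (resp.\ $B$) be the number of coordinates immediately to the left (resp.\ right) of $i$ that equal $X_i$; then $A$ and $B$ are independent, $\Pr[A\ge a]=2^{-a}$ for $a\ge0$, the run of $X$ through coordinate $i$ has length $A+B+1$, and coordinate $i$ is deleted exactly when $A+B+1\ge\tau$ and an independent $\mathrm{Bernoulli}(d)$ coin fires. Hence coordinate $i$ is deleted with probability $d\cdot\Pr[A+B\ge\tau-1]$. Using $\Pr[A+B=k]=\sum_{j=0}^k2^{-j-1}2^{-(k-j)-1}=(k+1)2^{-k-2}$ and summing $\sum_{k\ge\tau-1}(k+1)2^{-k-2}=\frac{\tau+1}{2^\tau}$ (equivalently, by differentiating a geometric series), this probability is exactly $q$, and the expected number of deleted bits in $X_1^n$ is $qn+O(\tau)$.

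Next I would reproduce, for $\bdctzerod$, the entropy bound of \cite{diggavi2006information,drinea2006lower}: conditioned on $Y$, one describes $X_1^n$ through a run-level encoding of how the channel acted (which runs were shortened and by how much, which were wiped out entirely, and how surviving runs that merged in the output split back apart) and bounds the entropy of this description by $n\,h(q)+o(n)$. Here the threshold structure is convenient --- runs of length $<\tau$ pass through verbatim, deletions inside longer runs are conditionally i.i.d., and only long runs can vanish, with probability at most $d^\tau$ each --- which is what makes the run-based description efficient. The hypothesis $q\le1/2$ enters at the end of this analysis: it is the regime in which the description cost collapses to precisely $n\,h(q)+o(n)$ (morally because $h(q)\ge q$ there, so the ``location'' part of the description dominates), yielding $H(X_1^n\mid Y)\le n\,h(q)+o(n)$ and hence $I(X;Z(X))\ge 1-h(q)$.

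The step I expect to be the main obstacle is precisely this entropy accounting. A naive encoding --- reveal the full length-$n$ deletion pattern and re-insert the deleted bits' values --- is too lossy: for small $\tau$ the re-insertion alone already costs an extra $\Theta(qn)$, giving only $1-q-h(q)$. Making the bound come out as the stated $1-h(q)$ requires the more careful run-based bookkeeping of Diggavi--Grossglauser and Drinea--Mitzenmacher (exploiting that a deleted bit inside a run which keeps at least one symbol has a forced value, and that the multiplicity of embeddings of $Y$ into a channel-consistent $X_1^n$ can be charged against $h(q)$), together with checking that all lower-order contributions --- from entirely-deleted runs and from merged-run ambiguity --- are genuinely $o(n)$ under the hypothesis $q\le 1/2$.
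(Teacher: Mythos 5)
Your approach is genuinely different from the paper's. The paper proves \cref{thm:dg-like-bound} by a random-coding argument with a ``subsequence'' decoder, closely following Diggavi--Grossglauser's Theorem 4.2: draw a codebook of i.i.d.\ uniform codewords, decode by testing whether the received string $Y$ is a subsequence of a \emph{unique} codeword, and bound the collision probability using Levenshtein's supersequence count (\cref{lem:lev-superseq}) together with the concentration of $|Y|$ from \cref{prop:long-bits}. You instead propose to go via the information rate: take $X$ i.i.d.\ uniform, show $H(X_1^n\mid Z(X_1^n))\leq n\,h(q)+o(n)$ with $q=d(\tau+1)/2^\tau$, and invoke $\CCap=\ICap$ from \cref{thm:cap-gen}. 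Your preliminary computation of the effective per-bit deletion probability $q$ is correct and matches the quantity in \cref{prop:long-bits}.

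However, there is a real gap. You explicitly flag the inequality $H(X_1^n\mid Y)\leq n\,h(q)+o(n)$ as the main obstacle, observe that the naive description (deletion positions plus deleted values) only gives $H(X_1^n\mid Y)\leq n\,(h(q)+q)+o(n)$, and then defer the fix to ``the more careful run-based bookkeeping of Diggavi--Grossglauser and Drinea--Mitzenmacher.'' That deferral does not close the gap: the result in~\cite{diggavi2006information} that the paper actually uses is the random-coding/decoding argument, not a conditional-entropy accounting, so there is no off-the-shelf lemma to point to. Moreover, it is not even a priori clear that the inequality you need is true for the i.i.d.\ uniform input (Fano's inequality applied to a good random codebook only controls $I(X_W;Y)$ where $X_W$ is uniform over the \emph{chosen codebook}, not over all of $\{0,1\}^n$), so you would need to argue this from scratch, handling both the ``forced value'' savings within partially-surviving runs and the ambiguity created by merged runs. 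As written, the proposal identifies the right target quantity but does not prove the theorem; the paper's decoder-counting route sidesteps the issue entirely.
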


Before proving \Cref{thm:dg-like-bound}, we first recall McDiarmid's concentration inequality, which extends Hoeffding's concentration lemma. 
We then use this inequality to prove that, with high probability, the fraction of bits that belong to runs of length at least $\tau$ in a uniformly random binary string is close to $\frac{\tau+1}{2^\tau}$.

\begin{lemma}[{McDiarmid's inequality,  \cite[Lemma 1.2]{mcdiarmid1989method}}]\label{lem:mcdiarmid}
    Let $X_1, \ldots, X_n$ be independent random variables, where each $X_k$ takes values in a set $A_k$. Assume there exists a function $f: A_1 \times \ldots \times A_n \to \mathbb{R}$ and constants $c_1, \ldots, c_n$ such that for every $k \in [n]$ and for any two vectors $a, a' \in A_1 \times \ldots \times A_n$ that differ only in their $k$-th coordinate we have
    \[
    |f(a) - f(a')|\leq c_k \;.
    \]
    Then, we have that 
    \[
     \Pr \left[ \left|f(X_1, \ldots, X_n) - \bbE[f(X_1, \ldots, X_n)] \right| \geq t \right] \leq 2\exp\left( \frac{-2t^2}{\sum_{k=1}^nc_k^2}\right)\;.
    \]
\end{lemma}
\begin{proposition}\label{prop:long-bits}
    Let $X$ be a uniformly random binary string in $\zo^N$ and let $\tau$ be constant with respect to $N$.
    Then, with probability $1 - \exp(-\Theta(N^{1/3}))$, the number of bits that belong to runs of length at least $\tau$ is at most $(\frac{\tau+1}{2^\tau} + \Theta(N^{-1/3}))\cdot N$.
\end{proposition}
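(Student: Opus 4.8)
The plan is to express the number of ``long'' bits (bits in runs of length at least $\tau$) as a sum of indicator random variables, compute the expectation of this sum, and then apply a concentration inequality. Specifically, for $i\in[N]$ let $B_i$ be the indicator of the event that the bit $X_i$ lies in a run of length at least $\tau$. Then the number of long bits is $\sum_{i=1}^N B_i$. First I would compute $\Pr[B_i=1]$ for an index $i$ that is not too close to either endpoint of the string: the event that $X_i$ belongs to a run of length $\geq\tau$ is the event that there is some length-$\tau$ window containing position $i$ that is constant. A short inclusion-exclusion (or direct count over the $\tau$ possible window positions, which are nearly disjoint events after accounting for overlaps) shows $\Pr[B_i=1]=\frac{\tau+1}{2^\tau}$ exactly for interior $i$ (this is the standard computation: the probability that a uniformly random bit lies in a run of length exactly $k$ is $(k+1)/2^{k+1}$ for $k\geq 1$ in the bi-infinite setting, and summing $k\geq\tau$ gives $\frac{\tau+1}{2^\tau}$). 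For the $O(\tau)=O(1)$ boundary indices the probability can only differ by an additive $O(1/N)$ contribution to the total expectation, so $\E\!\left[\sum_i B_i\right]=\frac{\tau+1}{2^\tau}N+O(1)$.

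The main obstacle is concentration: the $B_i$'s are \emph{not} independent, since whether $X_i$ is in a long run is correlated with whether $X_{i+1}$ is. The standard fix, which I would use here, is a blocking argument: partition $[N]$ into consecutive blocks of length $w=\Theta(N^{1/3})$, and within each block let $\tilde B_i$ be the indicator that $X_i$ lies in a run of length $\geq\tau$ \emph{using only bits inside that block}. Then $\sum_i \tilde B_i \leq \sum_i B_i$, and the difference is bounded by $2\tau$ per block boundary, hence by $O(N/w)=O(N^{2/3})$ total — which is $o(N^{-1/3}\cdot N)$? No: $N^{2/3}$ is larger than $N^{2/3}$, so actually we need $w$ large enough that $N/w\cdot\tau = O(N^{2/3})$, and this is exactly $O(N^{2/3})=o(N)$ but we want the \emph{relative} error $\Theta(N^{-1/3})$, i.e.\ absolute error $\Theta(N^{2/3})$ — consistent. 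The sums over distinct blocks are independent, each bounded in $[0,w]$, so Hoeffding's inequality (\Cref{lem:hoeff}) applied to $\sum_i \tilde B_i$ with $N/w$ independent summands each of range $w$ gives deviation $t$ with probability at most $\exp(-2t^2/((N/w)w^2))=\exp(-2t^2/(Nw))$. Choosing $t=\Theta(N^{2/3})$ and $w=\Theta(N^{1/3})$ makes this $\exp(-\Theta(N^{2/3}/N^{4/3}))$... which is $\exp(-\Theta(N^{-2/3}))$, too weak.

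Let me reconsider the parameter balance. We want the error term $\Theta(N^{-1/3})\cdot N = \Theta(N^{2/3})$ and failure probability $\exp(-\Theta(N^{1/3}))$. With blocks of size $w$ and $N/w$ of them, the block-truncation bias is $O(\tau N/w)$, so we need $w = \Omega(N^{1/3})$ to keep this $O(N^{2/3})$. Hoeffding on $N/w$ independent variables of range $w$: failure probability $\exp(-2t^2/(Nw))$ at deviation $t$; with $t=\Theta(N^{2/3})$ this is $\exp(-\Theta(N^{4/3}/(Nw)))=\exp(-\Theta(N^{1/3}/w))$. To get $\exp(-\Theta(N^{1/3}))$ we need $w=\Theta(1)$ — but that conflicts with $w=\Omega(N^{1/3})$. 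So a single blocking level does not suffice; instead I would use the \emph{method of bounded differences / McDiarmid's inequality} directly on the function $f(x_1,\dots,x_N)=\#\{\text{long bits}\}$, observing that flipping one coordinate $x_i$ changes the long-bit count by at most $2\tau+1=O(1)$ (it can only affect runs within distance $\tau$ of position $i$). McDiarmid then gives $\Pr[|f-\E f|>t]\leq 2\exp(-2t^2/(c^2 N))$ with $c=O(1)$, so taking $t=\Theta(N^{2/3})$ yields failure probability $\exp(-\Theta(N^{4/3}/N))=\exp(-\Theta(N^{1/3}))$, exactly as claimed. Combining the expectation computation $\E f = \frac{\tau+1}{2^\tau}N+O(1)$ with this concentration bound, and absorbing the $O(1)$ and the $\Theta(N^{2/3})$ deviation into the stated $\Theta(N^{-1/3})\cdot N$ slack, finishes the proof. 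The one point requiring a little care is the exact bounded-difference constant and verifying that flipping $x_i$ genuinely changes the count by only $O(\tau)$ — this is true because the membership status of $x_j$ in a long run depends only on $x_{j-\tau+1},\dots,x_{j+\tau-1}$, so only $O(\tau)$ indices $j$ have their status affected by a flip at $i$.
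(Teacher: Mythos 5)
Your proof is correct, but it takes a genuinely different route from the paper's. The paper works \emph{run by run}: it represents the string via i.i.d.\ $\Geom(1/2)$ run lengths, conditions on the total number of runs being close to $N/2$ (citing Drinea--Mitzenmacher for the concentration of the run count), and then applies Chernoff bounds to the number of runs of each length $i < \tau$ separately. You instead work \emph{bit by bit}: you write the count as a sum of indicators $B_i = \mathds{1}[X_i\text{ in a run of length}\geq\tau]$, compute the mean directly from the per-bit probability $\frac{\tau+1}{2^\tau}$, and get concentration by observing that the count is a $(2\tau-1)$-Lipschitz function of the $N$ independent input bits, so McDiarmid's bounded-differences inequality applies. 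Your approach is cleaner in that it avoids the conditioning on the run count and the cited external lemma entirely; the tradeoff is that McDiarmid's inequality is not among the concentration tools the paper states (it only lists multiplicative Chernoff and Hoeffding for independent sums), whereas the paper's argument stays within those. Your reasoning that a single-level blocking argument does not hit the right exponent, and the switch to bounded differences, is also sound.

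Two small corrections. First, the probability that a bit of a bi-infinite uniform string lies in a run of length \emph{exactly} $k$ is $k/2^{k+1}$, not $(k+1)/2^{k+1}$ as you wrote (your formula does not sum to $1$ over $k\geq 1$); the summed quantity $\sum_{k\geq\tau}k/2^{k+1}=\frac{\tau+1}{2^\tau}$ that you actually use is nonetheless correct. Second, it is worth being slightly more careful about the bounded-difference constant: flipping $x_i$ can merge two neighboring short runs of length up to $\tau-1$ each into one long run (changing the status of up to $2\tau-1$ bits) or, symmetrically, split a long run into two short pieces, so $c=2\tau-1$ is the tight constant in both directions; this is $O(1)$ as you say, which is all that is needed.
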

\begin{proof}
    Let $X_i$ denote the $i$-th bit of $X$. A run of length $j < N$ beginning at position $s \in [2, N-j]$ means that $X_{s-1} \neq X_s$, the bits $X_s, X_{s+1}, \ldots, X_{s+j-1}$ are all equal, and $X_{s+j} \neq X_s$. 
    The probability of this event is precisely $2/2^{j+2} = 2^{-j-1}$, where the factor of $2$ accounts for the choice of $X_s$, which may be either $0$ or $1$.
    The probability that a run of length $j < N$ starts at position $s=1$ or $s=N-j + 1$ is $2/2^{j+1} = 2^{-j}$ (in this case, one of the boundaries of the run does not exist).
    Denote by $Z_j$, the number of runs of length exactly $j$ in $X$. We have
    \begin{equation} \label{eq:expected-num-of-runs-j}
        \bbE [Z_j] = 2\cdot 2^{-j} + (N - j-1) \cdot 2^{-j-1} = \frac{N - j + 3}{2^{j +1}}\;.
    \end{equation}
    Now observe that $Z_j$ is a a function of $X_1, \ldots, X_n$ which are independent random variables taking values in $\zo$. 
    Also, note that changing a single bit in a binary strings can change the number of runs of length exactly $j$ by at most $2$. Thus, we apply \Cref{lem:mcdiarmid} with $c_k = 2$ for all $k\in [n]$ and $t = N^{-1/3}$, and get
    \begin{align*}
    \Pr\left[Z_j < (1 - N^{-1/3})\bbE[Z_j])\right] \leq \exp(-\Theta(N^{1/3})),
    \end{align*}

    Now, as $\tau$ is constant with respect to $N$, we conclude that the probability that for all $i\in [\tau-1]$ we have $Z_j \geq (1 - N^{-1/3}) \cdot \bbE[Z_j]$ is $1 - \tau \cdot \exp(-\Theta (N^{1/3})) = 1 -  \exp(-\Theta (N^{1/3}))$. 

    In this case, the number of bits in $X$ that belong to runs of length $\geq \tau$ is 
    \begin{align*}
    N - \sum_{j=1}^{\tau-1} j \cdot Z_j &\geq N - (1 - N^{-1/3}) \sum_{j=1}^{\tau-1} \frac{(N - j + 3)j}{2^{j +1}} \\
    &= N - (1 - N^{-1/3})\cdot \left(\frac{N}{2} \cdot \sum_{j=1}^{\tau-1} \frac{j}{2^j} -  \sum_{j=1}^{\tau-1} \frac{3j - j^2}{2^{j+1}} \right)\\
    &= N - (1 - N^{-1/3})\cdot \left( \frac{N}{2}\cdot \left(2 - \frac{\tau + 1}{2^{\tau - 1}}\right) - O(1) \right) \\
    &= \frac{\tau + 1}{2^{\tau}} \cdot N - \Theta (N^{2/3}) \;,
    \end{align*}
    where the second equality is due to the identity $\sum_{i=1}^k i/2^i = 2 - (k+2)/2^k$ and our assumption that $\tau$ is constant with respect to $N$ making the second sum $O(1)$.
    The proposition follows.
    \end{proof}

To prove \Cref{thm:dg-like-bound}, we will need the following well-known result which gives the exact size of the insertion ball around a string.
\begin{lemma}[\protect{\cite[Equation 24]{levenshtein2001efficient}}] \label{lem:lev-superseq}
    Let $y\in \zo^{|y|}$ be a string. The number of strings $x\in \zo^n$ that contain $y$ as a subsequence is $\sum_{i=0}^{n - |y|} \binom{n}{i}$.
\end{lemma}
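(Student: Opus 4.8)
The plan is to prove that for every $y\in\zo^m$ with $m=|y|$ and every $n\ge m$, the number $N(y,n)$ of strings $x\in\zo^n$ having $y$ as a subsequence equals $\sum_{i=0}^{n-m}\binom{n}{i}$ --- a quantity depending only on $n$ and $m$, not on the letters of $y$. (I note that the displayed formula is cleanest with the sum starting at $i=0$; the $i=0$ term contributes $\binom{n}{0}=1$.) I would establish this by a short recursion obtained from conditioning on the first symbol of $x$, and then verify by induction on $n$ that the claimed closed form satisfies that recursion together with the obvious boundary conditions.

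For the recursion, fix $m\ge1$ and $n\ge1$ and split the strings $x\in\zo^n$ counted by $N(y,n)$ according to $x_1$, writing $\tilde x=x_2\cdots x_n\in\zo^{n-1}$ and $y^{(2)}=y_2\cdots y_m$. If $x_1=y_1$, I claim $x$ has $y$ as a subsequence iff $\tilde x$ has $y^{(2)}$ as a subsequence: ``$\Leftarrow$'' is immediate by prepending $x_1$, and for ``$\Rightarrow$'' one takes any embedding of $y$ into $x$ and observes that $\tilde x$ contains $y^{(2)}$ whether or not position $1$ is used (since $y^{(2)}$ is a subsequence of $y$). If instead $x_1=\bar y_1$, then position $1$ can never be used in an embedding of $y$, so $x$ has $y$ as a subsequence iff $\tilde x$ does. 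This gives
\[
N(y,n)=N(y^{(2)},n-1)+N(y,n-1)\qquad (m\ge1,\ n\ge1),
\]
with boundary data $N(y,n)=0$ for $n<m$, $N(y,m)=1$, and $N(y,n)=2^n$ when $m=0$ (every string is a supersequence of the empty string).

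I would then check that $F(m,n):=\sum_{i=0}^{n-m}\binom{n}{i}$ satisfies the identical recursion and boundary conditions; by induction on $n$ this forces $N(y,n)=F(m,n)$. The boundary cases are immediate ($F(m,n)=0$ for $n<m$, $F(m,m)=\binom{m}{0}=1$, $F(0,n)=2^n$), and Pascal's rule gives $\sum_{i=0}^{n-m}\binom{n}{i}=\sum_{i=0}^{n-m}\binom{n-1}{i}+\sum_{i=0}^{n-m-1}\binom{n-1}{i}$, i.e.\ $F(m,n)=F(m-1,n-1)+F(m,n-1)$, matching the recursion above since $y^{(2)}$ has length $m-1$. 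If a purely combinatorial argument is preferred, one can instead use the canonical \emph{leftmost} embedding: if $y$ embeds leftmost into $x$ at positions $p_1<\cdots<p_m$, then all symbols of $x$ lying before $p_1$ or strictly between consecutive $p_{j-1},p_j$ are forced (to $\bar y_1$, resp.\ $\bar y_j$), while the $n-p_m$ trailing symbols are arbitrary; this sets up a bijection giving $N(y,n)=\sum_{j=m}^{n}\binom{j-1}{m-1}\,2^{n-j}$, which equals $\sum_{i=0}^{n-m}\binom{n}{i}$ because both sides count subsets of $[n]$ of size at least $m$ (group the left side by the $m$-th smallest element $j$).

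The routine parts are the Pascal manipulation and the boundary checks. The main thing to be careful about is the case analysis in the recursion step --- especially the two ``$\Rightarrow$'' directions, the degenerate subcases (matched positions adjacent, or $y$ containing equal consecutive symbols), and keeping the $m=0$ base case consistent --- and, relatedly, making sure the binomial sum runs from $i=0$ (equivalently $\sum_{k=m}^{n}\binom{n}{k}$ after the substitution $k=n-i$), not from $i=1$.
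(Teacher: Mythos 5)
Your proof is correct. Note that the paper does not prove this lemma at all — it is quoted directly from Levenshtein (Equation 24 of the cited work) — so there is no in-paper argument to compare against; what you have supplied is a valid self-contained derivation. Both of your routes go through: the first-symbol recursion $N(y,n)=N(y^{(2)},n-1)+N(y,n-1)$ is justified correctly in both cases (in particular your observation that when $x_1=\bar y_1$ position $1$ can never serve as $p_1$ and hence is never used), and the closed form $F(m,n)=\sum_{i=0}^{n-m}\binom{n}{i}$ satisfies the same recursion and boundary data by Pascal's rule; the alternative leftmost-embedding bijection giving $\sum_{j=m}^{n}\binom{j-1}{m-1}2^{n-j}=\sum_{k=m}^{n}\binom{n}{k}$ is also sound. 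You are also right that the sum should start at $i=0$ (equivalently $\sum_{k=|y|}^{n}\binom{n}{k}$): as stated in the paper with $i$ starting at $1$, the formula gives $0$ rather than $1$ when $n=|y|$. This off-by-one is harmless where the lemma is used (the proof of \cref{thm:dg-like-bound} only needs an upper bound of the form $2^{N(h(\cdot)+o(1))}$), but your corrected indexing is the accurate form of Levenshtein's count.
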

    
We are now ready to prove \Cref{thm:dg-like-bound}. The improvement over the bound of $1 - h(d)$ \cite{diggavi2006information} for the binary deletion channel comes from the observation that a typical output $Y$ in this channel is of length $\approx (1 - d\cdot \frac{\tau+1}{2^\tau})N$, which is greater than $\approx (1 - d)N$. The proof will closely follow the proof of \cite[Theorem 4.2]{diggavi2006information} and incorporate the necessary modifications.

\begin{proof}[Proof of \Cref{thm:dg-like-bound}]
     Generate a random codebook with $2^{NR}$ i.i.d. codewords chosen uniformly from $\zo^N$.
    Let $X$ be a transmitted codeword and $Y$ be the output of the channel on $X$. 
    By \Cref{prop:long-bits}, with probability at least $1 - \exp(-\Theta(N^{-1/3}))$, the number of bits that can be deleted in $X$ (i.e., the number of bits that belong to runs of length at least $\tau$) is at most $N(\frac{\tau+1}{2^\tau} + \Theta(N^{-1/3}))$. 
    On these bits, the channel acts in an i.i.d. fashion and deletes each bit with probability $d$. 
    Thus, conditioned on $X$ having at most $N(\frac{\tau+1}{2^\tau} + \Theta(N^{-1/3}))$ bits that belong to runs of length at least $\tau$, the probability that more than $(d + N^{-1/3})\cdot N(\frac{\tau+1}{2^\tau} + \Theta(N^{-1/3}))$ bits are deleted from $X$ is at most $\exp(-\Theta(N^{1/3}))$.

    Thus, conditioned on $X$ having at most $N(\frac{\tau+1}{2^\tau} + \Theta(N^{-1/3}))$ bits that belong to runs of length at least $\tau$, we have that 
    \begin{equation} \label{eq:Y-len}
        |Y| \geq N - \left(d \cdot \frac{\tau+1}{2^\tau}N + \Theta(N^{2/3})\right) = N\left( 1 - d\cdot \frac{\tau+1}{2^\tau} - \Theta(N^{-1/3})\right)
    \end{equation}
    with probability at least $1 - \exp(-\Theta(N^{1/3}))$.
    
    Our decoding algorithm is as follows. Upon receiving $Y$, if $|Y| < N\left( 1 - d\cdot \frac{\tau+1}{2^\tau} - \Theta(N^{-1/3})\right)$, we declare ``short'' error, and denote this event by $P_{\text{short}}$. Otherwise, we check if $Y$ is a subsequence of a single codeword in $\cC$. If it is, then it must be $X$ and we declare success. If $Y$ is a subsequence of two or more codewords, we declare a collision error. Denote this event by $P_{\text{col}}$.

    We turn to computing the collision error, assuming that the length of $Y$ is at least $m$ where $m$ is the right-hand side of \cref{eq:Y-len}.
    Let $X_1$ be a transmitted codeword, and let $X_2$ be a random string of length $N$. As in \cite{diggavi2006information}, we first upper bound the probability that the decoding algorithm declares collision error because of $X_2$ as 
    \begin{align*}
        P_{\text{col},X_2}&:=\sum_{Y,|Y|\geq m} \Pr[Y \textup{ is subsequence of } X_2 \mid X_1] \Pr[Y|X_1] \\
        &=\sum_{j = m}^N \left[ \left( \sum_{i=0}^{N-j} \binom{N}{i} \cdot 2^{-N}\right) \cdot \left(\sum_{Y,|Y| = j} \Pr[Y|X_1] \right) \cdot \Pr[|Y| = j] \right] \\
        &\leq 2^{-N}\sum_{j = m}^N   \sum_{i=0}^{N-j} \binom{N}{i}\\
        &\leq 2^{-N} \cdot(N- m + 1)^2 \cdot \binom{N}{N - m}\\
        &\leq 2^{-N(1 - h(\frac{\tau+1}{2^{\tau}}\cdot d + o(1)) - o(1))}
    \end{align*}
    The second equality follows by \Cref{lem:lev-superseq} and taking into consideration that the size of the insertion ball depends only on the length of $Y$. 
    The second inequality is due to the assumption that $d \cdot \frac{\tau+1}{2^\tau} < \frac{1}{2}$ which implies that $N - m < N/2$. The last inequality follows by the bound $\binom{N}{\alpha N} < 2^{N h(\alpha)}$ for $\alpha \leq 1/2$.
    Thus, we union bound over all $X_2 \in \cC$ and get that the $P_{\text{col}} = 2^{NR}P_{\text{col},X_2}$. Thus, 
    \begin{align*}
        P_{\text{e}} &=  P_{\text{col}} + P_{\text{short}} \\
        &\leq 2^{N(R-(1 - h(\frac{\tau+1}{2^\tau}\cdot d + o(1)) - o(1)))} + \exp(-\Theta (N^{1/3}))\;.
    \end{align*}
    As a result, for any constant $\varepsilon>0$, for $R = 1 - h(\frac{\tau+1}{2^\tau}\cdot d) - \varepsilon$ it holds that $\lim_{N\rightarrow\infty} P_e \rightarrow 0$.
\end{proof}

\subsection{Second lower bound}
In this section we provide another lower bound on the capacity of $\bdctzerod$. 
For this lower bound, we will greedily construct binary codes that can be reliably decoded against an adversary that is very restricted with the deletions that he can apply. 
We note that this construction is inspired by the binary inner code given in \cite{con2022improved}. However, the codes we design here are more structured and tailored for this particular channel.
For $v_1,\ldots,v_{\tau}$ such that $\sum_{i=1}^{\tau}v_i = 1$, we define $H((v_1, \ldots, v_\tau)) := \sum_{i=1}^{\tau} -v_i \log v_i$.

We prove the following.
\begin{theorem} \label{thm:bdc-tau-thm}
    Let $d \in (0,1)$, $M$ and $\tau$ be positive integers, and $\beta_1, \ldots, \beta_\tau \in [0,1]$ be such that $\sum_{i=1}^\tau i\beta_i = 1$. 
    Denote $\beta = \sum_{i=1}^{\tau}\beta_i$.
    Then, the capacity of $\bdctzerod$ is at least
    \[
        \frac{\beta\cdot H\left(\frac{1}{\beta}\cdot(\beta_1, \ldots, \beta_\tau)\right) - (2\beta_\tau + g(d)) h\left( \frac{g(d)}{2\beta_{\tau}+g(d)}\right)- h(g(d)) }{(1 - \beta_{\tau} \tau) + \beta_{\tau} M},
        \]
        where 
        \[
            g(d) = 2\tau \cdot d^M + \sum_{i=1}^{\tau}(\tau - i)\binom{M}{i}(1-d)^i d^{M-i}.
        \]
\end{theorem}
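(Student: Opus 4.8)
The plan is to build the code greedily in the spirit of Gilbert--Varshamov, over a set of highly structured strings, exploiting the fact that $\bdctzerod$ behaves like a mild adversary with overwhelming probability. Work in the run-length domain: a codeword is the binary string obtained by run-length encoding a sequence of run lengths drawn from $\{1,\dots,\tau-1\}$ (``short runs'', which the channel never touches) together with a distinguished ``long'' symbol realized by a run of length exactly $M$ (consecutive runs automatically alternate in bit value; fix the first bit to $0$, costing only a factor of $2$). Let $\mathcal B=\mathcal B_L$ be the set of all such strings whose run-length sequence has exactly $\beta_\tau L$ long runs and exactly $\beta_i L$ runs of length $i$ for each $i<\tau$; here $L$ plays the role of the length of the (un-padded) ``abstract'' string, in which a long run counts for $\tau$ bits, so that $\sum_{i\le\tau} i\beta_i = 1$ makes the abstract length equal to $L$. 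Every $c\in\mathcal B$ then has the same actual length $N = L\big((1-\beta_\tau\tau)+\beta_\tau M\big)$, and the number of runs is $\beta L$, so by Stirling $\log|\mathcal B| = \beta L\cdot H\!\big(\tfrac1\beta(\beta_1,\dots,\beta_\tau)\big) - O(\log L)$. Dividing by $N$ recovers the first term $\beta H(\tfrac1\beta(\beta_1,\dots,\beta_\tau))/\big((1-\beta_\tau\tau)+\beta_\tau M\big)$ of the claimed rate.

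Since short runs survive untouched, all randomness is in the $\beta_\tau L$ long runs, each of which independently keeps $\Bin(M,1-d)$ of its $M$ bits. Call a long run \emph{fatal} if all $M$ bits are deleted (probability $d^M$; this also merges its two neighbours) and \emph{shrunk to $i$} if exactly $1\le i\le\tau-1$ bits survive (probability $\binom{M}{i}(1-d)^i d^{M-i}$), and assign cost $2\tau$ to a fatal run (the abstract block is lost and its neighbours' boundary becomes ambiguous), cost $\tau-i$ to a run shrunk to $i$ (the decoder sees a short-looking run of length $i$ where length $\tau$ was expected), and cost $0$ to every other long run (a long run surviving with $\ge\tau$ bits is identified unambiguously). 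By \Cref{lem:chernoff} and a union bound, for every $\delta>0$ the numbers of fatal runs and of runs shrunk to each length differ from their expectations by at most $\delta L$ except with probability $\exp(-\Omega(L))$, so with the same probability the total cost is at most $g(d)L+O(\delta L)$, where $g(d)$ is exactly the stated expression (the $i=\tau$ term vanishes, so the sum only ranges over $i\le\tau-1$). Call a deletion pattern on a string of $\mathcal B$ \emph{admissible} if it deletes bits only from long runs and has total cost at most $g(d)L+O(\delta L)$; thus the output of $\bdctzerod$ on any $c\in\mathcal B$ is admissibly reachable from $c$ except with probability $\exp(-\Omega(L)) = \exp(-\Omega(N))$.

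For $c\in\mathcal B$ let $A(c)$ be the finite set of strings admissibly reachable from $c$. Build $\mathcal C\subseteq\mathcal B$ greedily: repeatedly move some remaining $c$ into $\mathcal C$ and discard from $\mathcal B$ every $c'$ with $A(c)\cap A(c')\neq\emptyset$. Distinct codewords of $\mathcal C$ then have disjoint admissible sets, so the decoder that on input $Y$ returns the unique $c\in\mathcal C$ with $Y\in A(c)$ (and fails otherwise) is correct whenever the channel output lies in $A(c)$, i.e.\ with probability $1-\exp(-\Omega(N))$. Each greedy step discards at most $D := \max_c\big|\{c' : A(c)\cap A(c')\neq\emptyset\}\big| \le \max_c |A(c)|\cdot\max_Y|\{c' : Y\in A(c')\}|$ strings, so $|\mathcal C|\ge|\mathcal B|/D$. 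One bounds the two factors of $D$ combinatorially: the first by counting admissible deletion patterns applicable to a fixed $c$ (a choice of which long runs are fatal or shrunk and to what lengths, constrained by the cost cap, which confines the ``cost-$0$'' choices to a $2^{o(L)}$ factor), and the second by counting the strings $c'\in\mathcal B$ from which a fixed $Y$ is admissibly reachable (a choice of which short-looking runs of $Y$ are really collapsed long runs, and where and how each fatal long run is reinserted by splitting a merged run of $Y$, again constrained by the cost cap). Carrying out this counting and applying $\binom{a}{b}\le 2^{a h(b/a)}$ yields $\log D \le \big[(2\beta_\tau+g(d))\,h\!\big(\tfrac{g(d)}{2\beta_\tau+g(d)}\big) + h(g(d))\big]L + o(L)$; dividing $\log|\mathcal C|\ge\log|\mathcal B|-\log D$ by $N = L\big((1-\beta_\tau\tau)+\beta_\tau M\big)$ gives a rate at least the claimed expression, up to an $O(\delta)$ term that is removed by sending $\delta\to0$ as $L\to\infty$ (and passing to the supremum over achievable rates defining the capacity).

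The routine ingredients are the Stirling estimate for $|\mathcal B|$, the Chernoff-plus-union-bound typicality argument, and the generic greedy counting $|\mathcal C|\ge|\mathcal B|/D$ with $D\le\max_c|A(c)|\cdot\max_Y|\{c':Y\in A(c')\}|$. The crux — and the step I expect to be the main obstacle — is the pair of combinatorial bounds on $|A(c)|$ and on $|\{c':Y\in A(c')\}|$: one must track precisely how an admissible deletion pattern and its ``inverse'' interact with the alternating run-length structure, so that the collapsed long runs and the reinsertion points of the fatal long runs are described by (essentially) independent choices among exactly the right numbers of positions, and so that distinct patterns yielding the same string are not overcounted in a way that inflates the exponent. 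Getting this bookkeeping to collapse to precisely $(2\beta_\tau+g(d))\,h\!\big(\tfrac{g(d)}{2\beta_\tau+g(d)}\big)+h(g(d))$ — rather than some cruder $O(\cdot)$ surrogate — is where the specific structure of $\bdctzerod$ (deletions confined to runs of length $\ge\tau$ and acting i.i.d.\ inside such a run) is used, in the spirit of the inner-code analysis of \cite{con2022improved}.
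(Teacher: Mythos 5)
Your plan follows the same strategy as the paper: restrict to binary strings with runs of length at most $\tau$ and prescribed run-length type $(\beta_1,\dots,\beta_\tau)$, inflate the $\tau$-runs to $M$-runs, observe that $\textup{BDC-Thr}(\tau,d)$ then acts only on the inflated runs (with per-run ``cost'' $2\tau$ when the run vanishes and $\tau-i$ when it shrinks to $i<\tau$ surviving bits), apply Chernoff/Hoeffding to conclude that with probability $1-\exp(-\Omega(N))$ the channel behaves like a bounded adversary, and greedily build a Gilbert--Varshamov code whose rate loss is governed by the sizes of the resulting ``deletion ball'' and ``insertion ball''. This matches the paper's route via \cref{const:code-for-bdc-long-del}, \cref{thm:decode-correct}, and \cref{prop:inner-code}. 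The high-level organization differs — the paper first proves an abstract robustness result (\cref{prop:inner-code}) against a combinatorially defined $(\delta,\tau)$-restricted adversary on \emph{unpadded} strings (\cref{def:spec-adv}), and only afterwards blows up the $\tau$-runs and shows, via the explicit two-step decoder in \cref{thm:decode-correct}, that the channel's action reduces to such an adversary; you instead stay in the blown-up world throughout. Both routes are viable, but the paper's factorization through the unpadded adversary is what makes the counting clean.

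The genuine gap is exactly the piece you flag as ``the crux'': you assert, but do not prove, the two ball bounds that make $\log D \leq \left[(2\beta_\tau+g(d))\,h\!\left(\tfrac{g(d)}{2\beta_\tau+g(d)}\right) + h(g(d))\right]L + o(L)$. These bounds \emph{are} the substance of the proof. The paper obtains them as \cref{clm:adv-del-bal} (a Levenshtein-style stars-and-bars argument showing that a $(\delta,\tau)$-restricted adversary touching only the $2\beta_\tau N$ eligible runs produces at most $\binom{2\beta_\tau N + \delta N}{\delta N}$ subsequences) and \cref{clm:adv-ins-bal} (an application of the Guruswami--Wang supersequence bound $\delta N\binom{N}{\delta N}$ from \cref{lem:GW-ins-ball-bound}). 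Moreover, for these counts to come out as stated, you need the collapse step that the paper builds explicitly into the decoder (\cref{thm:decode-correct}, Step~1): every surviving run of length $\ge\tau$ at the channel output is first collapsed to a run of length exactly $\tau$. Without this collapse, a single inflated run that keeps $j\in\{\tau,\dots,M\}$ bits has $M-\tau+1$ possible output lengths, so the ``cost-$0$'' choices already contribute $(M-\tau+1)^{\beta_\tau L}=2^{\Theta(L)}$ to $|A(c)|$, not the $2^{o(L)}$ you claim; only after the collapse does the cost cap actually confine the counting, and only then does $A(c)$ become a ball around the unpadded string $c'$ to which \cref{clm:adv-del-bal} applies. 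You hint at this collapse (``a long run surviving with $\ge\tau$ bits is identified unambiguously'') but do not make it part of the definition of $A(c)$ or of the decoder, and without it the greedy bound does not close. Supplying the collapse explicitly and then carrying out the two combinatorial counts as the paper does would complete your argument.
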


We start by borrowing a lemma from the pioneering work of Levenshtein \cite{levenshtein1966binary} who first considered recovering from worst-case insertions and deletions. 
The lemma gives an upper bound on the deletion ball of a string that depends on the number of runs of the string. Formally, we have the following. 

\begin{lemma}[\cite{levenshtein1966binary}] \label{lem:del-ball-lev-bound}
	Let $s$ be a string and denote by $r(s)$ the number of runs in $s$. There are at most 
	\[
	\binom{r(s) + \ell - 1}{\ell}
	\]
	different subsequences of $s$ of length $|s| - \ell$.
\end{lemma}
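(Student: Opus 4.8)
The plan is to reproduce Levenshtein's classical argument, which reduces counting subsequences to a stars-and-bars computation. First I would decompose $s$ into its $r = r(s)$ maximal runs, writing $s = \sigma_1^{n_1}\sigma_2^{n_2}\cdots\sigma_r^{n_r}$, where $\sigma_i\in\zo$ is the symbol of the $i$-th run, $\sigma_i\neq\sigma_{i+1}$, and $n_i\geq 1$ is the length of the $i$-th run, so that $\sum_{i=1}^r n_i = |s|$. If $\ell > |s|$ there is no subsequence of length $|s|-\ell$ and the bound holds vacuously, so I may assume $\ell\leq|s|$.

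The key observation is that a subsequence of $s$ of length $|s|-\ell$ is entirely determined by how many symbols are removed from each run. Concretely, any such subsequence arises from some set $D\subseteq\{1,\dots,|s|\}$ of $\ell$ deleted positions; setting $d_i$ to be the number of elements of $D$ lying inside the $i$-th run, we get $d_i\in\{0,1,\dots,n_i\}$ and $\sum_{i=1}^r d_i=\ell$. Since all symbols within a run are equal, the resulting subsequence equals $\sigma_1^{n_1-d_1}\sigma_2^{n_2-d_2}\cdots\sigma_r^{n_r-d_r}$ (with the convention that $\sigma^0$ is the empty string), which depends only on the vector $(d_1,\dots,d_r)$ and not on which particular positions inside each run were chosen. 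Hence the map $(d_1,\dots,d_r)\mapsto \sigma_1^{n_1-d_1}\cdots\sigma_r^{n_r-d_r}$, defined on the set of integer vectors with $0\leq d_i\leq n_i$ and $\sum_i d_i=\ell$, is a surjection onto the set of length-$(|s|-\ell)$ subsequences of $s$.

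It then remains only to bound the size of the domain of this surjection. Dropping the upper-bound constraints $d_i\leq n_i$ can only enlarge the set, so the number of distinct length-$(|s|-\ell)$ subsequences of $s$ is at most the number of vectors $(d_1,\dots,d_r)\in\Z_{\geq 0}^r$ with $\sum_{i=1}^r d_i=\ell$, which by the standard stars-and-bars count equals $\binom{r+\ell-1}{\ell}$. With $r=r(s)$ this is exactly the claimed bound.

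I do not anticipate a genuine obstacle here. The only point needing a little care is the assertion that the output subsequence is a function of $(d_1,\dots,d_r)$ alone: in particular, when an entire interior run is deleted, its two equal-symbol neighbours merge into a single run, which can make the displayed map non-injective — but this only strengthens the upper bound, so it causes no difficulty. The trivial case $\ell>|s|$ should be dispatched separately, as noted above.
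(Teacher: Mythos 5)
Your proposal is correct and matches the argument the paper itself alludes to (in the proof of \cref{clm:adv-del-bal} it explicitly says each subsequence is determined by the number of deletions applied to each run, followed by a stars-and-bars count); the paper otherwise just cites Levenshtein for this lemma without proof. Your handling of the non-injectivity from merged runs and the trivial case $\ell>|s|$ is fine and only strengthens the upper bound.
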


    We start with defining a set of strings from which we shall pick our codewords.

    \begin{definition}
        Let $\beta_1,\ldots, \beta_{\tau}\in [0,1]$ such that $\sum_{i=1}^{\tau} i\beta_i = 1$.
        Let $S_{\beta_1, \ldots, \beta_{\tau}} \subset \{0,1\}^N$ be a set that contains all strings that 
        \begin{enumerate}
            \item Contain only runs of length $\leq \tau$.
            \item The number of runs of length $i\in [\tau]$ is exactly $\beta_i m$.
        \end{enumerate}
    \end{definition}

    The next claim gives a lower bound on the size of $S_{\beta_1, \ldots, \beta_\tau}$.
    \begin{claim} \label{clm:S-betas-set-size}
        Given $S_{\beta_1, \ldots,\beta_\tau}$, denote $\beta = \sum_{i=1}^{\tau} \beta_i$. Then,
        \[
        \frac{1}{N}\log_2 |S_{\beta_1, \ldots, \beta_{\tau}}| \geq H\left( \frac{1}{\beta}(\beta_1, \ldots, \beta_\tau) \right) - o(1)\;.
        \]
    \end{claim}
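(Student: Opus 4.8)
The plan is to count $S_{\beta_1,\dots,\beta_\tau}$ exactly and then apply a Stirling-type estimate. First I would set up a bijection. A string $s\in S_{\beta_1,\dots,\beta_\tau}\subseteq\bits^N$ is completely determined by two pieces of data: the value of its first bit, and the ordered list $(\ell_1,\dots,\ell_r)\in[\tau]^r$ of its run lengths, where $r$ denotes the number of runs. Conversely, for \emph{any} first bit and \emph{any} sequence in $[\tau]^r$ the corresponding binary string is well defined --- adjacent runs only need to differ in value, and there is no constraint linking their lengths --- and it has length $\sum_{j=1}^r\ell_j$. For it to belong to $S_{\beta_1,\dots,\beta_\tau}$ we need exactly $\beta_i N$ of the $\ell_j$'s to equal $i$ for every $i\in[\tau]$ (here I use that the parameter $m$ in the definition of $S_{\beta_1,\dots,\beta_\tau}$ equals $N$, which is forced since the string length $\sum_i i\beta_i m$ equals $m$ by $\sum_i i\beta_i=1$); this in particular gives $r=\sum_i\beta_i N=\beta N$ and makes the total length automatically equal to $\sum_i i\,\beta_i N=N$. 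Hence the data $(\text{first bit},(\ell_1,\dots,\ell_{\beta N}))$ ranges bijectively over $\bits$ times the set of sequences in $[\tau]^{\beta N}$ in which the letter $i$ occurs exactly $\beta_i N$ times.

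This yields the exact count
\[
\bigl|S_{\beta_1,\dots,\beta_\tau}\bigr|\;=\;2\binom{\beta N}{\beta_1 N,\dots,\beta_\tau N}\;=\;2\cdot\frac{(\beta N)!}{\prod_{i=1}^{\tau}(\beta_i N)!}.
\]
Next I would apply the standard lower bound $\binom{n}{k_1,\dots,k_\tau}\ge(n+1)^{-\tau}\,2^{\,n\,H(k_1/n,\dots,k_\tau/n)}$ with $n=\beta N$ and $k_i=\beta_i N$, so that $k_i/n=\beta_i/\beta$ and $H(k_1/n,\dots,k_\tau/n)=H\!\bigl(\tfrac1\beta(\beta_1,\dots,\beta_\tau)\bigr)$. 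Taking $\tfrac1N\log_2$ of the count and using that $\tau$ and $\beta\le 1$ are constants independent of $N$, this gives
\[
\frac1N\log_2\bigl|S_{\beta_1,\dots,\beta_\tau}\bigr|\;\ge\;\beta\cdot H\!\bigl(\tfrac1\beta(\beta_1,\dots,\beta_\tau)\bigr)\;-\;\frac{\tau\log_2(\beta N+1)-1}{N}\;=\;\beta\cdot H\!\bigl(\tfrac1\beta(\beta_1,\dots,\beta_\tau)\bigr)-o(1),
\]
which is exactly the ``bits per symbol'' quantity $\beta\cdot H(\tfrac1\beta(\beta_1,\dots,\beta_\tau))$ appearing in the numerator of Theorem~\ref{thm:bdc-tau-thm}. (I note that this multiplicative factor $\beta$ is genuinely present: the exact count above shows the claimed inequality would fail whenever $\beta<1$ if the factor were dropped, so the displayed bound should be read with it.)

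I do not expect a genuine obstacle here; the two things to be careful about are purely bookkeeping. The first is verifying that the parametrization above is a \emph{bijection} --- the only content being the remark that there is no relation between the lengths of consecutive runs, so every sequence of run lengths with the prescribed multiplicities is realized by a (unique, once the first bit is fixed) binary string. The second is integrality: if some $\beta_i N$ is not an integer I would round the multiplicities to integers $k_i$ with $|k_i-\beta_i N|\le1$ chosen so that $\sum_i i\,k_i=N$ exactly (achievable by perturbing a bounded number of runs), which changes $H(k_1/\beta N,\dots,k_\tau/\beta N)$ by $O\!\bigl(\tfrac{\log N}{N}\bigr)$ and is therefore absorbed into the $o(1)$ error term.
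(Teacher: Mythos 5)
Your proof is correct and takes essentially the same route as the paper: an exact count of $S_{\beta_1,\dots,\beta_\tau}$ as a multinomial coefficient (the paper writes $|S_{\beta_1,\dots,\beta_\tau}|=\frac{(\beta N)!}{(\beta_1 N)!\cdots(\beta_\tau N)!}$, matching your count up to the immaterial factor $2$ for the first bit) followed by a Stirling estimate. Your side remark about the factor $\beta$ is also well taken: the exact count yields $\frac1N\log_2|S_{\beta_1,\dots,\beta_\tau}|=\beta\cdot H\bigl(\tfrac1\beta(\beta_1,\dots,\beta_\tau)\bigr)+o(1)$, so the claim as displayed (without the leading $\beta$) cannot hold whenever $\beta<1$, and it is the $\beta$-weighted form that propagates correctly into the numerator of \Cref{thm:bdc-tau-thm}.
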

    \begin{proof}
        Clearly, we have that
        \[
        |S_{\beta_1, \ldots, \beta_{\tau}}| = \frac{(\beta N)!}{(\beta_1 N)! \cdots (\beta_\tau N)!}\;.
        \]
        Then, the bound follows by using Stirling's approximation (e.g., \cite[Lemma 7.3]{mitzenmacher2005probability}).
    \end{proof}

    We will construct a code $C \subset S_{\beta_1, \ldots, \beta_\tau}$ that is robust against an adversary that performs a $\delta$ fraction of restricted deletions. More formally, we have the following definition.
    \begin{definition} \label{def:spec-adv}
        We call an adversary a \emph{$(\delta,\tau)$-restricted adversary}, if given a string $s = r_1 \circ r_2\circ \cdots \circ r_{\beta N}$, he is allowed to perform deletions only to runs $r_i$ such that $r_i$ or $r_{i-1}$ are of length $\tau$.
    \end{definition}

    Our next two claims will compute the size of a ``deletion ball'' and the size of an ``insertion ball'' under the $(\delta,\tau)$-restricted adversary. 
    \begin{claim} \label{clm:adv-del-bal}
        Let $s\in S_{\beta_1, \ldots, \beta_\tau}$. Then, $(\delta,\tau)$-restricted adversary defined in \Cref{def:spec-adv} can produce at most
        \[
        \binom{2\beta_{\tau}N + \delta N}{\delta N} 
        \]
        different subsequences of $s$.
    \end{claim}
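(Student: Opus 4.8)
Let $s\in S_{\beta_1,\dots,\beta_\tau}$. A $(\delta,\tau)$-restricted adversary can produce at most $\binom{2\beta_\tau N + \delta N}{\delta N}$ distinct subsequences of $s$.

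The plan is to bound the number of runs of $s$ that the restricted adversary is even allowed to touch, and then invoke Levenshtein's bound on the deletion ball (Lemma~\ref{lem:del-ball-lev-bound}) applied not to all of $s$ but to the ``active'' portion of $s$. First I would identify the set of runs that can be modified: by Definition~\ref{def:spec-adv}, a run $r_i$ may be deleted from only if $r_i$ or $r_{i-1}$ has length exactly $\tau$. Since $s$ has exactly $\beta_\tau N$ runs of length $\tau$, each such run $r_j$ of length $\tau$ ``activates'' at most two runs (itself, $r_j$, and its successor $r_{j+1}$), so the total number of active runs is at most $2\beta_\tau N$. Let $A$ denote the substring of $s$ consisting of the bits in these active runs — more precisely, think of factoring $s$ into maximal blocks of consecutive runs that are either all active or all inactive; the adversary's deletions only occur inside the active blocks, and the inactive blocks pass through verbatim.

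The key step is then: any subsequence the adversary produces is obtained by leaving the inactive blocks untouched and deleting a total of $\delta N$ bits, distributed arbitrarily among the active blocks. Concatenating the active blocks into a single string $A$ with at most $2\beta_\tau N$ runs, the number of distinct strings obtainable by deleting exactly $\ell := \delta N$ bits from $A$ is, by Lemma~\ref{lem:del-ball-lev-bound}, at most $\binom{r(A)+\ell-1}{\ell}\le\binom{2\beta_\tau N + \delta N - 1}{\delta N}\le\binom{2\beta_\tau N + \delta N}{\delta N}$. Since the inactive blocks are fixed and the positions of the active blocks relative to them are fixed, the full subsequence of $s$ is determined by the choice of the deleted pattern inside the (concatenated) active part, so the count for $s$ is bounded by the same binomial coefficient. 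One should also note that deleting \emph{fewer} than $\delta N$ bits only gives a subset of these outcomes (or one can pad to exactly $\delta N$), so the bound covers ``at most $\delta N$ deletions'' as well; I would phrase the adversary's budget as ``at most $\delta N$'' and absorb the slack into the monotonicity of $\binom{a+\ell}{\ell}$ in $\ell$.

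The main obstacle is the bookkeeping around ``active runs are not contiguous'': deletions happen in several disjoint active blocks rather than one, so one cannot literally apply Lemma~\ref{lem:del-ball-lev-bound} to a single string without justification. The clean fix is the concatenation argument above — since a deletion pattern on a disjoint union of intervals is the same data as a deletion pattern on their concatenation, and the number of runs is subadditive under concatenation (in fact $r(A)\le \sum_j r(\text{block}_j)\le 2\beta_\tau N$), the bound transfers. A minor subtlety is whether we count the run of length $\tau$ itself plus its successor as $2$ active runs even when the successor is also length $\tau$ (then it is double-counted, which is fine for an upper bound). I would state the $2\beta_\tau N$ bound with this slack explicitly and move on.
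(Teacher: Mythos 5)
Your argument is essentially the paper's: identify the at most $2\beta_\tau N$ ``active'' runs (each length-$\tau$ run plus its successor) and count deletion patterns run-by-run via stars and bars, exactly as in Levenshtein's lemma; the concatenation detour is harmless since only the per-run deletion counts matter (all bits in a run are equal). One small inaccuracy: deleting fewer than $\delta N$ bits does \emph{not} give a subset of the length-$(|s|-\delta N)$ outcomes (the resulting strings are longer, hence disjoint from that set), and monotonicity alone does not absorb the union over all budgets $\ell\le\delta N$; the clean fix is the hockey-stick identity $\sum_{\ell=0}^{\delta N}\binom{2\beta_\tau N+\ell-1}{\ell}=\binom{2\beta_\tau N+\delta N}{\delta N}$, which is precisely why the claimed bound lacks the ``$-1$'' of plain stars and bars.
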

    \begin{proof}
        The number of runs of length $\tau$ is $\beta_{\tau} N$. Therefore, by \Cref{def:spec-adv}, the $(\delta, \tau)$-restricted adversary can apply deletions only to $2\beta_{\tau} N$ runs. 
        As in the proof of \Cref{lem:del-ball-lev-bound} in \cite{levenshtein1966binary}, we observe that each subsequence can be described by the number of deletions applied to each run. The proposition follows by the equivalence to the ``stars and bars'' paradigm. 
    \end{proof}
    
    To bound the number of different strings $s\in S_{\beta_1, \ldots, \beta_{\tau}}$ for which the $(\delta,\tau)$-restricted adversary can turn $s$ into $s'$, we use a lemma from \cite{guruswami2017deletion} that provides a general upper bound on the number of strings in $\zo^N$ that contain a particular string $s'\in \zo^{N - \delta N}$ as a subsequence.
    \begin{lemma}[\protect{\cite[Lemma 7]{guruswami2017deletion}}] \label{lem:GW-ins-ball-bound}
        Let $\delta\in (0,1/2)$ and let $s'\in \zo^{N - \delta N}$. The number of strings $s\in \zo^N$ that contain $s'$ as a subsequence is at most $\delta N \binom{N}{\delta N}$.
    \end{lemma}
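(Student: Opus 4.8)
The plan is to derive this bound from Levenshtein's exact supersequence count, which is already available as \Cref{lem:lev-superseq}. Write $k=\delta N$ and $m=|s'|=N-k$. Applying \Cref{lem:lev-superseq} with $n=N$ and $y=s'$, the number of strings $s\in\zo^{N}$ that contain $s'$ as a subsequence is exactly $\sum_{i=1}^{k}\binom{N}{i}$; in particular it depends on $s'$ only through its length, which is the same feature exploited in the proof of \Cref{thm:dg-like-bound}. It then remains to bound this sum by $\delta N\binom{N}{\delta N}$.

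The quick way to do this: since $\delta<1/2$ we have $k=\delta N\le N/2$, so by unimodality of the binomial coefficients each of the $k$ summands $\binom{N}{i}$ with $1\le i\le k$ is at most the largest one, $\binom{N}{k}$. Hence $\sum_{i=1}^{k}\binom{N}{i}\le k\binom{N}{k}=\delta N\binom{N}{\delta N}$, which is exactly the claim. If one wants additional slack — for instance to absorb the boundary term $\binom{N}{0}=1$ that the stated form of \Cref{lem:lev-superseq} drops, and to make the estimate comfortably robust — I would instead use that below $N/2$ the coefficients decay at least geometrically: for $1\le i\le k$,
\[
\frac{\binom{N}{i-1}}{\binom{N}{i}}=\frac{i}{N-i+1}\le\frac{k}{N-k}=\frac{\delta}{1-\delta}<1,
\]
so $\binom{N}{i}\le\binom{N}{k}\left(\tfrac{\delta}{1-\delta}\right)^{k-i}$ and therefore $\sum_{i=0}^{k}\binom{N}{i}\le\binom{N}{k}\sum_{j\ge0}\left(\tfrac{\delta}{1-\delta}\right)^{j}=\binom{N}{k}\cdot\frac{1-\delta}{1-2\delta}\le\delta N\binom{N}{\delta N}$, where the last inequality holds because $\frac{1-\delta}{1-2\delta}$ is a constant while $k=\delta N$ grows. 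This is the only point where the asymptotic flavor of the statement is used; for the finitely many small $N$ where the literal inequality would fail, the lemma is never invoked in the paper.

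There is no genuine obstacle here; the only things to be careful about are the indexing conventions in \Cref{lem:lev-superseq} (so that one correctly identifies the summation range with $1\le i\le\delta N$, equivalently tracks the trivial $i=0$ term) and the observation that the claimed bound $\delta N\binom{N}{\delta N}$ is deliberately loose, so that mere unimodality of the binomial coefficients — rather than anything sharper — already closes the argument. As an aside, a fully self-contained route avoiding \Cref{lem:lev-superseq} altogether is the leftmost-embedding argument of \cite{guruswami2017deletion}: embed $s'$ greedily from the left into a candidate superstring $s$, record the $\delta N$ unmatched positions of $s$, observe that each such position either lies in the free suffix that follows the last matched symbol of $s'$ or has its bit forced to be the complement of the current symbol of $s'$, and count accordingly; this simultaneously reproves that the count depends only on $|s'|$ and yields the same order of bound.
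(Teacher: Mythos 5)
Your proof is correct. Note first that the paper does not prove this lemma at all --- it is imported verbatim as \cite[Lemma 7]{guruswami2017deletion} --- so there is no in-paper argument to compare against; what you have supplied is a self-contained derivation, and it works. Your route (Levenshtein's exact supersequence count, \Cref{lem:lev-superseq}, followed by unimodality of $\binom{N}{i}$ for $i\le N/2$) is the natural one given that \Cref{lem:lev-superseq} is already quoted in the paper, and it differs from the original proof in \cite{guruswami2017deletion}, which is the combinatorial leftmost-embedding argument you sketch in your aside; the Levenshtein route buys an exact starting point and a two-line finish, while the embedding argument is alphabet-independent and does not rely on the closed-form count. Two small points you already handle but that are worth keeping explicit: (i) the paper's statement of \Cref{lem:lev-superseq} starts the sum at $i=1$, dropping the $i=0$ term of Levenshtein's formula; with that term restored the literal inequality $\sum_{i=0}^{\delta N}\binom{N}{i}\le \delta N\binom{N}{\delta N}$ can fail for very small $N$ (e.g.\ $\delta N=1$ gives $N+1>N$), so the bound should be read as holding for all sufficiently large $N$, which is the only regime in which \Cref{clm:adv-ins-bal} and \Cref{prop:inner-code} use it; and (ii) the unimodality step needs $\delta N\le N/2$, which is exactly where the hypothesis $\delta<1/2$ enters.
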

    Since $S_{\beta_1, \ldots, \beta_\tau}\subseteq \zo^{N}$, we have the following claim
    \begin{claim} \label{clm:adv-ins-bal}
        Let $s'\in \zo^{N - \delta N}$ be a string that was obtained the $(\delta,\tau)$-restricted adversary that was given in a string in $S_{\beta_1, \ldots, \beta_\tau}$.
        Then, there are at most $\delta N\binom{N}{\delta N}$
        strings $s\in S_{\beta_1, \ldots, \beta_\tau}$ 
        that can be transformed to $s'$ by the $(\delta,\tau)$-restricted adversary.
    \end{claim}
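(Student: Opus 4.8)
The plan is to derive this claim as an essentially immediate corollary of \Cref{lem:GW-ins-ball-bound}, which bounds the number of strings in $\zo^N$ admitting a fixed string of length $N-\delta N$ as a subsequence. The only point to check is that the two extra features of our setting here -- that the adversary is $(\delta,\tau)$-restricted and that the source strings lie in $S_{\beta_1,\ldots,\beta_\tau}$ rather than in all of $\zo^N$ -- can only \emph{shrink} the set we are counting, so discarding them costs nothing.

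Concretely, I would argue as follows. First, observe that a $(\delta,\tau)$-restricted adversary (\Cref{def:spec-adv}) only ever deletes symbols; it performs no insertions or substitutions. Hence, if such an adversary can transform some $s$ into $s'$, then $s'$ is a subsequence of $s$. Since $s'\in\zo^{N-\delta N}$ and every $s\in S_{\beta_1,\ldots,\beta_\tau}\subseteq\zo^N$ has length $N$, any such $s$ satisfies $|s|-|s'|=\delta N$, i.e.\ $s$ is obtained from $s'$ by exactly $\delta N$ insertions. Therefore the set
\begin{equation*}
\{\,s\in S_{\beta_1,\ldots,\beta_\tau} : \text{the } (\delta,\tau)\text{-restricted adversary can transform } s \text{ into } s'\,\}
\end{equation*}
is contained in the set of all strings $s\in\zo^N$ that contain $s'$ as a subsequence. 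By \Cref{lem:GW-ins-ball-bound} (applicable since $\delta\in(0,1/2)$ in the relevant regime, which I would note is guaranteed by the parameter choices of \Cref{thm:bdc-tau-thm}), the latter set has size at most $\delta N\binom{N}{\delta N}$, and the claim follows.

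There is no real obstacle here: the argument is a one-line containment plus invocation of a cited lemma. The only thing worth flagging is that this bound is deliberately crude -- one could in principle try to sharpen it by exploiting that $s$ must be a legal member of $S_{\beta_1,\ldots,\beta_\tau}$ and that the $\delta N$ inserted symbols must land consistently with the run-structure constraints -- but for the purpose of the capacity lower bound in \Cref{thm:bdc-tau-thm} the generic Guruswami--Wang estimate is already strong enough, so I would not pursue any refinement.
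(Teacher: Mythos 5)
Your proof is correct and takes essentially the same route as the paper: the paper simply notes that $S_{\beta_1,\ldots,\beta_\tau}\subseteq\zo^N$ and directly invokes \Cref{lem:GW-ins-ball-bound}, which is exactly your containment argument. Your extra remark that the restricted adversary only deletes (so $s'$ is a subsequence of $s$) and your flag that $\delta<1/2$ must hold for the Guruswami--Wang bound to apply are both sensible and, if anything, slightly more careful than the paper's own one-line treatment.
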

    
    Combining these two claims, we can claim the existence of a code $C \subseteq S_{\beta_1, \ldots, \beta_{\tau}}$ with a certain rate that is robust against the $(\delta,\tau)$-restricted adversary (defined in \Cref{def:spec-adv}).
    \begin{proposition}
    \label{prop:inner-code}
        Let $\delta\in (0,1)$. Let $\beta_1, \ldots, \beta_\tau \in [0,1]$ be such that $\sum_{i=1}^\tau i\beta_i = 1$ and denote $\beta = \sum_{i=1}^\tau \beta_i$. There is a code $C\subset S_{\beta_1, \ldots, \beta_\tau}$ with rate
        \[
        H\left( \frac{1}{\beta}(\beta_1, \ldots, \beta_\tau) \right) - (2\beta_\tau + \delta) h\left( \frac{\delta}{2\beta_{\tau}+\delta}\right)- h(\delta) - o(1)
        \]
        that is robust against the $(\delta,\tau)$-restricted adversary.
    \end{proposition}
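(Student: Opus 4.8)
The plan is to build $C$ greedily, in the spirit of the Gilbert--Varshamov bound, using Claims~\ref{clm:adv-del-bal} and~\ref{clm:adv-ins-bal} as the combinatorial input. Call two distinct strings $s_1,s_2\in S_{\beta_1,\ldots,\beta_\tau}$ \emph{conflicting} if there is a single string $s'$ that the $(\delta,\tau)$-restricted adversary can produce both from $s_1$ and from $s_2$. A code $C\subseteq S_{\beta_1,\ldots,\beta_\tau}$ with no two conflicting codewords is exactly one that is uniquely decodable against this adversary: given a received $s'$, the decoder returns the unique codeword able to produce it (existence follows since $s'$ was produced by some codeword, uniqueness from the no-conflict property). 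I would construct such a $C$ by initializing $C=\emptyset$ and a pool $P=S_{\beta_1,\ldots,\beta_\tau}$, then repeatedly picking an arbitrary $s\in P$, adding it to $C$, and removing from $P$ every string conflicting with $s$ (including $s$ itself), until $P$ is empty.

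The crux is bounding how many strings leave $P$ per step. Every string conflicting with the chosen $s$ lies in the adversary-preimage of some $s'$ that the adversary can produce from $s$. By Claim~\ref{clm:adv-del-bal}, there are at most $\binom{2\beta_\tau N+\delta N}{\delta N}$ such $s'$; and by Claim~\ref{clm:adv-ins-bal} (which rests on Lemma~\ref{lem:GW-ins-ball-bound}), for each fixed $s'$ there are at most $\delta N\binom{N}{\delta N}$ strings in $S_{\beta_1,\ldots,\beta_\tau}$ that the adversary maps to it. Hence each step removes at most $\binom{2\beta_\tau N+\delta N}{\delta N}\cdot \delta N\binom{N}{\delta N}$ strings, so the process runs for at least
\[
|C|\;\ge\;\frac{|S_{\beta_1,\ldots,\beta_\tau}|}{\binom{2\beta_\tau N+\delta N}{\delta N}\cdot \delta N\binom{N}{\delta N}}
\]
steps, producing a conflict-free code of this size.

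Finally I would take $\tfrac1N\log_2$ of both sides and estimate term by term: $\tfrac1N\log_2|S_{\beta_1,\ldots,\beta_\tau}|\ge H(\tfrac1\beta(\beta_1,\ldots,\beta_\tau))-o(1)$ by Claim~\ref{clm:S-betas-set-size}; $\tfrac1N\log_2\binom{2\beta_\tau N+\delta N}{\delta N}\le (2\beta_\tau+\delta)\,h\!\left(\tfrac{\delta}{2\beta_\tau+\delta}\right)$ via the standard bound $\binom{n}{k}\le 2^{n h(k/n)}$ applied with $n=(2\beta_\tau+\delta)N$, $k=\delta N$; $\tfrac1N\log_2\binom{N}{\delta N}\le h(\delta)$; and $\tfrac1N\log_2(\delta N)=o(1)$. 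Combining these yields the claimed rate $H(\tfrac1\beta(\beta_1,\ldots,\beta_\tau))-(2\beta_\tau+\delta)\,h\!\left(\tfrac{\delta}{2\beta_\tau+\delta}\right)-h(\delta)-o(1)$.

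The main obstacle I anticipate is purely in the bookkeeping around the adversary's power rather than in any deep idea: one must make sure ``performs a $\delta$ fraction of deletions'' is interpreted consistently throughout (if one wants robustness against \emph{at most} $\delta N$ restricted deletions rather than exactly $\delta N$, the denominator picks up only an extra multiplicative factor polynomial in $N$, absorbed into $o(1)$), and that the ``conflict'' relation is precisely the obstruction to unique decodability. Since the hard counting — the deletion-ball size under the restricted adversary and the size of the insertion preimage inside $S_{\beta_1,\ldots,\beta_\tau}$ — has already been packaged into Claims~\ref{clm:adv-del-bal} and~\ref{clm:adv-ins-bal}, what remains is the routine Gilbert--Varshamov counting sketched above.
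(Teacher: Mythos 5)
Your proposal is correct and follows essentially the same route as the paper: a greedy (Gilbert--Varshamov-style) selection from $S_{\beta_1,\ldots,\beta_\tau}$, discarding at each step the at most $\delta N\binom{2\beta_\tau N+\delta N}{\delta N}\binom{N}{\delta N}$ confusable strings counted via Claims~\ref{clm:adv-del-bal} and~\ref{clm:adv-ins-bal}, then taking logarithms and invoking Claim~\ref{clm:S-betas-set-size} together with the standard entropy bound on binomial coefficients. The only difference is that you spell out the unique-decodability argument and the interpretation of ``at most $\delta N$ deletions,'' which the paper leaves implicit.
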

    \begin{proof}
        We construct the code greedily. Each time we add a string $c$ from $S_{\beta_1, \ldots, \beta_{\tau}}$ to our codebook $C$, we need to discard from $S_{\beta_1, \ldots, \beta_{\tau}}$ all the strings that are \emph{confusable} with $c$. That is, all strings in $s\in S_{\beta_1, \ldots, \beta_{\tau}}$ for which the $(\delta,\tau)$-restricted adversary can transform $s$ and $c$ into the same string.
        
        By \Cref{clm:adv-del-bal} and \Cref{clm:adv-ins-bal}, each string added to the code is confusable with at most 
        \begin{align*}
        \delta N\cdot \binom{2\beta_2 N + \delta N}{\delta N} \cdot \binom{N}{\delta N}
        \end{align*}
        strings in $S_{\beta_1, \ldots, \beta_{\tau}}$. Then, we have
        \begin{align*}
            \frac{1}{N}\log_2 |C| &\geq \frac{1}{N}\log_2|S_{\beta_1, \ldots, \beta_\tau}| - \frac{1}{N}\log_2\left| \delta N\cdot \binom{2\beta_2 N + \delta N}{\delta N} \cdot \binom{N}{\delta N} \right| \\
        \end{align*}
        The proposition follows by \Cref{clm:S-betas-set-size} and by applying the Stirling approximation.
    \end{proof}

    We now introduce our construction, which basically takes the code guaranteed by \Cref{prop:inner-code} and blows up the runs of length $\tau$.
    \begin{construction} \label{const:code-for-bdc-long-del}
        Let $M \geq \tau$ be an integer and let $\cC'$ be the code guaranteed by \Cref{prop:inner-code}. 
        Define $\cC$ to be the code obtained from $\cC'$ by transforming each run of length $\tau$ to a run of length $M$. 
        The rate of $\cC$ is 
        \[
        \frac{H\left( \frac{1}{\beta}(\beta_1, \ldots, \beta_\tau) \right) - (2\beta_\tau + \delta) h\left( \frac{\delta}{2\beta_{\tau}+\delta}\right)- h(\delta)}{1 + (M - t)\beta_{\tau}} - o(1) \;.
        \]
    \end{construction}
    Our next theorem states that this code is robust against the channel $\bdctzerod$. We denote by $\Bin(n, p)$ the binomial distribution.
    \begin{theorem}\label{thm:decode-correct}
        Let $M$ and $\tau$ be integers where $M\geq \tau$. Let $\beta_1, \ldots, \beta_\tau \in [0,1]$ be such that $\sum_{i=1}^\tau i\beta_i = 1$ and denote $\beta = \sum_{i=1}^\tau \beta_i$. Let $\delta \in (0,1/2)$ and denote by $\cC$ the code from \Cref{const:code-for-bdc-long-del} when given $\delta, \beta_1, \ldots, \beta_\tau$ and $M$.
        Let $Z \sim \Bin(M, 1-d)$ and for every nonnegative integer $i$, define
        \[
            P^{(\tau) \rightarrow (i)}:= \Pr[Z = i]
        \]
        and let 
        \begin{equation} \label{eq:constraint}
            \alpha := \beta_\tau \cdot \left(2 \tau  P^{(\tau) \rightarrow (0)} + \sum_{i=1}^{\tau - 1} (\tau - i) P^{(\tau) \rightarrow (i)}\right) \;. 
        \end{equation}
        Let $c\in C$ be a codeword and let $y$ be the output of the channel $\bdctzerod$ on $c$.
        If $\alpha<\delta$,
        there exists an algorithm that given $y$ as input outputs $c$ with probability $1 - \exp(-\Theta (n))$. 
    \end{theorem}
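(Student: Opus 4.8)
The plan is to reduce decoding on $\bdctzerod$ to decoding the inner code $\cC'$ of \Cref{prop:inner-code} against its $(\delta,\tau)$-restricted adversary. Fix a codeword $c\in\cC$ and let $c'\in\cC'$ be its preimage under \Cref{const:code-for-bdc-long-del}, so that $c$ is obtained from $c'$ by stretching each run of length exactly $\tau$ in $c'$ (there are $\beta_\tau N$ of them; call these the \emph{long runs}) to length $M$. These long runs are the only runs of $c$ of length $\geq\tau$, hence the only runs the channel modifies: the $j$-th long run becomes a run of the same symbol of length $Z_j\sim\Bin(M,1-d)$, independently over $j$, and if $Z_j=0$ it disappears and its two neighbours, which share a symbol, merge. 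Let $\pi$ be the map on binary strings that replaces every maximal run of length $\geq\tau$ by a run of the same symbol of length exactly $\tau$; note $\pi(c)=c'$. The decoder, given a trace $y$, computes $\pi(y)$ and outputs the blow-up of the unique $\hat c'\in\cC'$ from which $\pi(y)$ is reachable by at most $\delta N$ restricted deletions; since \Cref{prop:inner-code} guarantees that no two codewords of $\cC'$ are confusable under this adversary, such $\hat c'$ is unique whenever it exists. Thus it suffices to show that with probability $1-\exp(-\Theta(n))$ the string $\pi(y)$ is obtained from $c'$ by fewer than $\delta N$ restricted deletions, since then $\hat c'=c'$ and the decoder returns $c$.

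For a long run $j$ define $D_j=2\tau$ if $Z_j=0$, $D_j=\tau-Z_j$ if $1\leq Z_j\leq\tau-1$, and $D_j=0$ if $Z_j\geq\tau$. The $D_j$ are independent, take values in $\{0,1,\dots,2\tau\}$, and $\E[D_j]=2\tau\cdot P^{(\tau)\rightarrow(0)}+\sum_{i=1}^{\tau-1}(\tau-i)P^{(\tau)\rightarrow(i)}=g(d)$. The combinatorial heart of the argument is that $\pi(y)$ is always reachable from $c'$ by at most $\sum_j D_j$ restricted deletions. To see this, first delete, for each long run $j$, exactly $\tau-\min(Z_j,\tau)$ of its bits in $c'$ (legal, since run $j$ has length $\tau$), and let $s_0$ be the result after removing empty runs and merging equal-symbol runs. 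One checks that $s_0$ and $\pi(y)$ have the same run skeleton, that every run of $s_0$ is at least as long as the corresponding run of $\pi(y)$, and that the total surplus is at most $\tau$ per long run with $Z_j=0$; moreover every surplus run of $s_0$ is a merge created by one or more vanished long runs, and the excess bits can be removed from runs lying immediately to the right of a (length-$\tau$) vanished long run, which are legal targets for the restricted adversary. Charging these at most $\tau$ further deletions to the vanished long run on their left gives at most $2\tau$ per vanished long run, hence at most $\sum_j D_j$ deletions in all. Finally $\E\big[\sum_j D_j\big]=\beta_\tau N\,g(d)=\alpha N$, and since the $D_j$ are bounded and independent, Hoeffding's inequality (\Cref{lem:hoeff}) yields $\Pr\big[\sum_j D_j\geq\delta N\big]\leq\exp(-\Omega((\delta-\alpha)^2N))=\exp(-\Theta(n))$, using $\alpha<\delta$ and the fact that the blocklength satisfies $n=(1+(M-\tau)\beta_\tau)N=\Theta(N)$.

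On the complementary event $\pi(y)$ lies in the $(\delta,\tau)$-restricted deletion ball of $c'$ and of no other codeword, so the decoder recovers $c'$, hence $c$; combined with the rate computed in \Cref{const:code-for-bdc-long-del} this gives the bound of \Cref{thm:bdc-tau-thm} once $\delta$ is chosen appropriately (for instance $\delta=g(d)$, for which $\alpha=\beta_\tau g(d)<\delta$). I expect the main difficulty to lie entirely in the combinatorial step of the second paragraph: making precise how a cluster of nearby vanished long runs merges several runs of $c'$ into one, verifying that the resulting surplus can always be removed through legal (right-neighbour-of-a-length-$\tau$-run) positions and charged to vanished long runs without double counting, and observing that boundary long runs are only easier since a vanished first or last run creates no merge. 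The remaining ingredients — independence of the $Z_j$, the identity $\E[D_j]=g(d)$, and the concentration step — are routine.
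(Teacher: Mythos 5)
Your proposal is correct and takes essentially the same approach as the paper: the same two-step decoder (collapse all runs of length $\geq\tau$ back to length $\tau$, then decode against the $(\delta,\tau)$-restricted adversary), the same random variables $D_j$ (which the paper calls $X_i$), the same charging of $2\tau$ per vanished long run split between the channel's $\tau$ deletions and the merge-induced surplus removed from right-neighbours, and the same Hoeffding concentration using $\E[\sum_j D_j]=\alpha N<\delta N$. The paper's write-up of the combinatorial step (the cluster $r_1\circ r_2\circ\cdots\circ r_{2\ell+1}\circ r_{2\ell+2}$ with the even-indexed runs vanished) is exactly the precise version of your second-paragraph sketch, so the detail you flag as the ``main difficulty'' is handled there in the way you anticipate.
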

    
    \begin{proof}

    To prove the theorem, we will present a decoding algorithm and prove that it succeeds with probability $1 - \exp(-\Omega (N))$.
    Let $C$ be the code given in \Cref{const:code-for-bdc-long-del} and denote by $C'$ the associated code given by \Cref{prop:inner-code}. That is, for every $c'\in C'$ we obtain $c\in C$ by turning the $\tau$-runs into $M$-runs.
    Assume that $c$ was transmitted, assume that the channel output on $c$ was $\tilde{c}$, and denote by $c'$ the respective codeword in $C'$. Next, we give the decoding algorithm.
    \paragraph{Decoding algorithm.}
    \begin{enumerate}
        \item \label{decode:blow-up} For every run in $\tilde{c}$, if its length is at least $\tau$, decode it as a run of length $\tau$. Denote the output of this step as $\tilde{c}'$ and note that $\tilde{c}'$ contains only runs of length $\leq \tau$.
        
        \item \label{decode:inner-code-decode2} 
        Find a codeword in $C$ such that the $(\delta, \tau)$-restricted adversary can produce $\tilde{c}$ from this codeword.
        Return this codeword.
    \end{enumerate}

        Recall that the channel can apply deletions only to runs with length at least $\tau$ and that we have exactly $\beta_\tau N$ such runs in $c$ (and in $c'$). Denote these runs in $c'$ by $r_1, \ldots, r_{\beta_\tau N}$.
        We denote by $Z_i, i\in [\beta_\tau N]$ the random variables that correspond to the number of bits that survived after transmitting the blowed up version $r_i$, of length $M$. 
        Clearly, $Z_i \sim \Bin(M,1-d)$.
        Define the following random variable for $i\in [\beta_\tau m]$:
        \[
        X_i := 
        \begin{cases}
            0, \qquad \,\,\, \textup{if } Z_i \geq \tau\\
            \tau - Z_i, \,\textup{if } 0< Z_i < \tau \\
            2\tau \quad \;\,\,\,\,\, \textup{ if } Z_i = 0 
        \end{cases}\;.
        \]
        We claim that $\ed(c', \tilde{c}') \leq \sum_{i=1}^{\beta_\tau m}X_i$, and in particular it bounds the number of deletions that were performed to $c'$ in order to obtain $\tilde{c}'$ (recall that $c'$ is the codeword of $C'$). Indeed, 
        \begin{itemize}
            \item If $Z_i \geq \tau$, then in Step~\ref{decode:blow-up}, the algorithm decodes it as runs of length $\tau$ and no deletion occurred in the run $r_i$. 
            \item If $0 < Z_i< \tau$, then in Step~\ref{decode:blow-up}, the algorithm decodes it as runs of length $Z_i$ and thus $\tau - Z_i$ deletions happened to $r_i$.
            \item If $Z_i = 0$, then this run is completely deleted by the channel. In this case, the run before and the run after merge to a single run of the same symbol. Then, in Step~\ref{decode:blow-up}, the merged run is decoded to a single run. 
            In this case, the channel caused $\tau$ deletions (the run $r_i$ of length $\tau$ was completely deleted) and the decoding algorithm caused another at most $\leq \tau$ deletions.
            More formally, assume that we have the following sequence of runs $r_1 \circ r_2 \circ \cdots \circ r_{2\ell+1}\circ r_{2\ell + 2}$ in $c'$ and that the runs $r_2, r_4, \ldots, r_{2\ell}$ were deleted by the channel and that $r_1$ and $r_{2\ell+1}$ and $r_{2\ell + 2}$ were \emph{not deleted} by the channel. The decoding algorithm sees a long run of the same symbol and decodes it as a single run. 
            Then, in the worst case scenario, the decoding algorithm deletes all the bits that correspond to $r_3, \ldots, r_{2\ell+1}$. Indeed, if the $|r_1| < \tau$ then the length of the decoded merged run is at least $r_1$, and if $|r_1| = \tau$ the length of the decoded merged run is at least $Z_1$.
            In both cases, the number of additional deletions caused by the merge and the decoding algorithm is at most $|r_3| + |r_5| + \cdots + |r_{2\ell+1}| \leq \tau \cdot \ell$. Therefore, in total, we suffered at most $2\tau \ell$ deletions.

        \end{itemize}
        We have shown that $\ed(c', \tilde{c}') \leq \sum_{i=1}^{\beta_\tau m}X_i$. We also observe that $\tilde{c}'$ can be obtained from $c'$ by the $(\delta,\tau)$-restricted adversary. Indeed, by the cases analyzed above and the definition of the $X_i$s, every deletion is applied to a run of length $\tau$ or to the subsequent run. 
        Our next goal is to provide an upper bound for the variable $X := \sum_{i=1}^{\beta_\tau N} X_i$ that holds with high probability. First, observe that
        \begin{equation} \label{eq:bX-upper-bound}
        \bbE[X] \leq \left(2\tau P^{(\tau) \rightarrow (0)} + \sum_{i=1}^{\tau - 1} (\tau - i) P^{(\tau) \rightarrow (i)}\right) \cdot \beta_\tau N \;.
        \end{equation}
        On the other hand, since the event $0 < Z_i < \tau$ causes exactly $\tau - Z_i$ deletion and the event $Z_i=0$ causes at least $\tau$ deletion, we can also lower bound $\bbE[X]$ by
        \begin{equation}\label{eq:bX-lower-bound}
            \bbE[X] \geq \left(\tau P^{(\tau) \rightarrow (0)} + \sum_{i=1}^{\tau - 1} (\tau - i) P^{(\tau) \rightarrow (i)}\right) \cdot \beta_\tau N \;.
        \end{equation}
        
        Note that the $X_i$s are independent random variables (by the nature of the channel). Also, $0\leq X_i \leq 2\tau$ and they have finite first and second moments. 
        Thus, we apply \Cref{lem:hoeff} and get that for any $\nu > 0$, 
        \begin{equation} \label{eq:X-concentrated}
            \Pr[X > (1 + \nu)\bbE[X]] \leq \exp\left( - \frac{2 \nu^2 (\bbE[X])^2}{\beta_\tau N \cdot 4\tau^2}\right) \leq \exp(-\Omega (N))\;,
        \end{equation}
        where the second inequality follows from the lower bound for $\bbE[X]$ in \cref{eq:bX-lower-bound}.
        We have 
        \begin{align*}
            \Pr\left[\ed\left(c, \tilde{c}\right)  > \delta N\right] &\leq \Pr[X > \delta N] \\
            &= \Pr\left[X > \left(1 + \frac{\delta - \alpha}{\alpha}\right) \alpha N\right]\\
            &\leq \Pr\left[X > \left(1 + \frac{\delta - \alpha}{\alpha}\right) \bbE [X]\right]\\
            &\leq \exp(-\Omega(N)) \;.
        \end{align*}
        The first inequality follows by the claim that $X$ is an upper bound on $\ed\left(c', \tilde{c}'\right)$. The second inequality follows by the assumption $\bbE[X]\leq \alpha N$ and the last inequality follows by the assumption that $\delta > \alpha$ and substituting $\nu = (\delta - \alpha) / \alpha$ in \cref{eq:X-concentrated}.
        To conclude, the probability that $\ed\left(c', \tilde{c}'\right) \leq \delta N$ is at least $1 - \exp(-\Omega(N))$. Now, since the code $C'$ is robust against the $(\delta,\tau)$-restricted adversary, and since we showed that all the deletions performed by the channel or the algorithm are of this kind, the theorem follows.
    \end{proof}

    \subsection{Achievable rates for $\tau = 2$ and $\tau = 3$}
    We provide plots of the lower bounds on the capacity of the $\bdctzerod$ channel derived in previous sections, where we focused on $\tau=2$ (\cref{fig:tau-2}) and $\tau=3$ (\cref{fig:tau-3}). Observe that both our lower bounds given by \Cref{thm:dg-like-bound} and \Cref{thm:bdc-tau-thm} do not provide efficient coding schemes by default.
    Nevertheless, by invoking \Cref{thm:efficient-bounded-rl}, these two lower bounds can be transformed into explicit and efficient codes for the $\bdctzerod$ channel. 
    Also, note that the lower bound implied by \Cref{thm:bdc-tau-thm} contains several parameters ($\beta_1,\ldots,\beta_\tau, M$).
    Therefore, we implemented a greedy search on these parameters to numerically maximize the lower bound and ran it for every deletion probability $d$, with two-digit precision.

    We note that in the case of $\tau=3$, the transmitter can send strings in $\{0,1\}^N$ that contain only runs of length $1$ or runs of length $2$ and the receiver receives those strings without any error.
    Thus, a baseline lower bound for the capacity of $\bdctzerod$ when $\tau = 2$ is $\log_2 ((1+\sqrt5)/2) -o(1) \leq 0.6943$. 
    Indeed, the number of such strings of length $n$ is given by the $n$-th Fibonacci number, defined by the recurrence $F(n) = F(n-1) + F(n-2)$.
    Our lower bounds improve on this baseline lower bound.
    \begin{figure}
		\centering	\begin{tikzpicture}
		
		\begin{axis}[
            width = 12cm,
		scaled y ticks = false,
		tick label style={/pgf/number format/fixed },
		axis lines = left,
		xlabel = $d$,
		ylabel = {$Rate$},
		xtick={0.1,0.2,0.3,0.4,0.5,0.6,0.7,0.8,0.9,1},
		ytick={0,0.2,0.4,0.6,0.8,1}, 
		legend pos=north east,
		ymajorgrids=true,
		grid style=dashed,
		]

            \addplot [
		domain=0:0.67, 
		samples=100, 
		color=red,
		]
		{1 + 0.75*x*ln(0.75*x)/ln(2) + (1- 0.75*x)*ln(1-0.75*x)/ln(2)};
		\addlegendentry{$1 - h\left(\frac{3}{4}d\right)$ \Cref{thm:dg-like-bound}}

            \addplot[
		color=blue,
            mark size=0.6pt,
		mark=*,
		]
		coordinates {
                (0.0,0.6942)
                (0.01,0.6024)
                (0.02,0.5455)
                (0.03,0.5393)
                (0.04,0.5315)
                (0.05,0.5221)
                (0.06,0.5115)
                (0.07,0.4997)
                (0.08,0.487)
                (0.09,0.4733)
                (0.1,0.4587)
                (0.11,0.448)
                (0.12,0.4438)
                (0.13,0.439)
                (0.14,0.4337)
                (0.15,0.4278)
                (0.16,0.4213)
                (0.17,0.4143)
                (0.18,0.4067)
                (0.19,0.3986)
                (0.2,0.3899)
                (0.21,0.3832)
                (0.22,0.3794)
                (0.23,0.3752)
                (0.24,0.3706)
                (0.25,0.3656)
                (0.26,0.3602)
                (0.27,0.3544)
                (0.28,0.3481)
                (0.29,0.3413)
                (0.3,0.3363)
                (0.31,0.3326)
                (0.32,0.3285)
                (0.33,0.3241)
                (0.34,0.3193)
                (0.35,0.3142)
                (0.36,0.3086)
                (0.37,0.3028)
                (0.38,0.2993)
                (0.39,0.2955)
                (0.4,0.2913)
                (0.41,0.2867)
                (0.42,0.2817)
                (0.43,0.2763)
                (0.44,0.2724)
                (0.45,0.2686)
                (0.46,0.2643)
                (0.47,0.2597)
                (0.48,0.2546)
                (0.49,0.2501)
                (0.5,0.2462)
                (0.51,0.2419)
                (0.52,0.2371)
                (0.53,0.2321)
                (0.54,0.2282)
                (0.55,0.2239)
                (0.56,0.2191)
                (0.57,0.2144)
                (0.58,0.2102)
                (0.59,0.2057)
                (0.6,0.2007)
                (0.61,0.1966)
                (0.62,0.1919)
                (0.63,0.1871)
                (0.64,0.1828)
                (0.65,0.1779)
                (0.66,0.1735)
                (0.67,0.1687)
                (0.68,0.1641)
                (0.69,0.1593)
                (0.7,0.1547)
                (0.71,0.1498)
                (0.72,0.1452)
                (0.73,0.1405)
                (0.74,0.1356)
                (0.75,0.1307)
                (0.76,0.1259)
                (0.77,0.1211)
                (0.78,0.1162)
                (0.79,0.1112)
                (0.8,0.1063)
                (0.81,0.1013)
                (0.82,0.0963)
                (0.83,0.0913)
                (0.84,0.0862)
                (0.85,0.0811)
                (0.86,0.0759)
                (0.87,0.0708)
                (0.88,0.0656)
                (0.89,0.0603)
                (0.9,0.055)
                (0.91,0.0497)
                (0.92,0.0444)
                (0.93,0.039)
                (0.94,0.0335)
                (0.95,0.028)
                (0.96,0.0225)
                (0.97,0.017)
                (0.98,0.0114)
                (0.99,0.0057)

        }; 
        \addlegendentry{\Cref{thm:bdc-tau-thm}}
        
        \end{axis}
	\end{tikzpicture}
	\caption{Lower bounds on the capacity of the $\bdctzerod$ for $\tau=2$.} \label{fig:tau-2}
	\end{figure}
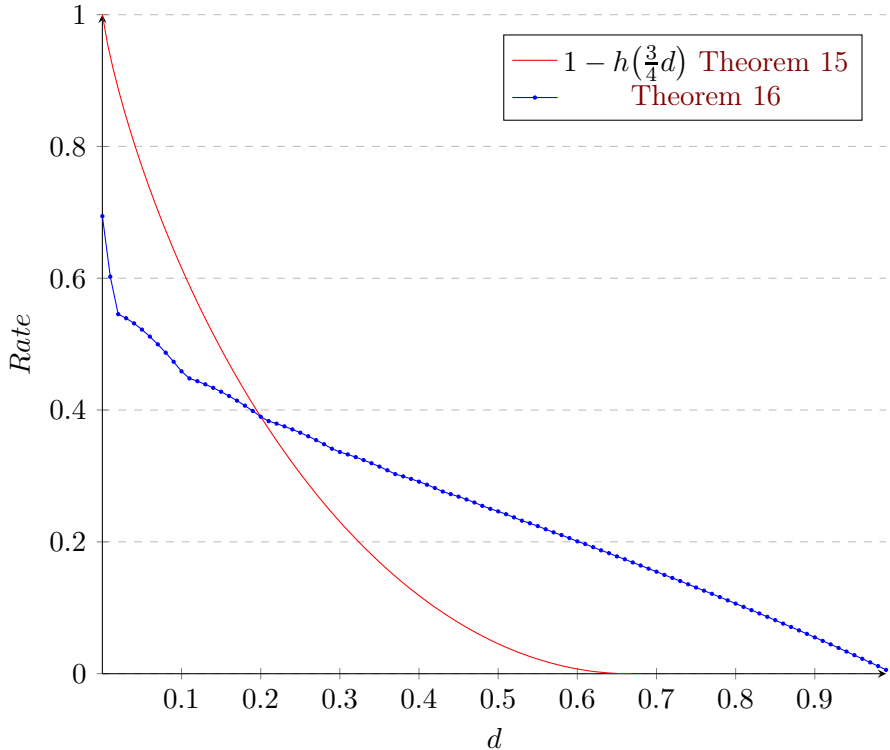

    \begin{figure}
		\centering
		\begin{tikzpicture}
		
		\begin{axis}[
            width = 12cm,
		scaled y ticks = false,
		tick label style={/pgf/number format/fixed },
		axis lines = left,
		xlabel = $d$,
		ylabel = {$Rate$},
		xtick={0.1,0.2,0.3,0.4,0.5,0.6,0.7,0.8,0.9,1},
		ytick={0,0.2,0.4,0.6,0.8,1}, 
		legend pos=north east,
		ymajorgrids=true,
		grid style=dashed,
		]
		
		\addplot [
		domain=0:1, 
		samples=100, 
		color=black,
		]
		{0.6943};
		\addlegendentry{$0.6943$}

            \addplot [
		domain=0:1, 
		samples=100, 
		color=red,
		]
		{1 + (x/2)*ln(x/2)/ln(2) + (1- x/2)*ln(1-x/2)/ln(2)};
		\addlegendentry{$1 - h\left(\frac{1}{2}d\right)$ \Cref{thm:dg-like-bound}}
        
            \addplot[
		color=blue,
            mark size=0.6pt,
		mark=*,
		]
		coordinates {
                    (0.0,0.8791)
                    (0.01,0.8321)
                    (0.02,0.8075)
                    (0.03,0.8036)
                    (0.04,0.799)
                    (0.05,0.7937)
                    (0.06,0.7878)
                    (0.07,0.7816)
                    (0.08,0.7753)
                    (0.09,0.7687)
                    (0.1,0.7622)
                    (0.11,0.7594)
                    (0.12,0.7571)
                    (0.13,0.7545)
                    (0.14,0.7518)
                    (0.15,0.7489)
                    (0.16,0.7458)
                    (0.17,0.7426)
                    (0.18,0.7395)
                    (0.19,0.7361)
                    (0.2,0.7329)
                    (0.21,0.7312)
                    (0.22,0.7295)
                    (0.23,0.7278)
                    (0.24,0.7259)
                    (0.25,0.7241)
                    (0.26,0.7221)
                    (0.27,0.7201)
                    (0.28,0.7181)
                    (0.29,0.716)
                    (0.3,0.7148)
                    (0.31,0.7137)
                    (0.32,0.7124)
                    (0.33,0.7111)
                    (0.34,0.7098)
                    (0.35,0.7085)
                    (0.36,0.7072)
                    (0.37,0.7059)
                    (0.38,0.7052)
                    (0.39,0.7045)
                    (0.4,0.7037)
                    (0.41,0.7028)
                    (0.42,0.7018)
                    (0.43,0.7008)
                    (0.44,0.7002)
                    (0.45,0.6997)
                    (0.46,0.6992)
                    (0.47,0.6987)
                    (0.48,0.6981)
                    (0.49,0.6975)
                    (0.5,0.697)
                    (0.51,0.6965)
                    (0.52,0.6959)
                    (0.53,0.6952)
                    (0.54,0.6945)
                    (0.55,0.6939)
                    (0.56,0.6933)
                    (0.57,0.6927)
                    (0.58,0.6927)
                    (0.59,0.6927)
                    (0.6,0.6927)
                    (0.61,0.6927)
                    (0.62,0.6927)
                    (0.63,0.6927)
                    (0.64,0.6927)
                    (0.65,0.6927)
                    (0.66,0.6927)
                    (0.67,0.6927)
                    (0.68,0.6927)
                    (0.69,0.6927)
                    (0.7,0.6927)
                    (0.71,0.6927)
                    (0.72,0.6927)
                    (0.73,0.6927)
                    (0.74,0.6927)
                    (0.75,0.6927)
                    (0.76,0.6927)
                    (0.77,0.6927)
                    (0.78,0.6927)
                    (0.79,0.6927)
                    (0.8,0.6927)
                    (0.81,0.6927)
                    (0.82,0.6927)
                    (0.83,0.6927)
                    (0.84,0.6927)
                    (0.85,0.6927)
                    (0.86,0.6927)
                    (0.87,0.6927)
                    (0.88,0.6927)
                    (0.89,0.6927)
                    (0.9,0.6927)
                    (0.91,0.6927)
                    (0.92,0.6927)
                    (0.93,0.6927)
                    (0.94,0.6927)
                    (0.95,0.6927)
                    (0.96,0.6927)
                    (0.97,0.6927)
                    (0.98,0.6927)
                    (0.99,0.6927)
	
                };
                \addlegendentry{\Cref{thm:bdc-tau-thm}}
		\end{axis}
		\end{tikzpicture}
		\caption{Lower bounds on the capacity of the $\bdctzerod$ for $\tau=3$.} 
        \label{fig:tau-3}
	\end{figure}

\section{Acknowledgments}
We thank Elena Grigorescu for many fruitful discussions on this problem and Olgica Milenkovic for insightful discussions that motivated the study of channels with runlength-dependent deletions.
We also thank the anonymous IEEE Transactions on Information Theory reviewers for many insightful comments that improved this work.

\bibliographystyle{alpha}
\bibliography{refs}

@ARTICLE{LT21,
  author={Li, Yonglong and Tan, Vincent Y. F.},
  journal={IEEE Transactions on Information Theory}, 
  title={On the Capacity of Channels With Deletions and States}, 
  year={2021},
  volume={67},
  number={5},
  pages={2663-2679},
  keywords={Markov processes;Channel capacity;Power capacitors;Capacity planning;Channel models;Upper bound;Hidden Markov models;Deletion channels;channels with state;finite-state channel;stationary capacity;Markov capacity;polar codes},
  doi={10.1109/TIT.2020.3043117}}

@article{Fei59,
title = {On the coding theorem and its converse for finite-memory channels},
journal = {Information and Control},
volume = {2},
number = {1},
pages = {25-44},
year = {1959},
issn = {0019-9958},
doi = {https://doi.org/10.1016/S0019-9958(59)90066-X},
url = {https://www.sciencedirect.com/science/article/pii/S001999585990066X},
author = {Amiel Feinstein},
abstract = {A new definition for the capacity C of a (discrete or semicontinuous) channel with finite memory is given. In terms of this definition both the coding theorem and its weak converse are easily established. In particular, all questions of the ergodicity of the source-channel distribution are avoided, and we are able to show for discrete channels that both the ergodic and stationary capacities (as given by the Shannon-McMillan definition) coincide with that given here. Finally, the strong converse of the coding theorem is shown to hold for a particular finite-memory channel recently considered by Wolfowitz.}
}

@INPROCEEDINGS{PLW22,
  author={Pernice, Francisco and Li, Ray and Wootters, Mary},
  booktitle={2022 IEEE International Symposium on Information Theory (ISIT)}, 
  title={Efficient capacity-achieving codes for general repeat channels}, 
  year={2022},
  volume={},
  number={},
  pages={3097-3102},
  keywords={Codes;Runtime;Symbols;Probability distribution;Decoding;Synchronization;Channel models;synchronization channels;efficient codes;explicit codes},
  doi={10.1109/ISIT50566.2022.9834386}}

@book{PW24, place={Cambridge}, title={Information Theory: From Coding to Learning}, publisher={Cambridge University Press}, author={Polyanskiy, Yury and Wu, Yihong}, year={2024},note={Available at \url{https://people.lids.mit.edu/yp/homepage/data/itbook-export.pdf}}}

@article{Dob67,
  title={Shannon's theorems for channels with synchronization errors},
  author={Dobrushin, Roland L.},
  journal={Problemy Peredachi Informatsii},
  volume={3},
  number={4},
  pages={18--36},
  year={1967},
  publisher={Russian Academy of Sciences, Branch of Informatics, Computer Equipment and~…},
    url={https://www.mathnet.ru/eng/ppi1919}
}

@INPROCEEDINGS{MD24,
  author={Morozov, Ruslan and Duman, Tolga M.},
  booktitle={2024 IEEE International Symposium on Information Theory (ISIT)}, 
  title={On the Capacity of Channels with {Markov} Insertions, Deletions and Substitutions}, 
  year={2024},
  volume={},
  number={},
  pages={3444-3449},
  keywords={Asymptotic stability;Channel capacity;DNA;Encoding;Genetic communication;Synchronization;Mutual information},
  doi={10.1109/ISIT57864.2024.10619598}}

@article{HMG19,
  title={A characterization of the {DNA} data storage channel},
  author={Heckel, Reinhard and Mikutis, Gediminas and Grass, Robert N.},
  journal={Scientific reports},
  volume={9},
  number={1},
  pages={9663},
  year={2019},
  publisher={Nature Publishing Group UK London}
}

@article{RRC13,
  title={Characterizing and measuring bias in sequence data},
  author={Ross, Michael G. and Russ, Carsten and Costello, Maura and Hollinger, Andrew and Lennon, Niall J. and Hegarty, Ryan and Nusbaum, Chad and Jaffe, David B.},
  journal={Genome biology},
  volume={14},
  pages={1--20},
  year={2013},
  publisher={Springer}
}

@article{HS21,
author = {Haeupler, Bernhard and Shahrasbi, Amirbehshad},
title = {Synchronization Strings: Codes for Insertions and Deletions Approaching the {Singleton} Bound},
year = {2021},
issue_date = {October 2021},
publisher = {Association for Computing Machinery},
address = {New York, NY, USA},
volume = {68},
number = {5},
issn = {0004-5411},
url = {https://doi.org/10.1145/3468265},
doi = {10.1145/3468265},
abstract = {We introduce synchronization strings, which provide a novel way to efficiently deal with synchronization errors, i.e., insertions and deletions. Synchronization errors are strictly more general and much harder to cope with than more commonly considered Hamming-type errors, i.e., symbol substitutions and erasures. For every ε > 0, synchronization strings allow us to index a sequence with an ε-O(1)-size alphabet, such that one can efficiently transform k synchronization errors into (1 + ε)k Hamming-type errors. This powerful new technique has many applications. In this article, we focus on designing insdel codes, i.e., error correcting block codes (ECCs) for insertion-deletion channels.While ECCs for both Hamming-type errors and synchronization errors have been intensely studied, the latter has largely resisted progress. As Mitzenmacher puts it in his 2009 survey [30]: “Channels with synchronization errors...are simply not adequately understood by current theory. Given the near-complete knowledge, we have for channels with erasures and errors...our lack of understanding about channels with synchronization errors is truly remarkable.” Indeed, it took until 1999 for the first insdel codes with constant rate, constant distance, and constant alphabet size to be constructed and only since 2016 are there constructions of constant rate insdel codes for asymptotically large noise rates. Even in the asymptotically large or small noise regimes, these codes are polynomially far from the optimal rate-distance tradeoff. This makes the understanding of insdel codes up to this work equivalent to what was known for regular ECCs after Forney introduced concatenated codes in his doctoral thesis 50 years ago.A straightforward application of our synchronization strings-based indexing method gives a simple black-box construction that transforms any ECC into an equally efficient insdel code with only a small increase in the alphabet size. This instantly transfers much of the highly developed understanding for regular ECCs into the realm of insdel codes. Most notably, for the complete noise spectrum, we obtain efficient “near-MDS” insdel codes, which get arbitrarily close to the optimal rate-distance tradeoff given by the Singleton bound. In particular, for any δ ∈ (0,1) and ε > 0, we give a family of insdel codes achieving a rate of 1 - δ - ε over a constant-size alphabet that efficiently corrects a δ fraction of insertions or deletions.},
journal = {J. ACM},
month = sep,
articleno = {36},
numpages = {39},
keywords = {Coding for insertions and deletions, synchronization}
}

@ARTICLE{TPFV22,
  author={Tal, Ido and Pfister, Henry D. and Fazeli, Arman and Vardy, Alexander},
  journal={IEEE Transactions on Information Theory}, 
  title={Polar Codes for the Deletion Channel: Weak and Strong Polarization}, 
  year={2022},
  volume={68},
  number={4},
  pages={2239-2265},
  keywords={Decoding;Polar codes;Complexity theory;Transforms;Mutual information;Encoding;Standards;Polar codes;deletion channel;fast polarization;channels with memory;Markov processes},
  doi={10.1109/TIT.2021.3136440}}

@ARTICLE{MDK18,
  author={Mao, Wei and Diggavi, Suhas N. and Kannan, Sreeram},
  journal={IEEE Transactions on Information Theory}, 
  title={Models and Information-Theoretic Bounds for Nanopore Sequencing}, 
  year={2018},
  volume={64},
  number={4},
  pages={3216-3236},
  keywords={DNA;Sequential analysis;Decoding;Nanobioscience;Current measurement;Reliability;Mathematical model;Deoxyribonucleic acid (DNA) sequencing;bioinformatics;base calling;channel with synchronization errors;deletion channel;finite state channels},
  doi={10.1109/TIT.2018.2809001}}

@article{dBE52,
  title={Some linear and some quadratic recursion formulas. II},
  author={de Bruijn, Nicolaas Govert and Erd\H{o}s, Paul},
  journal={Proceedings of the Koninklijke Nederlandse Akademie van Wetenschappen: Series A: Mathematical Sciences},
  volume={14},
  pages={152--163},
  year={1952}
}

@article{FR20,
  title={Nearly subadditive sequences},
  author={F{\"u}redi, Zolt{\'a}n and Ruzsa, Imre Z.},
  journal={Acta Mathematica Hungarica},
  volume={161},
  number={2},
  pages={401--411},
  year={2020},
  publisher={Springer}
}

@article{con2022improved,
  title={Improved constructions of coding schemes for the binary deletion channel and the {Poisson} repeat channel},
  author={Con, Roni and Shpilka, Amir},
  journal={IEEE Transactions on Information Theory},
  volume={68},
  number={5},
  pages={2920--2940},
  year={2022},
  publisher={IEEE}
}

@article{guruswami2017deletion,
	title={Deletion codes in the high-noise and high-rate regimes},
	author={Guruswami, Venkatesan and Wang, Carol},
	journal={IEEE Transactions on Information Theory},
	volume={63},
	number={4},
	pages={1961--1970},
	year={2017},
	publisher={IEEE}
}

@book{mitzenmacher2005probability,
	title={Probability and computing: Randomized algorithms and probabilistic analysis},
	author={Mitzenmacher, Michael and Upfal, Eli},
	year={2005},
	publisher={Cambridge university press}
}

@ARTICLE{MVS24,
  author={McBain, Brendon and Viterbo, Emanuele and Saunderson, James},
  journal={IEEE Transactions on Information Theory}, 
  title={Information Rates of the Noisy Nanopore Channel}, 
  year={2024},
  volume={70},
  number={8},
  pages={5640-5652},
  keywords={Sequential analysis;DNA;Information rates;Noise measurement;Noise;Current measurement;Distortion;DNA;nanobioscience;nanopores;mathematical model},
  doi={10.1109/TIT.2024.3404002}}

@INPROCEEDINGS{MSV24,
  author={McBain, Brendon and Saunderson, James and Viterbo, Emanuele},
  booktitle={2024 IEEE International Symposium on Information Theory (ISIT)}, 
  title={On Noisy Duplication Channels with {Markov} Sources}, 
  year={2024},
  volume={},
  number={},
  pages={3438-3443},
  keywords={Sequential analysis;Monte Carlo methods;Codes;Channel capacity;DNA;Channel estimation;Markov processes},
  doi={10.1109/ISIT57864.2024.10619201}}

@InProceedings{Rub22,
  author =	{Rubinstein, Ittai},
  title =	{Explicit and Efficient Construction of Nearly Optimal Rate Codes for the Binary Deletion Channel and the {Poisson} Repeat Channel},
  booktitle =	{49th International Colloquium on Automata, Languages, and Programming (ICALP 2022)},
  pages =	{105:1--105:17},
  year =	{2022},
  publisher =	{Schloss Dagstuhl -- Leibniz-Zentrum f{\"u}r Informatik},
  URL =		{https://drops.dagstuhl.de/entities/document/10.4230/LIPIcs.ICALP.2022.105},
  URN =		{urn:nbn:de:0030-drops-164466},
  doi =		{10.4230/LIPIcs.ICALP.2022.105},
  annote =	{Keywords: Error Correcting Codes, Algorithmic Coding Theory, Binary Deletion Channel}
}

@ARTICLE{GL19,
  author={Guruswami, Venkatesan and Li, Ray},
  journal={IEEE Transactions on Information Theory}, 
  title={Polynomial Time Decodable Codes for the Binary Deletion Channel}, 
  year={2019},
  volume={65},
  number={4},
  pages={2171-2178},
  keywords={Binary codes;Upper bound;Synchronization;Decoding;Capacity planning;Encoding;Channel capacity;Error correcting codes;binary deletion channel;synchronization errors;channel capacity},
  doi={10.1109/TIT.2018.2876861}}

@INPROCEEDINGS{PT21,
  author={Pfister, Henry D. and Tal, Ido},
  booktitle={2021 IEEE International Symposium on Information Theory (ISIT)}, 
  title={Polar Codes for Channels with Insertions, Deletions, and Substitutions}, 
  year={2021},
  volume={},
  number={},
  pages={2554-2559},
  keywords={Channel capacity;Encoding;Polar codes;Information rates},
  doi={10.1109/ISIT45174.2021.9517755}}

@ARTICLE{CGMR20,
  author={Cheraghchi, Mahdi and Gabrys, Ryan and Milenkovic, Olgica and Ribeiro, João},
  journal={IEEE Transactions on Information Theory}, 
  title={Coded Trace Reconstruction}, 
  year={2020},
  volume={66},
  number={10},
  pages={6084-6103},
  keywords={DNA;Reconstruction algorithms;Encoding;Robustness;Data storage systems;Sequential analysis;Redundancy;Coding;deletion channel;DNA storage;trace reconstruction},
  doi={10.1109/TIT.2020.2996377}}

@article{Mit09,
	title={A survey of results for deletion channels and related synchronization channels},
	author={Mitzenmacher, Michael},
	journal={Probability Surveys},
	volume={6},
	pages={1--33},
	year={2009},
	publisher={The Institute of Mathematical Statistics and the Bernoulli Society}
}

@ARTICLE{MBT10, 
author={Hugues {Mercier} and Vijay K. {Bhargava} and Vahid {Tarokh}}, 
journal={IEEE Communications Surveys Tutorials}, 
title={A survey of error-correcting codes for channels with symbol synchronization errors}, 
year={2010}, 
volume={12}, 
number={1}, 
pages={87-96}, 
keywords={error correction codes;synchronisation;error correcting code;symbol synchronization errors;deletion errors;duplication errors;insertion errors;Error correction codes;Additive noise;Timing;Space technology;Potential well;Integrated circuit noise;Working environment noise;Magnetic noise;Optical noise;Semiconductor device noise;Error-correcting codes;deletion errors;duplication errors;insertion errors;synchronization errors}, 
doi={10.1109/SURV.2010.020110.00079}, 
ISSN={1553-877X}, 
month={First Quarter},
}

@ARTICLE{CR20,
  author={Cheraghchi, Mahdi and Ribeiro, João},
  journal={IEEE Transactions on Information Theory}, 
  title={An Overview of Capacity Results for Synchronization Channels}, 
  year={2021},
  volume={67},
  number={6},
  pages={3207-3232},
  keywords={Synchronization;Monte Carlo methods;Memoryless systems;Receivers;Encoding;Random variables;Deletion channel;synchronization channels;capacity bounds;zero-error capacity},
  doi={10.1109/TIT.2020.2997329}}

@article{drinea2006lower,
  title={On lower bounds for the capacity of deletion channels},
  author={Drinea, Eleni and Mitzenmacher, Michael},
  journal={IEEE Transactions on Information Theory},
  volume={52},
  number={10},
  pages={4648--4657},
  year={2006},
  publisher={IEEE}
}

@article{diggavi2006information,
  title={On information transmission over a finite buffer channel},
  author={Diggavi, Suhas and Grossglauser, Matthias},
  journal={IEEE Transactions on Information Theory},
  volume={52},
  number={3},
  pages={1226--1237},
  year={2006},
  publisher={IEEE}
}

@article{levenshtein2001efficient,
  title={Efficient reconstruction of sequences from their subsequences or supersequences},
  author={Levenshtein, Vladimir I.},
  journal={Journal of Combinatorial Theory, Series A},
  volume={93},
  number={2},
  pages={310--332},
  year={2001},
  publisher={Elsevier}
}

@inproceedings{arava2023stronger,
  title={Stronger Polarization for the Deletion Channel},
  author={Arava, Dar and Tal, Ido},
  booktitle={2023 IEEE International Symposium on Information Theory (ISIT)},
  pages={1711--1716},
  year={2023},
  organization={IEEE}
}

@INPROCEEDINGS{BLS20,
  author={Brakensiek, Joshua and Li, Ray and Spang, Bruce},
  booktitle={2020 IEEE 61st Annual Symposium on Foundations of Computer Science (FOCS)}, 
  title={Coded trace reconstruction in a constant number of traces}, 
  year={2020},
  volume={},
  number={},
  pages={482-493},
  keywords={Upper bound;DNA;Binary codes;Encoding;Computer science;Reconstruction algorithms;Error probability;trace reconstruction;coded trace reconstruction;information theory;deletion channel;synchronization strings},
  doi={10.1109/FOCS46700.2020.00052}}

@article{levenshtein1966binary,
	title={Binary codes capable of correcting deletions, insertions, and reversals},
	author={Levenshtein, Vladimir I.},
	journal={Proceedings of the Soviet physics doklady},
	year={1966}
}

@article{chase2019new,
author = {Zachary Chase},
title = {{New lower bounds for trace reconstruction}},
volume = {57},
journal = {Annales de l'Institut Henri Poincaré, Probabilités et Statistiques},
number = {2},
publisher = {Institut Henri Poincaré},
pages = {627 -- 643},
keywords = {statistics},
year = {2021},
doi = {10.1214/20-AIHP1089},
}

@inproceedings{holden2018subpolynomial,
  title={Subpolynomial trace reconstruction for random strings and arbitrary deletion probability},
  author={Holden, Nina and Pemantle, Robin and Peres, Yuval},
  booktitle={Conference On Learning Theory},
  pages={1799--1840},
  year={2018},
  organization={PMLR}
}

@article{holden2020subpolynomial,
  title={Subpolynomial trace reconstruction for random strings and arbitrary deletion probability},
  author={Holden, Nina and Pemantle, Robin and Peres, Yuval and Zhai, Alex},
  journal={Mathematical Statistics and Learning},
  volume={2},
  number={3},
  pages={275--309},
  year={2020}
}

@inproceedings{chen2022near,
  title={Near-optimal average-case approximate trace reconstruction from few traces},
  author={Chen, Xi and De, Anindya and Lee, Chin Ho and Servedio, Rocco A. and Sinha, Sandip},
  booktitle={Proceedings of the 2022 Annual ACM-SIAM Symposium on Discrete Algorithms (SODA)},
  pages={779--821},
  year={2022},
  organization={SIAM}
}

@inproceedings{batu2004reconstructing,
  title={Reconstructing strings from random traces},
  author={Batu, Tugkan and Kannan, Sampath and Khanna, Sanjeev and McGregor, Andrew},
  booktitle={SODA},
  volume={4},
  pages={910--918},
  year={2004}
}

@article{justesen1972class,
  title={Class of constructive asymptotically good algebraic codes},
  author={Justesen, J{\o}rn},
  journal={IEEE Transactions on information theory},
  volume={18},
  number={5},
  pages={652--656},
  year={1972},
  publisher={IEEE}
}

@inproceedings{cheng2019synchronization,
  title={Synchronization strings: Highly efficient deterministic constructions over small alphabets},
  author={Cheng, Kuan and Haeupler, Bernhard and Li, Xin and Shahrasbi, Amirbehshad and Wu, Ke},
  booktitle={Proceedings of the Thirtieth Annual ACM-SIAM Symposium on Discrete Algorithms},
  pages={2185--2204},
  year={2019},
  organization={SIAM}
}

@article{HS2021-survey,
  title={Synchronization strings and codes for insertions and deletions—A survey},
  author={Haeupler, Bernhard and Shahrasbi, Amirbehshad},
  journal={IEEE Transactions on Information Theory},
  volume={67},
  number={6},
  pages={3190--3206},
  year={2021},
  publisher={IEEE}
}

@book{vershynin_high-dimensional_2018,
location = {Cambridge},
title = {High-Dimensional Probability: An Introduction with Applications in Data Science},
isbn = {978-1-108-41519-4},
url = {https://www.cambridge.org/core/books/highdimensional-probability/797C466DA29743D2C8213493BD2D2102},
series = {Cambridge Series in Statistical and Probabilistic Mathematics},
shorttitle = {High-Dimensional Probability},
publisher = {Cambridge University Press},
author = {Vershynin, Roman},
urldate = {2024-01-19},
year = {2018},
doi = {10.1017/9781108231596},
}

@book{mitzenmacher2017probability,
  title={Probability and computing: Randomization and probabilistic techniques in algorithms and data analysis},
  author={Mitzenmacher, Michael and Upfal, Eli},
  year={2017},
  publisher={Cambridge university press}
}

@inproceedings{haeupler2019near,
	title={Near-linear time insertion-deletion codes and (1+ $\varepsilon$)-approximating edit distance via indexing},
	author={Haeupler, Bernhard and Rubinstein, Aviad and Shahrasbi, Amirbehshad},
	booktitle={Proceedings of the 51st Annual ACM SIGACT Symposium on Theory of Computing},
	pages={697--708},
	year={2019}
}

@article{guruswami2005linear,
	title={Linear-time encodable/decodable codes with near-optimal rate},
	author={Guruswami, Venkatesan and Indyk, Piotr},
	journal={IEEE Transactions on Information Theory},
	volume={51},
	number={10},
	pages={3393--3400},
	year={2005},
	publisher={IEEE}
}

@misc{MD25,
       author = {{Morozov}, Ruslan and {Duman}, Tolga M.},
        title = "Markov Insertion/Deletion Channels: Information Stability and Capacity Bounds",
    year={2025},
      note={\url{https://arxiv.org/abs/2401.16063}}
}

@INPROCEEDINGS{SGPY21,
  author={Srinivasavaradhan, Sundara Rajan and Gopi, Sivakanth and Pfister, Henry D. and Yekhanin, Sergey},
  booktitle={2021 IEEE International Symposium on Information Theory (ISIT)}, 
  title={Trellis {BMA}: Coded Trace Reconstruction on {IDS} Channels for {DNA} Storage}, 
  year={2021},
  volume={},
  number={},
  pages={2453-2458},
  keywords={Sequential analysis;Error analysis;Redundancy;DNA;Clustering algorithms;Reconstruction algorithms;Encoding},
  doi={10.1109/ISIT45174.2021.9517821}}

@article{EZ17,
author = {Yaniv Erlich  and Dina Zielinski },
title = {{DNA} Fountain enables a robust and efficient storage architecture},
journal = {Science},
volume = {355},
number = {6328},
pages = {950-954},
year = {2017},
doi = {10.1126/science.aaj2038},
URL = {https://www.science.org/doi/abs/10.1126/science.aaj2038},
eprint = {https://www.science.org/doi/pdf/10.1126/science.aaj2038},
abstract = {DNA has the potential to provide large-capacity information storage. However, current methods have only been able to use a fraction of the theoretical maximum. Erlich and Zielinski present a method, DNA Fountain, which approaches the theoretical maximum for information stored per nucleotide. They demonstrated efficient encoding of information—including a full computer operating system—into DNA that could be retrieved at scale after multiple rounds of polymerase chain reaction. Science, this issue p. 950 A resilient DNA storage strategy enables near-maximal information content per nucleotide. DNA is an attractive medium to store digital information. Here we report a storage strategy, called DNA Fountain, that is highly robust and approaches the information capacity per nucleotide. Using our approach, we stored a full computer operating system, movie, and other files with a total of 2.14 × 106 bytes in DNA oligonucleotides and perfectly retrieved the information from a sequencing coverage equivalent to a single tile of Illumina sequencing. We also tested a process that can allow 2.18 × 1015 retrievals using the original DNA sample and were able to perfectly decode the data. Finally, we explored the limit of our architecture in terms of bytes per molecule and obtained a perfect retrieval from a density of 215 petabytes per gram of DNA, orders of magnitude higher than previous reports.}}

@article{Org18,
  title={Random access in large-scale {DNA} data storage},
  author={Organick, Lee and Ang, Siena Dumas and Chen, Yuan-Jyue and Lopez, Randolph and Yekhanin, Sergey and Makarychev, Konstantin and Racz, Miklos Z and Kamath, Govinda and Gopalan, Parikshit and Nguyen, Bichlien and others},
  journal={Nature biotechnology},
  volume={36},
  number={3},
  pages={242--248},
  year={2018},
  publisher={Nature Publishing Group}
}

@article{Gra15,
  title={Robust chemical preservation of digital information on {DNA} in silica with error-correcting codes},
  author={Grass, Robert N and Heckel, Reinhard and Puddu, Michela and Paunescu, Daniela and Stark, Wendelin J},
  journal={Angewandte Chemie International Edition},
  volume={54},
  number={8},
  pages={2552--2555},
  year={2015},
  publisher={Wiley Online Library}
}

@article{Pre20,
author = {William H. Press  and John A. Hawkins  and Stephen K. Jones  and Jeffrey M. Schaub  and Ilya J. Finkelstein },
title = {{HEDGES} error-correcting code for {DNA} storage corrects indels and allows sequence constraints},
journal = {Proceedings of the National Academy of Sciences},
volume = {117},
number = {31},
pages = {18489-18496},
year = {2020},
doi = {10.1073/pnas.2004821117},
URL = {https://www.pnas.org/doi/abs/10.1073/pnas.2004821117},
eprint = {https://www.pnas.org/doi/pdf/10.1073/pnas.2004821117},
abstract = {This paper constructs an error-correcting code for the {A,C,G,T} alphabet of DNA. By contrast with previous work, the code corrects insertions and deletions directly, in a single strand of DNA, without the need for multiple alignment of strands. This code, when coupled to a standard outer code, can achieve error-free storage of petabyte-scale data even when ∼10\% of all nucleotides are erroneous. Synthetic DNA is rapidly emerging as a durable, high-density information storage platform. A major challenge for DNA-based information encoding strategies is the high rate of errors that arise during DNA synthesis and sequencing. Here, we describe the HEDGES (Hash Encoded, Decoded by Greedy Exhaustive Search) error-correcting code that repairs all three basic types of DNA errors: insertions, deletions, and substitutions. HEDGES also converts unresolved or compound errors into substitutions, restoring synchronization for correction via a standard Reed–Solomon outer code that is interleaved across strands. Moreover, HEDGES can incorporate a broad class of user-defined sequence constraints, such as avoiding excess repeats, or too high or too low windowed guanine–cytosine (GC) content. We test our code both via in silico simulations and with synthesized DNA. From its measured performance, we develop a statistical model applicable to much larger datasets. Predicted performance indicates the possibility of error-free recovery of petabyte- and exabyte-scale data from DNA degraded with as much as 10\% errors. As the cost of DNA synthesis and sequencing continues to drop, we anticipate that HEDGES will find applications in large-scale error-free information encoding.}}

@article{YGM17,
  title={Portable and error-free {DNA}-based data storage},
  author={Yazdi, SM Hossein Tabatabaei and Gabrys, Ryan and Milenkovic, Olgica},
  journal={Scientific reports},
  volume={7},
  number={1},
  pages={5011},
  year={2017},
  publisher={Nature Publishing Group UK London}
}

@article{Gol13,
  title={Towards practical, high-capacity, low-maintenance information storage in synthesized {DNA}},
  author={Goldman, Nick and Bertone, Paul and Chen, Siyuan and Dessimoz, Christophe and LeProust, Emily M and Sipos, Botond and Birney, Ewan},
  journal={Nature},
  volume={494},
  number={7435},
  pages={77--80},
  year={2013},
  publisher={Nature Publishing Group UK London}
}

@article{Yaz15,
  title={A rewritable, random-access {DNA}-based storage system},
  author={Tabatabaei Yazdi, SM Hossein and Yuan, Yongbo and Ma, Jian and Zhao, Huimin and Milenkovic, Olgica},
  journal={Scientific reports},
  volume={5},
  number={1},
  pages={14138},
  year={2015},
  publisher={Nature Publishing Group UK London}
}

@inproceedings{Bor16,
author = {Bornholt, James and Lopez, Randolph and Carmean, Douglas M. and Ceze, Luis and Seelig, Georg and Strauss, Karin},
title = {A {DNA}-Based Archival Storage System},
year = {2016},
isbn = {9781450340915},
publisher = {Association for Computing Machinery},
address = {New York, NY, USA},
url = {https://doi.org/10.1145/2872362.2872397},
doi = {10.1145/2872362.2872397},
abstract = {Demand for data storage is growing exponentially, but the capacity of existing storage media is not keeping up. Using DNA to archive data is an attractive possibility because it is extremely dense, with a raw limit of 1 exabyte/mm3 (109 GB/mm3), and long-lasting, with observed half-life of over 500 years. This paper presents an architecture for a DNA-based archival storage system. It is structured as a key-value store, and leverages common biochemical techniques to provide random access. We also propose a new encoding scheme that offers controllable redundancy, trading off reliability for density. We demonstrate feasibility, random access, and robustness of the proposed encoding with wet lab experiments involving 151 kB of synthesized DNA and a 42 kB random-access subset, and simulation experiments of larger sets calibrated to the wet lab experiments. Finally, we highlight trends in biotechnology that indicate the impending practicality of DNA storage for much larger datasets.},
booktitle = {Proceedings of the Twenty-First International Conference on Architectural Support for Programming Languages and Operating Systems},
pages = {637–649},
numpages = {13},
keywords = {DNA, archival storage, molecular computing},
location = {Atlanta, Georgia, USA},
series = {ASPLOS '16}
}

@InProceedings{Rub23,
  author =	{Rubinstein, Ittai},
  title =	{Average-Case to (Shifted) Worst-Case Reduction for the Trace Reconstruction Problem},
  booktitle =	{50th International Colloquium on Automata, Languages, and Programming (ICALP 2023)},
  pages =	{102:1--102:20},
  series =	{Leibniz International Proceedings in Informatics (LIPIcs)},
  ISBN =	{978-3-95977-278-5},
  ISSN =	{1868-8969},
  year =	{2023},
  volume =	{261},
  editor =	{Etessami, Kousha and Feige, Uriel and Puppis, Gabriele},
  publisher =	{Schloss Dagstuhl -- Leibniz-Zentrum f{\"u}r Informatik},
  address =	{Dagstuhl, Germany},
  URL =		{https://drops.dagstuhl.de/entities/document/10.4230/LIPIcs.ICALP.2023.102},
  URN =		{urn:nbn:de:0030-drops-181542},
  doi =		{10.4230/LIPIcs.ICALP.2023.102},
  annote =	{Keywords: Trace Reconstruction, Synchronization Channels, Computational Learning Theory, Computational Biology}
}

@article{mcdiarmid1989method,
  title={On the method of bounded differences},
  author={McDiarmid, Colin},
  journal={Surveys in combinatorics},
  volume={141},
  number={1},
  pages={148--188},
  year={1989},
  publisher={Norwich}
}

@inproceedings{weindel2023embracing,
  title={Embracing errors is more effective than avoiding them through constrained coding for {DNA} data storage},
  author={Weindel, Franziska and Gimpel, Andreas L. and Grass, Robert N. and Heckel, Reinhard},
  booktitle={2023 59th Annual Allerton Conference on Communication, Control, and Computing (Allerton)},
  pages={1--8},
  year={2023},
  organization={IEEE}
}

\appendix

\section{Proof of \cref{lem:id-dob-gen}}\label{sec:proof-id-dob-gen}

For a discrete random variable $V$ with probability mass function $p_V$, define $h_V(v)=-\log p_V(v)$.
Note that for any map $\phi$ defined on the support of $V$ we have
\begin{equation}\label{eq:local-data-proc}
    h_V(v) - h_{\phi(V)}(\phi(v))\geq 0.
\end{equation}

Also, if $|\phi^{-1}(z)|\leq M_\phi$ for all $z$, then
\begin{align}
    \E_{v\sim V}[h_V(v)-h_{\phi(V)}(\phi(v))] &= \sum_v p_V(v) \log\left(\frac{p_{\phi(V)}(\phi(v))}{p_V(v)}\right)\nonumber\\
    &= \sum_z p_{\phi(V)}(z) \sum_{v\in\phi^{-1}(z)}\frac{p_V(v)}{p_{\phi(V)}(z)} \log\left(\frac{p_{\phi(V)}(z)}{p_V(v)}\right)\nonumber\\
    &= \sum_z p_{\phi(V)}(z) H(V|\phi(V)=z)\nonumber\\
    &\leq \sum_z p_{\phi(V)}(z) \log M_\phi\nonumber\\
    &= \log M_\phi. \label{eq:UB-exp-info-density}
\end{align}
The rearrangement of the sum in the second equality is possible due to Tonelli's theorem, since all terms are non-negative.
The inequality uses the fact that $H(V|\phi(V)=z)\leq \log |\phi^{-1}(z)|$, and $|\phi^{-1}(z)|\leq M_\phi$ for all $z$ by hypothesis.

Now, we may write
\begin{align*}
    &i_{X,A}(x,a)-i_{X,\phi(A)}(x,\phi(a)) \\
    &= (h_X(x) + h_A(a) - h_{X,A}(x,a)) - (h_{X}(x)+h_{\phi(A)}(\phi(a))-h_{X,\phi(A)}(x,\phi(a)))\\
    &=(h_A(a)-h_{\phi(A)}(\phi(a))) + (h_{X,\phi(A)}(x,\phi(a))-h_{X,A}(x,a)).
\end{align*}
Then, by the triangle inequality and \cref{eq:local-data-proc} applied to $V=A\mapsto \phi(A)$ and $V=(X,A)\mapsto \widetilde{\phi}(X,A)$ with $\widetilde{\phi}(x,a)=(x,\phi(a))$,
\begin{align*}
    |i_{X,A}(x,a)-i_{X,\phi(A)}(x,\phi(a))| &\leq |h_A(a)-h_{\phi(A)}(\phi(a))| + |h_{X,A}(x,a)-h_{X,\phi(A)}(x,\phi(a))| \\
    &= (h_A(a)-h_{\phi(A)}(\phi(a))) + (h_{X,A}(x,a)-h_{X,\phi(A)}(x,\phi(a))).
\end{align*}
This means that, applying \cref{eq:UB-exp-info-density} to $\phi$ and $\widetilde{\phi}$ (and noting that $|\widetilde{\phi}^{-1}(x,z)|=|\phi^{-1}(z)|$ for all $x$ and $z$), we get
\begin{align*}
    &\E_{(x,a)\sim (X,A)}[|i_{X,A}(x,a)-i_{X,\phi(A)}(x,\phi(a))|]\\
    &\leq \E_{(x,a)\sim (X,A)}[h_A(a)-h_{\phi(A)}(\phi(a))] + \E_{(x,a)\sim (X,A)}[h_{X,A}(x,a)-h_{X,\phi(A)}(x,\phi(a))]\\
    &\leq 2\log M_\phi,
\end{align*}
as desired.

\end{document}